\pdfoutput=1
\documentclass[12pt,a4paper]{article}
\usepackage{jheppub}
\usepackage[usenames,dvipsnames]{xcolor}
 
\usepackage{microtype}
\usepackage{amssymb} 
\usepackage{amsmath}
\usepackage{mathtools}
\usepackage{amsfonts}    
\usepackage{dsfont}
\usepackage{pdfpages}
\usepackage{verbatim}
\usepackage{braket}
\usepackage{bm}
\usepackage{tensor}
\usepackage{upgreek}
\usepackage{subcaption}
\usepackage{simpler-wick}
\usepackage{enumitem}
\usepackage[hang,flushmargin]{footmisc}
\usepackage{amsthm}
\usepackage{tcolorbox}
\usepackage{nicematrix}
\usepackage[table]{xcolor}

\usepackage{tikz}
\usepackage{listings}
\usepackage{color}
\usetikzlibrary{decorations.markings}

\usepackage{pgfplots}
\usepgfplotslibrary{fillbetween}
\pgfplotsset{compat=1.18}

\definecolor{rosewood}{rgb}{0.4, 0.0, 0.04}
\definecolor{pyblue}{RGB}{31, 119, 180}
\definecolor{pyorange}{RGB}{255, 127, 14}
\definecolor{pygreen}{RGB}{44, 160, 44}
\definecolor{pyred}{RGB}{214, 39, 40}
\definecolor{lightgray}{gray}{0.9}

\usepackage{hyperref}
\hypersetup{
    pdfencoding=unicode,
    colorlinks=true,
    urlcolor=rosewood,
    linkcolor=pyblue,
    citecolor=pygreen,
    pdftitle={Massive spectral gluing},
    pdfauthor={Denis Werth},
    pdfdisplaydoctitle=true,
    pdfstartview=FitH,
    linktocpage=true
}

\def \d {\mathrm{d}}

\def \k {\bm{k}}

\def \u {\bm{u}}

\def \A {\mathcal{A}}

\def \C {\mathcal{C}}
\def \D {\mathcal{D}}

\def \F {\mathcal{F}}
\def \G {\mathcal{G}}

\def \I {\mathcal{I}}

\def \N {\mathcal{N}}
\def \O {\mathcal{O}}
\def \P {\mathcal{P}}

\def \R {\mathcal{R}}
\def \T {\mathcal{T}}

\def \V {\mathcal{V}}
\def \W {\mathcal{W}}

\def \Y {\bm{Y}}

\def \aa {{\sf{a}}}
\def \bb {{\sf{b}}}
\def \cc {{\sf{c}}}

\def \Re {\mathrm{Re}}

\def \Res {\mathrm{Res}}

\newtheorem{theorem}{Theorem}[section]
\newtheorem{corollary}{Corollary}[theorem]
\newtheorem{lemma}[theorem]{Lemma}
\newtheorem{proposition}{Proposition}
\newtheorem{definition}{Definition}

\title{All Tree-Level Massive Cosmological Correlators via Spectral Gluing}

\author[1, 2]{Jonathan Gr\"afe}\emailAdd{graefe@mpp.mpg.de}
\author[1, 2]{and Denis Werth}\emailAdd{werth@mpp.mpg.de}
\affiliation[1]{Max Planck Institute for Physics, Werner-Heisenberg-Institut, Munich, D-85748, Germany}
\affiliation[2]{Max Planck-IAS-NTU Center for Particle Physics, Cosmology and Geometry}

\abstract{Massive cosmological correlators exhibit a rich hypergeometric structure already at tree level, reflecting the distorted propagation of particles in de Sitter spacetime. In this paper, we reveal that this apparent complexity conceals a remarkably simple underlying mathematical structure. 
Using the spectral representation, we compute arbitrary tree-level correlators of scalar fields with generic masses and show that they are constructed from fundamental building blocks belonging to the family of Lauricella generalised hypergeometric functions, glued together by spectral integrals. We develop a spectral gluing algorithm that evaluates these integrals through elementary graph combinatorics, yielding explicit series representations that resum the dependence on internal energies away from soft limits. 
This algorithm naturally generates solutions to the differential equations satisfied by massive correlators as expansions in the corresponding eigenfunctions. Acting with a set of graph annihilators, we uncover a new class of magical identities among generalised hypergeometric functions, revealing an unexpected simplification: once the dynamical propagators are stripped away, the remaining hypergeometric kinematic dependence collapses to rational functions.
Our results expose a hidden simplicity in the rigid hypergeometric analytic structure dictated by graph combinatorics, and hint at an intrinsic geometric principle from which properties of massive correlators naturally emerge.
}

\begin{document}

\setcounter{tocdepth}{3}
\maketitle
\setcounter{page}{1}

\newpage
\section{Introduction}

Cosmological correlators are the fundamental observables of the primordial Universe, providing the bridge between quantum field theory in curved spacetime and the large-scale structure observed today. Unlike scattering amplitudes, which relate asymptotic states in flat space and are tightly constrained by Lorentz invariance, cosmological correlators have their analytic structure significantly altered by the time-dependent background: even at tree level, cosmological correlators are no longer rational functions of kinematic invariants but instead exhibit a rich hierarchy of transcendental functions. Understanding the origin of this complexity---and whether it conceals an underlying organising principle---has become one of the central questions in modern study of cosmological observables.

\vskip 4pt
Over the past decade, significant progress has been made in uncovering a wealth of unexpected mathematical properties, especially in the simplified setting of conformally coupled scalars in power-law cosmologies, where cosmological correlators can be represented as twisted integrals over rational functions, closely resembling the structure of Feynman integrals. While this model does not capture the full complexity of realistic inflationary observables, it provides a rich and tractable laboratory in which the analytic properties of correlators become manifest. These correlators have been shown to satisfy canonical differential equations which can be derived from simple graphical rules~\cite{Hillman:2019wgh, De:2023xue, Arkani-Hamed:2023kig, Baumann:2024mvm, Grimm:2024mbw, He:2024olr, Grimm:2024tbg, Hang:2024xas, De:2024zic, Baumann:2025qjx, Capuano:2025ehm, Glew:2025ypb}, revealing a deep connection with twisted cohomology and intersection theory~\cite{McLeod:2026jpz, McLeod:2026kpo}, algebraic geometry~\cite{Fevola:2024nzj}, cluster algebra~\cite{Capuano:2025myy, Mazloumi:2025pmx, Paranjape:2026htn, Capuano:2026pgq, Ferro:2026oph}, and tree-graph combinatorics~\cite{Fan:2024iek, Fan:2025scu}. They are also known to be closely related to the corresponding scattering amplitudes through dressing rules~\cite{Chowdhury:2023arc, Chowdhury:2025ohm, Chowdhury:2025ohm, Chowdhury:2026upp, Chowdhury:2026dwm, Das:2026vfv}. Moreover, these correlators admit a geometric interpretation: flat-space wavefunction coefficients organise into canonical forms of cosmological polytopes~\cite{Arkani-Hamed:2017fdk, Arkani-Hamed:2018bjr, Benincasa:2018ssx, Benincasa:2019vqr, Benincasa:2020aoj, Albayrak:2023hie} (see~\cite{Benincasa:2022gtd} for a review), while summing over channels led to the discovery of cosmohedra~\cite{Arkani-Hamed:2024jbp, Glew:2025otn, Figueiredo:2025daa, Forcey:2025voc, Ardila-Mantilla:2026cbo} (see e.g.~\cite{Telen:2025zsz} for a review on positive geometry). These developments eventually have uncovered hidden factorisation properties~\cite{De:2025bmf, Li:2026gns}, and suggest that the apparent complexity of cosmological correlators may be governed by a much simpler underlying mathematical structure.

\vskip 4pt
For massive particle exchanges, however, the picture remains considerably less complete. Already the simplest tree-level single-exchange diagram gives rise to non-trivial hypergeometric functions~\cite{Arkani-Hamed:2018kmz, Qin:2022fbv, Qin:2023ejc, Werth:2024mjg}, reflecting the richer analytic structure associated with massive propagation in de Sitter space. As the number of internal exchanges increase, these functions generalise to multivariable hypergeometric structures whose analytic properties quickly become rather intractable~\cite{Xianyu:2023ytd, Liu:2024str}. A variety of complementary approaches---e.g.~solving boundary differential equations~\cite{Arkani-Hamed:2015bza, Arkani-Hamed:2018kmz, Pimentel:2022fsc, Jazayeri:2022kjy, Qin:2022fbv, Chen:2023iix, Qin:2023ejc, Gasparotto:2024bku, Chen:2024glu, Aoki:2024uyi, Liu:2024str, Xianyu:2025lbk, Chen:2026dqp, Baumann:2026atn}, the use of Mellin-Barnes representations~\cite{Sleight:2019mgd, Sleight:2019hfp, Sleight:2021plv, Qin:2022lva, Xianyu:2023ytd}, dispersive~\cite{Xianyu:2022jwk, Liu:2024xyi, Liu:2026jzn} and spectral methods~\cite{Werth:2024mjg, Grafe:2026qsm, Belrhali:2026ktb, Belrhali:2026rkn} (also see~\cite{Melville:2023kgd, Melville:2024ove} for a massive de Sitter $S$-matrix definition), introducing the (orthogonal) Grassmannian as a new kinematic space~\cite{Arundine:2026fbr, Arundine:2026myr}, and general constraints from unitarity and locality~\cite{Melville:2021lst, Cespedes:2020xqq, AguiSalcedo:2023nds, Goodhew:2020hob, Das:2025qsh}---have been developed to tame this complexity. Nevertheless, unlike the conformally coupled case, where many hidden structures are now understood, the organisation of massive correlators remains largely unexplored. We still lack a general understanding of why these hypergeometric functions appear, how they organise themselves for arbitrary graphs, and whether they obey a deeper mathematical principle. This raises the question of whether a universal construction exists that can generate all tree-level massive correlators at once. 

\vskip 4pt
In this paper, we show that the complexity of massive correlators is largely an illusion, and that the hypergeometric structure is graph combinatorics in disguise. We demonstrate that arbitrary tree-level massive cosmological correlators are governed by a rigid and universal mathematical structure. We show that every tree graph factorises into fundamental building blocks, and reconstructing the full correlator reduces to an elementary gluing procedure that follows simple graphical rules. Remarkably, after stripping away the genuinely dynamical propagators of massive correlators by acting on the constructed solutions with a set of graph annihilators, the remaining hypergeometric structure collapses dramatically. This reveals unexpected ``magical identities'' among multivariable hypergeometric functions that reduce seemingly complicated expressions to simple rational functions.

\paragraph{Vertex functions as fundamental building blocks.} Using a spectral representation of bulk-to-bulk massive propagators, every massive correlator factorises into elementary single-time-integral objects, which we call {\it vertex functions}. These objects are universal: they only depend on the local data associated with a single interaction vertex (its external energy, twist and attached massive legs), and are completely independent of the global graph in which they appear. We compute the most general vertex function and show that it is given by a type-$C$ {\it Lauricella} function, providing a natural multivariable generalisation of the familiar hypergeometric function. This generalised hypergeometric structure is fully constrained by conformal symmetry on the boundary, and arises as one moves away from soft limits:
\begin{center}
\begin{tikzpicture}[line width=1. pt, scale=2]

    \foreach \i in {0,...,5}
    \coordinate (V\i) at ({60*\i}:0.8);

    \draw[pyblue] (V5) -- node[midway, right, yshift=-5pt, xshift=3pt] {$Y_1$} (V0);
    \draw[pyblue] (V0) -- node[midway, above, yshift=-5pt, xshift=10pt] {$Y_2$} (V1);
    \node[pyblue] at ($(V1)!1/2!(V2)$) {$\cdots$};
    \draw[pyblue] (V2) -- node[midway, left, yshift=5pt, xshift=-3pt] {$Y_n$} (V3);

    \draw[black] (V3) -- (V4) -- (V5);
    
    \node[black] at (-0.3, -0.3) {$X$};

    \draw[black, ->] (1.2, 0) -- (1.7, 0);

    \node[black] at (3.9, 0) {$\displaystyle F_C^{(n)}\left[\,\, \# \,\, \middle| \, \left(\frac{\textcolor{pyblue}{Y_1}}{X}\right)^2, \left(\frac{\textcolor{pyblue}{Y_2}}{X}\right)^2, \ldots, \left(\frac{\textcolor{pyblue}{Y_n}}{X}\right)^2 \, \right]\,.$};

\end{tikzpicture} 
\end{center}
In particular, the dependence on the internal energy variables in \textcolor{pyblue}{blue} flowing into internal lines of massive correlators is completely resummed. These vertex functions encode the entire kinematic information of massive graphs at the local level. 

\paragraph{Spectral gluing algorithm from graph combinatorics.} Once every tree graph is decomposed into vertex functions, reconstructing the full correlator reduces to evaluating spectral integrals using the residue theorem. We develop a {\it spectral gluing algorithm} that performs this task for arbitrary tree graphs through elementary graph combinatorics. The poles contributing to each spectral integration are determined solely by simple combinatorial data encoded in the graph incidence matrix $Q$, e.g.
\begin{equation*}
\vcenter{\hbox{
\begin{tikzpicture}
[line width=1. pt, scale=2]
    [line width=1. pt, scale=2, baseline={(current bounding box.center)}]
    
    \draw[black, postaction={decorate}, decoration={markings, mark=at position 0.6 with {\arrow[pyblue]{>}}}] (0, 0) -- node[left] {\scriptsize{$\textcolor{gray}{x_{14}}$}} (0, 0.7);
    \draw[black, postaction={decorate}, decoration={markings, mark=at position 0.6 with {\arrow[pyblue]{<}}}] (0, 0) -- node[above left] {\scriptsize{$\textcolor{gray}{x_{24}}$}} (-0.5, -0.5);
    \draw[black, postaction={decorate}, decoration={markings, mark=at position 0.6 with {\arrow[pyblue]{>}}}] (0, 0) -- node[above right] {\scriptsize{$\textcolor{gray}{x_{34}}$}} (0.5, -0.5);
        
    \draw[fill=black] (0, 0.7) circle (.05cm) node[above, yshift=3pt] {\scriptsize{$\textcolor{gray}{(1)}$}};
    \draw[pyblue, fill=pyblue] (-0.5, -0.5) circle (.05cm) node[below left] {\scriptsize{$\textcolor{gray}{(2)}$}};
    \draw[fill=black] (0.5, -0.5) circle (.05cm) node[below right] {\scriptsize{$\textcolor{gray}{(3)}$}};
    \draw[fill=black] (0, 0) circle (.05cm) node[above right] {\scriptsize{$\textcolor{gray}{(4)}$}};
\end{tikzpicture} 
    }}\,, \quad Q = 
    \begin{bNiceMatrix}[first-row,code-for-first-row=\scriptstyle, first-col,code-for-first-col=\scriptstyle,]
        & \textcolor{gray}{x_{14}} & \textcolor{gray}{x_{24}} & \textcolor{gray}{x_{34}} \\
        \textcolor{gray}{(1)} & +1 & 0 & 0 \\
        \textcolor{gray}{(2)} & 0 & -1 & 0 \\
        \textcolor{gray}{(3)} & 0 & 0 & +1 \\
        \textcolor{gray}{(4)} & -1 & +1 & -1 
    \end{bNiceMatrix}\,,
\end{equation*}
where the choice of a root vertex in \textcolor{pyblue}{blue} (which uniquely fixes the tree structure) determines the subsequent convergence region of the correlator in kinematic space. This allows residues to be collected algorithmically without ever evaluating nested time integrals. The resulting expressions are explicit partially resummed multiple series that provide general solutions to the differential equations satisfied by massive correlators.

\paragraph{Magical identities from graph annihilators.} The series representations obtained with the spectral gluing algorithm make manifest a remarkable hidden simplicity. Acting on the resulting correlators with a set of differential {\it graph annihilators} removes the dynamical propagator structure while preserving the underlying hypergeometric dependence. The outcome is an infinite family of previously unknown identities among Lauricella functions: multi-nested series of multivariable hypergeometric functions collapse to simple rational functions. An example is given by (Eq.~\eqref{eq: two-site chain magic identity} for $\tilde{p}_1=\tilde{p}_2=1/2$)
\begin{equation*}
    \begin{aligned}
    \frac{X_1}{X_1+X_2} &= \sum\limits_{m=0}^\infty (-1)^m  \left(\frac{X_2}{X_1}\right)^m  \textcolor{pyblue}{_2F_1\left[\left.\begin{matrix} \frac{1+m}{2},\,  \frac{2+m}{2}\\ \tfrac{3}{2}+m \end{matrix}\right\vert \left(\frac{Y_{12}}{X_1}\right)^2\right]
    {}_2F_1\left[\left.\begin{matrix} \frac{-m}{2},\,  \frac{1-m}{2}\\ \tfrac{1}{2}-m \end{matrix}\right\vert \left(\frac{Y_{12}}{X_2}\right)^2\right] }\,.
    \end{aligned}
\end{equation*}
Remarkably, this identity remains true even after stripping off the hypergeometric structure in \textcolor{pyblue}{blue}. This corresponds to the correlator soft limit, $Y_{12}\to0$, and boils down to the well-known geometric series in $X_2/X_1$. These {\it magical identities} reveal that much of the apparent transcendental complexity of massive correlators is purely kinematical. 

\vskip 4pt
Taken together, these results point towards a different way of thinking about massive cosmological correlators. Their complexity is not an intrinsic property of ever more elaborate special functions, but rather the result of gluing a set of universal hypergeometric objects according to the combinatorial data of the graph. 

\begin{figure}[h!]
    \centering
\begin{tikzpicture}[line width=1. pt, scale=2, baseline={(0,0)}]
    \draw[black] (0, 0) -- (0.7, 0);
    \draw[black] (0.7, 0) -- (1.2, 0.5);
    \draw[black] (0.7, 0) -- (1.2, -0.5);
    \draw[black] (0, 0) -- (-0.5, 0.5);
    \draw[black] (0, 0) -- (-0.5, -0.5);
    \draw[black] (-0.5, -0.5) -- (0, -1);
    \draw[black] (-0.5, -0.5) -- (-1.2, -0.5);
    
    \draw[fill=black] (0, 0) circle (.05cm);
    \draw[fill=black] (0.7, 0) circle (.05cm);
    \draw[fill=black] (1.2, 0.5) circle (.05cm);
    \draw[fill=black] (1.2, -0.5) circle (.05cm);
    \draw[fill=black] (-0.5, 0.5) circle (.05cm);
    \draw[fill=black] (-0.5, -0.5) circle (.05cm);
    \draw[fill=black] (0, -1) circle (.05cm);
    \draw[fill=black] (-1.2, -0.5) circle (.05cm);
    
    \draw[black, ->] (1.7, 0) -- (2.4, 0);

    \node[draw=none, fill=pyred, fill opacity=0.3, minimum width=0.8cm, minimum height=0.8cm, inner sep=0pt] (box) at (-0.5,-0.5) {};
    \draw[<-, pyred] (box.south) -- ++(0,-0.5);
    \node[pyred, below, align=center] at ($(box.south)+(0,-0.5)$) {vertex functions\\ 
    (\S\ref{sec: vertex functions})};
    
    \node[draw=none, fill=pyblue, fill opacity=0.3, minimum width=2cm, minimum height=0.8cm, inner sep=0pt] (box) at (0.35,0) {};
    \draw[<-, pyblue] (box.north) -- ++(0,0.5);
    \node[pyblue, above, align=center] at ($(box.north)+(0,0.5)$) {spectral gluing\\ 
    (\S\ref{sec: spectral gluing})};
    
    \node[draw=none, fill=pygreen, fill opacity=0.3, minimum width=2cm, minimum height=0.8cm, inner sep=0pt] (box) at (2.05,0) {};
    \draw[<-, pygreen] (box.south) -- ++(0,-1);
    \node[pygreen, below, align=center] at ($(box.south)+(0,-1)$) {magical identities\\ 
    (\S\ref{sec: hidden simplicity})};
\end{tikzpicture} 
\begin{tikzpicture}[line width=1. pt, scale=2, baseline={(0,0)}]
    \draw[black] (0.1, 0) -- (0.6, 0.5);
    \draw[black] (0.1, 0) -- (0.6, -0.5);
    \draw[black] (0, 0) -- (-0.5, 0.5);
    \draw[black] (0, 0) -- (-0.5, -0.5);
    \draw[black] (-0.5, -0.5) -- (0, -1);
    \draw[black] (-0.5, -0.5) -- (-1.2, -0.5);
  
    \draw[fill=black] (0, 0) circle (.05cm);
    \draw[fill=black] (0.1, 0) circle (.05cm);
    \draw[fill=black] (0.6, 0.5) circle (.05cm);
    \draw[fill=black] (0.6, -0.5) circle (.05cm);
    \draw[fill=black] (-0.5, 0.5) circle (.05cm);
    \draw[fill=black] (-0.5, -0.5) circle (.05cm);
    \draw[fill=black] (0, -1) circle (.05cm);
    \draw[fill=black] (-1.2, -0.5) circle (.05cm);
\end{tikzpicture} 
    \caption{Schematic roadmap of the paper.}
    \label{fig: roadmap}
\end{figure}

\paragraph{Outline.} The outline of the paper is summarised in Fig.~\ref{fig: roadmap}. In Sec.~\ref{sec: Hypergeometric Structure of Correlators}, we introduce master integrals for tree-level massive cosmological correlators. We then show that a spectral representation of massive bulk-to-bulk propagators completely factorises the time integrals of any tree graph, revealing a set of elementary building blocks: vertex functions. This factorisation provides a natural strategy for computing general tree-level massive correlators, whereby graphs are decomposed into vertex functions that are subsequently glued together through spectral integration. In Sec.~\ref{sec: vertex functions}, we study these vertex functions and show that they belong to the class of type-$C$ Lauricella generalised hypergeometric functions. We then discuss their physical regions in kinematic space, differential annihilators and soft limits. In Sec.~\ref{sec: spectral gluing}, we develop a spectral gluing algorithm to perform the spectral integrals for arbitrary tree-level massive correlators, which relies entirely on elementary graph combinatorics. In Sec.~\ref{sec: selected examples}, we illustrate the power of this method through explicit examples, including closed-form expressions for the maximally-nested contributions of all $N$-site chains and star graphs. In Sec.~\ref{sec: hidden simplicity}, we show that our series representations provide solutions to boundary differential equations satisfied by massive correlators. By acting on these solutions with graph annihilators, we uncover a new class of identities among generalised hypergeometric functions. Our conclusions are summarised in Sec.~\ref{sec: conclusions} where we also give an outlook on future directions.

\vskip 4pt
Several appendices provide additional technical details that complement the main text. In App.~\ref{sec: special functions}, we review the special functions that play a central role throughout the paper, summarising their key properties, including differential equations they satisfy, analytic continuation formulae, and reflexion identities. In App.~\ref{sec: details on spectral gluing algorithm}, we present the explicit computation of the spectral integrals for the two- and three-site chains. In App.~\ref{sec: diff eq derivation}, we derive the complete set of differential equations satisfied by tree-level massive graphs.

\paragraph{Notation \& conventions.} We use the mostly-plus signature for the metric $(-, +, \ldots, +)$. Spatial $d$-dimensional vectors are written in boldface $\bm{k}$. We use \textsf{sans serif} letters $(\sf{a, b, \ldots})$ for Schwinger-Keldysh indices. We work in the Poincaré patch of de Sitter space with the metric $\d s^2 = a^2(\tau)(-\d\tau^2+\d \bm{x}^2)$, where $a(\tau) = -(H\tau)^{-1}$ is the scale factor, $H$ is the Hubble parameter, and $\tau$ is conformal time. For a given tree graph with $V$ vertices (and $I=V-1$ internal lines), external and internal energies are denoted by $X_i$ ($i=1, \ldots, V$) and $Y_{ij}$ ($(i, j)$ running over all $I$ incident vertex-edge pairs), respectively, $p_i$ denotes the vertex twist (with $\tilde{p}\equiv p+\tfrac{n-2}{2}d$), and we define energy ratios as $u_{ij} \equiv Y_{ij}/X_i$. An example is given by:
\begin{center}
\begin{tikzpicture}[line width=1. pt, scale=2]
    
    \draw[black] (0, 0) -- node[left] {$Y_{14}$} (0, 0.7);
    \draw[black] (0, 0) -- node[above left] {$Y_{24}$} (-0.5, -0.5);
    \draw[black] (0, 0) -- node[above right] {$Y_{34}$} (0.5, -0.5);
        
    \draw[fill=black] (0, 0.7) circle (.05cm) node[above] {$X_1$};
    \draw[fill=black] (-0.5, -0.5) circle (.05cm) node[below left] {$X_2$};
    \draw[fill=black] (0.5, -0.5) circle (.05cm) node[below right] {$X_3$};
    \draw[fill=black] (0, 0) circle (.05cm) node[above right] {$X_4$};
\end{tikzpicture} 
\end{center}
for which the dimensionless kinematic variables are
\begin{equation}
    \begin{aligned}
        u_{14} &\equiv \frac{Y_{14}}{X_1} \,, \quad u_{41} \equiv \frac{Y_{14}}{X_4} \,, \quad u_{24} \equiv \frac{Y_{24}}{X_2} \,, \\
        u_{42} &\equiv \frac{Y_{24}}{X_4} \,, \quad u_{34} \equiv \frac{Y_{34}}{X_3} \,, \quad u_{43} \equiv \frac{Y_{34}}{X_4} \,.
    \end{aligned}
\end{equation}
Additional definitions will be introduced as needed in the main text, and further notations are given in App.~\ref{sec: special functions}.

\newpage
\section{Hypergeometric Structure of Correlators}
\label{sec: Hypergeometric Structure of Correlators}

We begin by introducing the central objects of interest: multiple-exchange massive cosmological correlators. We first review their corresponding diagrammatic rules and establish the notations used throughout. We then motivate the spectral approach for computing these observables by highlighting the rich structure of the associated master integrals at tree level. In particular, we show that these integrals can be decomposed into elementary building blocks belonging to the broad class of generalised multivariable hypergeometric functions, which are subsequently glued together through spectral integration. This approach makes the analytic structure as well as the factorisation property of massive tree graphs manifest.

\subsection{Master integrals for tree correlators}
\label{subsec: master integrals for tree correlators}

Let us consider generic $N$-point (connected) correlators of a conformally coupled (with mass $m^2/H^2=(d^2-1)/4$ in $d$ spatial dimensions) bulk scalar field $\varphi$, which at equal time and in the late-time limit read
\begin{equation}
    \lim_{\tau\to0^-} \braket{\Omega| \varphi_{\k_1}(\tau) \cdots \varphi_{\k_N}(\tau)|\Omega} = (2\pi)^d \delta^{(d)}(\k_1 + \cdots + \k_N) \T_N(\k_1, \ldots, \k_N) \,,
\end{equation}
where $\varphi_{\k_i}$ ($i=1, \ldots, N$) are operators in $d$-dimensional spatial Fourier space, and $\tau$ is conformal time. The expectation value of products of such operators is taken with respect to the Bunch-Davies vacuum state $\ket{\Omega}$, defined in the infinite past limit $\tau \to -\infty$. We will assume that the bulk theory is a weakly coupled local quantum field theory in a rigid de Sitter background. After stripping off the overall momentum conservation delta function, owing to spatial translation invariance, the cosmological correlators $\T_N(\k_1, \ldots, \k_N)$ can be expanded into connected graphs in perturbation theory. In this work, we focus on the leading contributions coming from tree graphs.

\subsubsection*{Cosmological graphs}

We use the bulk Schwinger-Keldysh diagrammatic rules to compute cosmological correlators (see e.g.~\cite{Chen:2017ryl}), where two copies $\varphi_\pm$ of the field $\varphi$ are introduced on two branches of the in-in path integral contour that are sewn at the time $\tau=0$. The computation of correlators reduces to the evaluation of a set of nested time integrals, with the level of nesting determined by the number of vertices. To streamline this problem, we instead introduce graphs $\G(\{X_i\}, \{Y_j\})$ with $V$ vertices and $I$ internal lines, characterised by the set of external energies $\{X_i\}$ ($i = 1, \ldots, V$) and internal energies $\{Y_j\}$ ($j = 1, \ldots, I$). In practice, these internal energies are the magnitudes of $d$-dimensional momenta flowing through internal lines, that are found after enforcing momentum conservation at each vertex. For tree graphs, we have the relation $I=V-1$. We further allow for an arbitrary number of conformally coupled fields, possibly none if the vertex is internal, with momenta $k_1, \ldots, k_n$ attached to any given vertex. In this case, the corresponding vertex energy is given by the sum of the magnitudes of these external momenta, $X = k_1 + \cdots + k_n$. This simplification is enabled by the particularly simple form of the mode functions for conformally coupled fields.

\vskip 4pt
Following standard notations, we represent a $(+)$ (resp.~$(-)$) vertex with an operator insertion $\varphi_+$ (resp.~$\varphi_-$) on the $(+)$ (resp.~$(-)$) in-in branch by a black (resp.~white) dot $\raisebox{0pt}{
\begin{tikzpicture}[line width=1. pt, scale=2]
\draw[fill=black] (0, 0) circle (.05cm);
\end{tikzpicture} 
}$  (resp.~$\raisebox{0pt}{
\begin{tikzpicture}[line width=1. pt, scale=2]
\draw[fill=white] (0, 0) circle (.05cm);
\end{tikzpicture} 
}$). Every graph is the sum of all coloured combinations of black and white dots:
\begin{equation}
    \G(\{X_i\}, \{Y_j\}) = \sum_{\aa_1, \ldots, \aa_V=\pm} (i \aa_1) \cdots (i \aa_V) \, \I_{\aa_1 \cdots \aa_V}(\{X_i\}, \{Y_j\}) \,,
\end{equation}
with $\aa_1, \ldots, \aa_V$ being Schwinger-Keldysh indices (the sum contains $2^V$ terms), and where we have defined the master integrals
\begin{equation}
\label{eq: def master integrals}
    \I_{\aa_1 \cdots \aa_V}(\{X_i\}, \{Y_j\}) \equiv \int_{-\infty}^0 \prod_{i=1}^V \left[ \frac{\d\tau_i}{(-\tau_i)^{d+1}} (-\tau_i)^{p_i} e^{i\aa_i X_i \tau_i} \right] \prod_{j=1}^I D_{\aa_j \bb_j}(Y_j; \tau_j, \tau_j') \,.
\end{equation}
Naturally, both time variables $\tau_j$ and $\tau_j'$ (and the corresponding indices $\aa_j$ and $\bb_j$) should be identified with the corresponding times at the two vertices of the internal line, with internal energy $Y_j$, to which the bulk-to-bulk propagator $D_{\aa_j \bb_j}$ is assigned. Notice that we have stripped off all factors $(-i \aa_i)$ and trivial multiplicative coupling constants associated with each vertex $i$. For de Sitter covariant non-derivative interactions (of the form $\varphi^n\sigma^m$ where $\sigma$ is some massive bulk scalar field), the twist $p_i$ is related to the number of external fields $n_i$ attached to the vertex by the relation $p_i = n_i(d-1)/2$. In what follows, we will keep this twist general though, $p_i \in \mathbb{C}$, as it allows to accommodate for (spatial or time) derivative interactions, with complex twists describing models with resonant background. The only restriction we place on $p_i$ is to have IR-finite time integrals, which in practice imposes a lower-bound on $\Re(p_i)$ that depends on the number of legs attached to this vertex. From a phenomenological perspective, we emphasize that any tree-level correlator involving massless external fields, coupled to additional fields through arbitrary interactions, can in principle be expressed as a linear combination of the master integrals~\eqref{eq: def master integrals}, upon which an appropriate set of weight-shifting operators act, see e.g.~\cite{Baumann:2019oyu}.

\subsubsection*{Bulk-to-bulk propagators}

For definiteness, we consider that each internal line $j$ corresponds to the propagation of a massive scalar field in the principal series of de Sitter unitary irreducible representations (with mass $m_j/H>d/2$ in Hubble units), for which we introduce the mass parameter $\mu_j\in\mathbb{R}$, defined by $\mu_j^2 = m_j^2/H^2-d^2/4$. Notice that the exchange of complementary-series fields ($m_j/H<d/2$) can be accessed via analytic continuation after the replacement $\mu_j \to -i\nu_j$, with $\nu_j^2 = d^2/4 - m_j^2/H^2$, in the final result. In this case, potential spurious late-time IR divergences cancel against each other when summing all Schwinger-Keldysh contributions to a specific graph. Generalisation to other cases, e.g.~dS-breaking dispersion relations or spinning fields should be straightforward. The four bulk-to-bulk propagators then read
\begin{equation}
\label{eq: bulk-to-bulk propagators}
    \begin{aligned}
        D^{(\mu)}_{-+}(Y; \tau, \tau') &= \tfrac{\pi}{4} \, (\tau \tau')^{d/2} H_{i\mu}^{(1)}(-Y\tau) H_{i\mu}^{(2)}(-Y\tau') \,, \\
        D^{(\mu)}_{+-}(Y; \tau, \tau') &= \left[D^{(\mu)}_{-+}(Y; \tau, \tau')\right]^* \,, \\
        D^{(\mu)}_{\pm\pm}(Y; \tau, \tau') &= D^{(\mu)}_{\mp \pm}(Y; \tau, \tau') \Theta(\tau-\tau') + D^{(\mu)}_{\pm \mp}(Y; \tau, \tau') \Theta(\tau'-\tau) \,,
    \end{aligned}
\end{equation}
where $H_{i\mu}^{(1)}$ and $H_{i\mu}^{(2)}$ are the Hankel functions of the first and second kind, respectively. The difficulty in evaluating the master integrals~\eqref{eq: def master integrals} becomes clear: it requires integrating products of time-ordered Hankel functions. 

\subsubsection*{Feynman rules}

As discussed above, the master integrals~\eqref{eq: def master integrals} are naturally associated with specific tree graphs, in close analogy with the Feynman rules used to compute scattering amplitudes in flat spacetime. For the reader’s convenience, we now make these rules explicit. In particular, the prescription for constructing the integral $\I_{\aa_1 \cdots \aa_V}(\{X_i\}, \{Y_j\})$ corresponding to a tree graph with $V$ vertices and $I$ internal lines is as follows:

\begin{itemize}
    \item Colour the tree-graph vertices in black or white according to the Schwinger-Keldysh index $\aa = \pm$ of the corresponding vertices.
    \item Assign an external energy $X_i$ and a twist $p_i$ to each vertex $i=1, \ldots, V$, and an internal energy $Y_j$ and a mass parameter $\mu_j$ to each internal line $j=1, \ldots, I$.
    \item Assign the vertex factor $(-\tau_i)^{p_i} e^{i\aa_i X_i \tau_i}$ to each bulk interaction. 
    \item Assign a bulk-to-bulk propagator~\eqref{eq: bulk-to-bulk propagators} to each internal line, according to the Schwinger-Keldysh indices. Propagators are (anti)-time ordered when both indices are the same.
    \item Finally, integrate over the time insertions of all bulk vertices, $\int_{-\infty}^0 \d\tau_i/(-\tau_i)^{d+1}$, which includes the de Sitter-invariant volume element in $d$ spatial dimensions.
\end{itemize}

\subsubsection*{Dimensionless master integrals}

For later convenience, we define {\it dimensionless} master integrals $\widehat{\I}_{\aa_1 \cdots \aa_V}$ in the following way:
\begin{equation}
    \I_{\aa_1 \cdots \aa_V}(\{X_i\}, \{Y_j\}) = \left[\prod_{i=1}^V \frac{1}{X_i^{p_i-d}}\right]\left[\prod_{j=1}^I \frac{\pi/4}{(X_j X_j')^{d/2}}\right] \, \widehat{\I}_{\aa_1 \cdots \aa_V}(\{X_i\}, \{Y_j\}) \,,
\end{equation}
where we have factored out overall powers of external energies. The additional stripped-off energies (together with the factor $\pi/4$) for every internal line is for later convenience. Here, $X_j$ and $X_j'$ denote both vertex energies attached to the internal line $j$. Since the master integrals $\widehat{\I}_{\aa_1 \cdots \aa_V}$ are dimensionless, they depend on the $V+I=2I+1$ variables only through their ratios, which reduces the number of variables to $2I$ dimensionless energy ratios. These ratios fully specify the kinematics of dimensionless master integrals. Indeed, by introducing the dimensionless time $z_i \equiv X_i \tau_i$ at each vertex, the dimensionless master integrals are given by
\begin{equation}
\label{eq: def dimensionless master integrals}
    \widehat{\I}_{\aa_1 \cdots \aa_V}(\{X_i\}, \{Y_j\}) \equiv \int_{-\infty}^0 \prod_{i=1}^V \left[ \frac{\d z_i}{(-z_i)^{d+1}} (-z_i)^{p_i} e^{i\aa_i z_i} \right] \prod_{j=1}^I \widehat{D}_{\aa_j \bb_j}(\tfrac{Y_j}{X_i} z_j, \tfrac{Y_j}{X_i'} z_j') \,,
\end{equation}
where the dimensionless bulk-to-bulk propagators $\widehat{D}_{\aa\bb}^{(\mu)}$ are constructed out of 
\begin{equation}
\label{eq: def dimensionless prop}
    \widehat{D}_{-+}^{(\mu)}(\tfrac{Y}{X}z, \tfrac{Y}{X'}z') \equiv (zz')^{d/2} H_{i\mu}^{(1)}(-\tfrac{Y}{X}z) H_{i\mu}^{(2)}(-\tfrac{Y}{X'}z') \,.
\end{equation}
In what follows, we will exclusively work with dimensionless master integrals, which will always be hatted to avoid any possible confusion.

\subsubsection*{Selected integrals}

Throughout this work, we will mostly focus directly on the master tree-graph integrals $\widehat{\I}_{\aa_1 \cdots \aa_V}(\{X_i\}, \{Y_j\})$. With a slight abuse of terminology, we will also refer to these objects as (massive) cosmological correlators. The simplest example consists of one site only, and corresponds to a contact interaction. Using the above Feynman rules, and labelling the vertex as $\aa=+$ for concreteness, we get
\begin{equation}
\label{eq: one-site master integral}
    \widehat{\I}_+ = \raisebox{0pt}{
\begin{tikzpicture}[line width=1. pt, scale=2]
\draw[fill=black] (0, 0) circle (.05cm) node[above=1mm] {$(X, p)$};
\end{tikzpicture} 
} = \int_{-\infty}^0 \frac{\d z}{(-z)^{d+1}} (-z)^p e^{iz} \,,
\end{equation}
which converges in the IR for $p>d$. The number of external legs is arbitrary, as the computation depends only on the total energy $X=\sum_i k_i$ of the vertex. As such, the master integral $\I_+(X)$ contributes universally to any $N$-point correlator $\T_N$. For aficionados, notice that we strip off overall factors $2k_i$ coming from the mode functions of the conformally coupled field. However, as the graph only depends on a single energy variable, the corresponding dimensionless integral $\widehat{\I}_+$ is energy independent. Being simple enough, this integral straightforwardly evaluates to
\begin{equation}
    \widehat{\I}_+ = e^{-\frac{i\pi}{2}(p-d)} \, \Gamma(p-d) \,.
\end{equation}
This closed-form expression enables an analytic continuation of the integral to any $p\in\mathbb{C}\backslash\{d-n|n\in\mathbb{N}\}$. The excluded discrete values correspond to singular points, in whose vicinity the integral $\I_+$ develops a logarithmic branch point in the energy $X$. In order to achieve convergence in the UV, the time-integral contour in~\eqref{eq: one-site master integral} has been slightly tilted in the complex $z$-plane, $-\infty \to -\infty^+ \equiv -\infty(1-i\epsilon)$, with $\epsilon>0$, as required by the Bunch-Davies vacuum. The $\aa=-$ contribution is simply the complex conjugate master integral, $\widehat{\I}_- = \widehat{\I}_+^*$. Summing over all Schwinger-Keldysh contributions, we obtain the full graph
\begin{equation}
    \widehat{\G}(E) = i\widehat{\I}_+ - i\widehat{\I}_- = 2\Re(i\widehat{\I}_+) \,,
\end{equation}
which is manifestly real. 

\vskip 4pt
The next-to-simplest example is given by the tree-level exchange of a single massive field, commonly referred to as the two-site graph, in which four master integrals contribute. Half of them are related to the remaining ones by complex conjugation: $\widehat{\I}_{--} = \widehat{\I}_{++}^*$ and $\widehat{\I}_{+-} = \widehat{\I}_{-+}^*$. Explicitly, the fully-nested (all-plus) master integral reads
\begin{equation}
\label{eq: two-site chain ++}
    \begin{aligned}
        &\widehat{\I}_{++}(X_1, X_2, Y) = \raisebox{-17pt}{
\begin{tikzpicture}
[line width=1. pt, scale=2]
\draw[fill=black] (0, 0) circle (.05cm) node[above=0.5mm] {$(X_1, p_1)$};
\draw[fill=black] (1, 0) circle (.05cm) node[above=0.5mm] {$(X_2, p_2)$};
\draw[black] (0.05, 0) -- node[below] {$(Y, \mu)$} (1, 0);
\end{tikzpicture} 
} \\
        &= \int_{-\infty}^0 \left[\frac{\d z_1}{(-z_1)^{d+1}} (-z_1)^{p_1} e^{iz_1}\right] \left[\frac{\d z_2}{(-z_2)^{d+1}} (-z_2)^{p_2} e^{iz_2}\right] \widehat{D}_{++}^{(\mu)}(\tfrac{Y}{X_1} z_1, \tfrac{Y}{X_2} z_2) \,,
    \end{aligned}
\end{equation}
where $Y$ and $\mu$ is the energy and the mass parameter of the internal line, respectively. This correlator has been extensively studied in recent years with a plethora of various computational methods, see, e.g.,~\cite{Arkani-Hamed:2018kmz, Qin:2022fbv, Qin:2023ejc, Liu:2024str, Werth:2024mjg} for full closed-form expressions. The resulting solutions can be expressed in terms of hypergeometric functions. Extensions of this master integral including a dS-breaking dispersion relation and spinning fields are treated in e.g.~\cite{Pimentel:2022fsc, Jazayeri:2022kjy, Qin:2022fbv, Qin:2025xct}. The associated fully-factorised master integral is given by 
\begin{equation}
    \begin{aligned}
        &\widehat{\I}_{-+}(X_1, X_2, Y) = \raisebox{-17pt}{
\begin{tikzpicture}
[line width=1. pt, scale=2]
\draw[fill=white] (0, 0) circle (.05cm) node[above=0.5mm] {$(X_1, p_1)$};
\draw[fill=black] (1, 0) circle (.05cm) node[above=0.5mm] {$(X_2, p_2)$};
\draw[black] (0.05, 0) -- node[below] {$(Y, \mu)$} (1, 0);
\end{tikzpicture} 
} \\
        &= \int_{-\infty}^0 \left[\frac{\d z_1}{(-z_1)^{d+1}} (-z_1)^{p_1} e^{-iz_1}\right] \left[\frac{\d z_2}{(-z_2)^{d+1}} (-z_2)^{p_2} e^{iz_2}\right] \widehat{D}_{-+}^{(\mu)}(\tfrac{Y}{X_1} z_1, \tfrac{Y}{X_2} z_2) \,,
    \end{aligned}
\end{equation}
where the two time integrals factorise due to the non-time-ordered bulk-to-bulk propagator $D_{-+}^{(\mu)}$. This master integral evaluates to a product of hypergeometric $_2F_1$ functions, see Sec.~\ref{app: two-site chain} for more details. Summing over all Schwinger-Keldysh contributions gives the full graph $\widehat{\G}(X_1, X_2, Y)$.

\subsubsection*{Singularity structure}

Cosmological correlators exhibit a rich singularity structure, which we now make explicit. In particular, all correlators develop singularities, more precisely, branch points in the case of massive exchange, whenever the energies flowing into a given subgraph sum to zero. When the energies associated with all external legs sum to zero, a configuration commonly referred to as the total energy singularity, the residue at this branch point reproduces the flat-space scattering amplitude for the corresponding process~\cite{Maldacena:2011nz, Raju:2012zr}. More generally, when the energies entering a subgraph vanish, the correlator factorises at this kinematic locus into a product involving a lower-point correlator. Further details can be found, e.g., in~\cite{Arkani-Hamed:2018kmz, Arkani-Hamed:2017fdk, Arkani-Hamed:2018bjr, Benincasa:2018ssx}.

\vskip 4pt
As a concrete example, consider the two-site graph discussed above. Schematically, we have
\begin{equation}
    \lim_{X_1+Y\to0} \widehat{\G}(X_1, X_2, Y) \propto \log\left(\frac{X_1+Y}{X_1}\right) \times \left[\raisebox{-11pt}{\begin{tikzpicture}
[line width=1. pt, scale=2]
\draw[fill=black] (0, 0) circle (.05cm) node[above=0.5mm] {$X_2$};
\draw[black] (0, 0) -- node[below] {$Y$} (0.5, 0);
\end{tikzpicture}} \,\, - \raisebox{-13.5pt}{
\begin{tikzpicture}
[line width=1. pt, scale=2]
\draw[fill=black] (0.5, 0) circle (.05cm) node[above=0.5mm] {$X_2$};
\draw[black] (0, 0) -- node[below] {$-Y$} (0.5, 0);
\end{tikzpicture} 
}\right] \,.
\end{equation}
In this limit, the residue at the partial-energy branch point $X_1+Y\to0$ is proportional to a shifted one-site graph with an attached massive leg. An analogous statement holds upon exchanging $X_1\leftrightarrow X_2$. The total-energy pole is obtained in the limit $X_1+X_2\to0$, and originates from the fully nested contribution $\widehat{\I}_{++}$ (together with its conjugate counterpart $\widehat{\I}_{--}$). This factorisation structure generalises to arbitrary tree correlators.

\vskip 4pt
Notably, these singularities do not lie within the physical region of kinematic space, as accessing them requires negative energies, i.e. sums of external momentum magnitudes, and is therefore only possible via analytic continuation. A central objective is thus to understand how correlators extend away from these singular loci. In this work, we show that the behaviour of massive correlators away from the singularities is governed by the hypergeometric structure of elementary vertex building blocks, with the singularities themselves corresponding to the associated hypergeometric singular points. An important additional observation is that, for cosmological correlators in the Bunch–Davies vacuum, the Euclidean region coincides with the physical region.

\subsection{Spectral propagators}
\label{subsec: spectral propagators}

The main difficulty in evaluating the master integrals~\eqref{eq: def dimensionless master integrals} stems from the nested structure of the time integrations, which originates from the (anti-)time-ordered bulk-to-bulk propagators. Using harmonic analysis, these propagators $\widehat{D}^{(\mu)}_{\pm\pm}$ can be expressed in a spectral representation as integrals over products of two Hankel functions, thereby achieving factorisation. In this representation, time ordering is effectively traded for an additional spectral integration. Explicitly, this split representation reads~\cite{Melville:2024ove, Werth:2024mjg}:\footnote{See~\cite{Belrhali:2026rkn} for a formal derivation of~\eqref{eq: split representation ++}.}
\begin{equation}
\label{eq: split representation ++}
    \widehat{D}_{++}^{(\mu)}(\tfrac{Y}{X}z, \tfrac{Y}{X'}z') =
    e^{-\frac{i\pi}{2}} (z z')^{\frac{d}{2}}
    \int\limits_{-\infty}^{+\infty} [\d\nu] \rho_\nu(\mu)\, e^{+\pi\nu} \, H_{i\nu}^{(2)}(-\tfrac{Y}{X} z) H_{i\nu}^{(2)}(-\tfrac{Y}{X} z') \,.
\end{equation}
where the integration measure includes the de Sitter density of states $[\d\nu] \equiv \d\nu\N_\nu$ with 
\begin{equation}
     \N_\nu \equiv \frac{\nu}{\pi} \sinh(\pi\nu) = \frac{1}{\Gamma[\pm i\nu]} \,,
\end{equation}
where we use the short notation for the product of $\Gamma$-function, $\Gamma[\pm a] \equiv \Gamma(+a)\Gamma(-a)$. Notice that it is shadow symmetric, $\N_\nu = \N_{-\nu}$. The tree-level spectral density includes the following pole prescription
\begin{equation}
\label{eq: iepsilon prescription}
    \rho_\nu(\mu) \equiv \frac{1}{(\nu^2-\mu^2)_{i\epsilon}} = \frac{1}{2\sinh(\pi\mu)}\left[\frac{e^{+\pi\mu}}{\nu^2-\mu^2+i\epsilon} - \frac{e^{-\pi\mu}}{\nu^2-\mu^2-i\epsilon}\right] \,,
\end{equation}
and generalises the usual flat-space pole prescription for the Feynman propagator by accounting for spontaneous particle production, as can be seen from the Boltzmann weights $e^{\pm\pi\mu}$ associated with the on-shell poles. Naturally, the anti-time ordered propagator is given by $\widehat{D}_{--}^{(\mu)} = [\widehat{D}_{++}^{(\mu)}]^*$, which explicitly reads
\begin{equation}
\label{eq: split representation --}
    \widehat{D}_{--}^{(\mu)}(\tfrac{Y}{X}z, \tfrac{Y}{X'}z') =
    e^{+\frac{i\pi}{2}} (z z')^{\frac{d}{2}} 
    \int\limits_{-\infty}^{+\infty} [\d\nu]\rho_\nu(\mu)\, e^{-\pi\nu} \, H_{i\nu}^{(1)}(-\tfrac{Y}{X} z) H_{i\nu}^{(1)}(-\tfrac{Y}{X} z') \,,
\end{equation}
where the Boltzmann weights associated with on-shell poles are inverted in the $i\epsilon$-prescription. The two dimensionless times $z$ and $z'$ appear manifestly factorised inside the spectral integral. 

\subsubsection*{On-shell poles}

As we will see in the following, it is convenient to rewrite the $i\epsilon$-prescription~\eqref{eq: iepsilon prescription} in a distributional sense:
\begin{equation}
\label{eq: iepsilon distribution}
    \frac{1}{(\nu^2-\mu^2)_{i\epsilon}} = 
    \delta\left(\frac{1}{\nu^2 - \mu^2}\right)+ \P\left(\frac{1}{\nu^2 - \mu^2}\right) \,,
\end{equation}
where $\delta$ denotes the on-shell contribution---which will be specified and discussed at length in Sec.~\ref{subsec: on-shell poles}---, and $\P$ denotes the Cauchy principal value. The first term is only non-vanishing at the on-shell poles $\nu=\pm\mu$, which we will sometimes refer to as the particle production (or on-shell) poles. 
Importantly, when computing correlators, the $\delta$-function encoded in this on-shell contribution trivialises the spectral integration and the graph factorises. This on-shell contribution contains non-analytic dependences on energy ratios. Specifically, it encodes the {\it local} signal (non-analytic in $X/X'$) and the {\it non-local} signal (non-analytic in the internal energy $Y$), see e.g.~\cite{Tong:2021wai} for more details. We provide extensive details on collecting these on-shell poles in Sec.~\ref{subsec: on-shell poles}.

\subsubsection*{Analytic poles}

The second term in~\eqref{eq: iepsilon distribution} instead generates a contribution to the correlator $\widehat{\I}_{++}^\P$ that is analytic in the energies, and schematically reads
\begin{equation}
    \widehat{\I}_{++}^\P
    \propto \int\limits_{-\infty}^{+\infty} [\d\nu]\rho_{\nu, \P}(\mu) \, \V^{(1)}_{+, \nu}(X_1, Y; p_1) \V^{(1)}_{+, \nu}(X_2, Y; p_2)\,.
\end{equation}
With a slight abuse of notation, the subscript $\P$ here denotes that we retain only the second term in the $i\epsilon$-prescription~\eqref{eq: iepsilon distribution}, corresponding to the Cauchy principal value. We have introduced the so-called ``vertex function'',
\begin{equation}
    \V^{(1)}_{+, \mu} (X, Y; p) \equiv e^{+\frac{\pi}{2}\mu}\int_{-\infty}^0 \frac{\d z}{(-z)^{d+1}} (-z)^{p+\frac{d}{2}} e^{+iz} H_{i\mu}^{(2)}(-\tfrac{Y}{X} z) \,.
\end{equation}
These objects will play a central role in the following and will be studied in detail in Sec.~\ref{sec: vertex functions}. The contribution $\widehat{\I}_{++}^\P$ is often referred to as the EFT part, as it persists when the exchanged particle is integrated out. In the context of scattering amplitudes, the analogous contribution corresponds to the sum of effective contact interactions obtained by expanding the massive propagator $(-\Box-m^2)^{-1}$ at large mass $m^2$ (see Sec.~\ref{subsec: large-mass expansion} for more details). For cosmological correlators, this term is also sometimes called the {\it background}, as it does not exhibit non-analytic signatures of the exchanged particle. 

\vskip 4pt
Evaluating the remaining spectral integration constitutes the core of the spectral gluing method developed in this work and will be discussed at length in what follows. 

\subsection{Factorisation \& gluing}

By introducing a spectral integral for each (anti-)time-ordered bulk-to-bulk propagator, it becomes clear that any tree graph can be {\it entirely} factorised in time, thereby eliminating the challenge to evaluate nested time integrals. Naturally, non-time-ordered bulk-to-bulk propagators do not require introducing a spectral integral, as they are already factorised in time. The first step is therefore to factorise (or cut) all internal lines, and set the mass-parameter $\mu_j$ of each (anti-)time-ordered internal line---connecting two dots with same colour---off shell. To illustrate this, an explicit example of the factorisation of a tree-level graph is:
\begin{center}
\begin{tikzpicture}[line width=1. pt, scale=2]
    \draw[black] (0, 0) -- (0.7, 0);
    \draw[black] (0.7, 0) -- (1.2, 0.5);
    \draw[black] (0.7, 0) -- (1.2, -0.5);
    \draw[black] (0, 0) -- (-0.5, 0.5);
    \draw[black] (0, 0) -- (-0.5, -0.5);
    \draw[black] (-0.5, -0.5) -- (0, -1);
    \draw[black] (-0.5, -0.5) -- (-1.2, -0.5);
    
    \draw[fill=black] (0, 0) circle (.05cm);
    \draw[fill=black] (0.7, 0) circle (.05cm);
    \draw[fill=black] (1.2, 0.5) circle (.05cm);
    \draw[fill=white] (1.2, -0.5) circle (.05cm);
    \draw[fill=white] (-0.5, 0.5) circle (.05cm);
    \draw[fill=white] (-0.5, -0.5) circle (.05cm);
    \draw[fill=black] (0, -1) circle (.05cm);
    \draw[fill=white] (-1.2, -0.5) circle (.05cm);
    \draw[black, ->] (1.7, 0) -- node[below] {factorisation} (2.8, 0);
\end{tikzpicture} 
\begin{tikzpicture}[line width=1. pt, scale=2]
    \draw[black] (0, 0) -- (0.7, 0);
    \draw[white, line width=2pt] 
  ($(0,0)!1/3!(0.7,0)$) -- ($(0,0)!2/3!(0.7,0)$);
    \node at ($(0,0)!1/2!(0.7,0)$) {$\times$};
    \draw[black] (0.7, 0) -- (1.2, 0.5);
    \draw[white, line width=2pt] 
  ($(0.7,0)!1/3!(1.2,0.5)$) -- ($(0.7,0)!2/3!(1.2,0.5)$);
    \node at ($(0.7,0)!1/2!(1.2,0.5)$) {$\times$};
    \draw[black] (0.7, 0) -- (1.2, -0.5);
    \draw[white, line width=2pt] 
  ($(0.7,0)!1/3!(1.2,-0.5)$) -- ($(0.7,0)!2/3!(1.2,-0.5)$);
    \node at ($(0.7,0)!1/2!(1.2,-0.5)$) {$\times$};
    \draw[black] (0, 0) -- (-0.5, 0.5);
    \draw[white, line width=2pt] 
  ($(0,0)!1/3!(-0.5, 0.5)$) -- ($(0,0)!2/3!(-0.5, 0.5)$);
    \node at ($(0,0)!1/2!(-0.5,0.5)$) {$\times$};
    \draw[black] (0, 0) -- (-0.5, -0.5);
    \draw[white, line width=2pt] 
  ($(0,0)!1/3!(-0.5, -0.5)$) -- ($(0,0)!2/3!(-0.5, -0.5)$);
    \node at ($(0,0)!1/2!(-0.5,-0.5)$) {$\times$};
    \draw[black] (-0.5, -0.5) -- (0, -1);
    \draw[white, line width=2pt] 
  ($(-0.5, -0.5)!1/3!(0, -1)$) -- ($(-0.5, -0.5)!2/3!(0, -1)$);
    \node at ($(-0.5, -0.5)!1/2!(0, -1)$) {$\times$};
    \draw[black] (-0.5, -0.5) -- (-1.2, -0.5);
    \draw[white, line width=2pt] 
  ($(-0.5, -0.5)!1/3!(-1.2, -0.5)$) -- ($(-0.5, -0.5)!2/3!(-1.2, -0.5)$);
    \node at ($(-0.5, -0.5)!1/2!(-1.2, -0.5)$) {$\times$};
  
    \draw[fill=black] (0, 0) circle (.05cm);
    \draw[fill=black] (0.7, 0) circle (.05cm);
    \draw[fill=black] (1.2, 0.5) circle (.05cm);
    \draw[fill=white] (1.2, -0.5) circle (.05cm);
    \draw[fill=white] (-0.5, 0.5) circle (.05cm);
    \draw[fill=white] (-0.5, -0.5) circle (.05cm);
    \draw[fill=black] (0, -1) circle (.05cm);
    \draw[fill=white] (-1.2, -0.5) circle (.05cm);
\end{tikzpicture} 
\end{center}
This example requires introducing three spectral integrals, and therefore setting three internal mass parameters off shell, to reconstruct the full graph. The complete factorisation of a tree-level graph naturally reveals a set of elementary building blocks: the vertex functions. These building blocks are defined by single time integrals. The remaining task is then to perform the spectral integrations in order to reconstruct the original graph. We refer to this procedure as spectral {\it gluing}, which is schematically represented by:
\begin{equation*}
    \vcenter{\hbox{
\begin{tikzpicture}
[line width=1. pt, scale=2]
    \draw[fill=gray!50] (0, 0) circle (0.3cm);
    \draw[fill=gray!50] (1.6, 0) circle (0.3cm);
    \draw[fill=black] (0.3, 0) circle (.05cm) node[above right=0.1mm] {};
    \draw[fill=black] (1.3, 0) circle (.05cm) node[above=0.5mm] {};
    \draw[black] (0.3, 0) -- (1.3, 0);
    \draw[white, line width=2pt] 
  ($(0.3, 0)!1/3!(1.3, 0)$) -- ($(0.3, 0)!2/3!(1.3, 0)$);
    \node at ($(0.3, 0)!1/2!(1.3, 0)$) {$\times$};
    \draw[black, ->] (2.2, 0) -- node[below] {gluing} (2.8, 0);
\end{tikzpicture} 
    }} \int\limits_{-\infty}^{+\infty} [\d\nu] \rho_\nu(\mu_j)
    \vcenter{\hbox{
\begin{tikzpicture}
[line width=1. pt, scale=2]
    \draw[fill=gray!50] (0, 0) circle (0.3cm);
    \draw[fill=gray!50] (1.6, 0) circle (0.3cm);
    \draw[fill=black] (0.3, 0) circle (.05cm) node[above right=0.1mm] {$\nu$};
    \draw[fill=black] (1.3, 0) circle (.05cm) node[above left=0.1mm] {$\nu$};
    \draw[black] (0.3, 0) -- (1.3, 0);
    \draw[white, line width=2pt] 
  ($(0.3, 0)!1/3!(1.3, 0)$) -- ($(0.3, 0)!2/3!(1.3, 0)$);
  \node at ($(0.3, 0)!1/2!(1.3, 0)$) {$\times$};
\end{tikzpicture} 
    }}
\end{equation*}
for a time-ordered internal line $j$ with mass parameter $\mu_j$, and where the grey circles denote the remain subgraphs. Remarkably, the elementary vertex functions are universally meromorphic in the whole complex $\nu$-plane and evaluating the spectral integrals amounts to collecting residues at an (infinite) set of poles.

\vskip 4pt
Having established the strategy for reconstructing multiple-exchange massive cosmological correlators, the remaining tasks are twofold: (i) to compute the vertex functions and determine their analytic structure in the complex $\nu$-plane, and (ii) to understand how to perform the spectral integrations in general settings. These issues will be addressed in the remainder of this paper.

\newpage
\section{Vertex Functions}
\label{sec: vertex functions}

After factorising tree-level massive cosmological correlators, the fundamental building blocks are vertex functions, each involving a single time integral. We start off with the simplest case, namely the single-massive-leg vertex, which is generated by a single massive mode function. Such vertices play a central role, as they generically arise as endpoints of massive diagrams. In particular, we show that their folded limit can be obtained straightforwardly, owing to the well-known analytic continuation properties of hypergeometric functions. In Sec.~\ref{subsec: two-massive-leg vertex}, we study in detail the two-massive-leg vertex function and determine its analytic continuation across the entire physical kinematic region. In Sec.~\ref{subsec: generalisation to arbitrary massive legs}, we extend this analysis to an arbitrary number of massive legs, showing that vertex functions universally belong to the class of Lauricella functions, i.e.~multivariable generalisations of hypergeometric functions.

\subsection{Single-massive-leg vertex \& degenerate limit}
\label{subsec: single-massive vertex function}

Consider the vertex function with a single massive leg carrying internal energy $Y$ and characterised by a mass parameter $\mu$ in the principal series, $\mu\in\mathbb{R}_{>0}$. Although we restrict to this case for concreteness, the results can be extended beyond the principal series with appropriate care in analytically continuing in $\mu$. The vertex also depends on the external energy $X$ and the twist $p$. Focusing on the $\aa=+$ case, the vertex function is defined as follows:
\begin{equation}
\label{eq: V1 definition}
    \V_{+, \mu}^{(1)}(X, Y; p) \equiv e^{+\frac{\pi}{2}\mu} \int_{-\infty}^0 \frac{\d z}{(-z)^{d+1}} (-z)^{p+\frac{d}{2}}e^{iz} H_{i\mu}^{(2)}(-\tfrac{Y}{X} z) \,,
\end{equation}
where the additional power $(-z)^{\frac{d}{2}}$ comes from the massive mode function. Schematically, we represent this vertex function as:
\begin{center}
\begin{tikzpicture}[line width=1. pt, scale=2]
    \draw[black] (-1, 0) -- node[below] {$(Y, \mu)$} (0, 0);

    \draw[gray!40] (0, 0) -- (0.5, 0.5) node[right] {$k_1$};
    \draw[gray!40] (0, 0) -- (0.5, 0.25) node[right] {$k_2$};
    \draw[gray!40] (0, 0) -- (0.5, 0) node[right] {$k_3$};
    \node[gray!40] at ($(0.5, 0)!1/3!(0.5, -0.5)$) {$\vdots$};
    \draw[gray!40] (0, 0) -- (0.5, -0.5) node[right] {$k_m$};
    
    \draw[fill=black] (0, 0) circle (.05cm) node[above] {$(X, p)$};
\end{tikzpicture} 
\end{center}
where $X = k_1 + \cdots + k_m$ is the sum of momentum magnitudes of conformally coupled scalar fields attached to the vertex. The generalised triangle inequality dictates the physical region in kinematic space: $Y\leq X$, with the degenerate (folded) case being $Y=X$. This case will be treated separately in the following. The associated $\aa=-$ counterpart $\V^{(1)}_{-, \mu}$ is found by the formal replacement $e^{iz} \to e^{-iz}$, $H^{(2)}_{i\mu} \to H^{(1)}_{i\mu}$ and $e^{+\frac{\pi}{2}\mu} \to e^{-\frac{\pi}{2}\mu}$.

\vskip 4pt
In order to compute this integral, we use the convenient representation of the Hankel function as a Mellin-Barnes integral, given by
\begin{equation}
\label{eq: Hankel Mellin integral}
    H_{i\mu}^{(2)}(z) = \frac{1}{\pi}\int_{-i\infty}^{+i\infty} \frac{\d s}{2i\pi} \left(\frac{z}{2}\right)^{-2s} e^{-\frac{i\pi}{2}(2s-i\mu-1)}\Gamma[s\pm \tfrac{i\mu}{2}] \,,
\end{equation}
where the integration contour runs along the imaginary axis, and where we again use the short notation $\Gamma[\pm a] \equiv \Gamma(+a)\Gamma(-a)$ for a product of $\Gamma$-functions. At the level of the Mellin integrand, the dependence on time $z$ is a simple power-law. Inserting Eq.~\eqref{eq: Hankel Mellin integral} into the vertex function yields the following Mellin representation of the vertex function
\begin{equation}
\label{eq: V1 Mellin representation}
    \V_{+, \mu}^{(1)}(X, Y; p) = \C_+^{(1)}(p)  \int_{-i\infty}^{+i\infty} \frac{\d s}{2i\pi} \left(\frac{Y}{2X}\right)^{-2s} \Gamma[s\pm \tfrac{i\mu}{2}, p-\tfrac{d}{2}-2s] \,,
\end{equation}
where we have used
\begin{equation}
\label{eq: elementary time integral}
    \int_{-\infty}^0 \d z (-z)^{\alpha-1}e^{iz} = e^{-\frac{i\pi\alpha}{2}} \Gamma(\alpha) \,,
\end{equation}
which formally converges for $\alpha>0$ and is defined elsewhere by analytic continuation. The overall prefactor, which includes a phase, reads
\begin{equation}
\label{eq: overall C1+}
    \C_+^{(1)}(p) \equiv \tfrac{1}{\pi} e^{-\frac{i\pi}{2}(p-\frac{d}{2}-1)} \,.
\end{equation}
The dimensionless vertex function only depends on the external kinematic ratio
\begin{equation}
    u \equiv \frac{Y}{X} \in [0, 1] \,.
\end{equation}
In Eq.~\eqref{eq: V1 Mellin representation}, the integration contour separates poles on the left, located at $s=\pm \tfrac{i\mu}{2}-n$, and poles on the right, located at $s=\tfrac{1}{2}(p-\tfrac{d}{2}+n)$, with $n\in\mathbb{N}$, as shown in Fig.~\ref{fig: Mellin space poles}. Closing the contour and picking the residues therefore depends on the magnitude of $u$. For $u<1$ in the physical region, we close the contour to the left as the arc at infinity vanishes exponentially, and we collect the residues of the two infinite towers of poles, giving
\begin{equation}
\label{eq: V1 expression}
    \boxed{
    \V_{+, \mu}^{(1)}(u; p) = \C_+^{(1)}(p)  \sum_{\alpha=\pm i\mu} \F^{(1)}_{\alpha}(u; p-\tfrac{d}{2}) \,,
    }
\end{equation}
where we have introduced the function
\begin{equation}
\label{eq: F1 function}
    \boxed{
    \F^{(1)}_\alpha(u; p) \equiv \Gamma[p+\alpha, -\alpha, 1+\alpha] \left(\frac{u}{2}\right)^{\alpha} \bar{F}^{(1)}_C\left[\left.\begin{matrix} \frac{p+\alpha}{2},\,  \frac{p+1+\alpha}{2}\\ 1+\alpha \end{matrix}\right\vert u^2\right] \,,
    }
\end{equation}
where $\bar{F}_C^{(1)}$ is the regularised type-$C$ Lauricella function of order $1$ (of a single variable) which reduces to Gau\ss'~hypergeometric function $F_C^{(1)} \equiv {}_2F_1$, see App.~\ref{sec: special functions} for more details. Importantly, the two leading power-law behaviours $u^{\pm i\mu}$ correspond to the positive- and negative-frequency modes of the massive internal leg near the future boundary $z\to0$. In Mellin space, these are associated with the two leading poles $s=\pm i\mu/2$. As will become clear, particularly in the context of spectral gluing, these contributions govern the large-$\mu$ behaviour of the vertex functions and determine how the spectral contour can be closed. In this way, the resummation over subleading poles effectively reconstructs the underlying hypergeometric structure of the vertex function. Eventually, the $\aa=-$ vertex function is given by Eq.~\eqref{eq: V1 expression} with the replacement $\C_+^{(1)}  \to \C_-^{(1)} $, where the overall prefactor is simply given by $\C_-^{(1)}(p) = \C_+^{(1)*}(p)$.

\begin{figure}[h!]
    \centering
    \hspace*{1cm}
    \begin{tikzpicture}[line width=1. pt, scale=2]
		
		\draw[->] (0, -2) -- (0, 2);
		\draw[->] (-2, 0) -- (2, 0);

        \draw[pygreen, line width=1.2pt,
                        postaction={decorate},
                        decoration={markings, mark=at position 0.8 with {\arrow{>}}}] (0.1, -1.8) -- (0.1, 1.8);

        \node[pyblue] at (-1, 1.2) {$\Gamma[s-\tfrac{i\mu}{2}]$};
        \filldraw[pyblue] (0, 0.75) circle (1pt);
        \filldraw[pyblue] (-0.5, 0.75) circle (1pt);
        \filldraw[pyblue] (-1, 0.75) circle (1pt);
        \filldraw[pyblue] (-1.5, 0.75) circle (1pt);
        \filldraw[pyblue] (-2, 0.75) circle (1pt);

        \node[pyblue] at (-1, -1.2) {$\Gamma[s+\tfrac{i\mu}{2}]$};
        \filldraw[pyblue] (0, -0.75) circle (1pt);
        \filldraw[pyblue] (-0.5, -0.75) circle (1pt);
        \filldraw[pyblue] (-1, -0.75) circle (1pt);
        \filldraw[pyblue] (-1.5, -0.75) circle (1pt);
        \filldraw[pyblue] (-2, -0.75) circle (1pt);

        \node[pyred] at (1, 0.45) {$\Gamma[p-\tfrac{d}{2}-2s]$};
        \filldraw[pyred] (0.2, 0) circle (1pt);
        \filldraw[pyred] (0.7, 0) circle (1pt);
        \filldraw[pyred] (1.2, 0) circle (1pt);
        \filldraw[pyred] (1.7, 0) circle (1pt);

		\node at (0, 2.15) {$\text{Im}(s)$};
		\node at (2.4, 0) {$\text{Re}(s)$};
	\end{tikzpicture}
    \caption{Integration contour for the Mellin-Barnes representation~\eqref{eq: V1 Mellin representation} of the vertex function $\V^{(1)}$, with the corresponding towers of poles, located at $s=\mp \tfrac{i\mu}{2}-n$ and $s=\tfrac{1}{2}(p-\tfrac{d}{2}+n)$, with $n\in\mathbb{N}$. The choice of whether to close the contour to the left ($u<1$) or to the right ($u>1$) is dictated by the external kinematic ratio $u$, which, in the physical region, takes values in the interval $[0, 1]$.}
    \label{fig: Mellin space poles}
\end{figure}

\subsubsection*{Analytic continuation}

The Mellin-Barnes representation of the vertex function~\eqref{eq: V1 Mellin representation} is suitable for exploring other regions in kinematic space, not necessarily physical, as we now illustrate. For $u>1$, the integration contour can be closed to the right in Mellin space, see Fig.~\ref{fig: Mellin space poles}, which gives
\begin{equation}
    \V_{+, \mu}^{(1)}(u; p) = \frac{\C_+^{(1)}(p)}{2} \,  \sum_{n=0}^{\infty} \frac{(-1)^n}{n!} \Gamma\left[\frac{p-\tfrac{d}{2}+n \pm i\mu}{2}\right] \left(\frac{2}{u}\right)^{n+p-\frac{d}{2}} \,,
\end{equation}
where the overall factor $\tfrac{1}{2}$ comes from the factor $\Gamma[\cdots-2z]$ when taking the residue. This series can once more be resummed into a Gau\ss'~hypergeometric ${}_2F_1$ function; however, the resulting expression is not particularly illuminating, and we therefore refrain from presenting it explicitly. Allowing the external fields to have a reduced speed of sound, $c_s<1$, makes this region accessible, since the kinematic ratio becomes $u=Y/(c_s X)<1/c_s$, and can therefore lie outside the unit disc.

\subsubsection*{Soft limit}

The vertex function $\V_{+, \mu}^{(1)}(u; p)$ has a branch point at $u=0$, with the corresponding cut from $0$ to $-\infty$ on the real $u$-axis. This branch cut lies deeply in the unphysical region and thus cannot be directly probed by physically measured correlators. However, this translates to the non-analyticity of the vertex function in the complex $u$-plane, where the leading non-analytic behaviour is given by
\begin{equation}
\label{eq: single-massive-leg soft limit}
    \lim_{u\to0^+}\V_{+, \mu}^{(1)}(u; p) = \C_+^{(1)}(p)  \sum_{\alpha=\pm i\mu} \Gamma[p-\tfrac{d}{2}+\alpha, -\alpha]\left(\frac{u}{2}\right)^{\alpha} \,,
\end{equation}
as one approaches the soft limit $u\to0$. Notice that the hypergeometric structure drops out completely, as ${}_2F_1[\cdots|0]=1$. For a real twist $p\in\mathbb{R}$, the sum over both positive and negative modes is manifestly real and generates oscillations in $\log(u)$ with frequency determined by $\mu$.

\subsubsection*{Partial-energy singularity}

The vertex function also becomes singular when approaching the partial-energy kinematic locus $u\to-1$, i.e.~$X+Y\to0$. This corresponds to the sum of the energies entering the vertex adding up to zero. To probe this limit, we come back to the time-integral representation~\eqref{eq: V1 definition} and use the early-time asymptotic form of the Hankel function $H_{i\mu}^{(2)}(-uz) \propto (-uz)^{-1/2}e^{iuz}$, where we neglect unimportant overall factors and phases. The vertex function therefore develops a logarithmic cut
\begin{equation}
    \lim_{u\to-1}\V_{+, \mu}^{(1)}(u; p) \propto \int_{-\infty}^0 \frac{\d z}{(-z)^{d+1}} (-z)^{p+\frac{d}{2}-\frac{1}{2}}e^{(1+u)iz} \propto \Gamma[p-\tfrac{d}{2}-\tfrac{1}{2}] (1+u)^{\tfrac{d}{2}-p+\tfrac{1}{2}} \,.
\end{equation}
We observe that the oscillatory phase precisely cancels for the special locus $u=-1$.

\subsubsection*{Degenerate limit}

A particularly relevant kinematic configuration arises when momentum conservation at the vertex enforces equality between the external energy $X$ and the internal energy $Y$, corresponding to $u=1$. Reaching this physical kinematic locus amounts to taking the folded limit: 
\begin{center}
\begin{tikzpicture}[line width=1. pt, scale=2]
    \draw[black, postaction={decorate},
                        decoration={markings, mark=at position 0.5 with {\arrow{<}}}] 
                        (0, 0) -- node[below] {$\Y$} (1.5, 0);
    \draw[gray!40, postaction={decorate},
                        decoration={markings, mark=at position 0.5 with {\arrow{>}}}] 
                        (0, 0) -- node[left] {$\k_1$} (0.2, 0.5);
    \draw[gray!40, postaction={decorate},
                        decoration={markings, mark=at position 0.5 with {\arrow{>}}}] 
                        (0.2, 0.5) -- node[above left] {$\k_2$} (0.5, 0.7);
    \draw[gray!40, postaction={decorate},
                        decoration={markings, mark=at position 0.5 with {\arrow{>}}}] 
                        (0.5, 0.7) -- node[above] {$\k_3$} (1, 0.7);
    \draw[gray!40, postaction={decorate},
                        decoration={markings, mark=at position 0.5 with {\arrow{>}}}] 
                        (1.2, 0.5) -- node[above right] {$\k_n$} (1.5, 0);
    \node[gray!40] at ($(1, 0.7)!1/3!(1.2, 0.5)$) {$\cdot$};
    \node[gray!40] at ($(1, 0.7)!2/3!(1.2, 0.5)$) {$\cdot$};
    \draw[black, ->] (2, 0.35) -- node[below] {folded limit} (3, 0.35);
\end{tikzpicture} 
\hspace*{0.2cm}
\begin{tikzpicture}[line width=1. pt, scale=2]
    \draw[black, postaction={decorate},
                        decoration={markings, mark=at position 0.5 with {\arrow{<}}}] 
                        (0, 0) -- node[below] {$\Y$} (1.5, 0);
    \draw[gray!40, postaction={decorate},
                        decoration={markings, mark=at position 0.5 with {\arrow{>}}}] 
                        (0, 0) -- node[above] {$\k_1$} (0.2, 0.05);
    \draw[gray!40, postaction={decorate},
                        decoration={markings, mark=at position 0.5 with {\arrow{>}}}] 
                        (0.2, 0.05) -- node[above] {$\k_2$} (0.5, 0.05);
    \draw[gray!40, postaction={decorate},
                        decoration={markings, mark=at position 0.5 with {\arrow{>}}}] 
                        (0.5, 0.05) -- node[above] {$\k_3$} (1, 0.05);
    \draw[gray!40, postaction={decorate},
                        decoration={markings, mark=at position 0.5 with {\arrow{>}}}] 
                        (1.2, 0.05) -- node[above] {$\k_n$} (1.5, 0);
    \node[gray!40] at ($(1, 0.05)!1/3!(1.2, 0.05)$) {$\cdot$};
    \node[gray!40] at ($(1, 0.05)!2/3!(1.2, 0.05)$) {$\cdot$};
\end{tikzpicture} 
\end{center}
where the sum of external momenta $\sum_{i=1}^n \k_i$ becomes opposite to the internal momentum $\Y$. This kinematic configuration is also known as the collinear limit, and is of phenomenological importance as it naturally appears in the presence of linear mixings. 

\vskip 4pt
The known analytic continuation property of the hypergeometric function allows to easily reach this limit. Yet, each individual frequency mode $\alpha=\pm i\mu$ of the vertex function~\eqref{eq: V1 expression} is singular at $u\to1$. Indeed, for $\Re(1/2-p+d/2)<0$, which covers the vast majority of phenomenologically relevant graphs, $\F_\alpha^{(1)}(u; p-\tfrac{d}{2})$ develops a power-law branch point at $u\to1$ and the limit is singular. The marginal case $\Re(1/2-p+d/2)=0$ gives rise to a logarithmic branch point, and the case $\Re(1/2-p+d/2)>0$ is finite. However, summing over positive- and negative-frequency modes of the vertex function yields a finite expression as the behaviour around the singularity does not depend on $\alpha$. Explicitly, using the connection formula
\begin{equation}
    {_2F_1}\left[\left.\begin{matrix} a, b\\ c \end{matrix}\right\vert 1-\epsilon\right] = \Gamma\left[\begin{matrix} c, c-a-b \\ c-a, c-b \end{matrix}\right] + \Gamma\left[\begin{matrix} c, a+b-c \\ a, b \end{matrix}\right] \epsilon^{c-a-b} + \O(\epsilon) \,,
\end{equation}
around $\epsilon\to0$ ($\epsilon>0$), each mode $\F_\alpha^{(1)}$ decomposes near $u\to1^-$ as
\begin{equation}
    \F_\alpha^{(1)}(u; p-\tfrac{d}{2}) = \R_\alpha + \mathcal{S}_\alpha (1-u^2)^{\frac{1}{2}-p+\frac{d}{2}} \,,
\end{equation}
where $\R_\alpha$ is the regular part and $\mathcal{S}_\alpha$ is the leading singular coefficient, which read
\begin{equation}
    \begin{aligned}
        \R_\alpha &= 2^{-\alpha} \Gamma\left[\begin{matrix} -\alpha, p-\tfrac{d}{2}+\alpha, 1+\alpha, \tfrac{1}{2}-p+\tfrac{d}{2} \\ 1+\tfrac{\alpha-p+d/2}{2}, \tfrac{1}{2}+\tfrac{\alpha-p+d/2}{2} \end{matrix}\right] \,, \\
        \mathcal{S}_\alpha &= 2^{-\alpha} \Gamma\left[\begin{matrix} -\alpha, p-\tfrac{d}{2}+\alpha, 1+\alpha, p-\tfrac{d}{2}-\tfrac{1}{2} \\ \tfrac{p-d/2+\alpha}{2}, \tfrac{p-d/2+1+\alpha}{2} \end{matrix}\right] \,.
    \end{aligned}
\end{equation}
We simplify the product of $\Gamma$-functions in the denominator of the singular part $\mathcal{S}_\alpha$ using the Legendre duplication formula
\begin{equation}
    \Gamma\left[\tfrac{p-d/2+\alpha}{2}, \tfrac{p-d/2+1+\alpha}{2}\right] = \frac{\sqrt{\pi}}{2^{p-d/2+\alpha-1}} \Gamma[p-\tfrac{d}{2}+\alpha] \,.
\end{equation}
Substituting into $\mathcal{S}_\alpha$ yields
\begin{equation}
    \mathcal{S}_\alpha = -\frac{2^{p-\tfrac{d}{2}-1} \sqrt{\pi} \, \Gamma[p-\tfrac{d}{2}-\tfrac{1}{2}]}{\sin(\pi\alpha)} \,.
\end{equation}
The total singular contribution to $\V_{+,\mu}^{(1)}$ is therefore proportional to 
\begin{equation}
    \sum_{\alpha=\pm i\mu} \mathcal{S}_\alpha = 0 \,.
\end{equation}
The singular terms cancel identically for any mass parameter $\mu$, and the limit $u\to 1^-$ is instead entirely given by the regular parts $\R_\alpha$, giving
\begin{equation}
    \lim_{u\to1^-} \V_{+, \mu}^{(1)}(u; p) = \C_+^{(1)}(p)  \sum_{\alpha=\pm i\mu} \R_\alpha \,.
\end{equation}
The physical vertex function is therefore regular, as imposed in the Bunch-Davies vacuum. As we will see later, though, taking degenerate limits in specific diagrams can be subtle. In particular, the spectral gluing procedure selects either positive- or negative-frequency vertex functions, which naively suggests the presence of branch point singularities. However, these singularities are in fact spurious, as required by physical consistency: certain parameters of the hypergeometric function become negative in this limit, causing it to truncate to a polynomial.

\subsection{Case study: two-massive-leg vertex}
\label{subsec: two-massive-leg vertex}

We now present the analysis for the two-massive-leg vertex as a case study before generalising the framework to arbitrary vertex functions in the following section, where many of the intermediate steps discussed here are bypassed. This example is particularly instructive, as it constitutes the first instance of a vertex function that necessitates analytic continuation in order to be evaluated in the physical region of kinematic space. We consider the vertex function with two massive legs attached carrying internal energies $Y_1$ and $Y_2$, and associated mass parameters $\mu_1, \mu_2 \in \mathbb{R}_{>0}$, respectively, defined as
\begin{equation}
    \V_{+, \{\mu_j\}}^{(2)}(X, \{Y_j\}; p) \equiv e^{+\frac{\pi}{2}\mu_{12}} \int_{-\infty}^0 \frac{\d z}{(-z)^{d+1}} (-z)^{p+d} e^{iz} H_{i\mu_1}^{(2)}(-\tfrac{Y_1}{X}z) H_{i\mu_2}^{(2)}(-\tfrac{Y_2}{X}z) \,,
\end{equation}
where $\mu_{12} \equiv \mu_1+\mu_2$. Schematically, we represent this vertex function as:
\begin{center}
\begin{tikzpicture}[line width=1. pt, scale=2]
    \draw[black] (-1, -0.6) -- node[above left] {$(Y_1, \mu_1)$} (0, 0);
    \draw[black] (0, 0) -- node[above right] {$(Y_2, \mu_2)$} (1, -0.6);

    \draw[gray!40] (0, 0) -- (-0.8, 0.5) node[above] {$k_1$};
    \draw[gray!40] (0, 0) -- (-0.4, 0.5) node[above] {$k_2$};
    \draw[gray!40] (0, 0) -- (0, 0.5) node[above] {$k_3$};
    \node[gray!40] at ($(0, 0.5)!1/2!(0.8, 0.5)$) {$\cdots$};
    \draw[gray!40] (0, 0) -- (0.8, 0.5) node[above] {$k_m$};
    
    \draw[fill=black] (0, 0) circle (.05cm) node[below, yshift=-6pt] {$(X, p)$};
\end{tikzpicture} 
\end{center}
where $X = k_1 + \cdots + k_m$, and $p$ is the vertex twist. The $\aa=-$ vertex function is found by replacing $e^{iz} \to e^{-iz}$, $H^{(2)}_{i\mu_j} \to H^{(1)}_{i\mu_j}$ (for $j=1, 2$) and $e^{+\frac{\pi}{2}\mu_{12}} \to e^{-\frac{\pi}{2}\mu_{12}}$. In order to compute this integral, we use the Mellin-Barnes representation for the Hankel function~\eqref{eq: Hankel Mellin integral} to render the integral over the (dimensionless) time $z$ elementary:
\begin{equation}
\label{eq: V2 MB integral rep}
    \begin{aligned}
        \V_{+, \{\mu_j\}}^{(2)}(X, \{Y_j\}; p) = \C_+^{(2)}(p) \int_{-i\infty}^{+i\infty} &\frac{\d s_1}{2i\pi} \frac{\d s_2}{2i\pi} \left(\frac{Y_1}{2X}\right)^{-2s_1} \left(\frac{Y_2}{2X}\right)^{-2s_2} \\
        &\times\Gamma[s_1\pm \tfrac{i\mu_1}{2}, s_2\pm \tfrac{i\mu_2}{2}, p-2s_{12}] \,,
    \end{aligned}
\end{equation}
where we have used~\eqref{eq: elementary time integral}. Similarly to the single-massive-leg vertex studied previously, we define the overall prefactor:
\begin{equation}
    \C_+^{(2)}(p) = \tfrac{1}{\pi^2} e^{-\frac{i\pi}{2}(p-2)} \,.
\end{equation}
The dimensionless vertex function $\V_+^{(2)}$ depends on two dimensionless energy ratios:
\begin{equation}
    u_1 \equiv \frac{Y_1}{X} \,, \quad u_2 \equiv \frac{Y_2}{X} \,.
\end{equation}
The admissible range of these variables and the definition of the Euclidean region will be discussed in due course. Around the double-soft limit $u_1, u_2 \to 0$, we close both Mellin-Barnes contours to the left, and collect the four towers of poles. The found series arrange in the following form:
\begin{equation}
\label{eq: V2 expression}
    \boxed{
    \V_{+, \mu_1, \mu_2}^{(2)}(u_1, u_2; p) = \C_+^{(2)}(p) \sum\limits_{\substack{\alpha_1 =\pm i\mu_1 \\ \alpha_2 = \pm i\mu_2}} \F_{\alpha_1, \alpha_2}^{(2)}(u_1, u_2; p) \,,
    }
\end{equation}
where we define the function
\begin{equation}
\label{eq: F2 function}
    \boxed{
    \begin{aligned}
        \F_{\alpha_1, \alpha_2}^{(2)}(u_1, u_2; p) &\equiv \Gamma[p+\alpha_{12}, -\alpha_1, 1+\alpha_1, -\alpha_2, 1+\alpha_2] \\
        &\times\left(\frac{u _1}{2}\right)^{\alpha_1} \left(\frac{u_2}{2}\right)^{\alpha_2} \bar{F}^{(2)}_C\left[\left.\begin{matrix} \frac{p+\alpha_{12}}{2},\,  \frac{p+1+\alpha_{12}}{2}\\ 1+\alpha_1, \, 1+\alpha_2 \end{matrix}\right\vert u_1^2, u_2^2\right] \,,
    \end{aligned}
    }
\end{equation}
with $\alpha_{12} \equiv \alpha_1+\alpha_2$, and where $\bar{F}_C^{(2)}$ is the regularised type-$C$ Lauricella function of order $2$ (of two variables) which reduces to the Appell function $F_C^{(2)} \equiv F_4$, see App.~\ref{sec: special functions} for more details. The sum in~\eqref{eq: V2 expression} contains four terms, each arising from selecting the positive- or negative-frequency mode of either massive mode. The $\aa=-$ vertex function takes the same form as~\eqref{eq: V2 expression} albeit performing the replacement $\C_+^{(2)}(p) \to \C_-^{(2)}(p) = \C_+^{(2)*}(p)$, since the the vertex function is manifestly shadow symmetric both in $\mu_1$ and $\mu_2$. Eventually, it is important to emphasize that the functions~\eqref{eq: F2 function} are meromorphic in the entire complex $\mu_1$- and $\mu_2$-planes, and has only poles at locations dictated by the overall $\Gamma$-functions, since the regularised $\bar{F}^{(2)}_C$ function is analytic.

\subsubsection*{Analytic continuation in the Euclidean region}

Since a two-massive-leg vertex always appears as an internal graph vertex, the variables $u_1$ and $u_2$ are not constrained by a generalised triangle inequality. By momentum conservation, the internal kinematic variables are arranged as follows:
\begin{center}
\begin{tikzpicture}[line width=1. pt, scale=2]
    \draw[black, postaction={decorate},
                        decoration={markings, mark=at position 0.5 with {\arrow{>}}}] 
                        (-1.2, 0) -- node[below] {$\Y_1$} (0, -0.5);
    \draw[black, postaction={decorate},
                        decoration={markings, mark=at position 0.5 with {\arrow{>}}}] 
                        (0, -0.5) -- node[below] {$\Y_2$} (1.2, 0);

    \draw[gray!40, postaction={decorate},
                        decoration={markings, mark=at position 0.5 with {\arrow{>}}}] 
                        (-1.2, 0) -- node[above left] {$\k_1$} (-0.8, 0.6);
    \draw[gray!40, postaction={decorate},
                        decoration={markings, mark=at position 0.5 with {\arrow{>}}}] 
                        (-0.8, 0.6) -- node[above left] {$\k_2$} (-0.4, 0.8);
    \draw[gray!40, postaction={decorate},
                        decoration={markings, mark=at position 0.5 with {\arrow{>}}}] 
                        (-0.4, 0.8) -- node[above] {$\k_3$} (0, 0.8);
    \draw[gray!40, postaction={decorate},
                        decoration={markings, mark=at position 0.5 with {\arrow{>}}}] 
                        (0.8, 0.6) -- node[above right] {$\k_m$} (1.2, 0);
    \node[gray!40] at ($(0, 0.8)!1/3!(0.8, 0.6)$) {$\cdot$};
    \node[gray!40] at ($(0, 0.8)!2/3!(0.8, 0.6)$) {$\cdot$};
\end{tikzpicture} 
\end{center}
It becomes clear that the Euclidean (physically allowed) region is defined by $Y_1 \leq Y_2+X$ and $Y_2 \leq Y_1+X$, which is equivalent, together with the condition of positive energies, to
\begin{equation}
\label{eq: Euclidean region for V2}
    \{0\leq u_1, 0\leq u_2, u_1 \leq u_2+1 \,, u_2 \leq u_1+1\} \,.
\end{equation}
Notice that the last two conditions can be encompassed in the compact condition $|u_1 - u_2| \leq 1$. The Euclidean region~\eqref{eq: Euclidean region for V2} is shown in Fig.~\ref{fig: kinematic space for V2} as the union $(\textbf{I})\cup(\textbf{IV})$. However, the Appell $F_4$ series representation~\eqref{eq: Appell F4 series rep}---equivalently the Lauricella $F_C^{(2)}$ function, see App.~\ref{sec: special functions}---only converges for $u_1+u_2<1$, hence does not cover the entire Euclidean region. We therefore need to find a suitable analytic continuation for the vertex function $\V^{(2)}_{\aa, \mu_1, \mu_2}$ to cover the region of interest. 
It is important to note that analytic continuations in kinematic space leave the analytic structure of the vertex function in the complex $\mu_1$- and $\mu_2$-planes unchanged.

\begin{figure}[h!]
\centering
    \begin{tikzpicture}[line width=1. pt, scale=2]

    \draw[->] (-0.2,0) -- (3.2,0) node[right] {$u_1$};
    \draw[->] (0,-0.2) -- (0, 3.2) node[above] {$u_2$};

    \foreach \x in {1} {
        \draw (\x, 0.03) -- (\x, -0.03) node[below] {\small $\x$};
    }
    \node[below left] at (0,0) {\small $0$};
    
    \foreach \y in {1} {
        \draw (0.03, \y) -- (-0.03, \y) node[left] {\small $\y$};
    }

    \fill[pyorange!5] (0,1) -- (2,3) -- (3,3) -- (3,2) -- (1, 0) -- cycle;
    \draw[pyorange, line width=1.2pt] (0,1) -- (2,3) -- (3,3) -- (3,2) -- (1, 0) -- cycle;
    \node[black] at (1.5, 1.5) {(\textbf{IV})};

    \fill[pygreen!5] (0,0) -- (1,0) -- (0,1) -- cycle;
    \draw[pygreen, line width=1.2pt] (0,0) -- (1,0) -- (0,1) -- cycle;
    \node[pygreen, rotate=-45] at (0.4, 0.4) {\footnotesize $u_1+u_2<1$};
    \node[black] at (0.2, 0.2) {(\textbf{I})};
    
    \fill[pyblue!5] (0,1) -- (0,3) -- (2,3) -- cycle;
    \draw[pyblue, line width=1.2pt] (0,1) -- (0,3) -- (2,3) -- cycle;
    \node[pyblue, rotate=45] at (1, 2.2) {\footnotesize $u_2>u_1+1$};
    \node[black] at (0.5, 2.5) {(\textbf{II})};

    \fill[pyred!5] (1,0) -- (3,0) -- (3,2) -- cycle;
    \draw[pyred, line width=1.2pt] (1,0) -- (3,0) -- (3,2) -- cycle;
    \node[pyred, rotate=45] at (2.2, 1) {\footnotesize $u_1>u_2+1$};
    \node[black] at (2.5, 0.5) {(\textbf{III})};

    \end{tikzpicture}
    \caption{Kinematic space $(u_1, u_2)$ of the two-massive-leg, in the positive quadrant, with the following regions: (\textbf{I}) $\{u_1\leq0, u_2\leq0, u_1+u_2<1\}$, (\textbf{II}) $\{u_1\leq0, u_2\leq0, u_2>u_1+1\}$, (\textbf{III}) $\{u_1\leq0, u_2\leq0, u_1>u_2+1\}$, and (\textbf{IV}) $\{u_1+u_2>1, u_1\leq u_2+1, u_2\leq u_1+1\}$. The entire physical region $\{0\leq u_1, 0\leq u_2, u_1 \leq u_2+1 \,, u_2 \leq u_1+1\}$ is the union: $(\textbf{I})\cup(\textbf{IV})$. The origin is the double-soft limit $u_1, u_2\to0$.}
    \label{fig: kinematic space for V2}
\end{figure}

\vskip 4pt
The four functions $\F_{\pm i\mu_1, \pm i\mu_2}^{(2)}$ in~\eqref{eq: V2 expression} actually correspond to the four Frobenius solutions in the secondary fan of the GKZ system describing the vertex function $\V^{(2)}_{\aa, \mu_1, \mu_2}$, see e.g.~\cite{Stienstra:2005nr, Cattani2006THREELO}. These series solutions converge only in the vicinity of the singular points $(0, 0), (0, \infty)$ and $(\infty, 0)$, and can all be found from the two-fold Mellin-Barnes integral representation~\eqref{eq: V2 MB integral rep}. For instance, closing the $s_1$-integral to the left and the $s_2$-integral to the right (the order of integration does not matter) leads to the following series solution:
\begin{equation}
    \begin{aligned}
        \V_{+, \mu_1, \mu_2}^{(2)}(u_1, u_2; p) &= \frac{\C_+^{(2)}(p)}{2} \sum_{m_1, m_2=0}^\infty 
        \frac{(-1)^{m_{12}}}{m_1! m_2!} \left(\frac{u_1}{u_2}\right)^{2m_1+i\mu_1} \left(\frac{2}{u_2}\right)^{m_2+p} \\
        &\times \Gamma[-m_1-i\mu_1, \tfrac{p+m_2+i\mu_1+i\mu_2}{2}+m_1, \tfrac{p+m_2+i\mu_1-i\mu_2}{2}+m_1] \\
        &+ (\mu_1 \leftrightarrow -\mu_1) \,,
    \end{aligned}
\end{equation}
where the series in $m_2$ can be formally resummed into a hypergeometric series. This series representation around $(u_1, u_2)=(0, \infty)$ converges in the region $(\textbf{II})$ of Fig.~\ref{fig: kinematic space for V2}. Similarly, by simply exchanging the indices $1\leftrightarrow2$, we find the analogous analytic continuation in the region $(\textbf{III})$. Closing both Mellin-Barnes contours to the right leads to the same analytic continuations, with the order of integration selecting the region of convergence. We have rediscovered the known simple analytic continuations of the Appell $F_4$ function, see~\ref{subsec: Appell F4}. However, these analytic continuations do not cover the entire physical region $(\textbf{I})\cup(\textbf{IV})$.

\paragraph{Reduction formula.} A key observation to analytically continue the vertex function $\V^{(2)}_{\aa, \mu_1, \mu_2}$ in the entire physical region is to realise that the Appell $F_4$ function can be represented by a series of a product of two hypergeometric ${}_2F_1$ functions (see Eq.~(16.16.7) of~\cite{Olver:2010ouy}, also~\cite{Burchnall1940EXPANSIONSOA, nakagawa2023, Clingher_2017, Alim:2026gcl}):
\begin{equation}
\label{eq: Appell F4 analytic continuation}
    \begin{aligned}
        &F_4\left[\left.\begin{matrix} \frac{p+\alpha_{12}}{2},\,  \frac{p+1+\alpha_{12}}{2}\\ 1+\alpha_1, \, 1+\alpha_2 \end{matrix}\right\vert u_1^2, u_2^2\right] = \sum_{k=0}^\infty \frac{\left(\frac{p+\alpha_{12}}{2}\right)_k \left(\frac{p+1+\alpha_{12}}{2}\right)_k \left(\frac{2p-1}{2}\right)_k}{(1+\alpha_1)_k (1+\alpha_2)_k k!} \,\, \xi_1^k \xi_2^k \\
        &\times {}_2F_1\left[\left.\begin{matrix} \frac{p+\alpha_{12}}{2}+k,\,  \frac{p+1+\alpha_{12}}{2}+k\\ 1+\alpha_1+k \end{matrix}\right\vert \xi_1\right] {}_2F_1\left[\left.\begin{matrix} \frac{p+\alpha_{12}}{2}+k,\,  \frac{p+1+\alpha_{12}}{2}+k\\ 1+\alpha_2+k \end{matrix}\right\vert \xi_2\right] \,,
    \end{aligned}
\end{equation}
where the new kinematic variables $\xi_1$ and $\xi_2$ are the roots of the following algebraic equations:
\begin{equation}
    \begin{cases}
        \xi_1^2 - (1+u_1^2-u_2^2)\xi_1 + u_1^2 = 0 \,, \\
        \xi_2^2 - (1-u_1^2+u_2^2)\xi_2 + u_2^2 = 0 \,.
    \end{cases}
\end{equation}
The discriminants are found to be
\begin{equation}
    \begin{cases}
        \Delta_1 = (-1+u_1-u_2)(1+u_1-u_2)(-1+u_1+u_2)(1+u_1+u_2) \,, \\
        \Delta_2 = (-1-u_1+u_2)(1-u_1+u_2)(-1+u_1+u_2)(1+u_1+u_2) \,,
    \end{cases}
\end{equation}
where the four factors precisely delimitate the four boundaries of the $F_4$ singular locus chambers illustrated in Fig.~\ref{fig: kinematic space for V2}. The parameters $\xi_{1, 2}$ therefore appear to be the natural coordinates adapted to the $F_4$ geometry. Inside the simplex around the origin $(\textbf{I})$, one has $\Delta_{1, 2}>0$ and $\xi_{1, 2}\in\mathbb{R}_{>0}$ are real. In other regions, the parameters $\xi_{1, 2}$ are complex. Explicitly, they are given by
\begin{equation}
    \xi_1 \equiv \frac{1+u_1^2-u_2^2-\sqrt{\Delta_1}}{2} \,, \quad
    \xi_2 \equiv \frac{1-u_1^2+u_2^2-\sqrt{\Delta_2}}{2} \,,
\end{equation}
where we have retained the smaller root, for which the analytic continuation matches the numerical evaluation of the vertex function.
The reduction formula~\eqref{eq: Appell F4 analytic continuation} follows from solving the Appell $F_4$ system of partial differential equations (see App.~\ref{subsec: Appell F4}) with a separable ansatz, and then from  the superposition principle. Ultimately, it allows to analytically continue the Appell $F_4$ function in the physical region using {\it only} the known analytic continuations of the hypergeometric ${}_2F_1$ functions.
Eventually, notice that for $p=1/2$, we have $(\frac{2p-1}{2})_k=(0)_k$ which vanishes for $k\geq1$, and Eq.~\eqref{eq: Appell F4 analytic continuation} collapses onto the simple factorised form~\cite{Bailey1933}:
\begin{equation}
\label{eq: Appell F4 factorised reduction formula}
    \begin{aligned}
        F_4\left[\left.\begin{matrix} \frac{1/2+\alpha_{12}}{2},\,  \frac{3/2+\alpha_{12}}{2}\\ 1+\alpha_1, \, 1+\alpha_2 \end{matrix}\right\vert u_1^2, u_2^2\right] &= {}_2F_1\left[\left.\begin{matrix} \frac{1/2+\alpha_{12}}{2},\,  \frac{3/2+\alpha_{12}}{2}+k\\ 1+\alpha_1 \end{matrix}\right\vert \xi_1\right] \\
        &\times {}_2F_1\left[\left.\begin{matrix} \frac{1/2+\alpha_{12}}{2},\,  \frac{3/2+\alpha_{12}}{2}+k\\ 1+\alpha_2 \end{matrix}\right\vert \xi_2\right] \,.
    \end{aligned}
\end{equation}
This factorised reduction formula can be applied to the triple-exchange bispectrum of quasi-single field inflation, see~\cite{Chen:2009zp}, albeit only for $d=2$, in order to obtain a representation that converges in the entire physical region of kinematic space.

\begin{figure}[h!]
    \centering
    \begin{subfigure}{.5\textwidth}
        \centering
        \includegraphics[width=1\linewidth]{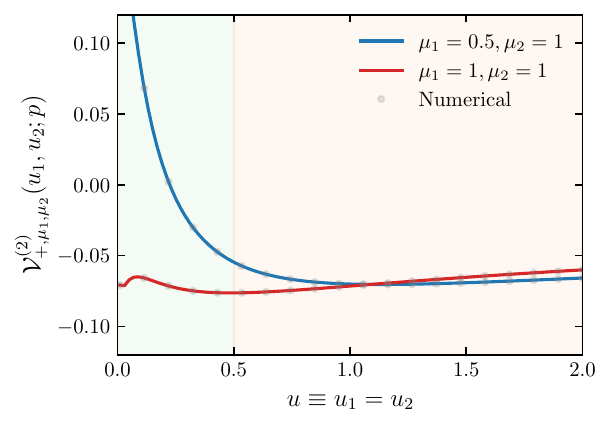}
    \end{subfigure}%
    \begin{subfigure}{.5\textwidth}
        \centering
        \includegraphics[width=1\linewidth]{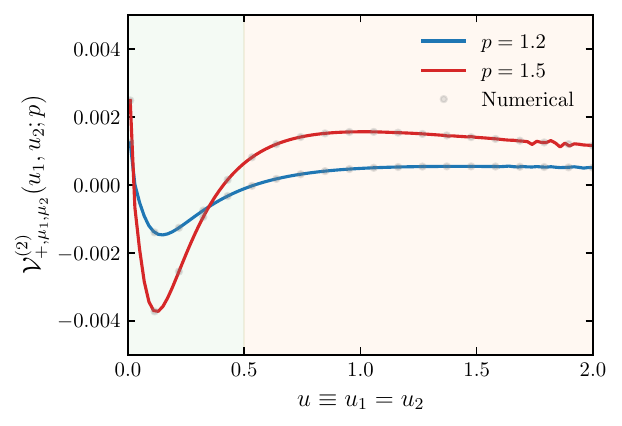}
    \end{subfigure}
   \caption{Vertex function $\V^{(2)}_{+, \mu_1, \mu_2}(u_1, u_2; p)$ along the diagonal line $u=u_1=u_2$ spanning both regions (\textbf{I}) and (\textbf{IV}). {\it Left panel}: we fix the twist $p=0.5$ and vary the mass parameter $\mu_1=0.5, 1$ with $\mu_2=1$. The analytic continuation is provided by the factorised reduction formula~\eqref{eq: Appell F4 factorised reduction formula}. {\it Right panel}: we fix the mass parameters $\mu_1=1$ and $\mu_2=2$, and we vary the twist parameter $p=1.2, 1.5$. Here, the analytic continuation is determined by the reduction formula~\eqref{eq: Appell F4 analytic continuation} on which we have applied a Levin $u$-transform to regularise and speed up the convergence. We have retained $N=20$ and $n=10$ terms in the truncated series. In both panels, only the real part of the vertex function is depicted. Grey dots correspond to numerical integration of the vertex functions.}
  \label{fig: vertex function 2}
\end{figure}

\paragraph{Series regularisation.} The series~\eqref{eq: Appell F4 analytic continuation} is actually not convergent in the region (\textbf{IV}). Indeed, for large $k$, the asymptotic behaviour of the hypergeometric function is:
\begin{equation}
    {}_2F_1\left[\left.\begin{matrix} a+k,\,  b+k\\ c+k \end{matrix}\right\vert \xi\right] \sim (1-\xi)^{-k}\,, \quad \text{as} \quad k\to\infty \,.
\end{equation}
Along the diagonal $u_1=u_2=u$, the $k$-th term of the outer series behaves asymptotically as
\begin{equation}
    \left(\frac{\xi_1 \xi_2}{(1-\xi_1)(1-\xi_2)}\right)^k \sim \left(\frac{1-\sqrt{1-4u^2}}{1+\sqrt{1-4u^2}}\right)^{2k} \,.
\end{equation}
For $u>1/2$, the quantity under the square root becomes negative, rendering the ratio a pure phase. As a result, the magnitude of the entire expression is unity, and the series is formally divergent. Owing to the alternative-sign structure of the series, we regularise the outer summation with a Levin $u$-transform~\cite{Olver:2010ouy}. Introducing the truncated partial sums to~\eqref{eq: Appell F4 analytic continuation} by writing $S_N(\xi_1, \xi_2) = \sum_{k=0}^N a_k(\xi_1, \xi_2)$, the transformed sequence is defined as
\begin{equation}
    L_N^{(n)}(\xi_1, \xi_2) = \frac{\sum_{k=0}^N \binom{n}{k} (-1)^k \frac{(k+n+1)^{N-1}}{a_{k+n}(\xi_1, \xi_2)} S_{k+n}(\xi_1, \xi_2)}{\sum_{k=0}^N \binom{n}{k} (-1)^k \frac{(k+n+1)^{N-1}}{a_{k+n}(\xi_1, \xi_2)}} \,.
\end{equation}
The Levin transform constitutes a nonlinear sequence transformation designed to accelerate convergence by effectively modelling and cancelling the leading asymptotic behaviour of the truncation error. In particular, it can assign stable numerical values to slowly convergent or even divergent series, making it especially well suited for the present context. In Fig.~\ref{fig: vertex function 2}, we display the vertex function $\V^{(2)}$ evaluated along the diagonal $u=u_1=u_2$, spanning both regions (\textbf{I}) and (\textbf{IV}), for various values of the mass parameters and the twist. Our approach yields a convergent representation of the vertex function throughout the entire physical domain, thereby providing a successful analytic continuation. We further observe that convergence becomes slower as the values of $u$ increase, requiring a larger number of terms in the series. This is illustrated in the right panel, where residual oscillations are visible for $p=1.5$; these oscillations are progressively suppressed as more terms are included in the summation.

\subsection{Generalisation to arbitrary number of massive legs}
\label{subsec: generalisation to arbitrary massive legs}

Let us now give the general formula for the vertex function $\V^{(n)}_{\aa, \mu_1, \ldots, \mu_n}$ with $n \geq 1$ massive legs, defined here for a $\aa=+$ vertex by
\begin{equation}
\label{eq: Vn definition}
    \V^{(n)}_{+, \{\mu_j\}}(\{u_j\}; p) \equiv e^{+\frac{\pi}{2}\mu_{1\cdots n}} \int_{-\infty}^0 \frac{\d z}{(-z)^{d+1}} (-z)^{p+n\tfrac{d}{2}} e^{iz} \prod_{j=1}^n H_{i\mu_j}^{(2)}(-u_j z) \,,
\end{equation}
where $\mu_{1\cdots n}\equiv \mu_1 + \cdots + \mu_n$. This vertex function depends on the dimensionless energy ratios $u_j \equiv Y_j/X$ for which we will determine the Euclidean region further below, and is schematically represented as follows:
\begin{center}
\begin{tikzpicture}[line width=1. pt, scale=2]
    \draw[black] (-1, 0) -- node[above left] {$(Y_1, \mu_1)$} (0, 0);
    \draw[black] (-0.5, -0.6) -- node[left] {$(Y_2, \mu_2)$} (0, 0);
    \node[black, rotate=30] at ($(-0.5, -0.6)!1/2!(1, 0)$) {$\cdots$};
    \draw[black] (0, 0) -- node[above right] {$(Y_n, \mu_n)$} (1, 0);

    \draw[gray!40] (0, 0) -- (-0.8, 0.5) node[above] {$k_1$};
    \draw[gray!40] (0, 0) -- (-0.4, 0.5) node[above] {$k_2$};
    \draw[gray!40] (0, 0) -- (0, 0.5) node[above] {$k_3$};
    \node[gray!40] at ($(0, 0.5)!1/2!(0.8, 0.5)$) {$\cdots$};
    \draw[gray!40] (0, 0) -- (0.8, 0.5) node[above] {$k_m$};
    
    \draw[fill=black] (0, 0) circle (.05cm) node[above] {$(X, p)$};
\end{tikzpicture} 
\end{center}
The corresponding $n$-fold Mellin-Barnes representation is given by
\begin{equation}
\label{eq: Vn MB rep}
    \begin{aligned}
        \V^{(n)}_{+, \{\mu_j\}}(\{u_j\}; p) = \C_+^{(n)}(p) \int_{-i\infty}^{+i\infty} \prod_{j=1}^n &\left[\frac{\d s_j}{2\pi i} \left(\frac{u_j}{2}\right)^{-2s_j} \Gamma[s_j \pm \tfrac{i\mu_j}{2}]\right] \\
        &\times \Gamma[p + \tfrac{n-2}{2}d-2s_{1\cdots n}] \,,
    \end{aligned}
\end{equation}
where the overall prefactor reads
\begin{equation}
\label{eq: Cn+ overall def}
    \C_+^{(n)}(p) = \tfrac{1}{\pi^n} e^{-\frac{i\pi}{2}(p+\frac{n-2}{2}d-n)} \,.
\end{equation}
Closing all integral contours to the left, which corresponds to expanding around the soft limit $u_j\to0$ for all massive legs, leads to a series representation in terms of a Lauricella $F_C^{(n)}$ function of $n$ variables. To make notations compact, we introduce the following multi-index vectors: 
\begin{equation}
    \bm{\alpha} \equiv (\alpha_1, \ldots, \alpha_n) \,, \quad \bm{u} \equiv (u_1, \ldots, u_n) \,, \quad \bm{\mu} \equiv (\mu_1, \ldots, \mu_n) \,,
\end{equation}
with the scalar shorthands $|\bm{\alpha}| \equiv \sum_{j=1}^n\alpha_j$ and $|\bm{\mu}| \equiv \sum_{j=1}^n\mu_j$. For component-wise products, we also define
\begin{equation}
    \left(\frac{\bm{u}}{2}\right)^{\bm{\alpha}} \equiv \prod_{j=1}^n \left(\frac{u_j}{2}\right)^{\alpha_j} \,, \quad \Gamma[-\bm{\alpha}] \equiv \prod_{j=1}^n \Gamma[-\alpha_j]\,, \quad \Gamma[\bm{1}+\bm{\alpha}] = \prod_{j=1}^n \Gamma[1+\alpha_j]\,.
\end{equation}
The resulting compact expression for the vertex function reads
\begin{equation}
\label{eq: Vn definition short}
    \boxed{
    \mathcal{V}_{+, \bm{\mu}}^{(n)}(\bm{u}; p) = \C_+^{(n)}(p) \sum_{\bm{\alpha} = \pm i\bm{\mu}} \F_{\bm{\alpha}}^{(n)}(\bm{u}; p+\tfrac{n-2}{2}d) \,,
    }
\end{equation}
where each mode is given by
\begin{equation}
\label{eq: general mode Fn function}
    \boxed{
    \F_{\bm{\alpha}}^{(n)}(\bm{u}; p) \equiv \Gamma[p+|\bm{\alpha}|,-\bm{\alpha},\bm{1}+\bm{\alpha}] \left(\frac{\bm{u}}{2}\right)^{\bm{\alpha}} \bar{F}^{(n)}_C\left[\left.\begin{matrix} \frac{p+|\bm{\alpha}|}{2},\,  \frac{p+1+|\bm{\alpha}|}{2}\\ \bm{1}+\bm{\alpha} \end{matrix}\right\vert \bm{u}^2\right] \,.
    }
\end{equation}
The sum in~\eqref{eq: Vn definition short} runs over $2^n$ terms, each corresponding to a choice of positive- or negative-frequency mode for each of the $n$ massive legs. The $\aa=-$ vertex function is given as usual by the replacement $\C_+^{(n)}(p)\to \C_-^{(n)}(p)=\C_+^{(n)*}(p)$. It is now made manifest that the vertex functions belong to the class of generalised type-$C$ Lauricella functions of $n$ variables, $\bar{F}_C^{(n)}$. More details about these functions are given in App.~\ref{app: Lauricella Function}, although very little is known in general.

\subsubsection*{Conformally coupled limit}

For certain special values of the mass parameters $\mu_j$, $j=1, \ldots, n$, away from the principal series $\mu_j\in \mathbb{R}_{\geq0}$, the vertex function $\V^{(n)}$ simplifies. An example is given when all massive legs describe a conformally coupled scalar field, for which $\mu_j=-i/2$. In this case, the time integral~\eqref{eq: Vn definition} is straightforward to perform after simplifying the Hankel function
\begin{equation}
    H_{1/2}^{(2)}(z) = e^{+\frac{i\pi}{2}} \sqrt{\frac{2}{\pi z}} e^{-i z} \,.
\end{equation}
Using~\eqref{eq: elementary time integral}, we obtain
\begin{equation}
    \V^{(n)}_{+,\{-i/2\}}(\u; p) = \frac{e^{-\frac{i\pi}{2}(p+\frac{nd}{2}-\frac{n}{4}-d)}}{(u_1 \cdots u_n)^{1/2}} \left(\frac{2}{\pi}\right)^{n/2} \frac{\Gamma(p+n\frac{d-1}{2}-d)}{|\u|^{p+n\frac{d-1}{2}-d}} \,.
\end{equation}
This result can be recovered from the $n$-fold Mellin-Barnes integral representation~\eqref{eq: Vn MB rep} after using the Legendre duplication formula to simplify the pair of $\Gamma$-functions for each massive mode~\cite{Sleight:2019mgd}. Such integrals were also computed in e.g.~\cite{Creminelli:2011mw, Arkani-Hamed:2015bza, Belrhali:2026rkn}. 

\subsubsection{Annihilators} 

The vertex functions $\V^{(n)}$ satisfy a system of $n$ coupled partial differential equations in the $u_i$ variables, which is nothing but the Lauricella system~\cite{nakagawa2024appell, Matsumoto_2020, Slater1966}. For $n=1$, this system reduces to the standard hypergeometric differential equation, and for $n=2$, the solution is given by an Appell $F_4$ function. Let us explicitly write down the system for $n\geq1$. The type-$C$ Lauricella function $F_C^{(n)}$ in the $u_i^2$ variables ($i=1, \ldots, n$) is annihilated by the following differential operators (see App.~\ref{app: Lauricella Function} for more details):
\begin{equation}
    \tilde{\D}_i = \vartheta_i(\vartheta_i +2c_i-2) - u_i^2(\vartheta +2a)(\vartheta + 2b) \,,
\end{equation}
where $\vartheta_i \equiv u_i \,\partial/\partial u_i$, $\vartheta = \sum_{i=1}^n \vartheta_i$, $a = (p+\tfrac{n-2}{2}d+|\bm{\alpha}|)/2, b=(p+\tfrac{n-2}{2}d+1+|\bm{\alpha}|)/2$ and $c_i = 1+\alpha_i$. Now, we take into account the overall prefactor
\begin{equation}
    \phi(\u) \equiv \prod_{i=1}^n u_i^{\alpha_i} \,,
\end{equation}
which enters the vertex function, and construct the differential operator $\D_i$ that is related to $\tilde{\D}_i$ by the gauge transformation $\D_i = \phi \tilde{\D}_i \phi^{-1}$. Applying the gauge transformation to the Euler operators, we obtain
\begin{equation}
    \phi \vartheta_i \phi^{-1} = \vartheta_i - \alpha_i \,, \quad \phi \vartheta \phi^{-1} = \vartheta - |\bm\alpha| \,.
\end{equation}
Altogether, we find the following annihilator for the building block $\F_{\bm\alpha}^{(n)}(\bm{u};\tilde p)$,
\begin{equation}\label{eq: vertex function annihilator}
    \D_i = \vartheta_i^2 -\alpha_i^2 - u_i^2(\vartheta +  p+\tfrac{n-2}{2}d )(\vartheta +  p +\tfrac{n-2}{2}d +1)\,,
\end{equation}
for $i=1,\dots, n$. As we can see, this annihilator depends on $\alpha_i^2= -\mu_i^2$ and hence is the same for the positive and negative frequency mode. Consequently, $\D_i$ annihilates the full vertex function $\V^{(n)}$ given in~\eqref{eq: Vn definition short}. The solution to the corresponding system of partial differential equations, namely the vertex function $\V^{(n)}$, is completely fixed by imposing the Bunch-Davies vacuum in the far past, or equivalently regularity on the boundary hyperplanes of the kinematic polytope (see Sec.~\ref{subsec: Euclidean region}), i.e.~the absence of folded singularities.

\subsubsection{Partial soft limits}
\label{subsubsec: partial soft limits}

We now study the simplification of the vertex function $\V^{(n)}_{+,\bm{\mu}}(\u; p)$  when a subset $\mathcal{I}\subseteq\{1,\ldots,n\}$ of the kinematic ratios is taken to zero, which are partial internal soft limits. Let $k=|\mathcal{I}|$ and $\bar{\mathcal{I}}=\{1,\ldots,n\}\setminus\mathcal{I}$ denote the complementary set of $n-k$ ``hard'' legs. The soft limit amounts to sending $u_j\to 0^+$ for all $j\in\mathcal{I}$, while keeping $(u_j)_{j\in\bar{\mathcal{I}}}$ fixed.

\vskip 4pt
The simplification originates directly from the series definition~\eqref{eq: Lauricella Fc def} of the Lauricella function $F^{(n)}_C$: any term in the series with $k_j>0$ for a soft index $j\in\mathcal{I}$ is suppressed by $u_j^{2k_j}\to 0$, so only $k_j=0$ contributes for those indices. The $n$-variable Lauricella function therefore reduces to a Lauricella function of order $n-k$ in the surviving variables:
\begin{equation}
\label{eq: Lauricella soft reduction}
    F^{(n)}_C\left[\left.\begin{matrix} a,\,  b\\ c_1, \ldots, c_n \end{matrix}\right\vert \bm{0}_\I, (u_j^2)_{j\in \bar{\I}}\right] = F^{(n-k)}_C\left[\left.\begin{matrix} a,\,  b\\ (c_j)_{j\in \bar{\I}} \end{matrix}\right\vert (u_j^2)_{j\in \bar{\I}}\right] \,,
\end{equation}
where the $k$ soft variables simply drop out. In the regularised convention~\eqref{eq: Lauricella Fc regularised}, setting $u_j=0$ for $j\in\mathcal{I}$ produces a compensating factor $1/\prod_{j\in\mathcal{I}}\Gamma(1+\alpha_j)$, which cancels exactly against the corresponding $\Gamma[\mathbf{1}+\bm{\alpha}]$ factors in the numerator of 
$\F^{(n)}_{\bm{\alpha}}$ (see~\eqref{eq: general mode Fn function}). As a result, each mode factorises as
\begin{equation}
    \F^{(n)}_{\bm{\alpha}}(\u; p)\Big|_{u_j\to 0,\,j\in\I} = \left(\prod_{j\in\mathcal{I}}\Gamma(-\alpha_j) \left(\frac{u_j}{2}\right)^{\alpha_j}\right) \F^{(n-k)}_{\bm{\alpha}_{\bar{\I}}} \left(\u_{\bar{\I}};
    p+|\bm{\alpha}_\I|\right) \,,
\end{equation}
where $\bm{\alpha}_{\I}=(\alpha_j)_{j\in\I}$ and $|\bm{\alpha}_{\I}|=\sum_{j\in\I}\alpha_j$. The soft indices contribute only through a product of power-law prefactors $\Gamma(-\alpha_j)(u_j/2)^{\alpha_j}$, while the hard indices form a lower-order Lauricella mode $\F^{(n-k)}_{\bm{\alpha}_{\bar{\I}}}$ at a shifted effective twist $p\to p+|\bm{\alpha}_{\I}|$. The full vertex function in the partial soft limit therefore reads
\begin{equation}
    \V^{(n)}_{+,\bm{\mu}}(\u; p)\Big|_{u_j\to 0,\,j\in\mathcal{I}} = \C^{(n)}_+(p) \sum_{\bm{\alpha}=\pm i\bm{\mu}}
    \left(\prod_{j\in\I}\Gamma(-\alpha_j) \left(\frac{u_j}{2}\right)^{\alpha_j}\right) \F^{(n-k)}_{\bm{\alpha}_{\bar{\I}}} \left(\u_{\bar{\I}}; \tilde{p}+|\bm{\alpha}_{\I}|\right) \,,
\end{equation}
where $\tilde{p} = p+\tfrac{n-2}{2}d$. In the complete soft limit $\mathcal{I}=\{1,\ldots,n\}$, the Lauricella function of order $n-k=0$ evaluates to unity and the entire hypergeometric structure drops out, leaving only oscillatory power-law factors:
\begin{equation}
    \lim_{\u\to\bm{0}^+} \V^{(n)}_{+,\bm{\mu}}(\u; p) = \C^{(n)}_+(p) \sum_{\bm{\alpha}=\pm i\bm{\mu}} \Gamma[\tilde{p}+|\bm{\alpha}|, -\bm{\alpha}] \left(\frac{\u}{2}\right)^{\bm{\alpha}} \,.
\end{equation}
This is the natural $n$-leg generalisation of the single-massive-leg soft limit~\eqref{eq: single-massive-leg soft limit}: the Lauricella function evaluates trivially to $1$ at the origin, and the leading non-analytic behaviour of the vertex function is a sum of $2^n$ power laws $\prod_j u_j^{\alpha_j}$, with frequencies $\pm\mu_j$ for each massive leg $j$. For real twist $p\in\mathbb{R}$, the sum is manifestly real and generates multi-frequency oscillations in the logarithms $\log u_j$.

\subsection{Euclidean region}
\label{subsec: Euclidean region}

The vertex function $\V^{(n)}$ is a function of $n$ kinematic variables: $u_1, \ldots, u_n$. We now derive the complete set of inequalities that characterise the Euclidean (physically allowed) kinematic region for the corresponding vectors subject to a closure (momentum-conservation) constraint. 

\vskip 4pt
A kinematic configuration is given by $n+m$ Euclidean vectors in $\mathbb{R}^{d}$ ($d\ge 2$),
\begin{equation}
  \Y_1, \ldots, \Y_n\,, \quad \k_1, \ldots, \k_m \;\in\;\mathbb{R}^{d} \,,
\end{equation}
where the momenta $\Y_j$ are internal and flow through massive legs, with $j=1, \ldots, n$, and $\k_i$, with $i=1, \ldots, m$, are the external momenta. These vectors are subject to the closure condition
\begin{equation}
\label{eq: closure}
    \sum_{i=1}^{n} \Y_i + \sum_{j=1}^{m} \k_j = \bm{0} \,,
\end{equation}
which expresses momentum conservation or, equivalently, the requirement that the vectors form a closed polygon. An example of such configuration is given by:
\begin{center}
\begin{tikzpicture}[line width=1. pt, scale=2]
    \draw[black, postaction={decorate},
                        decoration={markings, mark=at position 0.5 with {\arrow{>}}}] 
                        (0, 0) -- node[left] {$\Y_1$} (-0.5, 1);
    \draw[black, postaction={decorate},
                        decoration={markings, mark=at position 0.5 with {\arrow{>}}}] 
                        (-0.5, 1) -- node[left] {$\Y_2$} (-0.2, 1.6);
    \node[black] at ($(-0.2, 1.6)!1/3!(0.5, 2)$) {$\cdot$};
    \node[black] at ($(-0.2, 1.6)!2/3!(0.5, 2)$) {$\cdot$};
    \draw[black, postaction={decorate},
                        decoration={markings, mark=at position 0.5 with {\arrow{>}}}] 
                        (0.5, 2) -- node[above] {$\Y_n$} (1.5, 1.5);

    \draw[gray!40, postaction={decorate},
                         decoration={markings, mark=at position 0.5 with {\arrow{>}}}] 
                         (1.5, 1.5) -- node[right] {$\k_1$} (1.5, 1);
    \draw[gray!40, postaction={decorate},
                         decoration={markings, mark=at position 0.5 with {\arrow{>}}}] 
                         (1.5, 1) -- node[below right] {$\k_2$} (1.2, 0.5);
    \node[gray!40] at ($(1.2, 0.5)!1/3!(0.6, 0.1)$) {$\cdot$};
    \node[gray!40] at ($(1.2, 0.5)!2/3!(0.6, 0.1)$) {$\cdot$};
    \draw[gray!40, postaction={decorate},
                         decoration={markings, mark=at position 0.5 with {\arrow{>}}}] 
                         (0.6, 0.1) -- node[below right] {$\k_m$} (0, 0);
\end{tikzpicture} 
\end{center}
Working with rescaled variables $u_j = |\Y_j|/X$, where $X=\sum_{j=1}^m |\k_i|>0$ is the vertex energy, we now show that the physical region is equivalent to the requirement that the Gram matrix of the $\u$-vectors is positive semi-definite and that its row-sum quadratic form is bounded above by unity. Combining this with the reverse triangle inequality yields a convex polytope in $\mathbb{R}_{\ge 0}^{n}$ whose facets correspond to degenerate, collinear configurations. 

\subsubsection*{Closure constraint} 

The closure condition~\eqref{eq: closure} for the rescaled kinematic variables read:
\begin{equation}
\label{eq: closure dimensionless}
    \sum_{i=1}^n \u_i = -\bm{\kappa} \equiv -\frac{1}{X}\sum_{j=1}^m \k_j \,.
\end{equation}
An immediate consequence of the triangle inequality applied to the $k$-sector is
\begin{equation}
\label{eq: kappa bound}
    |\bm{\kappa}| = \frac{1}{X} \Bigl\lvert\sum_{j=1}^m \k_j\Bigr\rvert \le \frac{1}{X} \sum_{j=1}^m |\k_j| = 1 \,.
\end{equation}
The physically allowed (Euclidean) kinematic region is the set of all tuples $(u_1, \ldots, u_n) \in \mathbb{R}_{\ge 0}^{n}$ for which there exist vectors $\u_1, \ldots, \u_n, \k_1, \ldots, \k_m \in \mathbb{R}^{d}$ with $|\u_i|=u_i$ satisfying the closure condition~\eqref{eq: closure}. We now characterise this region purely in terms of the~$u_i$.

\subsubsection*{Gram matrix} 

The central object of our analysis is the $n\times n$ symmetric Gram matrix of the rescaled $\u$-vectors, defined by~\cite{Eden:1966dnq, Cortes:2025ihs}
\begin{equation}
\label{eq: gram matrix}
    G_{ij} \equiv \u_i \cdot \u_j = u_i\,u_j\cos\theta_{ij},
  \qquad i,j=1,\ldots,n\,,
\end{equation}
where $\theta_{ij}$ denotes the angle between $\u_i$ and $\u_j$. In particular, the diagonal entries are $G_{ii}=u_i^{2}$.  The Gram matrix encodes all inner-product information about the configuration, and the following structural property is fundamental: for any collection of Euclidean vectors $\u_1, \ldots, \u_n \in \mathbb{R}^d$, the Gram matrix $G$ defined in~\eqref{eq: gram matrix} is positive semi-definite, $G\succeq 0$. Indeed, for every vector $\bm{c} = (c_1, \ldots, c_n)^T \in \mathbb{R}^d$ one computes
\begin{equation}
  \bm{c}^T G\,\bm{c} = \sum_{i,j=1}^n c_i\,(\u_i\cdot\u_j)\,c_j = \Bigl\lvert\sum_{i=1}^n c_i\,\u_i\Bigr\rvert^2 \ge 0\,.
\end{equation}
Since this holds for all $\bm{c}$, the matrix is positive semi-definite. In particular, its rank cannot exceed $d$, the ambient dimension, reflecting the fact that $n$ vectors in $\mathbb{R}^d$ span at most a $d$-dimensional subspace. Conversely, any positive semi-definite matrix $G$ of rank at most $d$ can be realised as the Gram matrix of some configuration of $n$ vectors in $\mathbb{R}^d$. This follows directly from a Cholesky (or eigenvalue) decomposition. Thus, positive semi-definiteness is not merely necessary but also sufficient for the matrix to correspond to a valid Euclidean configuration: {\it $G$ is the Gram matrix of some configuration of $n$ Euclidean vectors if and only if $G\succeq 0$}.

\subsubsection*{Quadratic constraint on $G$} 

The closure condition~\eqref{eq: closure dimensionless} together with the bound~\eqref{eq: kappa bound} impose a constraint on the Gram matrix that goes beyond positive semi-definiteness. Let $\bm{1} = (1, \ldots, 1)^T \in \mathbb{R}^n$ denote the all-ones vector. The closure condition allows us to evaluate the row-sum quadratic form
of $G$ as
\begin{equation}
\label{eq: quadratic}
    \bm{1}^T G\,\bm{1} = \sum_{i,j=1}^n G_{ij} = \sum_{i,j=1}^n (\u_i\cdot\u_j) = \Bigl\lvert\sum_{i=1}^n \u_i\Bigr\rvert^2 = |\bm{\kappa}|^{2} \,.
\end{equation}
Combined with the upper bound~\eqref{eq: kappa bound}, this yields
\begin{equation}
\label{eq:gram_upper}
    \bm{1}^T G\,\bm{1} \le 1 \,.
\end{equation}
This is the fundamental quantitative constraint imposed by the closure condition on the Gram matrix: the sum of all entries of $G$ is bounded above by unity. When the individual magnitudes $\kappa_j = |\k_j|/X$ are prescribed (with $\sum_j\kappa_j=1$ by construction), the same reverse triangle inequality applied to the $k$-sector gives a non-trivial lower bound,
\begin{equation}
\label{eq: kappa lower}
  |\bm{\kappa}| \ge \kappa_{\min} = \max\left(0, \max_j(\kappa_j) - \sum_{\ell\ne j}\kappa_\ell \right) \,,
\end{equation}
which is non-zero whenever one $k$-vector dominates all others. Together,~\eqref{eq: quadratic}--\eqref{eq: kappa lower} sandwich the row-sum quadratic form:
\begin{equation}
    \kappa_{\min}^2 \le \bm{1}^T G\, \bm{1} \le 1 \,.
\end{equation}

\subsubsection*{Characterisation of the kinematic region} 

We now combine the two ingredients---positive semi-definiteness and the closure-induced quadratic bound---to obtain a complete characterisation of the kinematic region. In particular, we will prove the following proposition:

\begin{proposition}[Kinematic region]
\label{prop: kinematic region}
A tuple $(u_1, \ldots, u_n) \in \mathbb{R}_{\ge 0}^n$ belongs to the Euclidean kinematic region if and only if the system 
\begin{equation}
\label{eq: system}
    G \succeq 0 \,, \quad G_{ii} = u_i^{2} \quad \text{for all} \quad i=1, \ldots, n \,, \quad \text{and} \quad \bm{1}^T G\,\bm{1} \le 1 \,,
\end{equation}
admits at least one solution $G\in\mathbb{R}^{n\times n}$. This feasibility condition is equivalent to the $n$ inequalities
\begin{equation}
\label{eq: kinematic region}
    \boxed{
    u_i \le 1 + \sum_{j\ne i} u_j \,, \quad  (i=1, \ldots, n) \,,
    }
\end{equation}
together with $u_i\ge 0$.
\end{proposition}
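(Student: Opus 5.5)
We prove the chain of equivalences in three steps: (Euclidean region) $\Leftrightarrow$ (feasibility of~\eqref{eq: system}) $\Leftrightarrow$ (inequalities~\eqref{eq: kinematic region}).

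\textbf{Necessity.} The forward direction of the first equivalence is already established in the text. Given a physical configuration, the Gram matrix~\eqref{eq: gram matrix} is positive semi-definite with $G_{ii}=u_i^2$, and by~\eqref{eq: quadratic} and~\eqref{eq:gram_upper} it satisfies $\bm{1}^TG\bm{1}=|\bm{\kappa}|^2\le1$. For the necessity of~\eqref{eq: kinematic region}, take any feasible $G$ and factor it as $G=U^TU$, with columns $\u_i$ obtained from a Cholesky or eigenvalue decomposition. Then $|\sum_j\u_j|\le1$, and the triangle inequality gives
\begin{equation*}
u_i=\Bigl|\sum_j\u_j-\sum_{j\neq i}\u_j\Bigr|\le 1+\sum_{j\ne i}u_j \,.
\end{equation*}
This is exactly~\eqref{eq: kinematic region}.

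\textbf{Sufficiency.} Assume $u_i\ge0$ satisfy~\eqref{eq: kinematic region}. We construct an explicit physical configuration, which yields feasibility of~\eqref{eq: system} and membership in the Euclidean region at once. The key reformulation is to set $u_0\equiv1$. Then~\eqref{eq: kinematic region}, together with the trivial inequality $1\le 1+\sum_j u_j$, says that every element of the list $(u_0,u_1,\ldots,u_n)$ is at most the sum of the others. This is the classical polygon inequality, so there exist vectors $\bm{w}_0,\u_1,\ldots,\u_n\in\mathbb{R}^2\subset\mathbb{R}^d$ with these lengths summing to zero.

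To prove the polygon inequality realisation, we use a continuity argument. Place the vectors $\u_1,\ldots,\u_n$ head to tail. As the angles between consecutive vectors vary continuously, starting from all collinear and aligned, $|\sum_j\u_j|$ takes every value in the interval $[\max(0,\,2u_{\max}-\sum_j u_j),\ \sum_j u_j]$. By the intermediate value theorem, it suffices to check that $1$ lies in this interval. The upper bound $1\le\sum_j u_j$ fails only when $\sum_j u_j<1$, but in that case we simply choose all $\u_j$ parallel, so that $|\sum_j \u_j|=\sum_j u_j\le 1$, and no closed polygon is needed. Otherwise, the lower bound $2u_{\max}-\sum_j u_j\le1$ is precisely~\eqref{eq: kinematic region} for the index achieving the maximum.

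\textbf{Realising the external momenta.} In either case we have $|\sum_j\u_j|\le1$. We then need external momenta $\k_1,\ldots,\k_m$ with $\sum_i|\k_i|=X$ and $\sum_i\k_i=-X\sum_j\u_j$. This is possible for $m\ge2$ by splitting the required vector into two pieces of the appropriate lengths, and for the boundary case the pieces are taken collinear. The Gram matrix of the resulting $\u_j$ then provides the solution of~\eqref{eq: system}, which closes the chain of equivalences.

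\textbf{Main obstacle.} The delicate step is the intermediate-value claim about the range of $|\sum_j\u_j|$. The cleanest route is induction on $n$, using the two-vector fact that $|\bm{a}+\bm{b}|$ ranges over $\bigl[\,\bigl||\bm{a}|-|\bm{b}|\bigr|,\ |\bm{a}|+|\bm{b}|\,\bigr]$. A second point needing care is the edge case $m=1$, in which the external sector is rigid with $|\bm{\kappa}|=1$; there the construction must instead close the polygon exactly with $\bm{w}_0$.
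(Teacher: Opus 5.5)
Your proof is correct. Its sufficiency half takes a genuinely different route from the paper, and that route is more robust.

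The necessity direction is the same in both arguments: factor a feasible $G$ as a Gram matrix, then combine $|\sum_j\bm{u}_j|\le1$ with the triangle inequality.

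For the converse, the paper exhibits a rank-one, collinear configuration: $\bm{u}_{j^*}=-u_{j^*}\hat{\bm{e}}$ for the largest $u_{j^*}$, and all others along $+\hat{\bm{e}}$. It then asserts $\bm{1}^TG\bm{1}=(u_{j^*}-\sum_{i\neq j^*}u_i)^2\le1$ using only the $j^*$-th inequality. That inequality bounds only one sign of the difference.
\begin{itemize}
\item For $(1,1,1,1)$ the paper's configuration gives $\bm{1}^TG\bm{1}=4$.
\item For $(3,3,3)$, which satisfies~\eqref{eq: kinematic region}, no rank-one solution exists at all: every signed sum $\pm3\pm3\pm3$ has modulus at least $3$. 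The feasible $G$ there is the Gram matrix of an equilateral triangle.
\end{itemize}
Your reformulation with $u_0\equiv1$ as a polygon inequality, realised in a plane $\mathbb{R}^2\subset\mathbb{R}^d$, is exactly what covers these cases.

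Your induction on $n$, using the two-vector range $\bigl[\,\bigl||\bm{a}|-|\bm{b}|\bigr|,\,|\bm{a}|+|\bm{b}|\,\bigr]$, is the right way to make the intermediate-value step rigorous. The range of $|\bm{v}+\bm{u}_n|$ is the union of the intervals $[\,|r-u_n|,\,r+u_n]$ over achievable $r=|\bm{v}|$. That union is again an interval with the claimed endpoints $\max(0,2u_{\max}-\sum_ju_j)$ and $\sum_ju_j$.

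Your explicit construction of the external momenta also closes the chain directly back to the Euclidean region. The paper leaves this step implicit: passing from a feasible $G$ to a physical configuration would additionally need $\operatorname{rank}G\le d$ and an explicit realisation of the $\bm{k}$-sector.

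One correction to your closing remark about $m=1$. There the constraint $|\bm{\kappa}|=1$ is rigid, so tuples with $\sum_ju_j<1$ are genuinely unphysical, and no choice of $\bm{w}_0$ can close the polygon for them. Your construction does handle the remaining part, $\sum_ju_j\ge1$, exactly. The real issue is a caveat on the statement itself: it implicitly assumes a flexible external sector ($m\ge2$, or a union over $m$). It is not a step your argument still has to address.
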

 
\begin{proof}
If a physical configuration exists, then it is guaranteed that $G\succeq 0$. The definition~\eqref{eq: gram matrix} gives $G_{ii}=u_i^2$, and we also have $\bm{1}^T G\,\bm{1}\le 1$ from previous arguments. Now, suppose a solution to~\eqref{eq: system} exists, so that in particular $|\sum_i\u_i|\le 1$.  For any index $j$, decompose the vector sum as $\sum_i\u_i = \u_j + \sum_{i\ne j}\u_i$ and apply the reverse triangle inequality:
\begin{equation}
  |\u_j| - \Bigl\lvert\sum_{i\ne j} \u_i\Bigr\rvert \le \Bigl\lvert\sum_{i=1}^n \u_i\Bigr\rvert \le 1 \,.
\end{equation}
Using $|\sum_{i\ne j}\u_i|\le\sum_{i\ne j}u_i$ and
$|\u_j|=u_j$ then yields $u_j\le 1+\sum_{i\ne j}u_i$, which is~\eqref{eq: kinematic region} for index $j$. Since $j$ was arbitrary, all $n$ inequalities follow. Now, suppose~\eqref{eq: kinematic region} holds. We exhibit an explicit solution to~\eqref{eq: system}. Let $j^*=\arg\max_i u_i$ be the index of the largest magnitude. Place all rescaled vectors along a single axis $\hat{\bm{e}}$ as follows: set $\u_i=u_i\hat{\bm{e}}$ for $i\ne j^*$ and $\u_{j^*}=-u_{j^*}\hat{\bm{e}}$. The corresponding Gram matrix has entries $G_{ij} = u_i u_j(\hat{\bm{e}}\cdot\hat{\bm{e}}) = \pm u_i u_j$ and is of rank one, hence positive semi-definite. The diagonal entries are $G_{ii}=u_i^2$ by construction. The row-sum quadratic form evaluates to
\begin{equation}
  \bm{1}^T G\,\bm{1} = \Bigl\lvert\sum_{i=1}^n \u_i\Bigr\rvert^2 = \Bigl\lvert u_{j^*} - \sum_{i\ne j^*}u_i\Bigr\rvert^2 \le 1 \,,
\end{equation}
where the last step uses~\eqref{eq: kinematic region} for $i=j^*$. This configuration thus provides a valid solution to~\eqref{eq: system}, completing the proof.
\end{proof}

\begin{figure}[h!]
    \centering
    \includegraphics[width=0.4\linewidth]{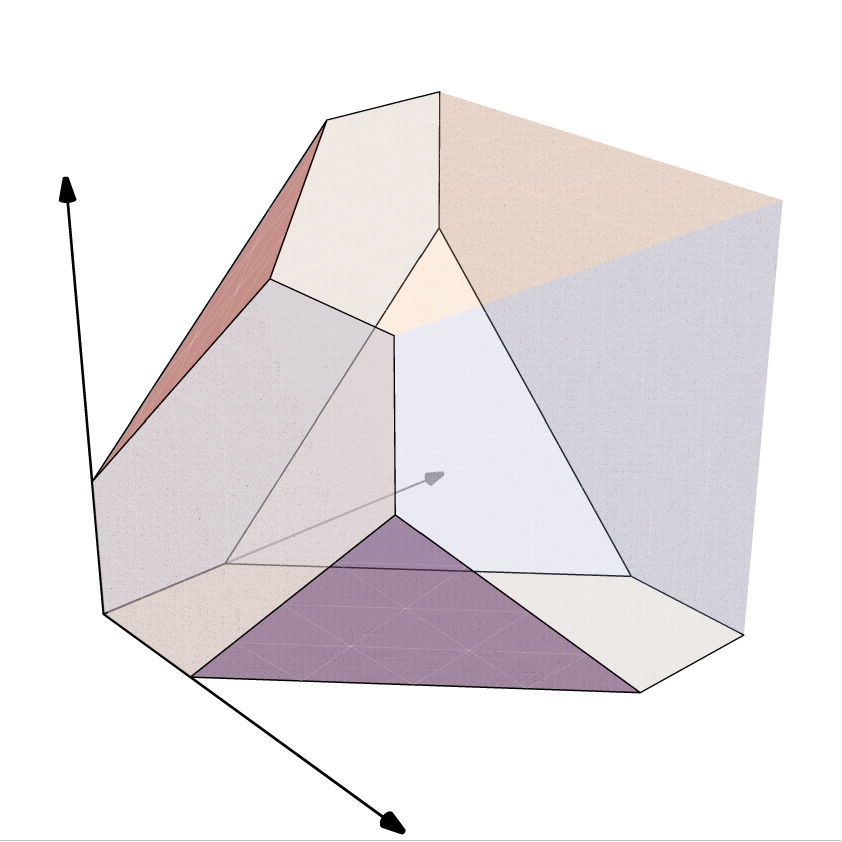}
    \caption{Kinematic polytope for $n=3$ in $(u_1, u_2, u_3)\in\mathbb{R}^3$ in the positive orthant.}
  \label{fig: kinematic region polytope}
\end{figure}
 
\subsubsection*{Geometric interpretation of the boundary} 

The hyperplane $u_i = 1+\sum_{j\ne i}u_j$ in $\mathbb{R}_{\ge 0}^n$ corresponds to a {\it degenerate} configuration in which all vectors are collinear and the $i$-th vector is anti-aligned with the sum of all others, with $|\bm{\kappa}|=1$, i.e.\ all $\k$-vectors are themselves collinear and parallel to $\bm{\kappa}$.  Points in the interior of the kinematic region correspond to generic (non-collinear) configurations, while the $n$ boundary hyperplanes of~\eqref{eq: kinematic region} are the extremal walls of the corresponding polytope. The intersection of any two boundary hyperplanes corresponds to a configuration in which two vectors are anti-aligned with the sum of all others, and so on. The unique vertex of the polytope is the origin, corresponding to the trivial configuration in which all vectors vanish. We show the kinematic region polytope for $n=3$ in Fig.~\ref{fig: kinematic region polytope}. Note that for $d\ge 2$ the physical region is the same polytope~\eqref{eq: kinematic region} regardless of the precise value of $d$, since increasing $d$ only gives more freedom to realise any given Gram matrix; it does not enlarge the set of achievable $(u_1,\ldots,u_n)$. Eventually, let us mention that the series representations of the Lauricella functions $F_C^{(n)}$ (for $n\ge2$) only converge in the simplex around the origin, and require analytic continuation elsewhere. 

\subsection{Marginal vertices}

We eventually consider the case of a marginal vertex for which the energy vanishes, $X=0$. For $n\geq2$ massive legs, using the shorthand notations from Sec.~\ref{subsec: generalisation to arbitrary massive legs}, the corresponding vertex function $\W_{\aa, \mu_1, \ldots, \mu_n}^{(n)}$ is defined by the following time integral:
\begin{equation}
\label{eq: Wn definition}
    \W^{(n)}_{+, \bm{\mu}}(\u; p) \equiv e^{+\frac{\pi}{2}\mu_{1\cdots n}} \int_{-\infty}^0 \frac{\d z}{(-z)^{d+1}} (-z)^{p+n\tfrac{d}{2}} \prod_{j=1}^n H_{i\mu_j}^{(2)}(-u_j z) \,,
\end{equation}
for the $\aa=+$ vertex, and is schematically given by
\begin{center}
\begin{tikzpicture}[line width=1. pt, scale=2]
    \draw[black] (0, 0) -- (-0.5, -0.5) node[below left] {$(Y_5, \mu_5)$};
    \draw[black] (0, 0) -- (-0.7, 0) node[left] {$(Y_4, \mu_4)$};
    \draw[black] (0, 0) -- (-0.5, 0.5) node[above left] {$(Y_3, \mu_3)$};
    \draw[black] (0, 0) -- (0, 0.7) node[above] {$(Y_2, \mu_2)$};
    \draw[black] (0, 0) -- (0.5, 0.5) node[above right] {$(Y_1, \mu_1)$};
    \draw[black] (0, 0) -- (0.7, 0) node[right] {$(Y_n, \mu_n)$};

    \node[black, rotate=25] at ($(-0.5, -0.5)!1/2!(1, -0.5)$) {$\cdots$};
    
    \draw[fill=black] (0, 0) circle (.05cm) node[below right] {$(0, p)$};
\end{tikzpicture} 
\end{center}
Notice the absence of the exponential term. Without loss of generality, we assume that $Y_n$ is the largest internal energy: $Y_n>Y_j$ for $j=1, \ldots, n-1$. The vertex function is therefore a function of the internal energy ratios: $u_j=Y_j/Y_n$ with the condition $u_n=1$.\footnote{Note that in the case of a marginal vertex, the corresponding dimensionless master integral is defined by stripping out the factor $(\pi/4)/(X_iY)^{d/2}$ for every bulk-to-bulk propagator where $Y$ is the largest internal energy of the massive leg connected to the vertex.} The Mellin-Barnes representation of the marginal vertex function is given by
\begin{equation}
    \begin{aligned}
        \W^{(n)}_{+, \bm{\mu}}(\u; p) = \frac{e^{\frac{i\pi n}{2}}}{\pi^n} \int_{-i\infty}^{+i\infty} &\prod_{j=1}^n \left[\frac{\d s_j}{2i\pi} \left(\frac{u_j}{2}\right)^{-2s_j} \Gamma[s_j\pm \tfrac{i\mu_j}{2}]\right] e^{-i\pi s_{1\cdots n}}\\
        &\times \int_{-\infty}^{z_0} \frac{\d z}{(-z)^{d+1}} (-z)^{p+n \tfrac{d}{2}-2s_{1\cdots n}} \,,
    \end{aligned}
\end{equation}
where $z_0$ is a late-time IR cutoff. The time integral evaluates to
\begin{equation}
    \int_{-\infty}^{z_0} \frac{\d z}{(-z)^{d+1}} (-z)^{p+n \tfrac{d}{2}-2s_{1\cdots n}} = -\frac{(-z_0)^{p+\tfrac{n-2}{2}d - 2s_{1\cdots n}}}{p+\tfrac{n-2}{2}d - 2s_{1\cdots n}} \,,
\end{equation}
with $\Re(p+\tfrac{n-2}{2}d - 2s_{1\cdots n})<0$. The leading contribution in the $z_0\to0$ limit is encoded in the residue of the single pole at $\tfrac{p}{2}+\tfrac{n-2}{4}d - s_{1\cdots n}=0$. The time integral is therefore encoded in the following $\delta$-function:
\begin{equation}
    i\pi \delta(\tfrac{p}{2}+\tfrac{n-2}{4}d - s_{1\cdots n}) = -\frac{(-z_0)^{p+\tfrac{n-2}{2}d - 2s_{1\cdots n}}}{p+\tfrac{n-2}{2}d - 2s_{1\cdots n}} \,.
\end{equation}
Evaluating the $s_n$ integration using the $\delta$-function yields the following expression:
\begin{equation}
    \begin{aligned}
        \W^{(n)}_{+, \bm{\mu}}(\u; p) = \tilde{\C}_+^{(n)}(p) \int_{-i\infty}^{+i\infty} \prod_{j=1}^{n-1} &\left[\frac{\d s_j}{2i\pi} \left(\frac{u_j}{2}\right)^{-2s_j} \Gamma[s_j\pm \tfrac{i\mu_j}{2}]\right] \\ 
        &\times\Gamma[\tfrac{p}{2}+\tfrac{n-2}{4}d - s_{1\cdots n-1} \pm \tfrac{i\mu_n}{2}] \,,
    \end{aligned}
\end{equation}
where we have defined
\begin{equation}
    \tilde{\C}_+^{(n)}(p) \equiv 2^{p+\tfrac{n-2}{2}d-1}\C_+^{(n)}(p) \,,
\end{equation}
with $\C_+^{(n)}$ defined in~\eqref{eq: Cn+ overall def}. Closing all integration contours to the left yields: 
\begin{equation}
\label{eq: Wn expression}
    \boxed{
    \W^{(n)}_{+, \bm{\mu}}(\u; p) = \tilde{\C}_+^{(n)}(p) \sum_{\bm{\alpha}=\pm i\bm{\mu}} \tilde{\F}^{(n)}_{\bm{\alpha}}(\u; p+\tfrac{n-2}{2}d) \,,
    }
\end{equation}
where each mode is defined as
\begin{equation}
    \boxed{
    \tilde{\F}^{(n)}_{\bm{\alpha}}(\u; p) \equiv \Gamma[\tfrac{p+|\bm{\alpha}|\pm i \mu_n}{2}, -\bm{\alpha}, \bm{1}+\bm{\alpha}] \, \u^{\bm{\alpha}} \, \bar{F}^{(n-1)}_C\left[\left.\begin{matrix} \frac{p+|\bm{\alpha}|+i\mu_n}{2},\,  \frac{p+|\bm{\alpha}|-i\mu_n}{2}\\ \bm{1}+\bm{\alpha} \end{matrix}\right\vert \u^2\right] \,.
    }
\end{equation}
Here, the multi-index vectors have size $n-1$: $\bm{\alpha}=(\alpha_1, \ldots, \alpha_{n-1}), \bm{\mu}=(\mu_1, \ldots, \mu_{n-1})$ and $\u=(u_1, \ldots, u_{n-1})$ (recall $u_n=1$), so that the sum in~\eqref{eq: Wn expression} has $n-1$ terms. An important feature is that setting the vertex energy to zero lowers the transcendental weight of the vertex function; an $n$-leg massive vertex function is described by a Lauricella function of order $n-1$, $F_C^{(n-1)}$. Setting $\mu_n=\pm i/2$ (one leg being conformally coupled), we recover the non-marginal vertex function using the duplication formula for the $\Gamma$-function.

\subsubsection*{Euclidean region}

Determining the physically allowed kinematic region for marginal vertex functions closely parallels the derivation presented in Sec.~\ref{subsec: Euclidean region}, with one subtle but important modification. The resulting constraints are
\begin{equation}
\label{eq: Euclidean region marginal vertex}
    u_i \leq 1+\sum_{j\neq i} u_j \,, \quad (i=1, \ldots, n-1)\,, \quad \text{and} \quad 1\leq \sum_{i=1}^{n-1} u_i \,.
\end{equation}
Unfortunately, the last condition---arising from imposing the generalised triangle inequality on the largest internal energy $\bm{Y}_n$---has a significant consequence: it excludes the simplex at the origin, thereby restricting the physically allowed kinematic region to a domain in which analytic continuations of type-$C$ Lauricella functions are currently not available. We will discuss this in a future work. As an example, in the marginal case $n=3$, the physical region is given by only (\textbf{IV}) in Fig.~\ref{fig: kinematic space for V2}.

\subsubsection*{Orthogonality relation for $n=2$}

A particularly illuminating special case of the marginal vertex function arises for $n=2$ massive legs. In this configuration, momentum conservation at the vertex with vanishing external energy, $X=0$, together with the requirement $Y_1, Y_2>0$, forces both internal momenta to be equal: $Y_1=Y_2$. In the dimensionless variables, this translates to $u_1=u_2=1$, which, as noted below~\eqref{eq: Euclidean region marginal vertex}, places the kinematic configuration precisely on the boundary of the physical region for marginal vertices. This corresponds to a maximally degenerate, collinear configuration. 

\vskip 4pt
For the special choice of twist $p=0$, the marginal vertex function evaluates to 
\begin{equation}
\label{eq: KL orthogonality relation}
    \W_{+, \mu_1, \mu_2}^{(2)}(\u; 0) = e^{+\frac{\pi}{2}\mu_{12}} \int_{-\infty}^0 \frac{\d z}{(-z)} H_{i\mu_1}^{(2)}(-z) H_{i\mu_2}^{(2)}(-z) = \tfrac{4|\mu_1|}{\pi} \Gamma[\pm i \mu_1]\delta(\mu_1^2 - \mu_2^2) \,.
\end{equation}
This is precisely the orthonormality condition of the Kontorovich-Lebedev transform~\cite{Kontorovich1938, Lebedev1946}, once the time axis has been Wick rotated on the imaginary axis, which turns the Hankel functions into modified Bessel ones. Notice that~\eqref{eq: KL orthogonality relation} is dimension independent. Physically, this identity expresses the fact that Hankel functions $H_{i\mu}^{(2)}$, with $\mu\in \mathbb{R}$, form a complete orthogonal set under the measure $\d z/(-z)$. We refer to~\cite{Belrhali:2026rkn} for the derivation using spectral theory; an independent derivation using Mellin-Barnes techniques is given in~\cite{Grafe:2026qsm}.

\newpage
\section{Massive Tree Graphs from Combinatorics}
\label{sec: spectral gluing}

Having studied the vertex functions, the remaining task is to evaluate the spectral integrals stemming from (anti-)time-ordered bulk-to-bulk propagators. As shown in~\eqref{eq: iepsilon distribution}, the associated spectral density naturally decomposes into two contributions: a factorised part, which trivialises the spectral integrals through a $\delta$-function and directly combines with the vertex functions, and an analytic part, for which the spectral integration must be carried out explicitly. 

\vskip 4pt
In this section, we develop a systematic algorithm for constructing arbitrary tree graphs via spectral gluing. We begin by examining the gluing procedure for two adjacent vertices in Sec.~\ref{subsec: gluing adjacent vertices}, and then extend this construction to more general graphs in Sec.~\ref{subsec: maximally-nested contributions from graph combinatorics}. We present an algorithm to write any {\it maximally-nested analytic} contribution of any massive tree correlator purely from combinatorial properties of the corresponding graph. 
The extraction of on-shell poles is discussed in Sec.~\ref{subsec: on-shell poles}. Finally, we summarise the procedure by formulating the {\it spectral gluing algorithm} in Sec.~\ref{subsec: spectral gluing algorithm}.

\subsection{Gluing adjacent vertices}
\label{subsec: gluing adjacent vertices}

Consider two vertices $i$ and $j$ of an arbitrary graph that carry the same Schwinger-Keldysh index and are connected by an edge. For concreteness, we focus on the case of two $(+)$-vertices with $n_i$ and $n_j$ massive legs, respectively. The analogous case of two $(-)$-vertices is obtained by complex conjugation. Both vertex functions depend on the kinematic variables $\bm{u}_i=(u_{i1}, \ldots, u_{in_i})$ and $\bm{u}_j=(u_{j1}, \ldots, u_{jn_j})$, with $0\leq u_{ik}$ ($k=1, \ldots, n_i$) and $0\leq u_{j\ell}$ ($\ell=1, \ldots, n_j$). The indices refer to the energies that the respective variable depends on. We set $u_{ik}\equiv Y_{ik}/X_i$ and $u_{j\ell}=Y_{j\ell}/X_j$. In particular, $u_{ij}\neq u_{ji}$. We now consider gluing the massive leg $u_{ij}$ of the left vertex to the massive leg $u_{ji}$ of the right vertex, which schematically reads
\begin{center}
\begin{tikzpicture}
[line width=1. pt, scale=2]
    \draw[fill=black] (0, 0) circle (.05cm) node[below right=1mm] {$\textcolor{pyblue}{u_{ij}}$};
    \draw[fill=black] (0, 0) circle (.05cm) node[above right] {$\textcolor{pyblue}{\nu}$};
    
    \draw[fill=black] (1, 0) circle (.05cm) node[below left=1mm] {$\textcolor{pyblue}{u_{ji}}$};
    \draw[fill=black] (1, 0) circle (.05cm) node[above left] {$\textcolor{pyblue}{\nu}$};
    
    \draw[black] (0, 0) -- (1, 0);

    \draw (0,0) -- ++(120:0.6) node[above] {$(u_{i1}, \mu_{i1})$};
    \draw (0,0) -- ++(150:0.6) node[above left] {$(u_{i2}, \mu_{i2})$};
    \draw (0,0) -- ++(180:0.6) node[left] {$(u_{i3}, \mu_{i3})$};
    \node at (-0.3,-0.15) {$\vdots$};
    \draw (0,0) -- ++(-120:0.6) node[below] {$(u_{in_i}, \mu_{in_i})$};

    \coordinate (top) at ($(-1.5,0.8)$);
    \coordinate (bottom) at ($(-1.5,-0.8)$);

    \draw[decorate,decoration={brace,amplitude=6pt}]
        (bottom) -- (top)
        node[midway,xshift=-1cm] {$n_i$-legs};

    \draw (1,0) -- ++(60:0.6) node[above] {$(u_{j1}, \mu_{j1})$};
    \draw (1,0) -- ++(30:0.6) node[above right] {$(u_{j2}, \mu_{j2})$};
    \draw (1,0) -- ++(0:0.6) node[right] {$(u_{j3}, \mu_{j3})$};
    \node at (1.3,-0.15) {$\vdots$};
    \draw (1,0) -- ++(-60:0.6) node[below] {$(u_{jn_j}, \mu_{jn_j})$};

    \coordinate (top) at ($(2.5,0.8)$);
    \coordinate (bottom) at ($(2.5,-0.8)$);

    \draw[decorate,decoration={brace,amplitude=6pt}]
        (top) -- (bottom)
        node[midway,xshift=+1cm] {$n_j$-legs.};
\end{tikzpicture}
\end{center}
For later convenience, the kinematic multi-index vector $\bm{u}_i$ where $u_{ij}$ is omitted from the list is denoted as
\begin{equation}
    \bm{u}_{i,(j)} = (u_{i1}, \ldots, u_{i,j-1}, u_{i,j+1}, \ldots, u_{in_i}) \,,
\end{equation}
where the subscript $(j)$ means ``all components except $j$", and similarly for $\bm{u}_{j,(i)}$. This contribution contains a spectral integral over the off-shell mass parameter $\nu$ running in the internal edge, and is given by
\begin{equation}
\label{eq: gluing adjacent vertices integral}
    \widehat{\I}^{\P_{ij}}(\bm{u}_i, \bm{u}_j) \equiv e^{-\frac{i\pi}{2}} \int\limits_{-\infty}^{+\infty} [\d\nu] \, \rho_{\nu, \P} \, \V_{+, (\nu, \bm{\mu}_{i,(j)})}^{(n_i)}(\bm{u}_i, p_i) \V_{+, (\nu, \bm{\mu}_{j,(i)})}^{(n_j)}(\bm{u}_j, p_j) \,.
\end{equation}
Here, we entirely focus on the analytic contribution from the spectral density $\rho_{\nu, \P}$, which we denote with $\P$ (referring to the Cauchy principal value, see Sec.~\ref{subsec: spectral propagators}). The overall phase $e^{-\frac{i\pi}{2}}$ comes from the spectral propagator~\eqref{eq: split representation ++}. Notice that this spectral function is not necessarily the spectral density of a free propagator, but can also be a resummed propagator or any exchange density. This way, the following gluing procedure captures not only tree-level contributions, but also loop-order exchanges, generalising bubble-loop resummations in~\cite{Grafe:2026qsm}. 

\subsubsection*{Dissecting the product of vertex functions}

Let us now study the analytic and kinematic structure of the product of two vertex functions. Each of the vertex functions in~\eqref{eq: gluing adjacent vertices integral} can be separated into a term proportional to $(u_{ij}/2)^{+i\nu}$ and one proportional to $(u_{ij}/2)^{-i\nu}$ (similarly for $u_{ji}$). These factors dominate the large-$\nu$ behaviour of the integrand and therefore control the choice of half plane in which to close the spectral integral contour. 

\vskip 4pt
In what follows, it will be useful to isolate the non-analytic dependence on $\nu$ in the vertex function by rewriting it as follows:
\begin{equation}
\label{eq: vertex function non-analytic nu decomposition}
    \begin{aligned}
        \V_{+, (\nu, \bm{\mu}_{i,(j)})}^{(n_i)}(\bm{u}_i; p_i) = &\C_+^{(n_i)} \left[\left(\frac{u_{ij}}{2}\right)^{+i\nu} \sum_{\bm{\alpha}=\pm i \bm{\mu}_{i,(j)}} \textcolor{pyblue}{\Gamma[-i\nu, 1+i\nu, \tilde{p}_i+i\nu+|\bm{\alpha}|]} \mathtt{F}_{+i\nu, \bm{\alpha}}^{(n_i)}(\bm{u}_i; \tilde{p}_i) \right.\\
        &\left.\hspace*{-0.2cm}+\left(\frac{u_{ij}}{2}\right)^{-i\nu} \sum_{\bm{\alpha}=\pm i \bm{\mu}_{i,(j)}} \textcolor{pyblue}{\Gamma[+i\nu, 1-i\nu, \tilde{p}_i-i\nu+|\bm{\alpha}|]} \mathtt{F}_{-i\nu, \bm{\alpha}}^{(n_i)}(\bm{u}_i; \tilde{p}_i)\right] \,,
    \end{aligned}
\end{equation}
with $\tilde{p}_i\equiv p_i+\tfrac{n_i-2}{2}d$, and where we have defined the auxiliary reduced mode functions
\begin{equation}
    \mathtt{F}_{\pm i\nu, \bm{\alpha}}^{(n_i)}(\bm{u}_i; \tilde{p}_i) \equiv \Gamma[-\bm{\alpha}, \bm{1}+\bm{\alpha}] \left(\frac{\bm{u}_{i,(j)}}{2}\right)^{\bm{\alpha}} \bar{F}^{(n_i)}_C\left[\left.\begin{matrix} \frac{\tilde{p}_i+|\bm{\alpha}|\pm i\nu}{2},\,  \frac{\tilde{p}_i+1+|\bm{\alpha}| \pm i\nu}{2}\\ 1\pm i\nu, \bm{1}+\bm{\alpha} \end{matrix}\right\vert \bm{u}_i^2\right] \,,
\end{equation}
that are entire functions of the complex variable $\nu$. In~\eqref{eq: vertex function non-analytic nu decomposition}, all poles in the complex $\nu$-plane are encoded in the \textcolor{pyblue}{blue} $\Gamma$-factors. The sum contains $2^{n-1}$ terms. We decompose $\V_{+, (\nu, \bm{\mu}'_{-j})}^{(n')}(\bm{u}'; p')$ in the same way. With this decomposition, the integral~\eqref{eq: gluing adjacent vertices integral} reads
\begin{equation}
    \begin{aligned}
        \widehat{\I}^{\P_{ij}}(\bm{u}_i, \bm{u}_j) = \sum_{\sigma_i, \sigma_j=\pm i\nu} \widehat{\I}^{\P_{ij}}_{\sigma_i, \sigma_j}(\bm{u}_i, \bm{u}_j) \,,
    \end{aligned}
\end{equation}
where the four terms in the sum are given by
\begin{equation}
    \begin{aligned}
        \widehat{\I}^{\P_{ij}}_{\sigma_i, \sigma_j}(\bm{u}_i, \bm{u}_j) &= e^{-\frac{i\pi}{2}} \C_+^{(n_i)}(p_i)\C_+^{(n_j)}(p_j) \int\limits_{-\infty}^{+\infty} \d\nu \, \rho_{\nu, \P} \, \left(\frac{u_{ij}}{2}\right)^{\sigma_i} \left(\frac{u_{ji}}{2}\right)^{\sigma_j} \\
        &\times \sum\limits_{\substack{\bm{\alpha}_i =\pm i\bm{\mu}_{i,(j)} \\ \bm{\alpha}_j = \pm i\bm{\mu}_{j,(i)}}} \Gamma\left[\begin{matrix} -\sigma_i, -\sigma_j, 1+\sigma_i, 1+\sigma_j, \tilde{p}_i+\sigma_i+|\bm{\alpha}_i|, \tilde{p}_j+\sigma_j+|\bm{\alpha}_j| \\ +i\nu, -i\nu \end{matrix}\right]  \\
        &\times \mathtt{F}_{\sigma_i, \bm{\alpha}_i}^{(n_i)}(\bm{u}_i; \tilde{p}_i) \mathtt{F}_{\sigma_j, \bm{\alpha}_j}^{(n_j)}(\bm{u}_j; \tilde{p}_j) \,,
    \end{aligned}
\end{equation}
where the factor $\Gamma[\pm i\nu]$ in the denominator stems from the integration measure $[\d\nu]\equiv \d\nu\N_\nu$. Each of the four integrals $\widehat{\I}^{\P_{ij}}_{\sigma_i, \sigma_j}$ exhibits a different large-$\nu$ behaviour of the integrand. While $\widehat{\I}^{\P_{ij}}_{\pm i\nu, \pm i\nu}$ depends on $(u_{ij} u_{ji}/4)^{\pm i\nu}$, the opposite sign integrals $\widehat{\I}^{\P_{ij}}_{\pm i\nu, \mp i\nu}$ depend on $(u_{ij} /u_{ji})^{\pm i\nu}$. Therefore, depending on whether $u_{ij}u_{ji}<4$ and $u_{ij}<u_{ji}$, we close the contour for each integral either in the upper or in the lower half plane. Naively, this suggests four different kinematic regions for the full integral, as shown in Fig.~\ref{fig: kinematic space product vertex}. However, we will now see that the final result does not depend on poles stemming from the equal sign integrals $\widehat{\I}^{\P_{ij}}_{\pm i\nu, \pm i\nu}$. 

\begin{figure}[h!]
\centering
    \begin{tikzpicture}[line width=1. pt, scale=2]

    \draw[->] (-0.2,0) -- (3.2,0) node[right] {$u_{ij}$};
    \draw[->] (0,-0.2) -- (0, 3.2) node[above] {$u_{ji}$};

    \foreach \x in {2} {
        \draw (\x, 0.03) -- (\x, -0.03) node[below] {\small $\x$};
    }
    \node[below left] at (0,0) {\small $0$};
    
    \foreach \y in {2} {
        \draw (0.03, \y) -- (-0.03, \y) node[left] {\small $\y$};
    }

    \draw[pyblue, line width=1.2pt, name path=diag] (0,0) -- (3,3) node[above right] {$u_{ij} = u_{ji}$};

    \draw[pyred, line width=1.2pt, domain=1.33:3, name path=hyper] plot (\x,{4/\x}) node[right] {$u_{ij} u_{ji} = 4$};
    
    \draw[gray, dotted, line width=1.2pt] (2,0) -- (2,2);
    \draw[gray, dotted, line width=1.2pt] (0,2) -- (2,2);
    \fill (2,2) circle (1pt);

    \fill[pyblue!5, intersection segments={of=diag and hyper, sequence={L2--R2}}] -- (3, 0) -- (0,0) -- cycle;

    \fill[pyblue!20]
        (2,2) --
        plot[domain=2:3] (\x,{4/\x}) -- (3,3) -- (2,2) -- cycle;

    \fill[pyred!5]
        (0,0) -- (2,2) --
        plot[domain=2:1.33] (\x,{4/\x}) -- (0,3) -- cycle;

    \fill[pyred!20]
        (2,2) -- (3,3) -- (1.33,3) --
        plot[domain=1.33:2] (\x,{4/\x}) -- (2,2) -- cycle;

    \draw[->] (-0.2,0) -- (3.2,0) node[right] {$u_{ij}$};
    \draw[->] (0,-0.2) -- (0, 3.2) node[above] {$u_{ji}$};

    \foreach \x in {2} {
        \draw (\x, 0.03) -- (\x, -0.03) node[below] {\small $\x$};
    }
    \node[below left] at (0,0) {\small $0$};
    
    \foreach \y in {2} {
        \draw (0.03, \y) -- (-0.03, \y) node[left] {\small $\y$};
    }

    \draw[pyblue, line width=1.2pt, name path=diag] (0,0) -- (3,3) node[above right] {$u_{ij} = u_{ji}$};

    \draw[pyred, line width=1.2pt, domain=1.33:3, name path=hyper] plot (\x,{4/\x}) node[right] {$u_{ij} u_{ji} = 4$};
    
    \draw[gray, dotted, line width=1.2pt] (2,0) -- (2,2);
    \draw[gray, dotted, line width=1.2pt] (0,2) -- (2,2);
    \fill (2,2) circle (1pt);

    \end{tikzpicture}
    \caption{Kinematic regions defined by the curves $u_{ij}=u_{ji}$ and $u_{ij} u_{ji}=4$, corresponding to different choices of contour closure in the integrals $\widehat{\I}^{\P_{ij}}_{\sigma_i, \sigma_j}$. The boundary $u_{ij} u_{ji}=4$ is spurious, as it is removed by pole cancellation.}
    \label{fig: kinematic space product vertex}
\end{figure}

\subsubsection*{Pole cancellation \& physical poles}

The representation of $\V_{+, (\nu, \bm{\mu})}^{(n)}(\bm{u}; p)$ in~\eqref{eq: vertex function non-analytic nu decomposition} suggests the presence of poles at $\nu = \pm im$ with $m\in\mathbb{N}$ stemming from the factors $\Gamma[\mp i\nu,1\pm i\nu]$. However, it turns out that this is simply an effect of our choice of representation in terms of positive and negative frequency modes. We now prove that these poles are not physical, i.e.~they do not contribute to the spectral integration. This property is formulated for general vertex functions in the following lemma: 

\begin{lemma}
\label{lemma: regular vertex function}
    Any vertex function $\mathcal{V}_{+,\bm\mu}^{(n)}(\bm{u};p)$ is regular at $\mu_j =\pm im$ for all $m\in\mathbb{N}$ and $j=1,\dots,n$.
\end{lemma}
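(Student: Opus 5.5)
The plan is to argue directly from the integral definition~\eqref{eq: Vn definition} rather than from the mode expansion~\eqref{eq: Vn definition short}, since the apparent poles at $\mu_j=\pm im$ are an artefact of splitting into positive- and negative-frequency modes. For fixed $u_j>0$ and $z<0$, the Hankel function $H^{(2)}_{i\mu}(-u_j z)$ is an entire function of its order $i\mu$. The prefactor $e^{\frac{\pi}{2}\mu}$ is also entire. So the integrand is entire in each $\mu_j$. The task is then to show that the $z$-integral converges locally uniformly in $\mu_j$ near $\pm im$, which lets us differentiate under the integral (Morera/Weierstrass) and conclude holomorphy, hence regularity.

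For the uniformity I would split the contour into an early-time part $z\in(-\infty^+,-1]$ and a late-time part $z\in[-1,0)$. At early times the large-argument asymptotics $H^{(2)}_{i\mu}(x)\sim\sqrt{2/(\pi x)}\,e^{-i(x-\frac{\pi}{2}i\mu-\frac{\pi}{4})}$ hold uniformly for $\mu$ in compact sets. Combined with the $i\epsilon$ tilt, this gives exponential damping $e^{i(1+|\bm u|)z}$, so that part is harmless. At late times the small-argument behaviour of $H^{(2)}_{i\mu}(x)$ is controlled by $x^{\pm i\mu}$. At $\mu=\pm im$ this is $x^{\mp m}$, possibly with a $\log x$ from the degenerate Frobenius exponents. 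Convergence near $z\to0$ then needs $\Re(\tilde p+|\bm\alpha|)>0$ for the relevant exponents. This is the same IR-finiteness condition on $p$ assumed throughout, and outside it the function is defined by analytic continuation in $p$ anyway.

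As an alternative, and to make contact with the paper's representation, I would also verify the claim at the level of~\eqref{eq: Vn definition short}. The poles come from $\Gamma[-\alpha_j,1+\alpha_j]=-\pi/\sin(\pi\alpha_j)$ in $\F^{(n)}_{\bm\alpha}$. I would pair the $\alpha_j=+i\mu_j$ and $\alpha_j=-i\mu_j$ terms and show that their residues at $i\mu_j=m$ cancel. Since $\sin$ is odd, the two residues have opposite prefactor sign, so cancellation reduces to showing that the remaining coefficient functions coincide at $\alpha_j=\pm m$. This follows from the regularised Lauricella function. At $c_j=1-m$, $\bar F_C^{(n)}$ starts at $k_j=m$, and the shift $k_j\to k_j+m$ gives
\begin{equation*}
\left(\tfrac{u_j}{2}\right)^{-m}\bar F_C\Big|_{c_j=1-m}=\left(\tfrac{u_j}{2}\right)^{m}\bar F_C\Big|_{c_j=1+m}\times(\text{matching Pochhammer/}\Gamma\text{ factors}).
\end{equation*}
This is the standard identity $\bar{{}_2F_1}(a,b;1-m;x)=(a)_m(b)_m x^m\,\bar{{}_2F_1}(a+m,b+m;1+m;x)$, generalised to $F_C^{(n)}$. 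I would check that the factors $(a)_m(b)_m$ with $a=\tfrac{\tilde p+|\bm\alpha|}{2}$ and $b=a+\tfrac12$ combine with $\Gamma[\tilde p+|\bm\alpha|]$, via the duplication formula, to reproduce $\Gamma[\tilde p+|\bm\alpha|+2m]$. That is exactly the prefactor of the $\alpha_j=+m$ mode.

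The main obstacle is this last bookkeeping step: $(a)_m(b)_m=2^{-2m}(2a)_{2m}$ must match precisely, including the factors $2^{\pm m}$ from $(u_j/2)^{\alpha_j}$. If that turns out to be messy, I would fall back on the integral argument of the first paragraph. It is cleaner because the lemma only needs analyticity, and the Hankel function's entireness in its order does the work.
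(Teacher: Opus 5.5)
Your second route is the paper's proof. You pair the $\alpha_j=\pm i\mu_j$ modes, use that $\Gamma(-\alpha_j)\Gamma(1+\alpha_j)=-\pi/\sin(\pi\alpha_j)$ is odd, and kill the residue with the reflexion identity for the regularised Lauricella function at $c_j=1-m$ (the paper's Lemma~\ref{lemma-Fc} and its corollary). The bookkeeping you flag as the main obstacle does close. Write $|\bm\alpha'|=\sum_{k\neq j}\alpha_k$ and $a=(\tilde p+|\bm\alpha'|-m)/2$. Then $(a)_m(a+\tfrac12)_m=2^{-2m}\,\Gamma(\tilde p+|\bm\alpha'|+m)/\Gamma(\tilde p+|\bm\alpha'|-m)$ and $2^{-2m}u_j^{2m}(u_j/2)^{-m}=(u_j/2)^{m}$. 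So the $\alpha_j=-m$ coefficient reproduces the $\alpha_j=+m$ one exactly, powers of $2$ included.

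The gap is in your first route, the one you treat as the safe fallback. At $\mu_j=\pm im$ the Hankel function has integer order and grows like $(-z)^{-m}$ as $z\to0$. The late-time integrand therefore behaves as $(-z)^{\tilde p-m-1}$, and the integral converges only for $\Re\tilde p>m$ when the other $\mu_k$ are real; in general you need $\Re\tilde p>\sum_k|\mathrm{Im}\,\mu_k|$. This is \emph{not} the IR-finiteness condition assumed in the paper, which refers to real $\mu$. For any fixed $p$ it fails for all but finitely many $m$, yet the lemma is needed for the whole tower $\nu=-im$, $m\in\mathbb{N}$, enclosed by the spectral contour. Appealing to analytic continuation in $p$ does not, by itself, carry holomorphy in $\mu_j$ over to those points.

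To close the argument you would need three further steps. First, joint holomorphy of the integral in $(p,\bm\mu)$ on the convex domain $\{\Re\tilde p>\sum_k|\mathrm{Im}\,\mu_k|\}$. Second, that its continuation is jointly meromorphic, with polar set contained in the hyperplanes $\{i\mu_k\in\mathbb{Z}\}$ and $\{\tilde p+|\bm\alpha|\in-\mathbb{N}\}$; you can only read this off from the mode (or Mellin--Barnes) representation. Third, an identity-theorem step: each hyperplane $\{\mu_j=\mp im\}$ meets the convergence domain in a nonempty open set where the function is holomorphic, so it cannot be a component of the polar divisor. With these additions your first route becomes a genuinely different proof that bypasses the Lauricella reflexion formula, and in fact implies it. As written, it does not establish the lemma.
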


\begin{proof}
    Let us decompose the indices as $\bm\mu = (\mu_j, \bm\mu_{(j)})$ and the arguments as $\bm{u}=(u_j,\bm{u}_{(j)})$, as done previously. We can write
    \begin{equation}
        \mathcal{V}_{+,(\mu_j,\bm\mu_{(j)})}^{(n)}(u_j,\bm{u}_{(j)};p) = \mathcal{C}_+^{(n)}(p) \sum\limits_{\substack{\sigma =\pm i\mu_j \\ \bm\alpha = \pm i\bm\mu_{(j)}}} \Gamma[-\sigma,1+\sigma, \tilde p+\sigma +|\bm\alpha|] \left(\frac{u_j}{2}\right)^\sigma \mathtt{F}_{\sigma, \bm{\alpha}}^{(n)}(\bm{u}; \tilde{p})\,.
    \end{equation}
    We consider the (apparent) pole at $\mu_j =-im$ for $m\in\mathbb{N}$. For the $\sigma=+i\mu_j$ term this pole stems from the factor $\Gamma(-i\mu_j)$ which has residue $i(-1)^m/m!$. For the $\sigma=-i\mu_j$ term the pole stems from $\Gamma(1-i\mu_j)=-i\mu_j\Gamma(-i\mu_j)$ with residue $-mi(-1)^m/m!$. For the full vertex function we find
    \begin{equation}
    \label{eq: residue of Vn at -im}
        \begin{aligned}
        \underset{\mu_j=-im}{\Res} \mathcal{V}_{+,(\mu_j,\bm\mu_{(j)})}^{(n)} &= i\mathcal{C}_{+}^{(n)}(p) \sum\limits_{\bm\alpha=\pm i\bm\mu_{(j)}}(-1)^m \left\{\Gamma(\tilde p+|\bm\alpha|+m)\left(\frac{u_j}{2}\right)^m \mathtt{F}_{m,\bm\alpha}^{(n)}(\bm{u}; \tilde{p}) \right.\\
        &\left.- (m\leftrightarrow-m)\right\} \,.
        \end{aligned}
    \end{equation}
    Using a reflection identity for the Lauricella function that we prove in~\ref{app: Lauricella Function}, we obtain
    \begin{equation}
        \mathtt{F}_{-m,\bm\alpha}^{(n)}(\bm{u}; \tilde{p}) = \Gamma\left[\begin{matrix}
            \tilde p +|\bm\alpha|+m \\ \tilde p +|\bm\alpha|-m
        \end{matrix}\right] \left(\frac{u_j}{2}\right)^{2m} \mathtt{F}_{m,\bm\alpha}^{(n)} (\bm{u}; \tilde{p})\,,
    \end{equation}
    and hence~\eqref{eq: residue of Vn at -im} vanishes exactly. Similarly, the residue at $\mu_j=+im$ vanishes and hence $\mathcal{V}_{+,\bm\mu}^{(n)}(\bm{u};p)$ is regular at $\mu_j=\pm im$.
\end{proof}

As a consequence, when performing the spectral integrals over vertex functions, we can always ignore these spurious poles that stem from unphysical $\Gamma$-factors even though we might separate the integral into positive and negative frequency modes. Alternatively, we can realise that these apparent poles cancel among poles in mixing terms.

\subsubsection*{Spectral integration of adjacent vertices}

Let us now compute the spectral integrals $\widehat{\I}^{\P_{ij}}_{\sigma_i, \sigma_j}(\bm{u}_i, \bm{u}_j)$. 

\begin{itemize}
    \item $\widehat{\I}^{\P_{ij}}_{++}$: This integral depends on $(u_{ij} u_{ji}/4)^{i\nu}$ in its integrand. We first choose the kinematic region $u_{ij} u_{ji}<4$ where we must close the contour in the lower half plane to ensure that the integrand vanishes along the additional arc. In this case, only the factor $\Gamma(-i\nu)^2$ has poles in the lower half plane. One of these factors is cancelled by $\mathcal{N}_\nu$, so we encounter simple poles at $\nu = -i m$ for $m\in\mathbb{N}$. However, using our previous lemma, these are only apparent poles and cannot contribute to the final result, hence we can ignore them (alternatively we can pick up the residues and see later that they cancel with a contribution from $\widehat{\I}^{\P_{ij}}_{\pm \mp}$). Hence, in this kinematic region $\widehat{\I}^{\P_{ij}}_{++} = 0$, and by the identity theorem from complex analysis this can be analytically continued to the remaining kinematic region.

    \vskip 4pt
    One might worry that in the case $u_{ij} u_{ji}>4$ we need to close the contour in the upper half plane and pick up poles from both $\Gamma$-functions $\Gamma(\tilde p_i +|\bm\alpha_i|+i\nu)$ and $\Gamma(\tilde p_j +|\bm\alpha_j|+i\nu)$ leading to two additional residue series. However, for $u_{ij} u_{ji}>4$ at least one of the kinematic variables $u_{ij},u_{ji}$ lies outside the convergence domain of the respective Lauricella function $F_C^{(n)}[\dots\vert u_{ij}^2,\dots]$ which requires analytic continuation of the integrand. It turns out that this procedure introduces a factor $u_{ij}^{-i\nu}$ which exactly cancels the diverging kinematic term and allows us to close the contour in the lower half plane again. For the case of $F_C^{(1)}={}_2F_1$ one can see this by using a connection formula~\eqref{eq: 2F1 analytic continuation formula} for the hypergeometric function.

    \item $\widehat{\I}^{\P_{ij}}_{--}$: Likewise, this contribution must vanish and we are only left with the mixed integrals $\widehat{\I}^{\P_{ij}}_{\pm\mp}$.

    \item $\widehat{\I}^{\P_{ij}}_{+-}$: This integral depends on the kinematic ratio $(u_{ij}/u_{ji})^{i\nu}$. In this case poles stem from $\Gamma(1+i\nu,1-i\nu)$ as well as $\Gamma[\tilde p_i +|\bm\alpha_i|+i\nu,\tilde p_j +|\bm\alpha_j|-i\nu]$. The first set is only an apparent set of poles again and cancels the other poles from the $\widehat{\I}^{\P_{ij}}_{++}$ integral. Hence there are exactly two sets of \emph{physical} poles which depend on the twists $\tilde p_i, \tilde p_j$ as well as the other spectral indices. One of these sets lies in the upper and one in the lower half plane, and the choice of direction in which to close the contour depends on the ratio $u_{ij}/u_{ji}$. If $u_{ij}<u_{ji}$, we close in the lower half plane and pick up the poles from $\Gamma(\tilde p_j+|\bm\alpha_j|-i\nu)$ at $\nu =-i(m+\tilde p_j+|\bm\alpha_j|),\,m\in\mathbb{N}$. Similarly, if $u_{ij}>u_{ji}$, then we close in the upper half plane and pick up the poles from $\Gamma(\tilde p_i+|\bm\alpha_i|-i\nu)$ at $\nu =i(m+\tilde p_i+|\bm\alpha_i|),\,m\in\mathbb{N}$.

    \item $\widehat{\I}^{\P_{ij}}_{-+}$: Computing this integral is analogous to $\widehat{\I}^{\P_{ij}}_{+-}$.
\end{itemize}

In the kinematic region $u_{ij}<u_{ji}$, since $\widehat{\I}^{\P_{ij}}_{+-}=\widehat{\I}^{\P_{ij}}_{-+}$ by shadow symmetry, the result for the full integral $\widehat{\I}^{\P_{ij}}$ is given by
\begin{equation}
\label{eq: elementary gluing formula}
    \boxed{
    \begin{aligned}
        \widehat{\I}^{\P_{ij}}(\bm{u}_i, \bm{u}_j) &= 4\pi\, e^{-\frac{i\pi}{2}} \, \C_+^{(n_i)}(p_i) \C_+^{(n_j)}(p_j)\sum\limits_{m=0}^\infty \frac{(-1)^m}{m!} \\
        &\times\sum\limits_{\substack{\bm\alpha_i=\pm i\bm\mu_{i,(j)} \\ \bm\alpha_j=\pm i \bm\mu_{j,(i)}}} \rho_{-i\xi_m', \P} \Gamma[\xi_m', 1-\xi_m', \tilde{p}_i+|\bm\alpha_i|+\xi_m']\left(\frac{u_{ij}}{u_{ji}}\right)^{\xi_m'} \\
        &\times \mathtt{F}_{\xi_m', \bm\alpha_i}^{(n_i)}(\bm{u}_i; \tilde{p}_i) \mathtt{F}_{-\xi_m',\bm\alpha_j}^{(n_j)}(\bm{u}_j; \tilde{p}_j)\,,
    \end{aligned}
    }
\end{equation}
where $\xi_m' \equiv m+\tilde{p}_j+|\bm\alpha_j|$ shifts the parameter in the vertex functions running in the edge. The overall factor $4\pi e^{-\frac{i\pi}{2}}$ is ever present when gluing two adjacent vertex functions. It cancels against the prefactor that we pulled out when going from the graph $\G$ to the dimensionless seed integral $\widehat{I}$.

\vskip 4pt
The region $u_{ji}<u_{ij}$ is accessed by simply exchanging primed and un-primed variables. Since the integrals $\widehat{\I}^{\P_{ij}}_{\pm\pm}$ vanish, the naive four kinematic regions of Fig.~\ref{fig: kinematic space product vertex} collapse to only two regions that simply depend on the ratio of the two relevant kinematic variables on this edge.

\subsection{Maximally-nested contributions from graph combinatorics}
\label{subsec: maximally-nested contributions from graph combinatorics}

Having established the spectral gluing of adjacent pairs in Sec.~\ref{subsec: gluing adjacent vertices}, we now present a complete algorithm for constructing the maximally-nested analytic contribution to the all-$(+)$ master integral of any tree-level massive cosmological correlator. The key observation is that the output of one gluing inherits precisely the mode structure needed to apply the gluing formula again, so that one can proceed iteratively. Eventually, the full solution can be directly read off from graph combinatorics. 

\subsubsection*{Iterative procedure}

The reason for that is the following observation: the pole structure of the vertex functions is fully encoded in the factors $\Gamma[\tilde p +|\bm \alpha|]$ which depend on all the spectral parameters of the edges that are connected to a given vertex. Evaluating one of the spectral integrals amounts to picking up the residues at the poles of one of these factors which pushes the parametric dependence of these poles to all the $\Gamma$-factors of the vertex functions that depend on the same spectral variables. When solving the next spectral integral, the location of the poles now depends on the previously picked up poles and so on. Instead of going through this procedure iteratively, we can solve all the spectral integrals independently from each other (setting only one spectral index on-shell at each time) and solve a system of linear equations that relates the poles to each other. For more details, see App.~\ref{app: three-site chain} where we explicitly go through the iterative computation of the three-site chain.

\subsubsection*{Combinatorial algorithm}

Let us now summarise the combinatorial algorithm presented below, introducing the precise definitions only when they become necessary. In a nutshell, evaluating all spectral integrals amounts to solving a linear system whose coefficient matrix is the {\it reduced incidence matrix} of the correlator graph after fixing a root vertex: a purely combinatorial object encoding the graph topology. The unique solution to the system is read off via the {\it inverse reduced incidence matrix}, which is given by the {\it path matrix} and can be determined purely graphically. Solving the system sets all spectral variables on-shell in a previously-determined off-shell ansatz and produces a multi-fold series representation of the correlator valid in a specific kinematic region, that can also be determined fully graphically. We summarise this algorithm in the insert~\ref{box: MNA algo}, and provide a detailed example immediately afterwards.

\subsubsection*{Tree graphs}

We begin by setting up and recalling the relevant graph-theoretic definitions and notations. Let $\G$ be a connected tree graph with $V$ labelled vertices and $I=V-1$ internal edges.\footnote{Although we use the same notation here, the connected tree graph $\G$ is distinct from the full-correlator graph defined in Sec.~\ref{subsec: master integrals for tree correlators}, which represents the sum of all Schwinger-Keldysh contributions.} We denote the vertex set by $\{i\}$ and the edge set by $\{(i,j)\}$. The {\it degree} $n_i\geq1$ of a vertex $i$ is the number of edges incident to it; vertices of degree one are called {\it external} or {\it leaf} vertices, and all others are {\it internal}. We recall that to each vertex $i$ we assign an external energy $X_i>0$ (and a modified twist $\tilde{p}_i\in\mathbb{C}$), and to edge $(i,j)$ we assign an internal energy $Y_{ij}>0$ and a mass parameter $\mu_{ij}\in\mathbb{R}_{>0}$. For each incident vertex-edge pair $(i, (i,j))$, we define the dimensionless kinematic ratio $u_{ij}\equiv Y_{ij}/X_i$. For a leaf vertex $i$ with unique adjacent edge $(i,j)$ we simply write $u_i\equiv u_{ij}$; the physical region then requires $0\leq u_i \leq 1$. For internal vertices, the $u_{ij}$ satisfy the Euclidean inequalities of Sec.~\ref{subsec: Euclidean region}. Eventually, for each edge $(i,j)$, we introduce the off-shell spectral variable $x_{ij}\equiv i\nu_{ij}$, where $\nu_{ij}\in\mathbb{C}$ is the corresponding off-shell mass parameter.

\subsubsection{Rooted tree \& incidence matrix}

The first step is to choose a rooted tree. A {\it rooted tree} (or {\it family tree}) of $\G$ is obtained by designating one vertex $r\in\{1, \ldots, V\}$ as the {\it root} and orienting every edge away from the root $r$ and towards the leaves. Every non-root vertex $i$ then has a unique {\it parent edge} $(i,\pi(i))$: the last edge on the unique path from $r$ to $i$ in $\G$. The map $i\mapsto \pi(i)$ is a bijection from the $V-1$ non-root vertices to the $I=V-1$ edges. Each rooted tree uniquely fixes a graph decoration. We denote the rooted tree $\G_r$. A {\it rooted decoration} of $\G_r$ is an assignment of orientations, marked by arrows, to each edge, pointing away from the root vertex. 

\vskip 4pt
The (vertex-edge) {\it incidence matrix} of $\G_r$, denoted $Q(\G_r)$, is a $V\times I$ matrix defined as follows. The rows and columns of $Q(\G_r)$ are indexed by vertices $\{1, \ldots, V\}$ and edges $\{1, \ldots, I\}$, respectively. The $(i, j)$ entry of $Q(\G_r)$ is $0$ if vertex $i$ and edge $(i,j)$ are not incident, and otherwise it is $-1$ or $+1$ according as $(i,j)$ originates or terminates at $i$, respectively. Each column of $Q(\G_r)$ contains exactly one entry $+1$ and one entry $-1$ at the positions of the two endpoints of the respective edge, with all other entries zero. We often denote $Q(\G_r)$ simply by $Q$. An example is given by
\begin{equation}
\vcenter{\hbox{
\begin{tikzpicture}
[line width=1. pt, scale=2]
    [line width=1. pt, scale=2, baseline={(current bounding box.center)}]
    
    \draw[black, postaction={decorate}, decoration={markings, mark=at position 0.6 with {\arrow[pyred]{>}}}] (0, 0) -- node[left] {$\textcolor{pyblue}{x_{14}}$} (0, 0.7);
    \draw[black, postaction={decorate}, decoration={markings, mark=at position 0.6 with {\arrow[pyred]{<}}}] (0, 0) -- node[above left] {$\textcolor{pyblue}{x_{24}}$} (-0.5, -0.5);
    \draw[black, postaction={decorate}, decoration={markings, mark=at position 0.6 with {\arrow[pyred]{>}}}] (0, 0) -- node[above right] {$\textcolor{pyblue}{x_{34}}$} (0.5, -0.5);
        
    \draw[fill=black] (0, 0.7) circle (.05cm) node[above] {$\textcolor{black}{(1)}$};
    \draw[pyred, fill=pyred] (-0.5, -0.5) circle (.05cm) node[below left] {$\textcolor{black}{(2)}$};
    \draw[fill=black] (0.5, -0.5) circle (.05cm) node[below right] {$\textcolor{black}{(3)}$};
    \draw[fill=black] (0, 0) circle (.05cm) node[above right] {$\textcolor{black}{(4)}$};
\end{tikzpicture} 
    }}\,, \quad Q = 
    \begin{bNiceMatrix}[first-row,code-for-first-row=\scriptstyle, first-col,code-for-first-col=\scriptstyle,]
        & \textcolor{gray}{x_{14}} & \textcolor{gray}{x_{24}} & \textcolor{gray}{x_{34}} \\
        \textcolor{gray}{(1)} & +1 & 0 & 0 \\
        \textcolor{gray}{(2)} & 0 & -1 & 0 \\
        \textcolor{gray}{(3)} & 0 & 0 & +1 \\
        \textcolor{gray}{(4)} & -1 & +1 & -1 
    \end{bNiceMatrix}\,,
\end{equation}
where we have coloured in red the chosen root vertex and the edge arrows associated with the corresponding unique rooted decoration, and in blue the edge off-shell parameters which we use to label the edges.

\vskip 4pt
We now collect several structural properties of the incidence matrix of a tree graph that are essential for the algorithm. Their proofs are standard results of algebraic graph theory, see e.g.~\cite{Biggs_1974}.

\begin{proposition}[Properties of $Q(\G_r)$]
\label{prop: incidence matrix}
    Let $\G_r$ be a connected rooted tree with $V$ vertices and $I=V-1$ edges. Then:
    \begin{enumerate}
        \item[(i)] $\operatorname{rank}(Q) = V-1$;
        \item[(ii)] The row vectors of $Q(\G_r)$ sum to zero: $\sum_{i=1}^V Q_{ij}=0$ for all $j=1,\ldots, I$.
        \item[(iii)] $Q(\G_r)$ is {\it totally unimodular}: the determinant of every minor (square submatrix) is $0$ or $\pm1$;
        \item[(iv)] Any $I\times I$ minor of $Q(\G_r)$, obtained by deleting a row, is nonsingular and has determinant $\pm1$.
    \end{enumerate}
\end{proposition}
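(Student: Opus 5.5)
I would prove the four properties in the order (ii), (iii), (iv), (i), since each feeds the next. Property (ii) is immediate from the definition: every column of $Q(\G_r)$ corresponds to a single edge $(i,j)$ and contains exactly one $+1$ (at the terminal endpoint) and one $-1$ (at the originating endpoint), all other entries being zero. Summing the entries of any column therefore gives zero, which is the statement $\sum_{i}Q_{ij}=0$.

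For total unimodularity (iii), I would argue by induction on the size $k$ of a square submatrix $M$ of $Q$. For $k=1$ the entries are $0,\pm1$. For $k>1$ there are three cases according to the columns of $M$. If some column of $M$ is identically zero, then $\det M=0$. If some column has exactly one nonzero entry, expand the determinant along that column; this gives $\det M=\pm\det M'$ with $M'$ a $(k-1)\times(k-1)$ submatrix, and induction applies. Otherwise every column of $M$ contains both the $+1$ and the $-1$ of its edge. Then the rows of $M$ sum to zero by the same argument as in (ii), so $\det M=0$. This is the standard Poincaré/Heller–Tompkins argument and does not use the tree property; it holds for any oriented graph, in particular with the root-induced orientation.

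For (iv), let $Q_{(k)}$ denote the $I\times I$ matrix obtained by deleting row $k$. By (iii) its determinant lies in $\{0,\pm1\}$, so it suffices to show $\det Q_{(k)}\neq0$. Here I would use that $\G$ is a tree: any tree with at least two vertices has at least two leaves, so there is a leaf $\ell\neq k$. Row $\ell$ of $Q_{(k)}$ has a single nonzero entry $\pm1$, located in the column of its unique incident edge. Expanding along that row reduces $\det Q_{(k)}$ to $\pm\det$ of the reduced incidence matrix of the tree $\G\setminus\{\ell\}$ with row $k$ still deleted. Induction on $V$ then gives $\det Q_{(k)}=\pm1$, with the one-edge tree as the base case.

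Finally, (i) follows directly. By (iv), $Q$ contains a nonsingular $I\times I$ minor, so $\operatorname{rank}Q\geq I=V-1$. The matrix has only $I$ columns, so $\operatorname{rank}Q\leq I$; (ii) independently confirms that the $V$ rows are linearly dependent. Hence $\operatorname{rank}Q=V-1$.

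The only step needing genuine care is the induction in (iv), specifically guaranteeing a leaf distinct from the deleted row $k$, which could be the root or itself a leaf. This is handled by the two-leaves lemma for trees. The rest is bookkeeping.
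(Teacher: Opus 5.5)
Your proof is correct. It takes a different route from the paper, which does not prove the proposition itself: it cites standard algebraic graph theory and attributes (iv) to Kirchhoff's matrix-tree theorem.

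On the paper's route, $Q_{(k)}$ is square for a tree, so $\det(Q_{(k)})^2=\det(Q_{(k)}Q_{(k)}^{T})$ is the reduced-Laplacian determinant. That equals the number of spanning trees of $\mathcal{G}$, namely $1$. This gives (iv) in one line without using (iii), and (i) then follows as in your last step.

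Your approach proves (iii) by the Poincar\'e/Heller--Tompkins column-expansion induction, which is valid for any oriented graph, and proves (iv) by leaf-peeling. It is therefore fully self-contained. It is also the more primitive argument: the usual Cauchy--Binet proof of the matrix-tree theorem itself rests on the fact that the reduced incidence matrix of a spanning tree has determinant $\pm1$, which is exactly your leaf-peeling lemma.

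One bookkeeping point you handle only implicitly: $\mathcal{G}\setminus\{\ell\}$ inherits an orientation, which is rooted at the neighbour of $r$ when $\ell=r$. This is harmless, since reversing an edge only flips the sign of a column, so the orientation never matters for $|\det|$.

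For the case the paper actually uses, deleting the root row, there is an even shorter argument in your spirit. Order the non-root vertices so that parents precede children, and match each vertex $i$ with its parent edge $(i,\pi(i))$. Then $Q_r$ is upper unitriangular, with $+1$ on the diagonal and $-1$ only in the row of the parent. So $\det Q_r=\pm1$ is immediate, and the $0/{+1}$ path-matrix form of $Q_r^{-1}$ becomes transparent.
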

\noindent Property \textit{(iv)} follows from Kirchhoff's matrix-tree theorem, and will be crucial for what follows.

\subsubsection*{Off-shell ansatz}

Once a rooted tree is chosen, one writes an {\it off-shell ansatz} for the maximally-nested analytic contribution of the corresponding oriented graph. First, for every vertex $i$, we define the following linear combinations of off-shell edge variables
\begin{equation}
    \xi_i \equiv \sum_{j=1}^I Q_{ij} x_j \,,
\end{equation}
or $\bm{\xi}=Q\bm{x}$ in matrix notation. Iterating the single-edge gluing formula of Sec.~\ref{subsec: gluing adjacent vertices} over all $I$ edges yields the following off-shell ansatz: 
\begin{equation}
\label{eq: off-shell ansatz}
    \boxed{
    \begin{aligned}
        \widehat{\I}^{\P}(\G_r) = C(\G) \, &\left[\prod_{i=1}^V \Gamma\left(\tilde{p}_i+\xi_i\right) F^{(n_i)}_C\left[\left.\begin{matrix} \frac{\tilde{p}_i+\xi_i}{2},\,  \frac{\tilde{p}_i+1+\xi_i}{2}\\ 1+\epsilon_{i1} x_{i1}, \ldots, 1+\epsilon_{i n_i}x_{i n_i} \end{matrix}\right\vert u_{i1}^2, \ldots, u_{in_i}^2\right]\right] \\
        &\times \left[\prod_{(i,j)}\frac{1}{x_{ij}^2+\mu_{ij}^2} \, \left(u_{ij}\right)^{\epsilon_{ij} x_{ij}}\left(u_{ji}\right)^{\epsilon_{ji} x_{ji}}\right] \,,
    \end{aligned}
    }
\end{equation}
where the rooted tree property implies $\epsilon_{ij} =-\epsilon_{ji}$ for any edge $(i,j)$, and the overall coefficient is independent of the rooted decoration and is given by 
\begin{equation}
    C(\G) = \left(-4\pi e^{-\frac{i\pi}{2}}\right)^I \, \prod_{i=1}^V \C_+^{(n_i)}(p_i) \,.
\end{equation}
For each vertex $i$, we have defined $\epsilon_{ij}=+1$ (resp.~$\epsilon_{ij}=-1$), for $j=1, \ldots, n_i$ running over all incident edges to $i$, if the arrow on $(i,j)$ points towards (resp.~outwards) vertex $i$. We stress that the signs $\epsilon_{ij}$ are precisely the entries of the incidence matrix, i.e. $Q_{ij}=\epsilon_{ij}$. In particular, we have
\begin{equation}\label{eq: incidence matrix identity off shell ansatz}
    \xi_i = \sum\limits_{\ell=1}^{n_i} \epsilon_{i\ell}x_{i\ell}\,.
\end{equation}
The ansatz is a meromorphic function of the off-shell spectral variables $x_1, \ldots, x_I$, and its structure reflects the graph topology: the contribution from vertex $i$ depends only on the spectral variables of its adjacent edges, while the propagator for edge $(i,j)$ depends only on $x_j$. Of course, the ansatz is decoration dependent and is fully fixed once a root vertex is chosen.

\subsubsection{On-shell condition \& path matrix}

The last step is to set the mass parameters $x_j$ ($j=1, \ldots, I$) on shell. This is done by solving a linear system of algebraic equations. For a given root vertex $r$ (and therefore a given oriented decoration), define the {\it reduced incidence matrix}
\begin{equation}
    Q_r \equiv Q(\G_r)\big|_{\text{row } r \text{ deleted}} \,.
\end{equation}
By proposition~\ref{prop: incidence matrix} (iv), $Q_r$ is nonsingular with $|\det(Q_r)|=1$. The physical result is obtained by setting the spectral variables $x_j$ on shell, which is equivalent to the linear system
\begin{equation}
    Q_r \, \bm{x} = -(\tilde{\bm{p}}_{(r)} + \bm{m}) \,,
\end{equation}
where $\tilde{\bm{p}}_{(r)}^T \equiv (\tilde{p}_1, \ldots, \tilde{p}_{r-1}, \tilde{p}_{r+1}, \ldots, \tilde{p}_V)$ collects the modified twists in the row-order of $Q_r$ with the root twist deleted, and $\bm{m}^T \equiv (m_1, \ldots, m_I) \in \mathbb{N}_{\geq0}^I$ are summation indices. Since $|\det(Q_r)|=1$, the solution is unique:
\begin{equation}
\label{eq: on-shell condition}
    \boxed{
    \bm{x}^\ast = -Q_r^{-1}(\tilde{\bm{p}}_{(r)} + \bm{m})\,.
    }
\end{equation}
Remarkably, the inverse reduced incidence matrix admits a direct graphical construction, as it is nothing but the path matrix.

\begin{definition}
    The {\it path matrix} of the rooted tree $\G_r$ is
    \begin{equation}
        P_r(\G_r) \equiv Q_r^{-1} \in \mathbb{Z}_2^{I\times (V-1)}\,.
    \end{equation}
\end{definition}
\noindent The terminology is justified by the following proposition, which provides an explicit graphical formula for $P_r(\G_r)$ and makes clear that it encodes the path structure of the rooted tree.

\begin{proposition}[Graphical rule for $P_r(\G_r)$]
    Let $\G_r$ be a rooted tree with an oriented decoration. Given a path $\P_{r\to i}$ from the root $r$ to a vertex $i\neq r$ in $\G_r$, the incidence vector of $\P_{r\to i}$ is an $I\times 1$ vector defined as follows. The entries of the vector are indexed by the edges $\{e_1, \ldots, e_I\}$. The $j$th element of the vector is $0$ if the path does not contain $e_j$, and $+1$ otherwise. The path matrix $P_r(\G_r)$, simply denoted $P_r$, is defined by
    \begin{equation}
        (P_r)_{ij} \equiv \P_{r\to i}\,, \quad \text{for } j=1,\ldots, I\,.
    \end{equation}
    The $j$th column of $P_r$ is the incidence vector of the (unique, since $\G_r$ is a tree) path from the root $r$ to the vertex $i\neq r$.
\end{proposition}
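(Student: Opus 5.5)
The plan is to verify directly that the matrix $P_r$ defined graphically by root-to-vertex paths satisfies $Q_r P_r = \mathbb{1}_{I}$. Since $Q_r$ is square and, by Proposition~\ref{prop: incidence matrix}~(iv), invertible with $|\det Q_r|=1$, a one-sided inverse is the inverse, so $P_r = Q_r^{-1}$. To make the index bookkeeping clean, I will first use the bijection $i\mapsto(i,\pi(i))$ between non-root vertices and edges noted above. This lets me label the columns of $Q_r$ (edges) by the non-root vertex $k$ whose parent edge they are. The column of $P_r$ associated with vertex $i$ is then the path incidence vector $\P_{r\to i}$, with entry $1$ at edge $e_k$ exactly when $e_k$ lies on the path from $r$ to $i$. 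In the rooted tree, this holds if and only if $k$ is an ancestor of $i$ or $k=i$, i.e.\ $k\preceq i$.

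The key computation is the entry $(Q_r P_r)_{li}=\sum_k Q_{l e_k}\,[k\preceq i]$, for non-root vertices $l,i$. Since every edge is oriented away from the root, the column of edge $e_k=(k,\pi(k))$ has $+1$ at its head $k$ and $-1$ at its tail, the parent vertex of $k$. Hence, for fixed $l$, the only edges incident to $l$ are its parent edge $e_l$, contributing $+1$, and the edges $e_c$ to each child $c$ of $l$, each contributing $-1$. This gives
\begin{equation*}
(Q_r P_r)_{li} = [\,l\preceq i\,] - \sum_{c\,\text{child of}\,l}[\,c\preceq i\,]\,.
\end{equation*}
I will then split into three cases. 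If $l\not\preceq i$, then no child of $l$ is $\preceq i$ either, so the entry is $0$. If $l=i$, no child of $i$ lies on the root-to-$i$ path, so the entry is $1$. If $l$ is a strict ancestor of $i$, exactly one child of $l$ lies on the path to $i$, so the entry is $1-1=0$. Therefore $Q_r P_r=\mathbb{1}$.

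The main obstacle is purely notational rather than mathematical. The statement mixes row/column conventions (``$j$th column'' versus $(P_r)_{ij}$) and writes $\mathbb{Z}_2$ where the entries are really $\{0,1\}\subset\mathbb{Z}$. There is also a sign convention in the incidence matrix: the example has $+1$ at the head of an edge oriented away from the root. I would fix the convention stated in the text, namely that an edge terminating at $i$ carries $+1$, and check it against the four-vertex example. With the opposite sign convention the same argument yields $Q_r^{-1}=-P_r$, which would simply be absorbed into the sign in \eqref{eq: on-shell condition}. Once the convention is fixed, the telescoping along the unique path, where each intermediate vertex's parent edge cancels the edge to its child on the path, is the whole content of the proof.
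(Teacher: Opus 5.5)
Your proof is correct, and it supplies a general argument that the paper does not give. The paper states this rule without proof. It relies on the algebraic-graph-theory background cited for Proposition~\ref{prop: incidence matrix} and on the explicit check $P_2=Q_2^{-1}$ for the four-vertex star; the chain, star and five-site examples are further instances of the same check.

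Your route differs in that you verify entrywise that $Q_rP_r$ is the identity for an arbitrary rooted tree. You label each edge by its child endpoint, so row $l$ of $Q_r$ sees only $+1$ from the parent edge of $l$ and $-1$ from each child edge. You then split into the cases $l\not\preceq i$, $l=i$, and $l$ a strict ancestor of $i$.

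What this buys is a proof valid for all trees, together with a clean fixing of the conventions the statement leaves loose. Edges index the rows of $P_r$ and non-root vertices its columns. The entries are integers in $\{0,1\}$, not elements of $\mathbb{Z}_2$. The head of each edge carries $+1$, which is what makes $Q_r^{-1}=+P_r$ agree with the examples and with the sign in \eqref{eq: on-shell condition}.

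One streamlining: you do not need Proposition~\ref{prop: incidence matrix}~(iv) to upgrade your one-sided inverse, since for square matrices a right inverse is automatically a left inverse. Moreover, both matrices are integral, so your computation gives $\det Q_r\,\det P_r=1$ and hence $|\det Q_r|=1$. This recovers that instance of property (iv) without the matrix-tree theorem.

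Equivalently, in your child-vertex labelling $Q_r$ is the identity minus a nilpotent parent-incidence matrix $N$. Then $Q_r^{-1}=\sum_{t\geq 0}N^t$ sums over ancestors, which is your telescoping argument in matrix form.
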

\noindent The graphical procedure for reading off $P_r$ is therefore as follows. From the oriented decoration associated with a chosen root, trace the unique path $r\to i$ following the edge orientations and record each traversed edge. The resulting integer matrix is $P_r$. The previous example gives
\begin{equation}
\vcenter{\hbox{
\begin{tikzpicture}
[line width=1. pt, scale=2]
    [line width=1. pt, scale=2, baseline={(current bounding box.center)}]
    
    \draw[black, postaction={decorate}, decoration={markings, mark=at position 0.6 with {\arrow[pyred]{>}}}] (0, 0) -- node[left] {$\textcolor{pyblue}{x_{14}}$} (0, 0.7);
    \draw[black, postaction={decorate}, decoration={markings, mark=at position 0.6 with {\arrow[pyred]{<}}}] (0, 0) -- node[above left] {$\textcolor{pyblue}{x_{24}}$} (-0.5, -0.5);
    \draw[black, postaction={decorate}, decoration={markings, mark=at position 0.6 with {\arrow[pyred]{>}}}] (0, 0) -- node[above right] {$\textcolor{pyblue}{x_{34}}$} (0.5, -0.5);
        
    \draw[fill=black] (0, 0.7) circle (.05cm) node[above] {$\textcolor{black}{(1)}$};
    \draw[pyred, fill=pyred] (-0.5, -0.5) circle (.05cm) node[below left] {$\textcolor{black}{(2)}$};
    \draw[fill=black] (0.5, -0.5) circle (.05cm) node[below right] {$\textcolor{black}{(3)}$};
    \draw[fill=black] (0, 0) circle (.05cm) node[above right] {$\textcolor{black}{(4)}$};
\end{tikzpicture} 
    }}\,, \quad P_2 = 
    \begin{bNiceMatrix}[first-row,code-for-first-row=\scriptstyle, first-col,code-for-first-col=\scriptstyle,]
        & \textcolor{gray}{(1)} & \textcolor{gray}{(3)} & \textcolor{gray}{(4)} \\
        \textcolor{gray}{x_{12}} & +1 & 0 & 0 \\
        \textcolor{gray}{x_{24}} & +1 & +1 & +1 \\
        \textcolor{gray}{x_{34}} & 0 & +1 & 0
    \end{bNiceMatrix}\,.
\end{equation}
It can be explicitly checked that $P_2 = Q_2^{-1}$, where $Q_2$ is obtained from $Q$ after deleting the second row. Notice that by construction, the path matrix has only $0$ or $+1$ entries.

\subsubsection*{On-shell series}

The final result is found by replacing the off-shell edge variables $x_1, \ldots, x_I$ by the on-shell ones $x_1^\ast, \ldots, x_I^\ast$ determined by the on-shell condition~\eqref{eq: on-shell condition} in the ansatz~\eqref{eq: off-shell ansatz}. The found series solution exhibits a universal structure.

\begin{proposition}[On-shell evaluations]
\label{prop: on-shell evaluations}
    Let $\bm{x}^\ast = -P_r (\tilde{\bm p}_{(r)}+\bm{m})$ be the on-shell solution~\eqref{eq: on-shell condition}. Then at every non-root vertex $i\neq r$:
    \begin{equation}
        \tilde{p}_i + \xi_i\big|_{\bm{x}^\ast} = -m_{\pi(i)} \,,
    \end{equation}
    and at the root vertex $r$:
    \begin{equation}
        \tilde{p}_r + \xi_r\big|_{\bm{x}^\ast} = \tilde{p}_{1\cdots V} + |\bm{m}| \,,
    \end{equation}
    where $\tilde{p}_{1\cdots V} \equiv \sum_{i=1}^V \tilde{p}_i$ and $|\bm{m}| \equiv \sum_{j=1}^I m_j$.
\end{proposition}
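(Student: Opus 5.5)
The argument is pure linear algebra applied to $\bm{\xi}=Q\bm{x}$, using the on-shell condition~\eqref{eq: on-shell condition} and the row-sum property of Proposition~\ref{prop: incidence matrix}. I write $Q_r$ for the $(V-1)\times I$ matrix obtained by deleting row $r$ from $Q$. By construction, the vector $(\xi_i)_{i\neq r}$ is exactly $Q_r\bm{x}$. Evaluating at $\bm{x}^\ast=-Q_r^{-1}(\tilde{\bm p}_{(r)}+\bm m)$ therefore gives
\begin{equation*}
(\xi_i)_{i\neq r}\big|_{\bm{x}^\ast} = -(\tilde{\bm p}_{(r)}+\bm m)\,.
\end{equation*}
Componentwise this reads $\tilde p_i+\xi_i|_{\bm x^\ast} = -m_{(i)}$, where $m_{(i)}$ is the entry of $\bm m$ sitting in the row of $Q_r$ labelled by vertex $i$.

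The first real step is to fix the labelling convention: the summation indices must be attached to the non-root vertices in the row order of $Q_r$. Through the bijection $i\mapsto\pi(i)$ between non-root vertices and edges (their parent edges), this is the same as labelling the indices by edges, so $m_{(i)}=m_{\pi(i)}$. The labelling has no analytic content. Since $\bm m$ ranges over all of $\mathbb{N}^I$, any relabelling is just a permutation of dummy indices and does not change the series. I will state this explicitly, noting that it is consistent with the path matrix construction in which column $i$ of $P_r$ corresponds to vertex $i$.

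For the root vertex I use property (ii) of Proposition~\ref{prop: incidence matrix}: the rows of $Q$ sum to zero. Hence $\sum_{i=1}^V\xi_i=\bm 1^TQ\bm x=0$ identically in $\bm x$, which gives $\xi_r=-\sum_{i\neq r}\xi_i$. At the on-shell point this becomes $\xi_r|_{\bm x^\ast}=\sum_{i\neq r}(\tilde p_i+m_{\pi(i)})$. Adding $\tilde p_r$ gives $\tilde p_{1\cdots V}+|\bm m|$, since $\pi$ is a bijection onto the edges and so $\sum_{i\neq r}m_{\pi(i)}=|\bm m|$.

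I expect no real obstacle. The one point needing care is the bookkeeping that identifies the component index of $\bm m$ with $\pi(i)$, together with checking that $Q_r$ is invertible, which is already guaranteed by property (iv). As a sanity check, I will verify both identities on the four-vertex example with root $2$, using the explicit $P_2$ given above.
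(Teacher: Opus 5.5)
Your proof is correct and follows the same reasoning the paper relies on. The paper states this proposition without a separate proof, but in the proof of the subsequent kinematic-ratio proposition it reads off the non-root entries directly from $Q_r\bm{x}=-(\tilde{\bm p}_{(r)}+\bm m)$, exactly as you do, and the root identity follows from property (ii) of the incidence matrix, as you argue.

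Your remark on indexing is a useful clarification. The vector $\bm m$ is naturally indexed by the rows of $Q_r$, and it becomes $m_{\pi(i)}$ only after relabelling through the parent-edge bijection. This matters, for instance, in the four-vertex example with root $2$: there the row order of $Q_2$ and the edge order differ by a permutation.
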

\noindent Proposition~\ref{prop: on-shell evaluations} has immediate consequences for the structure of the ansatz~\eqref{eq: off-shell ansatz} under the substituation $\bm{x}=\bm{x}^\ast$. At each non-root vertex $i\neq r$:
\begin{equation}
    \Gamma(\tilde{p}_i+\xi_i)\big|_{\bm{x}^\ast} = \Gamma(-m_{\pi(i)}) = \frac{(-1)^{m_{\pi(i)}}}{m_{\pi(i)}!} \,.
\end{equation}
Moreover, the lower parameter of $F_C^{(n_i)}$ at vertex $i$ corresponding to the parent edge becomes $1-m_{\pi(i)}$. Since $-m_{\pi(i)}/2$ is a non-positive half-integer, one of the sums defining the Lauricella function truncates to a polynomial (see App.~\ref{app: Lauricella Function}). At the root vertex $r$:
\begin{equation}
    \Gamma(\tilde{p}_r+\xi_r)\big|_{\bm{x}^\ast} = \Gamma(\tilde{p}_{1\cdots V}+|\bm{m}|) \,.
\end{equation}
The Lauricella function $F_C^{(n_r)}$ retains its full hypergeometric structure. Eventually, summing over $\bm{m}\in \mathbb{N}_{\geq0}^I$ yields the {\it on-shell series} for root $r$.

\begin{proposition}[Simplification of kinematic ratios]\label{prop: simplification kinemtic ratios}
Let $\bm{x}^\ast = -P_r (\tilde{\bm p}_{(r)}+\bm{m})$ be the on-shell solution~\eqref{eq: on-shell condition} and $\bm \epsilon$ be the set of incidence signs. Then the kinematic factor in the off-shell ansatz simplifies to
\begin{equation}
    \prod_{(i,j)} (u_{ij})^{\epsilon_{ij}x_{ij}}(u_{ji})^{\epsilon_{ji}x_{ji}} = \prod_{\substack{i=1\\i\neq r}}^V \left(\frac{X_i}{X_r}\right)^{\tilde p_i+m_i}\,.
\end{equation}
\end{proposition}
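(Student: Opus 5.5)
The plan is to rewrite the edge-by-edge kinematic factor as a product of powers of the vertex energies $X_i$ only, with exponents $-\xi_i$. The on-shell values of the $\xi_i$ are then supplied by Proposition~\ref{prop: on-shell evaluations}. All energies are positive reals, so powers with complex exponents are taken on the principal branch and factor multiplicatively without ambiguity.

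\emph{Step 1 (cancellation of internal energies).} Fix an edge $e=(i,j)$ with spectral variable $x_e$. By definition $u_{ij}=Y_{ij}/X_i$ and $u_{ji}=Y_{ij}/X_j$. The rooted decoration gives $\epsilon_{ji}=-\epsilon_{ij}$, and these signs are the incidence entries $Q_{ie}=\epsilon_{ij}$, $Q_{je}=\epsilon_{ji}$. Hence
\begin{equation*}
(u_{ij})^{\epsilon_{ij}x_e}(u_{ji})^{\epsilon_{ji}x_e} = Y_{ij}^{(\epsilon_{ij}+\epsilon_{ji})x_e}\,X_i^{-Q_{ie}x_e}\,X_j^{-Q_{je}x_e} = X_i^{-Q_{ie}x_e}\,X_j^{-Q_{je}x_e}\,,
\end{equation*}
so every internal energy drops out. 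Taking the product over all edges, and using that $Q_{ke}=0$ whenever vertex $k$ is not an endpoint of $e$, the exponent of $X_k$ becomes $-\sum_e Q_{ke}x_e=-\xi_k$. The full factor therefore equals $\prod_{k=1}^V X_k^{-\xi_k}$, in agreement with $\bm\xi=Q\bm x$ and~\eqref{eq: incidence matrix identity off shell ansatz}.

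\emph{Step 2 (on-shell substitution).} At $\bm x=\bm x^\ast$, Proposition~\ref{prop: on-shell evaluations} gives $-\xi_i=\tilde p_i+m_{\pi(i)}$ for $i\neq r$ and $-\xi_r=\tilde p_r-\tilde p_{1\cdots V}-|\bm m| = -\sum_{i\neq r}(\tilde p_i+m_{\pi(i)})$. The second equality uses that $i\mapsto\pi(i)$ is a bijection from non-root vertices to edges, so that $|\bm m|=\sum_{i\neq r}m_{\pi(i)}$. As a consistency check, this is exactly $\sum_k\xi_k=0$, which follows from Proposition~\ref{prop: incidence matrix}(ii). Inserting these exponents gives $\prod_{i\neq r}X_i^{\tilde p_i+m_{\pi(i)}}\,X_r^{-\sum_{i\neq r}(\tilde p_i+m_{\pi(i)})}=\prod_{i\neq r}(X_i/X_r)^{\tilde p_i+m_{\pi(i)}}$. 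Identifying the summation index of the parent edge with the vertex, $m_i\equiv m_{\pi(i)}$, yields the stated formula.

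The only delicate point I expect is bookkeeping rather than analysis. One has to check that the sign convention $\epsilon_{ij}=Q_{ij}$ used in the ansatz~\eqref{eq: off-shell ansatz} matches the orientation used in Proposition~\ref{prop: on-shell evaluations}. One also has to make the relabelling $m_{\pi(i)}\to m_i$ explicit via the parent-edge bijection. If a sign mismatch appeared, it would flip $X_i/X_r$ to its inverse. I would therefore verify the result on the two-site chain against~\eqref{eq: elementary gluing formula}: there the factor is $(u_{ij}/u_{ji})^{\xi_m'}=(X_j/X_i)^{m+\tilde p_j+|\bm\alpha_j|}$ with root $i$, which has the expected form.
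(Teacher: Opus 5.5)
Your proof is correct and follows the paper's argument. Both cancel the internal energies edge by edge to obtain $\prod_k X_k^{-\xi_k}$, then insert the on-shell values of $\xi_k$; the only difference is that the paper reads the non-root exponents directly off $Q_r\bm x=-(\tilde{\bm p}_{(r)}+\bm m)$ and invokes Proposition~\ref{prop: on-shell evaluations} only for the root. Your uniform bookkeeping via $-\sum_e Q_{ke}x_e$ and the explicit relabelling $m_i\equiv m_{\pi(i)}$ are, if anything, cleaner, since you never need the sign of the root's incidence entries. Those entries are $-1$ in the paper's convention, not $+1$ as its proof text states.
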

\begin{proof}
    To see why this holds, let us first rewrite the product on the left-hand side that we used for the off-shell ansatz in a slightly more systematic way. For each edge $j$, the signs of the incidence matrix $\epsilon_{ij}$ and $\epsilon_{ji}$ of the two vertices attached to $(i,j)$ are opposites of each other and furthermore $x_{ij}=x_{ji} =i\nu_{ij}$ by definition. Therefore, $u_{ij}^{\epsilon_{ij}x_{ij}}u_{ji}^{\epsilon_{ji}x_{ji}} = (u_{ij}/u_{ji})^{\epsilon_{ij}x_{ij}} = (X_{j}/X_{i})^{\epsilon_{ij}x_{ij}}$. This ratio does no longer depend on the exchanged momentum and hence the whole expression will be a function of vertex energies only. Now we see that the vertex energy of a given vertex $i\neq r$ appears with a total power $-\sum_{\ell=1}^{n_i} \epsilon_{i\ell} x_{i\ell} = -\xi_i|_{\bm x^*}$, where the sum runs over all the edges attached to the vertex $i$ of degree $n_i$. But this sum is simply the $i$-th entry of the expression $-Q_r\bm x$, because the incidence matrix $Q_r$ has the $\epsilon_{i\ell}$ as entries. Therefore, from the defining equation $Q_r\bm x = -(\tilde{\bm p}_{(r)}+\bm m)$ we can immediately conclude that the power of $X_i$ in the full expression is simply $\tilde p_i +m_i$ (in agreement with Proposition \ref{prop: on-shell evaluations}). It only remains to find the respective power of $X_r$ for the root vertex $r$. But this is also straightforward: Since by definition of the root $\epsilon_{r\ell}=+1$ for all edges $\ell$ connected to $r$, the power of $X_r$ is given by $-\sum_{l=1}^{n_r} x_{r\ell} =-\xi_r|_{x^*}$. Using Proposition \ref{prop: on-shell evaluations} again, we obtain that this is equal to $-\tilde p_{1\cdots V}-|\bm m|+\tilde p_r$ and the right-hand side of the equation follows immediately. 
\end{proof}

\subsubsection{Maximally-nested analytic algorithm}

For the reader's convenience, we summarise the steps to construct the all-$(+)$ maximally-nested analytic contribution to any massive tree graph in the following box:

\begin{tcolorbox}
\label{box: MNA algo}
\begin{enumerate}
    \item {\bf Rooted tree.} From a tree graph $\G$, choose a root vertex $r$, construct the rooted decoration $\G_r$, and determine the corresponding incidence matrix $Q(\G_r)$;
    \item {\bf Off-shell ansatz.} Write the off-shell ansatz~\eqref{eq: off-shell ansatz} in terms of off-shell variables $\bm{x}$;
    \item {\bf On-shell condition.} Determine the path matrix $P_r(\G_r)$, which uniquely fixes the on-shell condition $\bm{x}^\ast$ given by~\eqref{eq: on-shell condition};
    \item {\bf Kinematic region.} Set $\bm{x}=\bm{x}^\ast$ in the off-shell ansatz and sum over all integers. This defines a solution that converges in the kinematic region determined by the oriented decoration;
\end{enumerate}
\end{tcolorbox}

A few comments are in order:

\paragraph{Kinematic region.} The obtained series converges absolutely in a specific kinematic region determined by the kinematic variable ratios. In the final sum, the factors $(X_i/X_r)^{\tilde p_i+m_i}$ will decay for $m_i\to \infty $ if $X_i <X_r$. To assess the radius of convergence, we can consider the soft limit of the internal energies $Y_{ij}\to 0$ in which all the hypergeometric functions trivialize to unity and the sum becomes a geometric type series that converges for $\sum_{i\neq r} X_i < X_r$. In particular, the choice of root makes it clear that we are expanding in terms of ratios of the largest energy $X_r$ which is chosen to be the root energy. Importantly, the internal energies play no role for the convergence of the series as the dependence on them is already resummed in terms of the hypergeometric functions.

\paragraph{Uniform transcendental weight.} The obtained series have manifestly uniform transcendental weight, {\it independently of the choice of root $r$}. Indeed, each child Lauricella function has a non-positive half-integer upper parameter, reducing the transcendental weight by unity (see App.~\ref{app: Lauricella Function}). The root Lauricella function $F_C^{(n_r)}$ contributes weight $n_r$. The $I$-fold sum contributes weight $I$. The total weight is
\begin{equation}
    n_r + \sum_{i\neq r} (n_i-1) + I = \sum_{i=1}^V n_i \,,
\end{equation}
independent of $r$ and equal to the transcendental weight of the fully-factorised diagram. This confirms that massive cosmological tree graphs have uniform transcendental weight.

\subsubsection{Example: five-site graph}

 As is often the case, an algorithm is more effectively conveyed through a concrete example than through purely abstract description. Let us illustrate the maximally-nested spectral gluing algorithm for a four-site chain with an additional leaf vertex attached to a central vertex:
\begin{center}
\begin{tikzpicture}
    [line width=1. pt, scale=2, baseline={(current bounding box.center)}]
    
    \draw[black] (0, 0) -- node[above=1mm] {$\textcolor{pyblue}{x_{12}}$} (0.7, 0);
    \draw[black] (0.7, 0) -- node[above=1mm] {$\textcolor{pyblue}{x_{23}}$} (1.4, 0);
    \draw[black] (1.4, 0) -- node[above=1mm] {$\textcolor{pyblue}{x_{34}}$} (2.1, 0);
    \draw[black] (1.4, 0) -- node[left=1mm] {$\textcolor{pyblue}{x_{35}}$} (1.4, 0.7);
        
    \draw[fill=black] (0, 0) circle (.05cm) node[below=2mm] {$\textcolor{black}{(1)}$};
    \draw[fill=black] (0.7, 0) circle (.05cm) node[below=2mm] {$\textcolor{black}{(2)}$};
    \draw[fill=black] (1.4, 0) circle (.05cm) node[below=2mm] {$\textcolor{black}{(3)}$};
    \draw[fill=black] (2.1, 0) circle (.05cm) node[below=2mm] {$\textcolor{black}{(4)}$};
    \draw[fill=black] (1.4, 0.7) circle (.05cm) node[above=2mm] {$\textcolor{black}{(5)}$};
\end{tikzpicture}
\end{center}
This graph topology is the simplest one beyond simple chain or star graphs. We have labelled the five vertices $1, \ldots, 5$, and the four edges with off-shell variables $x_{12}, x_{23}, x_{34}$ and $x_{35}$. We now choose vertex $(2)$ to be the root. The incidence matrix of the corresponding unique rooted decoration is given by:
\begin{equation}
\vcenter{\hbox{
\begin{tikzpicture}
[line width=1. pt, scale=2]
    [line width=1. pt, scale=2, baseline={(current bounding box.center)}]
    
    \draw[black, postaction={decorate}, decoration={markings, mark=at position 0.5 with {\arrow[pyred]{<}}}] (0, 0) -- node[above=1mm] {$\textcolor{pyblue}{x_{12}}$} (0.7, 0);
    \draw[black, postaction={decorate}, decoration={markings, mark=at position 0.5 with {\arrow[pyred]{>}}}] (0.7, 0) -- node[above=1mm] {$\textcolor{pyblue}{x_{23}}$} (1.4, 0);
    \draw[black, postaction={decorate}, decoration={markings, mark=at position 0.5 with {\arrow[pyred]{>}}}] (1.4, 0) -- node[above=1mm] {$\textcolor{pyblue}{x_{34}}$} (2.1, 0);
    \draw[black, postaction={decorate}, decoration={markings, mark=at position 0.5 with {\arrow[pyred]{>}}}] (1.4, 0) -- node[left=1mm] {$\textcolor{pyblue}{x_{35}}$} (1.4, 0.7);
        
    \draw[fill=black] (0, 0) circle (.05cm) node[below=2mm] {$\textcolor{black}{(1)}$};
    \draw[pyred, fill=pyred] (0.7, 0) circle (.05cm) node[below=2mm] {$\textcolor{black}{(2)}$};
    \draw[fill=black] (1.4, 0) circle (.05cm) node[below=2mm] {$\textcolor{black}{(3)}$};
    \draw[fill=black] (2.1, 0) circle (.05cm) node[below=2mm] {$\textcolor{black}{(4)}$};
    \draw[fill=black] (1.4, 0.7) circle (.05cm) node[above=2mm] {$\textcolor{black}{(5)}$};
\end{tikzpicture} 
    }}\,, \quad Q = 
    \begin{bNiceMatrix}[first-row,code-for-first-row=\scriptstyle, first-col,code-for-first-col=\scriptstyle,]
        & \textcolor{gray}{x_{12}} & \textcolor{gray}{x_{23}} & \textcolor{gray}{x_{34}} & \textcolor{gray}{x_{35}} \\
        \textcolor{gray}{(1)} & +1 & 0 & 0 & 0 \\
        \textcolor{gray}{(2)} & -1 & -1 & 0 & 0 \\
        \textcolor{gray}{(3)} & 0 & +1 & -1 & -1 \\
        \textcolor{gray}{(4)} & 0 & 0 & +1 & 0 \\
        \textcolor{gray}{(5)} & 0 & 0 & 0 & +1 
    \end{bNiceMatrix}\,.
\end{equation}
It is straightforward to verify that the sum of the entries in every column vanishes, or equivalently that the row vectors add up to zero. From the decorated graph, the on-shell ansatz for the analytic master integral reads
\begin{equation}
    \begin{aligned}
        \widehat{\I}^\P&(\begin{tikzpicture}[scale=0.8,       baseline=-0.2ex]
            \coordinate (A) at (0,0);
            \coordinate (B) at (1/2,0);
            \coordinate (C) at (1,0);
            \coordinate (D) at (3/2,0);
            \coordinate (E) at (1,1/2);
            \draw[thick, postaction={decorate}, decoration={markings, mark=at position 0.6 with {\arrow[pyred]{<}}}] (A) -- (B);
            \draw[thick, postaction={decorate}, decoration={markings, mark=at position 0.6 with {\arrow[pyred]{>}}}] (B) -- (C);
            \draw[thick, postaction={decorate}, decoration={markings, mark=at position 0.6 with {\arrow[pyred]{>}}}] (C) -- (D);
            \draw[thick, postaction={decorate}, decoration={markings, mark=at position 0.6 with {\arrow[pyred]{>}}}] (C) -- (E);
            \fill[black] (A) circle (2pt);
            \fill[pyred] (B) circle (2pt);
            \fill[black] (C) circle (2pt);
            \fill[black] (D) circle (2pt);
            \fill[black] (E) circle (2pt);
        \end{tikzpicture})
        = C(\begin{tikzpicture}[scale=0.8,       baseline=-0.2ex]
            \coordinate (A) at (0,0);
            \coordinate (B) at (1/2,0);
            \coordinate (C) at (1,0);
            \coordinate (D) at (3/2,0);
            \coordinate (E) at (1,1/2);
            \draw[thick] (A) -- (B);
            \draw[thick] (B) -- (C);
            \draw[thick] (C) -- (D);
            \draw[thick] (C) -- (E);
            \fill[black] (A) circle (2pt);
            \fill[black] (B) circle (2pt);
            \fill[black] (C) circle (2pt);
            \fill[black] (D) circle (2pt);
            \fill[black] (E) circle (2pt);
        \end{tikzpicture}) \\
        &\times \Gamma[\tilde{p}_1+x_{12}, \tilde{p}_2-x_{12}-x_{23}, \tilde{p}_3+x_{23}-x_{34}-x_{35}, \tilde{p}_4+x_{34}, \tilde{p}_5+x_{35}] \\
        &\times \frac{\left(u_{12}/u_{21}\right)^{x_{12}}}{x_{12}^2+\mu_{12}^2} \times  \frac{\left(u_{32}/u_{23}\right)^{x_{23}}}{x_{23}^2+\mu_{23}^2} \times \frac{\left(u_{43}/u_{34}\right)^{x_{34}}}{x_{34}^2+\mu_{34}^2} \times \frac{\left(u_{53}/u_{35}\right)^{x_{35}}}{x_{35}^2+\mu_{35}^2}  \\
        &\times F^{(1)}_C\left[\left.\begin{matrix} \frac{\tilde{p}_1+x_{12}}{2},\,  \frac{\tilde{p}_1+1+x_{12}}{2}\\ 1+x_{12} \end{matrix}\right\vert u_{12}^2\right] \, F^{(2)}_C\left[\left.\begin{matrix} \frac{\tilde{p}_2-x_{12}-x_{23}}{2},\,  \frac{\tilde{p}_2+1-x_{12}-x_{23}}{2}\\ 1-x_{12}, 1-x_{23} \end{matrix}\right\vert u_{21}^2, u_{23}^2\right] \\
        &\times F^{(3)}_C\left[\left.\begin{matrix} \frac{\tilde{p}_3+x_{23}-x_{34}-x_{35}}{2},\,  \frac{\tilde{p}_3+1+x_{23}-x_{34}-x_{35}}{2}\\ 1+x_{23}, 1-x_{34}, 1-x_{35}\end{matrix}\right\vert u_{32}^2, u_{34}^2, u_{35}^2\right] \\
        &\times F^{(1)}_C\left[\left.\begin{matrix} \frac{\tilde{p}_4+x_{34}}{2},\,  \frac{\tilde{p}_4+1+x_{34}}{2}\\ 1+x_{34} \end{matrix}\right\vert u_{43}^2\right] \, F^{(1)}_C\left[\left.\begin{matrix} \frac{\tilde{p}_5+x_{35}}{2},\,  \frac{\tilde{p}_5+1+x_{35}}{2}\\ 1+x_{35} \end{matrix}\right\vert u_{53}^2\right] \,,
    \end{aligned}
\end{equation}
where we have collected $\Gamma$-factors in the second line, kinematic factors and propagators in the third line, and Lauricella functions in the remaining lines. The various $\pm$ signs in the above expressions are given by the fixed oriented decoration. The overall coefficient is independent of the decoration and is given by
\begin{equation}
    C(\begin{tikzpicture}[scale=0.8,       baseline=-0.2ex]
            \coordinate (A) at (0,0);
            \coordinate (B) at (1/2,0);
            \coordinate (C) at (1,0);
            \coordinate (D) at (3/2,0);
            \coordinate (E) at (1,1/2);
            \draw[thick] (A) -- (B);
            \draw[thick] (B) -- (C);
            \draw[thick] (C) -- (D);
            \draw[thick] (C) -- (E);
            \fill[black] (A) circle (2pt);
            \fill[black] (B) circle (2pt);
            \fill[black] (C) circle (2pt);
            \fill[black] (D) circle (2pt);
            \fill[black] (E) circle (2pt);
        \end{tikzpicture}) = (-4\pi e^{-\frac{i\pi}{2}})^4 \, \C_+^{(1)}(p_1) \, \C_+^{(2)}(p_2) \, \C_+^{(3)}(p_3) \, \C_+^{(1)}(p_4) \C_+^{(1)}(p_5) \,.
\end{equation}
We now need to determine the on-shell condition for the off-shell variables $x_{12}, x_{23}, x_{34}$ and $x_{35}$. The choice $r=2$ for the root vertex uniquely fixes the corresponding reduced incidence matrix and its inverse, i.e.~the path matrix:
\begin{equation}
    Q_2 = 
    \begin{bNiceMatrix}[first-row,code-for-first-row=\scriptstyle, first-col,code-for-first-col=\scriptstyle,]
        & \textcolor{gray}{x_{12}} & \textcolor{gray}{x_{23}} & \textcolor{gray}{x_{34}} & \textcolor{gray}{x_{35}} \\
        \textcolor{gray}{(1)} & +1 & 0 & 0 & 0 \\
        \textcolor{gray}{(3)} & 0 & +1 & -1 & -1 \\
        \textcolor{gray}{(4)} & 0 & 0 & +1 & 0 \\
        \textcolor{gray}{(5)} & 0 & 0 & 0 & +1 
    \end{bNiceMatrix}\,, \quad 
    P_2 = 
    \begin{bNiceMatrix}[first-row,code-for-first-row=\scriptstyle, first-col,code-for-first-col=\scriptstyle,]
        & \textcolor{gray}{(1)} & \textcolor{gray}{(3)} & \textcolor{gray}{(4)} & \textcolor{gray}{(5)} \\
        \textcolor{gray}{x_{12}} & +1 & 0 & 0 & 0 \\
        \textcolor{gray}{x_{23}} & 0 & +1 & +1 & +1 \\
        \textcolor{gray}{x_{34}} & 0 & 0 & +1 & 0 \\
        \textcolor{gray}{x_{35}} & 0 & 0 & 0 & +1 
    \end{bNiceMatrix}\,.
\end{equation}
It is read off from the oriented graph as follows: for instance, the path from the root $r=2$ to vertex $4$ traverses $x_{23}$ and $x_{34}$, following the direction of the arrows. Therefore, the third column (associated to vertex 4 because the root has been removed) reads $(0, +1, +1, 0)$, and so on. From the fundamental property $P_2=Q_2^{-1}$, the on-shell condition is found by $\bm{x}^\ast = -P_2(\tilde{\bm{p}}_{(2)}+\bm{m})$. Explicitly, we obtain
\begin{equation}
    \left\{
    \begin{aligned}
        x_{12}^\ast &= -\tilde{p}_1-m_1 \,, \\
        x_{23}^\ast &= -\tilde{p}_{345}-m_{234} \,, \\
        x_{34}^\ast &= -\tilde{p}_4-m_3 \,, \\
        x_{35}^\ast &= -\tilde{p}_5-m_4 \,.
    \end{aligned}
    \right.
\end{equation}
Plugging back this solution in the original off-shell ansatz, and summing over all unconstrained integers yields
\begin{equation}
    \begin{aligned}
        \widehat{\I}^\P&(\begin{tikzpicture}[scale=0.8,       baseline=-0.2ex]
            \coordinate (A) at (0,0);
            \coordinate (B) at (1/2,0);
            \coordinate (C) at (1,0);
            \coordinate (D) at (3/2,0);
            \coordinate (E) at (1,1/2);
            \draw[thick, postaction={decorate}, decoration={markings, mark=at position 0.6 with {\arrow[pyred]{<}}}] (A) -- (B);
            \draw[thick, postaction={decorate}, decoration={markings, mark=at position 0.6 with {\arrow[pyred]{>}}}] (B) -- (C);
            \draw[thick, postaction={decorate}, decoration={markings, mark=at position 0.6 with {\arrow[pyred]{>}}}] (C) -- (D);
            \draw[thick, postaction={decorate}, decoration={markings, mark=at position 0.6 with {\arrow[pyred]{>}}}] (C) -- (E);
            \fill[black] (A) circle (2pt);
            \fill[pyred] (B) circle (2pt);
            \fill[black] (C) circle (2pt);
            \fill[black] (D) circle (2pt);
            \fill[black] (E) circle (2pt);
        \end{tikzpicture})
        = C(\begin{tikzpicture}[scale=0.8,       baseline=-0.2ex]
            \coordinate (A) at (0,0);
            \coordinate (B) at (1/2,0);
            \coordinate (C) at (1,0);
            \coordinate (D) at (3/2,0);
            \coordinate (E) at (1,1/2);
            \draw[thick] (A) -- (B);
            \draw[thick] (B) -- (C);
            \draw[thick] (C) -- (D);
            \draw[thick] (C) -- (E);
            \fill[black] (A) circle (2pt);
            \fill[black] (B) circle (2pt);
            \fill[black] (C) circle (2pt);
            \fill[black] (D) circle (2pt);
            \fill[black] (E) circle (2pt);
        \end{tikzpicture}) \sum_{\bm{m}\in\mathbb{N}_{\geq0}^4} \, \frac{(-1)^{|\bm{m}|}}{\bm{m}!} \, \Gamma[\tilde{p}_{12345}+|\bm{m}|] \, \frac{\left(X_1/X_2\right)^{\tilde{p}_1+m_1}}{(\tilde{p}_1+m_1)^2+\mu_{12}^2}\\
        &\times \frac{\left(X_3/X_2\right)^{\tilde{p}_{3}+m_{2}}}{(\tilde{p}_{345}+m_{234})^2+\mu_{23}^2} \times \frac{\left(X_4/X_2\right)^{\tilde{p}_4+m_3}}{(\tilde{p}_4+m_3)^2+\mu_{34}^2} \times \frac{\left(X_5/X_2\right)^{\tilde{p}_5+m_4}}{(\tilde{p}_5+m_4)^2+\mu_{35}^2}  \\
        &\times F^{(1)}_C\left[\left.\begin{matrix} \frac{-m_1}{2},\,  \frac{1-m_1}{2}\\ 1-\tilde{p}_1-m_1 \end{matrix}\right\vert u_{12}^2\right] \, F^{(2)}_C\left[\left.\begin{matrix} \frac{\tilde{p}_{12345}+|\bm{m}|}{2},\,  \frac{\tilde{p}_{12345}+1+|\bm{m}|}{2}\\ 1+\tilde{p}_1+m_1, 1+\tilde{p}_{345}+m_{234} \end{matrix}\right\vert u_{21}^2, u_{23}^2\right] \\
        &\times F^{(3)}_C\left[\left.\begin{matrix} \frac{-m_2}{2},\,  \frac{1-m_2}{2}\\ 1-\tilde{p}_{345}-m_{234}, 1+\tilde{p}_4+m_3, 1+\tilde{p}_5+m_4\end{matrix}\right\vert u_{32}^2, u_{34}^2, u_{35}^2\right] \\
        &\times F^{(1)}_C\left[\left.\begin{matrix} \frac{-m_3}{2},\,  \frac{1-m_3}{2}\\ 1-\tilde{p}_4-m_3 \end{matrix}\right\vert u_{43}^2\right] \, F^{(1)}_C\left[\left.\begin{matrix} \frac{-m_4}{2},\,  \frac{1-m_4}{2}\\ 1-\tilde{p}_5-m_4 \end{matrix}\right\vert u_{53}^2\right] \,.
    \end{aligned}
\end{equation}
Notice that the Lauricella function associated with the root vertex carries maximal transcendental weight, whereas all other Lauricella functions are either reduced or truncated to polynomials. The choice $r=2$ corresponds to the following kinematic region $\sum_{i\neq 2}X_i < X_2$.
Other choices of the root vertex lead to other series representations that converge in different kinematic regions. 

\subsection{On-shell poles}
\label{subsec: on-shell poles}

We now evaluate the contribution to the spectral gluing integral arising from the on-shell part of the spectral density, which localises the spectral integral at $\nu=\pm\mu$.

\subsubsection*{Frequency projectors}

Any meromorphic function $f(\nu)$ appearing inside a spectral integral inherits a natural decomposition into positive- and negative-frequency parts, dictated by the large-$|\nu|$ behaviour of the integrand. Concretely, given the vertex function decomposition~\eqref{eq: Vn definition short}, we define the positive- and negative-frequency projection of $f(\nu)$ as
\begin{equation}
    f(\nu) = f_+(\nu) + f_-(\nu) \,, \quad f_-(\nu) = f_+(-\nu) \,,
\end{equation}
where $f_+(\nu)$ (resp.~$f_-(\nu)$) collect all terms whose leading kinematic dependence goes as $(u_i/2)^{+i\nu}$ (resp.~$(u_i/2)^{-i\nu}$) for large $|\nu|$, assuming $1<u_i$, otherwise the contours should be closed in the opposite direction. The shadow symmetry $\nu\to-\nu$ exchanges the two sectors. We introduce the distributional projectors $\delta_\pm$ that extract each frequency component at the on-shell locus:
\begin{equation}
    \int_{-\infty}^{+\infty} \d\nu \, f(\nu) \, \delta_\pm(\nu-\mu) = f_\pm(\mu) \,,
\end{equation}
so that $\delta_+(\nu-\mu) + \delta_-(\nu-\mu) = \delta(\nu-\mu) = |\mu|\,\delta(\nu^2-\mu^2)$. Using the measure $[\d\nu]=\d\nu\,\N_\nu$ with $\N_\nu=1/\Gamma[\pm i\nu]$, the full on-shell spectral integral then evaluates to
\begin{equation}
\label{eq: on-shell pole integral}
    \begin{aligned}
        \int_{-\infty}^{+\infty} [\d\nu]\, \frac{f(\nu)}{(\nu^2-\mu^2)_{i\epsilon}} &= -i\sum_\pm e^{\pm\pi\mu} \int_{-\infty}^{+\infty} \d\nu \, f(\nu) \delta_\mp(\nu-\mu) \\
        &= -i \left[e^{+\pi\mu}\, f_-(\mu) + e^{-\pi\mu} \, f_+(\mu)\right] \,,
    \end{aligned}
\end{equation}
where we only collect the two terms corresponding to the two  on-shell poles $\nu=\pm\mu$, each dressed by the Boltzmann weight $e^{\pm \pi\mu}$ from the spectral density~\eqref{eq: iepsilon prescription}. Eq.~\eqref{eq: on-shell pole integral} is the master formula for evaluating on-shell contributions in the spectral gluing algorithm. Notice that, by construction, the resulting integration is manifestly shadow symmetric $\mu \leftrightarrow -\mu$.

\subsubsection*{On-shell gluing of adjacent vertices}

Consider the same setting as in Sec.~\ref{subsec: gluing adjacent vertices}: two adjacent $(+)$-vertices with vertex functions $\V^{(n_i)}_{+, (\nu, \bm{\mu}_{i,(j)})}(\bm{u}_i; p_i)$ and $\V^{(n_j)}_{+, (\nu, \bm{\mu}_{j,(i)})}(\bm{u}_j; p_j)$, glued through the common edge $(i,j)$ with mass parameter $\mu_{ij}$. The on-shell contribution to the spectral integral~\eqref{eq: gluing adjacent vertices integral} reads
\begin{equation}
    \widehat{\I}^{\delta_{ij}}(\bm{u}_i, \bm{u}_j) \equiv e^{-\frac{i\pi}{2}} \int\limits_{-\infty}^{+\infty} [\d\nu] \, \rho_{\nu, \delta} \, \V^{(n_i)}_{+, (\nu, \bm{\mu}_{i,(j)})}(\bm{u}_i; p_i) \V^{(n_j)}_{+, (\nu, \bm{\mu}_{j,(i)})}(\bm{u}_j; p_j) \,.
\end{equation}
Applying the projector identity~\eqref{eq: on-shell pole integral}, and using the frequency decomposition of each vertex function given in Sec.~\ref{subsec: gluing adjacent vertices}, we obtain
\begin{equation}
    \begin{aligned}
        &\widehat{\I}^{\delta_{ij}}(\bm{u}_i, \bm{u}_j) = -\C_+^{(n_i)}(p_i) \, \V^{(n_j)}_{+, (\mu_{ij}, \bm{\mu}_{j,(i)})}(\bm{u}_j; p_j) \\
        &\times\left[e^{+\pi\mu}\left(\frac{u_{ij}}{2}\right)^{+i\mu} \sum_{\bm{\alpha}=\pm i\bm{\mu}_{i,(j)}} \Gamma[-i\mu_{ij}, 1+i\mu_{ij}, \tilde{p}_i+i\mu_{ij}+|\bm{\alpha}|]\mathtt{F}_{+i\mu_{ij}, \bm\alpha}(\bm{u}_i; \tilde{p}_i) \right.\\
        &\left.+ e^{-\pi\mu}\left(\frac{u_{ij}}{2}\right)^{-i\mu} \sum_{\bm{\alpha}=\pm i\bm{\mu}_{i,(j)}} \Gamma[+i\mu_{ij}, 1-i\mu_{ij}, \tilde{p}_i-i\mu_{ij}+|\bm{\alpha}|]\mathtt{F}_{-i\mu_{ij}, \bm\alpha}(\bm{u}_i; \tilde{p}_i)\right] \,,
    \end{aligned}
\end{equation}
for $u_{ij}<u_{ji}$, and analogously for the opposite kinematic region $u_{ji}<u_{ij}$; see Sec.~\ref{app: two-site chain} for the detailed two-site chain case. The result has a transparent physical interpretation. Setting $\nu=\mu_{ij}$ on shell reconstructs the physical massive mode functions, and the two terms correspond to the emission of a particle ($\nu=+\mu$) with Boltzmann weight $e^{+\pi\mu}$ and its conjugate ($\nu=-\mu$) with Boltzmann weight $e^{-\pi\mu}$, therefore accounting for spontaneous particle production. Note also that, like the analytic piece of Sec.~\ref{subsec: gluing adjacent vertices}, the on-shell result depends on the kinematic region. 

\subsection{Spectral gluing algorithm}
\label{subsec: spectral gluing algorithm}

We now summarise the systematic procedure for computing any tree-level massive cosmological correlator, collecting the results of Secs.~\ref{subsec: gluing adjacent vertices},~\ref{subsec: maximally-nested contributions from graph combinatorics} and~\ref{subsec: on-shell poles} into a self-contained algorithm. Given a tree graph $\G$ with $V$ vertices and $I=V-1$ internal lines, and after assigning each vertex a Schwinger-Keldysh index $\aa_i=\pm$, the full graph decomposes into a sum of $2^V$ dimensionless master integrals 
\begin{equation}
    \widehat{\G}(\{X_i\}, \{Y_j\}) = \sum_{\aa_1, \ldots, \aa_V=\pm} (i\aa_1) \cdots (i\aa_V) \, \widehat{\I}_{\aa_1 \cdots \aa_V}(\{X_i\}, \{Y_j\}) \,.
\end{equation}
The number of independent master integrals is reduced to $2^{V-1}$ by the complex conjugation relation $\widehat{\I}_{-\aa_1 \cdots -\aa_V} = \widehat{\I}_{\aa_1 \cdots \aa_V}^*$. The spectral gluing algorithm proceeds as follows: 

\subsubsection*{Step 1: Factorisation}

For each master integral $\widehat{\I}_{\aa_1 \cdots \aa_V}$, examine every internal line $(i,j)$ connecting vertices $i$ and $j$:
\begin{itemize}
    \item {\bf Different-colour line} ($\aa_i\neq \aa_j$): the corresponding bulk-to-bulk propagator $\widehat{D}^{(\mu)}_{\pm\mp}$ is already factorised in time. No spectral integral is introduced. 
    \item {\bf Same-colour line} ($\aa_i = \aa_j$): introduce a spectral integral over the off-shell mass parameter $\nu_{ij}\in \mathbb{R}$, replacing the on-shell propagator by the split representation~\eqref{eq: split representation ++} or~\eqref{eq: split representation --}, thereby setting $\mu_{ij}\to\nu_{ij}$.
\end{itemize}
After applying this procedure to all same-colour internal lines, every time integration factorises completely. Each individual time integral is identified as a vertex function $\V_{\aa, \bm{\mu}}^{(n)}(\bm{u}; p)$ (see Sec.~\ref{sec: vertex functions}), where $n$ is the number of off-shell (spectral) or on-shell massive legs attached to the vertex, $p$ is the vertex twist, and $\bm{u} = \{Y_{ij}/X_i\}$ are the dimensionless kinematic ratios. Each vertex function is a finite linear combination of type-$C$ Lauricella functions $F_C^{(n)}$ in the $n$ kinematic variables, as given in~\eqref{eq: Vn definition short} and~\eqref{eq: general mode Fn function}. The master integral is then expressed as a product of vertex functions integrated over the spectral variables.

\subsubsection*{Step 2: Decomposition of the spectral density}

At tree level, each spectral density $\rho_{\nu_{ij}}(\mu_{ij})$ is decomposed into two contributions according to~\eqref{eq: iepsilon distribution}:
\begin{equation}
    \rho_{\nu_{ij}}(\mu_{ij}) = \underbrace{\delta\left(\frac{1}{\nu_{ij}^2-\mu_{ij}^2}\right)}_{\rm on-shell} + \underbrace{\P\left(\frac{1}{\nu_{ij}^2-\mu_{ij}^2}\right)}_{\rm analytic} \,.
\end{equation}
For a master integral involving $I_s$ spectral integrals, this decomposition generates $2^{I_s}$ contributions, according to whether each spectral integral is treated via its on-shell or analytic part. These contributions are labelled 
\begin{equation}
    \widehat{\I}_{\aa_1 \cdots \aa_V}^{\delta_{j_1} \cdots \delta_{j_m} \P_{k_1} \cdots \P_{k_n}} \,,
\end{equation}
where $j_1, \ldots, j_m$ denote lines treated on-shell and $k_1, \ldots, k_n$ those treated analytically ($m+n=s$). 

\subsubsection*{Step 3: Analytic \& on-shell gluing}

For each purely analytic contribution, evaluate the $n$ remaining spectral integrals using the maximally-nested analytic algorithm of Sec.~\ref{subsec: maximally-nested contributions from graph combinatorics}, applied to the subgraph induced by the same-colour vertices connected through the analytic lines. For each contribution involving at least one on-shell factor, evaluate the on-shell spectral integrals using the master formula~\eqref{eq: on-shell pole integral}. Any remaining analytic spectral integrals (lines $k_1, \ldots, k_n$) are subsequently evaluated using graph combinatorics, now applied to the reduced graph in which the on-shell lines have been restored to their physical mass. 

\vskip 4pt
The result is a finite sum of multi-fold series of products of Lauricella functions $F_C^{(n)}$, together with on-shell terms expressed directly as products of on-shell vertex functions. All contributions have uniform transcendental weight $\sum_i n_i$, independently of the Schwinger-Keldysh colouring and root choice. 

\newpage
\section{Selected Examples}
\label{sec: selected examples}

Building on the spectral gluing algorithm developed in the previous section, the complete solution for an arbitrary massive tree graph can be systematically constructed. In general, such solutions take the form of infinite, partially resummed series of products of multivariable hypergeometric functions of Lauricella type, whose structure is entirely determined by the topology of the graph. To illustrate this, we present several explicit examples: the two-site, three-site chains, the $N$-site chain, as well as the general $N$-site star graph, which we show in Fig.~\ref{fig: selected examples}. In these examples, we will focus on the maximally-nested analytic contributions, leaving all details in App.~\ref{sec: details on spectral gluing algorithm}.

\begin{figure}[h!]
    \centering

    \begin{subfigure}[t]{0.45\textwidth}
        \centering
        \begin{tikzpicture}[line width=1pt, scale=2,
            baseline={(current bounding box.center)}]

            \draw[black] (0,0) -- (0.7,0);

            \draw[fill=black] (0,0) circle (.05cm) node[below=2mm] {$\textcolor{black}{(1)}$};
            \draw[fill=black] (0.7,0) circle (.05cm) node[below=2mm] {$\textcolor{black}{(2)}$};
        \end{tikzpicture}
        \caption{Two-site chain.}
        \label{fig:2site}
    \end{subfigure}
    \hfill
    \begin{subfigure}[t]{0.45\textwidth}
        \centering
        \begin{tikzpicture}[line width=1pt, scale=2,
            baseline={(current bounding box.center)}]

            \draw[black] (0,0) -- (0.7,0);
            \draw[black] (0.7,0) -- (1.4,0);

            \draw[fill=black] (0,0) circle (.05cm) node[below=2mm] {$\textcolor{black}{(1)}$};
            \draw[fill=black] (0.7,0) circle (.05cm) node[below=2mm] {$\textcolor{black}{(2)}$};
            \draw[fill=black] (1.4,0) circle (.05cm) node[below=2mm] {$\textcolor{black}{(3)}$};
        \end{tikzpicture}
        \caption{Three-site chain.}
        \label{fig:3site}
    \end{subfigure}

    \vspace{0.2cm}

    \begin{subfigure}[t]{0.45\textwidth}
        \centering
        \begin{tikzpicture}[line width=1pt, scale=2,
            baseline={(current bounding box.center)}]

            \draw[black] (0,0) -- (0.7,0);
            \draw[black] (0.7,0) -- (1.4,0);
            \draw[black] (1.4,0) -- (1.6,0);
            \draw[black] (2,0) -- (2.4,0);
            \draw[black] (2.4,0) -- (3.1,0);

            \node[black] at ($(1.6, 0)!1/2!(2, 0)$) {$\cdots$};

            \draw[fill=black] (0,0) circle (.05cm) node[below=2mm] {$\textcolor{black}{(1)}$};
            \draw[fill=black] (0.7,0) circle (.05cm) node[below=2mm] {$\textcolor{black}{(2)}$};
            \draw[fill=black] (1.4,0) circle (.05cm) node[below=2mm] {$\textcolor{black}{(3)}$};
            \draw[fill=black] (2.4,0) circle (.05cm) node[below=2mm] {$\textcolor{black}{(N-1)}$};
            \draw[fill=black] (3.1,0) circle (.05cm) node[below=2mm] {$\textcolor{black}{(N)}$};
        \end{tikzpicture}
        \caption{$N$-site chain.}
        \label{fig:Nsitechain}
    \end{subfigure}
    \hfill
    \begin{subfigure}[t]{0.45\textwidth}
        \centering
        \begin{tikzpicture}[line width=1pt, scale=2,
            baseline={(current bounding box.center)}]

            \draw[black] (0, 0) -- (-0.5, -0.5) node[below left] {$(5)$};
            \draw[black] (0, 0) -- (-0.7, 0) node[left] {$(4)$};
            \draw[black] (0, 0) -- (-0.5, 0.5) node[above left] {$(3)$};
            \draw[black] (0, 0) -- (0, 0.7) node[above] {$(2)$};
            \draw[black] (0, 0) -- (0.5, 0.5) node[above right] {$(1)$};
            \draw[black] (0, 0) -- (0.7, 0) node[right] {$(N)$};

            \node[black, rotate=25] at ($(-0.5, -0.5)!1/2!(1, -0.5)$) {$\cdots$};
    
            \draw[fill=black] (0, 0) circle (.05cm) node[below right] {$(0)$};
            \draw[fill=black] (-0.5, -0.5) circle (.05cm);
            \draw[fill=black] (-0.7, 0) circle (.05cm);
            \draw[fill=black] (-0.5, 0.5) circle (.05cm);
            \draw[fill=black] (0, 0.7) circle (.05cm);
            \draw[fill=black] (0.5, 0.5) circle (.05cm);
            \draw[fill=black] (0.7, 0) circle (.05cm);
        \end{tikzpicture}
        \caption{$N$-site star graph.}
        \label{fig:Nsitestar}
    \end{subfigure}

    \caption{Representative maximally-nested analytic cosmological tree graphs considered in Sec.~\ref{sec: selected examples}, illustrating the vertex labelling convention.}
    \label{fig: selected examples}
\end{figure}

\subsection{Two-site chain}
\label{subsec: two-site chain}

We first consider the simplest exchange graph: the two-site chain, which we treat in considerable details in App.~\ref{app: two-site chain}. Here, we only write the maximally-nested analytic contribution. Without loss of generality, we consider the kinematic region $u_{12}<u_{21}$, which corresponds to the following choice of root vertex:
\begin{equation}
\vcenter{\hbox{
\begin{tikzpicture}[line width=1pt, scale=2, baseline={(current bounding box.center)}]
    \draw[black, postaction={decorate}, decoration={markings, mark=at position 0.5 with {\arrow[pyred]{>}}}] (0,0) -- node[above=1mm] {$\textcolor{pyblue}{x_{12}}$} (0.7,0);

    \draw[pyred, fill=pyred] (0,0) circle (.05cm) node[below=2mm] {$\textcolor{black}{(1)}$};
    \draw[fill=black] (0.7,0) circle (.05cm) node[below=2mm] {$\textcolor{black}{(2)}$};
\end{tikzpicture}
    }}\,, \quad Q = 
    \begin{bNiceMatrix}[first-row,code-for-first-row=\scriptstyle, first-col,code-for-first-col=\scriptstyle,]
        & \textcolor{gray}{x_{12}} \\
        \textcolor{gray}{(1)} & -1 \\
        \textcolor{gray}{(2)} & +1 
    \end{bNiceMatrix}\,.
\end{equation}
After deleting the root vertex row $r=1$, the reduced incidence matrix and its inverse, i.e.~the path matrix, are trivial:
\begin{equation}
    Q_1 = 
    \begin{bmatrix}
        +1
    \end{bmatrix}\,, \quad 
    P_1 = 
    \begin{bmatrix}
        +1
    \end{bmatrix}\,.
\end{equation}
Writing the off-shell ansatz and setting the internal mass parameter on shell leads to the following series solution:
\begin{equation}
    \begin{aligned}
    \widehat{\I}^\P(\begin{tikzpicture}[scale=0.8, baseline=-0.5ex]
            \coordinate (A) at (0,0);
            \coordinate (B) at (0.6,0);
            \draw[thick, postaction={decorate}, decoration={markings, mark=at position 0.6 with {\arrow[pyred]{>}}}] (A) -- (B);
            \fill[pyred] (A) circle (2pt);
            \fill[black] (B) circle (2pt);
        \end{tikzpicture}) &= C(\begin{tikzpicture}[scale=0.8, baseline=-0.5ex]
            \coordinate (A) at (0,0);
            \coordinate (B) at (0.6,0);
            \draw[black, thick] (A) -- (B);
            \fill[black] (A) circle (2pt);
            \fill[black] (B) circle (2pt);
        \end{tikzpicture}) \sum_{m=0}^{\infty} \frac{(-1)^m}{m!} \frac{\Gamma[\tilde{p}_{12}+m]}{(\tilde{p}_2+m)^2+\mu^2}  \\
        &\times \left(\frac{X_2}{X_1}\right)^{\tilde{p}_2+m} F^{(1)}_C\left[\left.\begin{matrix} \frac{\tilde{p}_{12}+m}{2},\,  \frac{\tilde{p}_{12}+1+m}{2}\\ 1+\tilde{p}_2+m \end{matrix}\right\vert u_{12}^2\right] F^{(1)}_C\left[\left.\begin{matrix} \frac{-m}{2},\,  \frac{1-m}{2}\\ 1-\tilde{p}_2-m \end{matrix}\right\vert u_{21}^2\right] \,,
    \end{aligned}
\end{equation}
where $\tilde{p}_{1,2}=p_{1, 2}-d/2$ and $\tilde{p}_{12}=\tilde{p}_1+\tilde{p}_2$. The overall coefficient is root independent, and is given by
\begin{equation}
    C(\begin{tikzpicture}[scale=0.8, baseline=-0.5ex]
            \coordinate (A) at (0,0);
            \coordinate (B) at (0.6,0);
            \draw[black, thick] (A) -- (B);
            \fill[black] (A) circle (2pt);
            \fill[black] (B) circle (2pt);
        \end{tikzpicture}) = -4\pi e^{-\frac{i\pi}{2}} \C_+^{(1)}(p_1) \C_+^{(1)}(p_2) \,.
\end{equation}
The complementary region $X_1<X_2$ is found by swapping the kinematic variables $X_1 \leftrightarrow X_2$.

\subsection{Three-site chain}

Next, we turn our attention to the second-simplest exchange graph: the three-site chain. Again, we only consider the maximally-nested analytic contribution, leaving details in App.~\ref{app: three-site chain}. First, we consider the case for which the root vertex is the edge vertex $r=1$:
\begin{equation}
\vcenter{\hbox{
\begin{tikzpicture}[line width=1pt, scale=2, baseline={(current bounding box.center)}]
    \draw[black, postaction={decorate}, decoration={markings, mark=at position 0.5 with {\arrow[pyred]{>}}}] (0,0) -- node[above=1mm] {$\textcolor{pyblue}{x_{12}}$} (0.7,0);
    \draw[black, postaction={decorate}, decoration={markings, mark=at position 0.5 with {\arrow[pyred]{>}}}] (0.7,0) -- node[above=1mm] {$\textcolor{pyblue}{x_{23}}$} (1.4,0);

    \draw[pyred, fill=pyred] (0,0) circle (.05cm) node[below=2mm] {$\textcolor{black}{(1)}$};
    \draw[fill=black] (0.7,0) circle (.05cm) node[below=2mm] {$\textcolor{black}{(2)}$};
    \draw[fill=black] (1.4,0) circle (.05cm) node[below=2mm] {$\textcolor{black}{(3)}$};
\end{tikzpicture}
    }}\,, \quad Q = 
    \begin{bNiceMatrix}[first-row,code-for-first-row=\scriptstyle, first-col,code-for-first-col=\scriptstyle,]
        & \textcolor{gray}{x_{12}} & \textcolor{gray}{x_{23}} \\
        \textcolor{gray}{(1)} & -1 & 0 \\
        \textcolor{gray}{(2)} & +1 & -1 \\
        \textcolor{gray}{(3)} & 0 & +1
    \end{bNiceMatrix}\,.
\end{equation}
The reduced incidence and path matrices are
\begin{equation}
    Q_1 = 
    \begin{bmatrix}
        +1 & -1 \\
        0 & +1
    \end{bmatrix}\,, \quad 
    P_1 = 
    \begin{bmatrix}
        +1 & +1 \\
        0 & +1
    \end{bmatrix}\,.
\end{equation}
The off-shell vertex variables can be read off from the incidence matrix, and read:
\begin{equation}
    \xi_1 = -x_{12} \,, \quad \xi_2 = +x_{12}-x_{23} \,, \quad \xi_3 = +x_{23} \,.
\end{equation}
The on-shell condition is given by the path matrix, $\bm{x}^*=-P_1(\tilde{\bm{p}}_{(1)}+\bm{m})$:
\begin{equation}
    x_{12}^* = -(\tilde{p}_{23}+m_{12}) \,, \quad x_{23}^* = -(\tilde{p}_3+m_2) \,.
\end{equation}
This on-shell solution gives the following series solution:
\begin{equation}
    \begin{aligned}
    \widehat{\I}^\P(\begin{tikzpicture}[scale=0.8, baseline=-0.5ex]
            \coordinate (A) at (0,0);
            \coordinate (B) at (0.6,0);
            \coordinate (C) at (1.2,0);
            \draw[thick, postaction={decorate}, decoration={markings, mark=at position 0.6 with {\arrow[pyred]{>}}}] (A) -- (B);
            \draw[thick, postaction={decorate}, decoration={markings, mark=at position 0.6 with {\arrow[pyred]{>}}}] (B) -- (C);
            \fill[pyred] (A) circle (2pt);
            \fill[black] (B) circle (2pt);
            \fill[black] (C) circle (2pt);
        \end{tikzpicture}) &= C(\begin{tikzpicture}[scale=0.8, baseline=-0.5ex]
            \coordinate (A) at (0,0);
            \coordinate (B) at (0.6,0);
            \coordinate (C) at (1.2,0);
            \draw[black, thick] (A) -- (B);
            \draw[black, thick] (B) -- (C);
            \fill[black] (A) circle (2pt);
            \fill[black] (B) circle (2pt);
            \fill[black] (C) circle (2pt);
        \end{tikzpicture}) \sum_{m_1, m_2=0}^{\infty} \frac{(-1)^{m_{12}}}{m_1!m_2!}\frac{\Gamma[\tilde{p}_{123}+m_{12}]}{[(\tilde{p}_{23}+m_{12})^2+\mu_{12}^2][(\tilde{p}_3+m_2)^2+\mu_{23}^2]}  \\
        &\times \left(\frac{X_2}{X_1}\right)^{\tilde{p}_{2}+m_{1}} \left(\frac{X_3}{X_1}\right)^{\tilde{p}_3+m_2} F^{(1)}_C\left[\left.\begin{matrix} \frac{\tilde{p}_{123}+m_{12}}{2},\,  \frac{\tilde{p}_{123}+1+m_{12}}{2}\\ 1+\tilde{p}_{23}+m_{12} \end{matrix}\right\vert u_{12}^2\right]\\
        &\times F^{(2)}_C\left[\left.\begin{matrix} \frac{-m_1}{2},\,  \frac{1-m_1}{2}\\ 1-\tilde{p}_{23}-m_{12}, 1+\tilde{p}_3+m_2 \end{matrix}\right\vert u_{21}^2, u_{23}^2\right] F^{(1)}_C\left[\left.\begin{matrix} \frac{-m_2}{2},\,  \frac{1-m_2}{2}\\ 1-\tilde{p}_3-m_2 \end{matrix}\right\vert u_{32}^2\right]\,,
    \end{aligned}
\end{equation}
where 
\begin{equation}
    C(\begin{tikzpicture}[scale=0.8, baseline=-0.5ex]
            \coordinate (A) at (0,0);
            \coordinate (B) at (0.6,0);
            \coordinate (C) at (1.2,0);
            \draw[black, thick] (A) -- (B);
            \draw[black, thick] (B) -- (C);
            \fill[black] (A) circle (2pt);
            \fill[black] (B) circle (2pt);
            \fill[black] (C) circle (2pt);
        \end{tikzpicture}) = \left(-4\pi e^{-\frac{i\pi}{2}}\right)^2 \C_+^{(1)}(p_1) \C_+^{(2)}(p_2) \C_+^{(1)}(p_3) \,.
\end{equation}
This solution converges in the region $X_2+X_3< X_1$. The case $r=3$, considering the other edge vertex to be the root, can be recovered after relabelling the vertices $1\leftrightarrow3$. Now we consider the case for which the root vertex is the central one $r=2$:
\begin{equation}
\vcenter{\hbox{
\begin{tikzpicture}[line width=1pt, scale=2, baseline={(current bounding box.center)}]
    \draw[black, postaction={decorate}, decoration={markings, mark=at position 0.5 with {\arrow[pyred]{<}}}] (0,0) -- node[above=1mm] {$\textcolor{pyblue}{x_{12}}$} (0.7,0);
    \draw[black, postaction={decorate}, decoration={markings, mark=at position 0.5 with {\arrow[pyred]{>}}}] (0.7,0) -- node[above=1mm] {$\textcolor{pyblue}{x_{23}}$} (1.4,0);

    \draw[fill=black] (0,0) circle (.05cm) node[below=2mm] {$\textcolor{black}{(1)}$};
    \draw[pyred, fill=pyred] (0.7,0) circle (.05cm) node[below=2mm] {$\textcolor{black}{(2)}$};
    \draw[fill=black] (1.4,0) circle (.05cm) node[below=2mm] {$\textcolor{black}{(3)}$};
\end{tikzpicture}
    }}\,, \quad Q = 
    \begin{bNiceMatrix}[first-row,code-for-first-row=\scriptstyle, first-col,code-for-first-col=\scriptstyle,]
        & \textcolor{gray}{x_{12}} & \textcolor{gray}{x_{23}} \\
        \textcolor{gray}{(1)} & +1 & 0 \\
        \textcolor{gray}{(2)} & -1 & -1 \\
        \textcolor{gray}{(3)} & 0 & +1
    \end{bNiceMatrix}\,.
\end{equation}
The reduced incidence and path matrices are given by
\begin{equation}
    Q_2 = 
    \begin{bmatrix}
        +1 & 0 \\
        0 & +1
    \end{bmatrix}\,, \quad 
    P_2 = 
    \begin{bmatrix}
        +1 & 0 \\
        0 & +1
    \end{bmatrix}\,.
\end{equation}
Solving for the on-shell condition yields the following solution:
\begin{equation}
\label{eq: 3site chain mid root}
    \begin{aligned}
    \widehat{\I}^\P(\begin{tikzpicture}[scale=0.8, baseline=-0.5ex]
            \coordinate (A) at (0,0);
            \coordinate (B) at (0.6,0);
            \coordinate (C) at (1.2,0);
            \draw[thick, postaction={decorate}, decoration={markings, mark=at position 0.6 with {\arrow[pyred]{<}}}] (A) -- (B);
            \draw[thick, postaction={decorate}, decoration={markings, mark=at position 0.6 with {\arrow[pyred]{>}}}] (B) -- (C);
            \fill[black] (A) circle (2pt);
            \fill[pyred] (B) circle (2pt);
            \fill[black] (C) circle (2pt);
        \end{tikzpicture}) &= C(\begin{tikzpicture}[scale=0.8, baseline=-0.5ex]
            \coordinate (A) at (0,0);
            \coordinate (B) at (0.6,0);
            \coordinate (C) at (1.2,0);
            \draw[black, thick] (A) -- (B);
            \draw[black, thick] (B) -- (C);
            \fill[black] (A) circle (2pt);
            \fill[black] (B) circle (2pt);
            \fill[black] (C) circle (2pt);
        \end{tikzpicture}) \sum_{m_1, m_2=0}^{\infty} \frac{(-1)^{m_{12}}}{m_1!m_2!}\frac{\Gamma[\tilde{p}_{123}+m_{12}]}{[(\tilde{p}_1+m_1)^2+\mu_{12}^2][(\tilde{p}_3+m_2)^2+\mu_{23}^2]}  \\
        &\times \left(\frac{X_1}{X_2}\right)^{\tilde{p}_1+m_1} \left(\frac{X_3}{X_2}\right)^{\tilde{p}_3+m_2} F^{(1)}_C\left[\left.\begin{matrix} \frac{-m_1}{2},\,  \frac{1-m_1}{2}\\ 1-\tilde{p}_1-m_1 \end{matrix}\right\vert u_{12}^2\right]\\
        &\times F^{(2)}_C\left[\left.\begin{matrix} \frac{\tilde{p}_{123}+m_{12}}{2},\,  \frac{\tilde{p}_{123}+1+m_{12}}{2}\\ 1+\tilde{p}_1+m_1, 1+\tilde{p}_3+m_2 \end{matrix}\right\vert u_{21}^2, u_{23}^2\right] F^{(1)}_C\left[\left.\begin{matrix} \frac{-m_2}{2},\,  \frac{1-m_2}{2}\\ 1-\tilde{p}_3-m_2 \end{matrix}\right\vert u_{32}^2\right]\,,
    \end{aligned}
\end{equation}
which converges in the region $X_1+X_3<X_2$, as can be read off from the graph tree decoration.

\subsection{$N$-site chain}

It is straightforward to generalise the previous results for the two- and three-site chains to the $N$-site chain. Here again, we focus on the maximally-nested contribution. All other contributions are obtained from factorisations of simpler graphs. Let us consider the case $1<r<N$, i.e.~the root is {\it not} one of the leaf vertices:
\begin{equation}
    \hbox{
    \begin{tikzpicture}[line width=1pt, scale=2,
            baseline={(current bounding box.center)}]

            \draw[black, postaction={decorate}, decoration={markings, mark=at position 0.5 with {\arrow[pyred]{<}}}] (0,0) -- node[above=1mm] {$\textcolor{pyblue}{x_{12}}$} (0.7,0);
            \draw[black, postaction={decorate}, decoration={markings, mark=at position 0.5 with {\arrow[pyred]{<}}}] (0.7,0) -- (1.1,0);
            \draw[black, postaction={decorate}, decoration={markings, mark=at position 0.5 with {\arrow[pyred]{<}}}] (1.5,0) -- (1.9,0);
            \draw[black, postaction={decorate}, decoration={markings, mark=at position 0.5 with {\arrow[pyred]{>}}}] (1.9,0) -- (2.3,0);
            \draw[black, postaction={decorate}, decoration={markings, mark=at position 0.5 with {\arrow[pyred]{>}}}] (2.7,0) -- (3.1,0);
            \draw[black, postaction={decorate}, decoration={markings, mark=at position 0.5 with {\arrow[pyred]{>}}}] (3.1,0) -- node[above=1mm] {$\textcolor{pyblue}{x_{N-1,N}}$} (3.8,0);

            \node[black] at ($(1.1, 0)!1/2!(1.5, 0)$) {$\cdots$};
            \node[black] at ($(2.3, 0)!1/2!(2.7, 0)$) {$\cdots$};

            \draw[fill=black] (0,0) circle (.05cm) node[below=2mm] {$\textcolor{black}{(1)}$};
            \draw[fill=black] (0.7,0) circle (.05cm) node[below=2mm] {$\textcolor{black}{(2)}$};
            \draw[pyred, fill=pyred] (1.9,0) circle (.05cm) node[below=2mm] {$\textcolor{black}{(r)}$};
            \draw[fill=black] (3.1,0) circle (.05cm) node[below=2mm] {$\textcolor{black}{(N-1)}$};
            \draw[fill=black] (3.8,0) circle (.05cm) node[below=2mm] {$\textcolor{black}{(N)}$};
        \end{tikzpicture}
    } \,.
\end{equation}
The corresponding reduced incidence matrix acquires a block-diagonal form
\begin{equation}
    Q_r = \left(\begin{matrix}
        Q_r^{(1)} & \mathbf{0} \\
        \mathbf{0} & Q_r^{(2)}
    \end{matrix}\right)\,,
\end{equation}
where 
\begin{equation}
    Q_{r}^{(1)}=\begin{bNiceMatrix}[first-row,code-for-first-row=\scriptstyle, first-col,code-for-first-col=\scriptstyle,]
        & \textcolor{gray}{x_{12}} & \textcolor{gray}{x_{23}} &
        \textcolor{gray}{x_{34}} &\textcolor{gray}{\dots} & \textcolor{gray}{x_{r-1,r}} \\
        \textcolor{gray}{(1)} & +1 & 0 & 0 & \dots &0 \\
        \textcolor{gray}{(2)} & -1 & +1 & 0 & \dots & 0 \\
        \textcolor{gray}{(3)} & 0 & -1 & +1 & \dots & 0 \\
        \textcolor{gray}{\vdots} & \vdots & \vdots & \vdots & \ddots & \vdots \\
        \textcolor{gray}{(r-1)} & 0 & 0 & 0 & \dots & +1 
    \end{bNiceMatrix}\,, \,\,\,\,
    Q_r^{(2)}=\begin{bNiceMatrix}[first-row,code-for-first-row=\scriptstyle, first-col,code-for-first-col=\scriptstyle,]
        & \textcolor{gray}{x_{r,r+1}} & \textcolor{gray}{x_{r+1,r+2}} &
        \textcolor{gray}{x_{r+2,r+3}} &\textcolor{gray}{\dots} & \textcolor{gray}{x_{N-1,N}} \\
        \textcolor{gray}{(r+1)} & +1 & -1 & 0 & \dots &0 \\
        \textcolor{gray}{(r+2)} & 0 & +1 & -1 & \dots & 0 \\
        \textcolor{gray}{(r+3)} & 0 & 0 & +1 & \dots & 0 \\
        \textcolor{gray}{\vdots} & \vdots & \vdots & \vdots & \ddots & \vdots \\
        \textcolor{gray}{(N)} & 0 & 0 & 0 & \dots & +1 
    \end{bNiceMatrix}\,.
\end{equation}
Likewise, the path matrix $P_r = Q_r^{-1}$ is given by the block-diagonal matrix
\begin{equation}
    P = \left(\begin{matrix}
        P_{r}^{(1)} & \mathbf{0} \\
        \mathbf{0} & P_{r}^{(2)}
    \end{matrix}\right)\,,
\end{equation}
with 
\begin{equation}
    P_{r}^{(1)}=\begin{bNiceMatrix}[first-row,code-for-first-row=\scriptstyle, first-col,code-for-first-col=\scriptstyle,]
        & \textcolor{gray}{(1)} & \textcolor{gray}{(2)} &
        \textcolor{gray}{(3)} &\textcolor{gray}{\dots} & \textcolor{gray}{(r-1)} \\
        \textcolor{gray}{x_{12}} & +1 & 0 & 0 & \dots &0 \\
        \textcolor{gray}{x_{23}} & +1 & +1 & 0 & \dots & 0 \\
        \textcolor{gray}{x_{34}} & +1 & +1 & +1 & \dots & 0 \\
        \textcolor{gray}{\vdots} & \vdots & \vdots & \vdots & \ddots & \vdots \\
        \textcolor{gray}{x_{r-1,r}} & +1 & +1 & +1 & \dots & +1 
    \end{bNiceMatrix}\,, \,\,\,\,
    P_r^{(2)}=\begin{bNiceMatrix}[first-row,code-for-first-row=\scriptstyle, first-col,code-for-first-col=\scriptstyle,]
        & \textcolor{gray}{(r+1)} & \textcolor{gray}{(r+2)} &
        \textcolor{gray}{(r+3)} &\textcolor{gray}{\dots} & \textcolor{gray}{(N)} \\
        \textcolor{gray}{x_{r,r+1}} & +1 & +1 & +1 & \dots & +1 \\
        \textcolor{gray}{x_{r+1,r+2}} & 0 & +1 & +1 & \dots & +1 \\
        \textcolor{gray}{x_{r+2,r+3}} & 0 & 0 & +1 & \dots & +1 \\
        \textcolor{gray}{\vdots} & \vdots & \vdots & \vdots & \ddots & \vdots \\
        \textcolor{gray}{x_{N-1,N}} & 0 & 0 & 0 & \dots & +1 
    \end{bNiceMatrix}\,.
\end{equation}
The resulting series solution is found to be:
\begin{equation}
    \begin{aligned}
    &\widehat{\I}^\P(\begin{tikzpicture}[scale=0.8, baseline=-0.5ex]
            \coordinate (A) at (0,0);
            \coordinate (B1) at (0.3,0);
            \coordinate (B2) at (0.6,0);
            \coordinate (C) at (0.9,0);
            \coordinate (C1) at (1.2,0);
            \coordinate (C2) at (1.5,0);
            \coordinate (D) at (1.8,0);

            \node[black] at ($(0.3, 0)!1/2!(0.6, 0)$) {$\cdot$};
            \node[black] at ($(1.2, 0)!1/2!(1.5, 0)$) {$\cdot$};

            \draw[thick, postaction={decorate}, decoration={markings, mark=at position 0.6 with {\arrow[pyred]{<}}}] (A) -- (B1);
            \draw[thick, postaction={decorate}, decoration={markings, mark=at position 0.6 with {\arrow[pyred]{<}}}] (B2) -- (C);
            \draw[thick, postaction={decorate}, decoration={markings, mark=at position 0.6 with {\arrow[pyred]{>}}}] (C) -- (C1);
            \draw[thick, postaction={decorate}, decoration={markings, mark=at position 0.6 with {\arrow[pyred]{>}}}] (C2) -- (D);
            \fill[black] (A) circle (2pt);
            \fill[pyred] (C) circle (2pt);
            \fill[black] (D) circle (2pt);
        \end{tikzpicture}) = C(\begin{tikzpicture}[scale=0.8, baseline=-0.5ex]
            \coordinate (A) at (0,0);
            \coordinate (B1) at (0.3,0);
            \coordinate (B2) at (0.6,0);
            \coordinate (C) at (0.9,0);
            \coordinate (C1) at (1.2,0);
            \coordinate (C2) at (1.5,0);
            \coordinate (D) at (1.8,0);

            \node[black] at ($(0.3, 0)!1/2!(0.6, 0)$) {$\cdot$};
            \node[black] at ($(1.2, 0)!1/2!(1.5, 0)$) {$\cdot$};

            \draw[thick] (A) -- (B1);
            \draw[thick] (B2) -- (C);
            \draw[thick] (C) -- (C1);
            \draw[thick] (C2) -- (D);
            \fill[black] (A) circle (2pt);
            \fill[black] (C) circle (2pt);
            \fill[black] (D) circle (2pt);
        \end{tikzpicture}) \sum_{\bm{m}\in \mathbb{N}^{N-1}_{\geq0}}^{\infty} \frac{(-1)^{m_{1\cdots N-1}}}{m_1!\cdots m_{N-1}!} \Gamma[\tilde{p}_{1\cdots N}+m_{1\cdots N-1}]  \\
        &\times \frac{1}{[(\tilde{p}_1+m_1)^2+\mu_{12}^2][(\tilde{p}_{12}+m_{12})^2+\mu_{23}^2]\cdots[(\tilde{p}_{1\cdots r-1}+m_{1\cdots r-1})^2+\mu_{r-1, r}^2]} \\
        &\times \frac{1}{[(\tilde{p}_{r+1\cdots N}+m_{r\cdots N-1})^2+\mu_{r, r+1}^2]\cdots[(\tilde{p}_N+m_{N-1})^2+\mu_{N-1, N}^2]} \\
        &\times \left(\frac{X_1}{X_r}\right)^{\tilde{p}_1+m_1} \cdots \left(\frac{X_{r-1}}{X_r}\right)^{\tilde{p}_{r-1}+m_{r-1}} \left(\frac{X_{r+1}}{X_{r}}\right)^{\tilde{p}_{r+1}+m_{r}} \cdots \left(\frac{X_{N}}{X_r}\right)^{\tilde{p}_N+m_{N-1}}\\
        &\times F^{(1)}_C\left[\left.\begin{matrix} \frac{-m_1}{2},\,  \frac{1-m_1}{2}\\ 1-\tilde{p}_1-m_1 \end{matrix}\right\vert u_{12}^2\right] F^{(2)}_C\left[\left.\begin{matrix} \frac{-m_2}{2},\,  \frac{1-m_2}{2}\\ 1+\tilde{p}_1+m_1, 1-\tilde{p}_{12}-m_{12} \end{matrix}\right\vert u_{21}^2, u_{23}^2\right] \cdots\\
        &\cdots F^{(2)}_C\left[\left.\begin{matrix} \frac{\tilde{p}_{1\cdots N}+m_{1\cdots N-1}}{2},\,  \frac{\tilde{p}_{1\cdots N}+1+m_{1\cdots N-1}}{2}\\ 1+\tilde{p}_{1\cdots r-1}+m_{1\cdots r-1}, 1+\tilde{p}_{r+1\cdots N}+m_{r\cdots N-1} \end{matrix}\right\vert u_{r, r-1}^2, u_{r, r+1}^2\right] \cdots \\
        &\cdots F^{(2)}_C\left[\left.\begin{matrix} \frac{-m_{N-2}}{2},\,  \frac{1-m_{N-2}}{2}\\ 1-\tilde{p}_{N-1N}-m_{N-2N-1}, 1+\tilde{p}_N-m_{N-1} \end{matrix}\right\vert u_{N-1, N-2}^2, u_{N-1, N}^2\right] \\
        &\times F^{(1)}_C\left[\left.\begin{matrix} \frac{-m_{N-1}}{2},\,  \frac{1-m_{N-1}}{2}\\ 1-\tilde{p}_N-m_{N-1} \end{matrix}\right\vert u_{N,N-1}^2\right]\,,
    \end{aligned}
\end{equation}
where the overall coefficient reads
\begin{equation}
    C(\begin{tikzpicture}[scale=0.8, baseline=-0.5ex]
            \coordinate (A) at (0,0);
            \coordinate (B1) at (0.3,0);
            \coordinate (B2) at (0.6,0);
            \coordinate (C) at (0.9,0);
            \coordinate (C1) at (1.2,0);
            \coordinate (C2) at (1.5,0);
            \coordinate (D) at (1.8,0);

            \node[black] at ($(0.3, 0)!1/2!(0.6, 0)$) {$\cdot$};
            \node[black] at ($(1.2, 0)!1/2!(1.5, 0)$) {$\cdot$};

            \draw[thick] (A) -- (B1);
            \draw[thick] (B2) -- (C);
            \draw[thick] (C) -- (C1);
            \draw[thick] (C2) -- (D);
            \fill[black] (A) circle (2pt);
            \fill[black] (C) circle (2pt);
            \fill[black] (D) circle (2pt);
        \end{tikzpicture}) = \left(-4\pi e^{-\frac{i\pi}{2}}\right)^{N-1} \C_+^{(1)}(p_1) \left[\prod_{j=2}^{N-1}\C_+^{(2)}(p_j)\right] \C_+^{(1)}(p_N) \,.
\end{equation}
Notice that for $N=3$ and $r=2$, we recover the three-site chain result~\eqref{eq: 3site chain mid root}. The cases $r=1$ or $r=N$, i.e.~the root is chosen to be a leaf vertex, can be obtained straightforwardly.

\subsection{$N$-site star}
\label{subsec: Nsite star}

The last example we treat is the $N$-site star which consists of a central vertex of degree $n_0=N$ from which $N$ edges are connected to vertices of degree $n_i=1$ ($i=1, \ldots, N$). We first consider a leaf vertex to be the root, which without loss of generality we choose to be $r=1$:
\begin{equation}
\vcenter{\hbox{
\begin{tikzpicture}[line width=1pt, scale=2,
            baseline={(current bounding box.center)}]

            \draw[black, postaction={decorate}, decoration={markings, mark=at position 0.5 with {\arrow[pyred]{>}}}] (0, 0) -- (-0.5, -0.5) node[below left] {$(5)$};
            \draw[black, postaction={decorate}, decoration={markings, mark=at position 0.5 with {\arrow[pyred]{>}}}] (0, 0) -- (-0.7, 0) node[left] {$(4)$};
            \draw[black, postaction={decorate}, decoration={markings, mark=at position 0.5 with {\arrow[pyred]{>}}}] (0, 0) -- (-0.5, 0.5) node[above left] {$(3)$};
            \draw[black, postaction={decorate}, decoration={markings, mark=at position 0.5 with {\arrow[pyred]{>}}}] (0, 0) -- (0, 0.7) node[above] {$(2)$};
            \draw[black, postaction={decorate}, decoration={markings, mark=at position 0.5 with {\arrow[pyred]{<}}}] (0, 0) -- (0.5, 0.5) node[above right] {$(1)$};
            \draw[black, postaction={decorate}, decoration={markings, mark=at position 0.5 with {\arrow[pyred]{>}}}] (0, 0) -- (0.7, 0) node[right] {$(N)$};

            \node[black, rotate=25] at ($(-0.5, -0.5)!1/2!(1, -0.5)$) {$\cdots$};
    
            \draw[fill=black] (0, 0) circle (.05cm) node[below right] {$(0)$};
            \draw[fill=black] (-0.5, -0.5) circle (.05cm);
            \draw[fill=black] (-0.7, 0) circle (.05cm);
            \draw[fill=black] (-0.5, 0.5) circle (.05cm);
            \draw[fill=black] (0, 0.7) circle (.05cm);
            \draw[pyred, fill=pyred] (0.5, 0.5) circle (.05cm);
            \draw[fill=black] (0.7, 0) circle (.05cm);
        \end{tikzpicture}
    }}\,, \quad Q = \begin{bNiceMatrix}[first-row,code-for-first-row=\scriptstyle, first-col,code-for-first-col=\scriptstyle,]
        & \textcolor{gray}{x_{01}} & \textcolor{gray}{x_{02}} &
        \textcolor{gray}{x_{03}} &\textcolor{gray}{\dots} & \textcolor{gray}{x_{0N}} \\
        \textcolor{gray}{(0)} & +1 & -1 & -1 & \dots & -1 \\
        \textcolor{gray}{(1)} & -1 & 0 & 0 & \dots & 0 \\
        \textcolor{gray}{(2)} & 0 & +1 & 0 & \dots & 0 \\
        \textcolor{gray}{\vdots} & \vdots & \vdots & \vdots & \ddots & \vdots \\
        \textcolor{gray}{(N)} & 0 & 0 & 0 & \dots & +1 
    \end{bNiceMatrix}\,.
\end{equation}
The reduced incidence and path matrices are given by:
\begin{equation}
    Q_{1}=\begin{bNiceMatrix}[first-row,code-for-first-row=\scriptstyle, first-col,code-for-first-col=\scriptstyle,]
        & \textcolor{gray}{x_{01}} & \textcolor{gray}{x_{02}} &
        \textcolor{gray}{x_{03}} &\textcolor{gray}{\dots} & \textcolor{gray}{x_{0N}} \\
        \textcolor{gray}{(0)} & +1 & -1 & -1 & \dots &-1 \\
        \textcolor{gray}{(2)} & 0 & +1 & 0 & \dots & 0 \\
        \textcolor{gray}{(3)} & 0 & 0 & +1 & \dots & 0 \\
        \textcolor{gray}{\vdots} & \vdots & \vdots & \vdots & \ddots & \vdots \\
        \textcolor{gray}{(N)} & 0 & 0 & 0 & \dots & +1 
    \end{bNiceMatrix}\,, \,\,\,\,
    P_1=\begin{bNiceMatrix}[first-row,code-for-first-row=\scriptstyle, first-col,code-for-first-col=\scriptstyle,]
        & \textcolor{gray}{(0)} & \textcolor{gray}{(2)} &
        \textcolor{gray}{(3)} &\textcolor{gray}{\dots} & \textcolor{gray}{(N)} \\
        \textcolor{gray}{x_{01}} & +1 & +1 & +1 & \dots & +1 \\
        \textcolor{gray}{x_{02}} & 0 & +1 & 0 & \dots & 0 \\
        \textcolor{gray}{x_{03}} & 0 & 0 & +1 & \dots & 0 \\
        \textcolor{gray}{\vdots} & \vdots & \vdots & \vdots & \ddots & \vdots \\
        \textcolor{gray}{x_{0N}} & 0 & 0 & 0 & \dots & +1 
    \end{bNiceMatrix}\,.
\end{equation}
The corresponding series solution reads:
\begin{equation}
    \begin{aligned}
    &\widehat{\I}^\P\left(\begin{tikzpicture}[scale=0.8, baseline=-0.5ex]
            \coordinate (0) at (0,0);
            \coordinate (1) at (0.3,0.3);
            \coordinate (2) at (0,0.4);
            \coordinate (3) at (-0.3,0.3);
            \coordinate (4) at (-0.4,0);
            \coordinate (5) at (-0.3,-0.3);
            \coordinate (N) at (0.4,0);

            \draw[thick, postaction={decorate}, decoration={markings, mark=at position 0.6 with {\arrow[pyred, scale=0.6]{<}}}] (0) -- (1);
            \draw[thick, postaction={decorate}, decoration={markings, mark=at position 0.6 with {\arrow[pyred, scale=0.6]{>}}}] (0) -- (2);
            \draw[thick, postaction={decorate}, decoration={markings, mark=at position 0.6 with {\arrow[pyred, scale=0.6]{>}}}] (0) -- (3);
            \draw[thick, postaction={decorate}, decoration={markings, mark=at position 0.6 with {\arrow[pyred, scale=0.6]{>}}}] (0) -- (4);
            \draw[thick, postaction={decorate}, decoration={markings, mark=at position 0.6 with {\arrow[pyred, scale=0.6]{>}}}] (0) -- (5);
            \draw[thick, postaction={decorate}, decoration={markings, mark=at position 0.6 with {\arrow[pyred, scale=0.6]{>}}}] (0) -- (N);
            \fill[black] (0) circle (2pt);
            \fill[pyred] (1) circle (2pt);
            \fill[black] (2) circle (2pt);
            \fill[black] (3) circle (2pt);
            \fill[black] (4) circle (2pt);
            \fill[black] (5) circle (2pt);
            \fill[black] (N) circle (2pt);
        \end{tikzpicture}\right) = C\left(\begin{tikzpicture}[scale=0.8, baseline=-0.5ex]
            \coordinate (0) at (0,0);
            \coordinate (1) at (0.3,0.3);
            \coordinate (2) at (0,0.4);
            \coordinate (3) at (-0.3,0.3);
            \coordinate (4) at (-0.4,0);
            \coordinate (5) at (-0.3,-0.3);
            \coordinate (N) at (0.4,0);

            \draw[thick] (0) -- (1);
            \draw[thick] (0) -- (2);
            \draw[thick] (0) -- (3);
            \draw[thick] (0) -- (4);
            \draw[thick] (0) -- (5);
            \draw[thick] (0) -- (N);
            \fill[black] (0) circle (2pt);
            \fill[black] (1) circle (2pt);
            \fill[black] (2) circle (2pt);
            \fill[black] (3) circle (2pt);
            \fill[black] (4) circle (2pt);
            \fill[black] (5) circle (2pt);
            \fill[black] (N) circle (2pt);
        \end{tikzpicture}\right) \sum_{\bm{m}\in \mathbb{N}^{N}_{\geq0}}^{\infty} \frac{(-1)^{m_{1\cdots N}}}{m_1!\cdots m_N!} \frac{\Gamma[\tilde{p}_{0\cdots N}+m_{1\cdots N}]}{(\tilde{p}_{02\cdots N}+m_{1\cdots N})^2+\mu_1^2} \left(\frac{X_0}{X_1}\right)^{\tilde{p}_{0}+m_{1}} \\
        &\times \prod_{j=2}^N \left\{\frac{1}{(\tilde{p}_j+m_j)^2+\mu_j^2} \left(\frac{X_j}{X_1}\right)^{\tilde{p}_j+m_j} F^{(1)}_C\left[\left.\begin{matrix} \frac{-m_j}{2},\,  \frac{1-m_j}{2}\\ 1-\tilde{p}_j-m_j \end{matrix}\right\vert u_{j0}^2\right]\right\} \\
        &\times F^{(1)}_C\left[\left.\begin{matrix} \frac{\tilde{p}_{0\cdots N}+m_{1\cdots N}}{2},\,  \frac{\tilde{p}_{0\cdots N}+1+m_{1\cdots N}}{2}\\ 1+\tilde{p}_{02\cdots N}+m_{1\cdots N} \end{matrix}\right\vert u_{10}^2\right] \\
        &\times F^{(N)}_C\left[\left.\begin{matrix} \frac{-m_1}{2},\,  \frac{1-m_1}{2}\\ 1-\tilde{p}_{02\cdots N}-m_{1\cdots N}, 1+\tilde{p}_2-m_2, \ldots, 1+\tilde{p}_N+m_N \end{matrix}\right\vert u_{01}^2, \ldots, u_{0N}^2\right] \,,
    \end{aligned}
\end{equation}
with
\begin{equation}
    C\left(\begin{tikzpicture}[scale=0.8, baseline=-0.5ex]
            \coordinate (0) at (0,0);
            \coordinate (1) at (0.3,0.3);
            \coordinate (2) at (0,0.4);
            \coordinate (3) at (-0.3,0.3);
            \coordinate (4) at (-0.4,0);
            \coordinate (5) at (-0.3,-0.3);
            \coordinate (N) at (0.4,0);

            \draw[thick] (0) -- (1);
            \draw[thick] (0) -- (2);
            \draw[thick] (0) -- (3);
            \draw[thick] (0) -- (4);
            \draw[thick] (0) -- (5);
            \draw[thick] (0) -- (N);
            \fill[black] (0) circle (2pt);
            \fill[black] (1) circle (2pt);
            \fill[black] (2) circle (2pt);
            \fill[black] (3) circle (2pt);
            \fill[black] (4) circle (2pt);
            \fill[black] (5) circle (2pt);
            \fill[black] (N) circle (2pt);
        \end{tikzpicture}\right) = \left(-4\pi e^{-\frac{i\pi}{2}}\right)^N \C_+^{(N)}(p_0) \left[\prod_{j=1}^N \C_+^{(1)}(p_j)\right] \,.
\end{equation}
It is interesting to observe that this star topology leads to a structurally simpler series solution than the chain topology, as the on-shell condition is almost trivial. We now consider the central vertex to be the root, i.e.~$r=0$:
\begin{equation}
\vcenter{\hbox{
\begin{tikzpicture}[line width=1pt, scale=2,
            baseline={(current bounding box.center)}]

            \draw[black, postaction={decorate}, decoration={markings, mark=at position 0.5 with {\arrow[pyred]{>}}}] (0, 0) -- (-0.5, -0.5) node[below left] {$(5)$};
            \draw[black, postaction={decorate}, decoration={markings, mark=at position 0.5 with {\arrow[pyred]{>}}}] (0, 0) -- (-0.7, 0) node[left] {$(4)$};
            \draw[black, postaction={decorate}, decoration={markings, mark=at position 0.5 with {\arrow[pyred]{>}}}] (0, 0) -- (-0.5, 0.5) node[above left] {$(3)$};
            \draw[black, postaction={decorate}, decoration={markings, mark=at position 0.5 with {\arrow[pyred]{>}}}] (0, 0) -- (0, 0.7) node[above] {$(2)$};
            \draw[black, postaction={decorate}, decoration={markings, mark=at position 0.5 with {\arrow[pyred]{>}}}] (0, 0) -- (0.5, 0.5) node[above right] {$(1)$};
            \draw[black, postaction={decorate}, decoration={markings, mark=at position 0.5 with {\arrow[pyred]{>}}}] (0, 0) -- (0.7, 0) node[right] {$(N)$};

            \node[black, rotate=25] at ($(-0.5, -0.5)!1/2!(1, -0.5)$) {$\cdots$};
    
            \draw[pyred, fill=pyred] (0, 0) circle (.05cm) node[below right] {\textcolor{black}{$(0)$}};
            \draw[fill=black] (-0.5, -0.5) circle (.05cm);
            \draw[fill=black] (-0.7, 0) circle (.05cm);
            \draw[fill=black] (-0.5, 0.5) circle (.05cm);
            \draw[fill=black] (0, 0.7) circle (.05cm);
            \draw[fill=black] (0.5, 0.5) circle (.05cm);
            \draw[fill=black] (0.7, 0) circle (.05cm);
        \end{tikzpicture}
    }}\,, \quad Q = \begin{bNiceMatrix}[first-row,code-for-first-row=\scriptstyle, first-col,code-for-first-col=\scriptstyle,]
        & \textcolor{gray}{x_{01}} & \textcolor{gray}{x_{02}} &
        \textcolor{gray}{x_{03}} &\textcolor{gray}{\dots} & \textcolor{gray}{x_{0N}} \\
        \textcolor{gray}{(0)} & -1 & -1 & -1 & \dots & -1 \\
        \textcolor{gray}{(1)} & +1 & 0 & 0 & \dots & 0 \\
        \textcolor{gray}{(2)} & 0 & +1 & 0 & \dots & 0 \\
        \textcolor{gray}{\vdots} & \vdots & \vdots & \vdots & \ddots & \vdots \\
        \textcolor{gray}{(N)} & 0 & 0 & 0 & \dots & +1 
    \end{bNiceMatrix}\,.
\end{equation}
The corresponding reduced incidence and path matrices are nothing but the $N\times N$ identity matrix:
\begin{equation}
    Q_{0}=\begin{bNiceMatrix}[first-row,code-for-first-row=\scriptstyle, first-col,code-for-first-col=\scriptstyle,]
        & \textcolor{gray}{x_{01}} & \textcolor{gray}{x_{02}} &
        \textcolor{gray}{x_{03}} &\textcolor{gray}{\dots} & \textcolor{gray}{x_{0N}} \\
        \textcolor{gray}{(1)} & +1 & 0 & 0 & \dots & 0 \\
        \textcolor{gray}{(2)} & 0 & +1 & 0 & \dots & 0 \\
        \textcolor{gray}{(3)} & 0 & 0 & +1 & \dots & 0 \\
        \textcolor{gray}{\vdots} & \vdots & \vdots & \vdots & \ddots & \vdots \\
        \textcolor{gray}{(N)} & 0 & 0 & 0 & \dots & +1 
    \end{bNiceMatrix}\,, \,\,\,\,
    P_0=\begin{bNiceMatrix}[first-row,code-for-first-row=\scriptstyle, first-col,code-for-first-col=\scriptstyle,]
        & \textcolor{gray}{(1)} & \textcolor{gray}{(2)} &
        \textcolor{gray}{(3)} &\textcolor{gray}{\dots} & \textcolor{gray}{(N)} \\
        \textcolor{gray}{x_{01}} & +1 & 0 & 0 & \dots & 0 \\
        \textcolor{gray}{x_{02}} & 0 & +1 & 0 & \dots & 0 \\
        \textcolor{gray}{x_{03}} & 0 & 0 & +1 & \dots & 0 \\
        \textcolor{gray}{\vdots} & \vdots & \vdots & \vdots & \ddots & \vdots \\
        \textcolor{gray}{x_{0N}} & 0 & 0 & 0 & \dots & +1 
    \end{bNiceMatrix}\,.
\end{equation}
The on-shell condition is therefore completely trivial, and the series solution is given by:
\begin{equation}
    \begin{aligned}
    &\widehat{\I}^\P\left(\begin{tikzpicture}[scale=0.8, baseline=-0.5ex]
            \coordinate (0) at (0,0);
            \coordinate (1) at (0.3,0.3);
            \coordinate (2) at (0,0.4);
            \coordinate (3) at (-0.3,0.3);
            \coordinate (4) at (-0.4,0);
            \coordinate (5) at (-0.3,-0.3);
            \coordinate (N) at (0.4,0);

            \draw[thick, postaction={decorate}, decoration={markings, mark=at position 0.6 with {\arrow[pyred, scale=0.6]{>}}}] (0) -- (1);
            \draw[thick, postaction={decorate}, decoration={markings, mark=at position 0.6 with {\arrow[pyred, scale=0.6]{>}}}] (0) -- (2);
            \draw[thick, postaction={decorate}, decoration={markings, mark=at position 0.6 with {\arrow[pyred, scale=0.6]{>}}}] (0) -- (3);
            \draw[thick, postaction={decorate}, decoration={markings, mark=at position 0.6 with {\arrow[pyred, scale=0.6]{>}}}] (0) -- (4);
            \draw[thick, postaction={decorate}, decoration={markings, mark=at position 0.6 with {\arrow[pyred, scale=0.6]{>}}}] (0) -- (5);
            \draw[thick, postaction={decorate}, decoration={markings, mark=at position 0.6 with {\arrow[pyred, scale=0.6]{>}}}] (0) -- (N);
            \fill[pyred] (0) circle (2pt);
            \fill[black] (1) circle (2pt);
            \fill[black] (2) circle (2pt);
            \fill[black] (3) circle (2pt);
            \fill[black] (4) circle (2pt);
            \fill[black] (5) circle (2pt);
            \fill[black] (N) circle (2pt);
        \end{tikzpicture}\right) = C\left(\begin{tikzpicture}[scale=0.8, baseline=-0.5ex]
            \coordinate (0) at (0,0);
            \coordinate (1) at (0.3,0.3);
            \coordinate (2) at (0,0.4);
            \coordinate (3) at (-0.3,0.3);
            \coordinate (4) at (-0.4,0);
            \coordinate (5) at (-0.3,-0.3);
            \coordinate (N) at (0.4,0);

            \draw[thick] (0) -- (1);
            \draw[thick] (0) -- (2);
            \draw[thick] (0) -- (3);
            \draw[thick] (0) -- (4);
            \draw[thick] (0) -- (5);
            \draw[thick] (0) -- (N);
            \fill[black] (0) circle (2pt);
            \fill[black] (1) circle (2pt);
            \fill[black] (2) circle (2pt);
            \fill[black] (3) circle (2pt);
            \fill[black] (4) circle (2pt);
            \fill[black] (5) circle (2pt);
            \fill[black] (N) circle (2pt);
        \end{tikzpicture}\right) \sum_{\bm{m}\in \mathbb{N}^{N}_{\geq0}}^{\infty} \frac{(-1)^{m_{1\cdots N}}}{m_1!\cdots m_N!} \, \Gamma[\tilde{p}_{0\cdots N}+m_{1\cdots N}] \\
        &\times \prod_{j=1}^N \left\{\frac{1}{(\tilde{p}_j+m_j)^2+\mu_j^2} \left(\frac{X_j}{X_0}\right)^{\tilde{p}_j+m_j} F^{(1)}_C\left[\left.\begin{matrix} \frac{-m_j}{2},\,  \frac{1-m_j}{2}\\ 1-\tilde{p}_j-m_j \end{matrix}\right\vert u_{j0}^2\right]\right\} \\
        &\times F^{(N)}_C\left[\left.\begin{matrix} \frac{\tilde{p}_{0\cdots N}+m_{1\cdots N}}{2},\,  \frac{\tilde{p}_{0\cdots N}+1+m_{1\cdots N}}{2}\\ 1+\tilde{p}_1+m_1, \ldots, 1+\tilde{p}_N+m_N \end{matrix}\right\vert u_{01}^2, \ldots, u_{0N}^2\right] \,.
    \end{aligned}
\end{equation}
The number of kinematic regions, and hence the number of different possible series solutions for a given graph, appears to be closely connected to the transcendental weight of the correlator. As we have stated before, every cosmological graph has uniform transcendental weight. In the case of the $N$-site star, the fully factorised diagram consists of a product of $N$ hypergeometric functions $F_C^{(1)}={}_2F_1$ and one type-$C$ Lauricella function $F_C^{(N)}$ of order $N$. Therefore, the graph has transcendental weight $2N$. The fully analytic piece on the other hand is a $N$-fold sum over $N$ hypergeometric functions ${}_2F_1$ and one Lauricella function $F_C^{(N)}$. At first sight, this would correspond to a transcendental weight of $3N$. Notice though, that exactly $N$ of these functions come with upper parameters $-m_j/2$ and $(1-m_j)/2$ which always contains a negative integer. If a ${}_2F_1$ has a negative integer as an upper parameter, it becomes a finite polynomial with transcendental weight zero. If $F_C^{(N)}$ has a negative-integer upper parameter, one of the $N$ sums truncates at a finite order, also reducing the transcendental weight to $N-1$. Therefore, the analytic piece still has uniform transcendental weight $2N$. 

\vskip 4pt 
Since negative integers as upper parameters in $F_C^{(N)}$ can reduce the transcendental weight only by one, it is always required that at least $N-1$ of the $N$ hypergeometric functions ${}_2F_1$ appear with negative-integer upper parameters too (meaning that at most one of them can have arbitrary upper parameters). Hence there are exactly $N+1$ choices for where to put arbitrary upper parameters and the structure of the remaining functions is then already fixed by consistency conditions. This leads to the $N+1$ different series representations that can be obtained from closing the contour in different directions (and not $2^N$). 

\newpage
\section{Hidden Simplicity}
\label{sec: hidden simplicity}

In this section, we demonstrate that the spectral gluing algorithm naturally constructs cosmological correlators as expansions in eigenfunctions of the corresponding graph annihilators. Acting with graph annihilators on these series solutions then admits two complementary interpretations. On the one hand, it contracts an internal edge of the associated diagram. On the other hand, it removes the propagator contribution from the series representation. Equating both expressions gives rise to a remarkable new class of identities among generalised hypergeometric functions, which we refer to as {\it magical identities}. After illustrating the analogy with the simpler and familiar case of the Helmholtz equation, we present a general recipe for deriving an infinite family of such identities and conclude with several representative examples. In passing, we also discuss the large-mass expansion of the constructed series solutions.

\subsection{Helmholtz analogy}

For pedagogical purposes, it is instructive to revisit a simplified toy example that will later make the analogy with cosmological massive tree graphs transparent. Consider the Helmholtz equation with a generic source:
\begin{equation}
\label{eq: Helmholtz equation}
    \D F(\bm{x}) = S(\bm{x}) \,, \quad \text{with} \quad \D \equiv -\Delta+\mu^2 \,,
\end{equation}
where $\mu^2\in\mathbb{R}_{\geq0}$, that we solve in a $d$-dimensional box $\Omega = \{\bm{x}\in\mathbb{R}^d\, | \, 0\leq x_i\leq L\,, \,i=1, \ldots, d\}$. We require fixed boundary conditions $F(\bm{x})|_{\partial\Omega}=0$. The Dirichlet eigenfunctions of the differential operator $\D$ are given by
\begin{equation}
    \phi_{\bm{m}}(\bm{x}) = \prod_{i=1}^d \sin\left(\frac{m_i\pi x_i}{L}\right)\,,
\end{equation}
with $\bm{m} \equiv (m_1, \ldots, m_d)$, which satisfy the eigenvalue equation
\begin{equation}
    \D \phi_{\bm{m}}(\bm{x}) = (\lambda_{\bm{m}} + \mu^2) \phi_{\bm{m}}(\bm{x})\,, \quad \lambda_{\bm{m}} \equiv \sum_{\bm{m}\in \mathbb{N}^d_{\geq0}} \left(\frac{m_i \pi}{L}\right)^2\,.
\end{equation}
Suppose we can expand the source term in terms of eigenfunctions $S(\bm{x}) = \sum \rho_{\bm{m}} \phi_{\bm{m}}(\bm{x})$. Using the superposition principle, the general solution to~\eqref{eq: Helmholtz equation} can be found by acting with the differential operator:
\begin{equation}
    F(\bm{x}) = \sum_{\bm{m}\in \mathbb{N}^d_{\geq0}} \frac{\rho_{\bm{m}}}{\lambda_{\bm{m}}+\mu^2} \, \phi_{\bm{m}}(\bm{x}) \,.
\end{equation}
This expression is nothing but the mode expansion of the solution $F(\bm{x})$ in terms of eigenfunctions $\phi_{\bm{m}}(\bm{x})$ where the dynamical part is encoded in the propagator $1/(\lambda_{\bm{m}}+\mu^2)$ and where the source fixes the spectral density $\rho_{\bm{m}}$.

\subsubsection*{Collapse procedure}

We now illustrate how solutions ``collapse'' under repeated action of Helmholtz operators. To this end, we introduce the family of functions
\begin{equation}
    F^{(\mu_1, \ldots, \mu_n)}(\bm{x}) = \sum_{\bm{m}\in \mathbb{N}^d_{\geq0}} \frac{\rho_{\bm{m}} \, \phi_{\bm{m}}(\bm{x})}{[\lambda_{\bm{m}}+\mu_1^2] \cdots [\lambda_{\bm{m}}+\mu_n^2]} \,,
\end{equation}
with $n\geq 1$. It is straightforward to see that the action of the differential operator $\D_j$ with mass parameter $\mu_j^2$ removes the corresponding denominator:
\begin{equation}
    \begin{aligned}
        \D_j F^{(\mu_1, \ldots, \mu_n)}(\bm{x}) &= (-\Delta + \mu_j^2) F^{(\mu_1, \ldots, \mu_n)}(\bm{x}) \\
        &= \sum_{\bm{m}\in \mathbb{N}^d_{\geq0}} \frac{\rho_{\bm{m}} \, \phi_{\bm{m}}(\bm{x})}{[\lambda_{\bm{m}}+\mu_1^2] \cdots [\lambda_{\bm{m}}+\mu_{j-1}^2] [\lambda_{\bm{m}}+\mu_{j+1}^2] \cdots [\lambda_{\bm{m}}+\mu_n^2]} \,.
    \end{aligned}
\end{equation}
Iterating this procedure successively removes all propagator factors and eventually yields the source:
\begin{equation}
    F^{(\mu_1, \ldots, \mu_n)}(\bm{x}) \xrightarrow[\D_1]{} F^{(\mu_2, \ldots, \mu_n)}(\bm{x}) \xrightarrow[\D_2]{} \cdots \xrightarrow[\D_n]{} S(\bm{x}) = \sum_{\bm{m}\in \mathbb{N}^d_{\geq0}} \rho_{\bm{m}} \phi_{\bm{m}}(\bm{x})\,.
\end{equation}
As we will see below, this collapse mechanism admits a direct analogue in the context of cosmological massive tree graphs.

\subsubsection*{From Helmholtz to cosmological massive tree graphs}

In what follows, we make the correspondence explicit by showing that the spectral gluing algorithm constructs solutions for cosmological massive tree graphs via a direct application of the superposition principle. The building blocks of this construction are the eigenfunctions of the graph annihilators, which are precisely the vertex functions. These eigenfunctions are fixed by imposing the Bunch-Davies vacuum condition in the infinite past, which uniquely determines their form and ensures the absence of folded (collinear) singularities. This requirement plays a role analogous to Dirichlet boundary conditions for the Helmholtz eigenfunctions. In this correspondence, the eigenvalues are identified with the off-shell mass parameters associated with internal propagators. The analogue of the source term arises after successive applications of the graph annihilators: one obtains contracted graphs in which selected internal edges have been systematically removed.

\vskip 4pt
These and further parallels between the Helmholtz equation and cosmological tree graphs are summarised in the table below.

\renewcommand{\arraystretch}{1.2}
\begin{center}
\begin{tabular}{c | c} 
    \hline
    \rowcolor{lightgray}
    Helmholtz equation & Cosmological massive tree graphs  \\ [0.5ex] 
    \hline\hline
    \begin{tabular}{@{}c@{}}
        {\it Differential operators}\\
        $\D_j = -\Delta+\mu_j^2$
    \end{tabular} & 
    \begin{tabular}{@{}c@{}}
        {\it Graph annihilators}\\
        $\D_{ij} = \vartheta_{ij}^2 -u_{ij}^2\left(\vartheta_{\{i\}}+\tilde p_{i}\right)\left(\vartheta_{\{i\}}+\tilde p_i+1\right) +\mu_{ij}^2$
    \end{tabular}  \\
 \hline
 \begin{tabular}{@{}c@{}}
        {\it Eigenfunctions}\\
        $\phi_{\bm{m}}(\bm{x})$
    \end{tabular} & 
    \begin{tabular}{@{}c@{}}
        {\it Lauricella vertex functions}\\
        $\F^{(n)}(u_{ij})$
    \end{tabular}  \\
 \hline
 \begin{tabular}{@{}c@{}}
        {\it Eigenvalues}\\
        $\lambda_{\bm{m}}+\mu_j^2$
    \end{tabular} & 
    \begin{tabular}{@{}c@{}}
        {\it Off-shell mass parameters}\\
        $x_{ij}^2+\mu_{ij}^2$
    \end{tabular}  \\
 \hline
 \begin{tabular}{@{}c@{}}
        {\it General solution}\\
        $F^{(\mu_1, \ldots, \mu_n)}(\bm{x})$
    \end{tabular} & 
    \begin{tabular}{@{}c@{}}
        {\it Cosmological correlator}\\
        $\G(\{\bm{u}\})$
    \end{tabular}  \\
 \hline
 \begin{tabular}{@{}c@{}}
        {\it Boundary conditions}\\
        $F(\bm{x})|_{\partial\Omega}=0$
    \end{tabular} & 
    \begin{tabular}{@{}c@{}}
        {\it Bunch-Davies vacuum} \\
        Absence of folded singularities
    \end{tabular}  \\
 \hline
 \begin{tabular}{@{}c@{}}
        {\it Source} \\
        $S(\bm{x})$
    \end{tabular} & 
    \begin{tabular}{@{}c@{}}
        {\it Contracted graph}\\
        $C_{ij}[\G]$
    \end{tabular}  \\[1ex] 
 \hline
\end{tabular}
\end{center}

\subsection{Graph annihilators}

We begin by constructing the system of (partial) differential equations satisfied by cosmological graphs. The source terms are contracted graphs, obtained from the original graphs after collapsing an internal edge. 

\subsubsection*{Contracting graphs with differential equations}

A given tree-level cosmological graph $\G$ satisfies the following differential system
\begin{equation}
\label{eq: differential equations}
    \boxed{
    \begin{aligned}
        \D_{ij} \G =  C_{ij}[\G] \,, \quad \text{with} \quad 
        \D_{ij} \equiv \vartheta_{ij}^2 -u_{ij}^2\left(\vartheta_{\{i\}}+\tilde p_i\right)\left(\vartheta_{\{i\}}+\tilde p_i+1\right) +\mu_{ij}^2\,,
    \end{aligned}
    }
\end{equation}
where we used the Euler operators $\vartheta_{ij} = u_{ij}\partial_{u_{ij}}$ and $\vartheta_{\{i\}}=\sum_{\ell=1}^{n_i}\vartheta_{i\ell}$ (the sum runs over all the edges connected to the vertex $i$, of degree $n_i$, and the $\vartheta_{i\ell}$ are with respect to the momentum ratios on these edges, e.g.~$u_{i1}=Y_{i1}/X_i$). The contraction $C_{ij}[\G]$ of the graph $\G$ along the edge between $i$ and $j$ is defined by the following procedure:
\begin{enumerate}
    \item Remove the edge between $i$ and $j$ from $\G$ and pinch the two vertices together so that they form a new effective vertex;
    \item Assign the vertex energy $X_i+X_j$ as well as the twist $p_i+p_j$ to this vertex.
\end{enumerate}
Notice that $C_{ij}[\G] = C_{ji}[\G]$. The system of (partial) differential equations~\eqref{eq: differential equations} was fully derived in~\cite{Liu:2024str} (see also~\cite{Gasparotto:2024bku, Baumann:2026atn} for a twisted cohomology approach, and~\cite{Chen:2026dqp} at loop level). For completeness, we provide the derivation in App.~\ref{sec: diff eq derivation}, adjusting the original proof according to our notations. Schematically, an example of such a system is represented by:
\begin{equation}
    \textcolor{pyblue}{\D_{ij}}
    \vcenter{\hbox{
    \begin{tikzpicture}[line width=1pt, scale=2, baseline={(0,0)}]
        \draw[black] (0, 0) -- (0.7, 0);
        \draw[pyblue] (0.7, 0) -- (1.4, 0) node[midway, above=1mm] {$Y_{ij}$};
        \draw[black] (0, 0) -- (-0.5, 0.5);
        \draw[black] (0, 0) -- (-0.5, -0.5);
    
        \draw[fill=black] (0, 0) circle (.05cm);
        \draw[pyblue, fill=pyblue] (0.7, 0) circle (.05cm) node[below=1mm] {$(X_i, p_i)$};
        \draw[pyblue, fill=pyblue] (1.4, 0) circle (.05cm) node[below=1mm] {$(X_j, p_j)$};
        \draw[fill=black] (-0.5, 0.5) circle (.05cm);
        \draw[fill=black] (-0.5, -0.5) circle (.05cm);
    \end{tikzpicture}
    }}
    =
    \vcenter{\hbox{
    \begin{tikzpicture}[line width=1pt, scale=2, baseline={(0,0)}]
        \draw[black] (0, 0) -- (0.7, 0);
        \draw[black] (0, 0) -- (-0.5, 0.5);
        \draw[black] (0, 0) -- (-0.5, -0.5);
    
        \draw[fill=black] (0, 0) circle (.05cm);
        \draw[pyblue, fill=pyblue] (0.7, 0) circle (.05cm) node[above=1mm] {$(X_i+X_j, p_i+p_j)$};
        \draw[fill=black] (-0.5, 0.5) circle (.05cm);
        \draw[fill=black] (-0.5, -0.5) circle (.05cm);
    \end{tikzpicture}
    }} \,.
\end{equation}
It comes as no surprise that the differential operators that are present in this system of coupled differential equations are exactly the annihilators of the vertex functions that we derived in~\eqref{eq: vertex function annihilator}. Thus, we recover the well-known fact that the factorised contributions of a correlator correspond to homogeneous solutions to the underlying differential system. 

\subsubsection*{Expansion in eigenfunctions}

Our off-shell ansatz for the nested parts of the correlator turns out to be a very natural form for the inhomogeneous solution of this system. The reason is that it is constructed from \textit{eigenfunctions} of the differential operator~\eqref{eq: vertex function annihilator}. To see this, recall that the off-shell ansatz of the spectral gluing algorithm has the following form: 
\begin{align}
    \widehat\I^{\P} = C(\G) \, &\left[\prod_{i=1}^V \Gamma\left(\tilde{p}_i+\xi_i\right) F^{(n_i)}_C\left[\left.\begin{matrix} \frac{\tilde{p}_i+\xi_i}{2},\,  \frac{\tilde{p}_i+1+\xi_i}{2}\\ 1+\epsilon_{i1} x_{i1}, \ldots, 1+\epsilon_{i n_i}x_{i n_i} \end{matrix}\right\vert u_{i1}^2, \ldots, u_{in_i}^2\right]\right] \nonumber \\
        &\times \left[\prod_{(i,j)}\frac{1}{x_{ij}^2+\mu_{ij}^2} \, \left(\frac{u_{ij}}{u_{ji}}\right)^{\epsilon_{ij} x_{ij}}\right]\,,
\end{align}
where the values of the $x_i$ are fixed by solving the on-shell condition. Let us now act with the differential operator $\D_{ij}$ on this ansatz (meaning we look at vertex $i$ and the energy ratio $u_{ij}=Y_{ij}/X_i$ that is assigned to the edge connecting $i$ to $j$). The dependence on $u_{ij}$ is encoded in the term
\begin{equation}
    \F^{(n_i)}(u_{ij}) = u_{ij}^{\epsilon_{ij}x_{ij}} F^{(n_i)}_C\left[\left.\begin{matrix} \frac{\tilde{p}_i+\xi_i}{2},\,  \frac{\tilde{p}_i+1+\xi_i}{2}\\ 1+\epsilon_{i1} x_{i1}, \ldots, 1+\epsilon_{i n_i}x_{i n_i} \end{matrix}\right\vert u_{i1}^2, \ldots, u_{in_i}^2\right]\,.
\end{equation}
We now proceed the same way as when we derived the annihilator of the vertex functions and construct the annihilator of the function $\F^{(n_i)}(u_{ij})$. First, we use the differential equation for the Lauricella function, which in this case is annihilated by
\begin{equation}
    \tilde \D_{ij}^\F  = \vartheta_{ij}(\vartheta_{ij}+2\epsilon_{ij}x_{ij}) -u_{ij}^2(\vartheta_{\{i\}}+\tilde p_i +\xi_i)(\vartheta_{\{i\}}+\tilde p_i +\xi_i+1)\,,
\end{equation}
and construct the annihilator of $\F^{(n_i)}(u_{ij})$ by applying the the gauge transformation $\D_{ij}^\F = u_{ij}^{\epsilon_{ij}x_{ij}} \tilde\D_{ij}^\F u_{ij}^{-\epsilon_{ij}x_{ij}}$,
\begin{equation}
    \D_{ij}^\F  = \vartheta_{ij}^2 -x_{ij}^2 -u_{ij}^2\left(\vartheta_{\{i\}}+\tilde p_i +\xi_i -\sum\limits_{\ell=1}^{n_i}\epsilon_{i\ell}x_{i\ell}\right)\left(\vartheta_{\{i\}}+\tilde p_i +\xi_i -\sum\limits_{\ell=1}^{n_i}\epsilon_{i\ell}x_{i\ell}+1\right)\,,
\end{equation}
where we used $\epsilon_{i\ell}^2=1$ to simplify the second term. Now, recall that $\xi_i$ defines the linear system $\bm\xi = Q \bm{x}$ and that the entries of the incidence matrix $Q$ are precisely given by the signs $\epsilon_{i\ell}$, so that $\xi_i = \sum_{\ell=1}^{n_i} \epsilon_{i\ell}x_{i\ell}$ and the respective terms in the annihilator vanish identically. Therefore, we finally obtain
\begin{equation}
    \left[\vartheta_{ij}^2 -x_{ij}^2 -u_{ij}^2(\vartheta_{\{i\}} +\tilde p_i)(\vartheta_{\{i\}} +\tilde p_i +1)\right] \F^{(n_i)}(u_{ij})=0 \,.
\end{equation}
By comparing this expression to $\D_{ij}$, we discover that the two differential operators only differ by a constant,
\begin{equation}
    \D_{ij} -\D_{ij}^\F = x_{ij}^2+ \mu_{ij}^2 \,,
\end{equation}
which means that $\F^{(n_i)}(u_{ij})$ (and therefore also $\widehat\I^{\P}$) is an eigenfunction of $\D_{ij}$ with eigenvalue $x_{ij}^2 + \mu_{ij}^2$. This cancels exactly the propagator factor $1/( x_{ij}^2+\mu_{ij}^2)$ in the off-shell ansatz, and the result does not depend on $\mu_{ij}$ anymore, which is consistent with the contraction of the graph that removes exactly this edge. 

\subsubsection*{A spectral perspective on graph contractions}

In the spectral representation, the mechanism by which differential operators contract graphs becomes particularly transparent. In fact, the above observations can already be understood directly at the level of the graph integrands. To see this, consider the nested contribution associated with a given internal edge, which is precisely the building block responsible for generating the source term on the right-hand side of the differential equation. Suppose the vertices $i$ and $j$ are connected by an edge, and hence the correlator contains the spectral integral
\begin{equation}
    \G \supset e^{-i\frac{\pi}{2}}\int_{-\infty}^{+\infty} \frac{[\mathrm{d}\nu_{ij}]}{\nu_{ij}^2-\mu_{ij}^2} \V_{+,(\bm\nu_i,\nu_{ij})}^{(n_i)}(\bm{u}_i;p_i) \,\V_{+,(\bm\nu_j,\nu_{ij})}^{(n_j)}(\bm u_j;p_j)\,.
\end{equation}
The vertex function $\V_{+,(\bm\nu_i,\nu_{ij})}^{(n_i)}(\bm{u}_i;p_i)$ contains all the terms that depend on the momentum ratio $u_{ij}=Y_{ij}/X_i$ that is related to this edge. Acting with the differential operator $\D_{ij}= \vartheta_{ij}^2 -u_{ij}^2\left(\vartheta_{\{i\}}+\tilde p_{i}\right)\left(\vartheta_{\{i\}}+\tilde p_i+1\right) +\mu_{ij}^2$ on the graph then boils down to acting with it on this vertex function. We know that the vertex function is annihilated by the operator $\D_{ij}^{\V}= \vartheta_{ij}^2 -u_{ij}^2\left(\vartheta_{\{i\}}+\tilde p_{i}\right)\left(\vartheta_{\{i\}}+\tilde p_i+1\right) +\nu_{ij}^2$, so that
\begin{equation}
    \D_{ij} \V_{+,(\bm\nu_i,\nu_{ij})}^{(n_i)}(\bm{u}_i;p_i) = (\mu_{ij}^2-\nu_{ij}^2) \V_{+,(\bm\nu_i,\nu_{ij})}^{(n_i)}(\bm{u}_i;p_i)\,,
\end{equation}
and hence
\begin{equation}
    \D_{ij} \G \supset e^{i\frac{\pi}{2}}\int_{-\infty}^{+\infty} [\mathrm{d}\nu_{ij}]\, \V_{+,(\bm\nu_i,\nu_{ij})}^{(n_i)}(\bm{u}_i;p_i) \,\V_{+,(\bm\nu_j,\nu_{ij})}^{(n_j)}(\bm u_j;p_j)\,.
\end{equation}
Thus, we are left with a spectral integral without an additional spectral function. We can still go through the gluing algorithm and evaluate the integral in exactly the same way as before. This will result in a series expansion where the terms related to the spectral function are simply set to unity. This is what we observed from the eigenfunction argument above. And, since the right-hand side of the differential equation can be interpreted as a graph with the edge between $i$ and $j$ removed, we can view this as a generalisation of the completeness relation among vertex functions, i.e. schematically we have relations like
\begin{equation}
    \int_{-\infty}^{+\infty} [\mathrm{d}\nu_{ij}]\, \V_{+,(\bm\nu_i,\nu_{ij})}^{(n_i)}(\bm{u}_i;p_i) \,\V_{+,(\bm\nu_j,\nu_{ij})}^{(n_j)}(\bm u_j;p_j) \sim \V_{+,(\bm\nu_i,\bm\nu_j)}^{(n_i+n_j-2)}(\bm{u}_i,\bm{u}_j;p_i+p_j)\,.
\end{equation}
We discuss these identities in more details in the next section.

\subsection{Hypergeometric magical identities}

We now further explore the implications of the existence of these differential equations and how they relate to new \textit{magical identities} for (generalised) hypergeometric functions. 

\subsubsection{Warm-up example: two-site chain} 

Let us first take a look at the simplest case again: the two-site chain. According to the derived differential equations, the two-site chain satisfies the relation
\begin{equation}
    \D_{12}  \vcenter{\hbox{
\begin{tikzpicture}[line width=1pt, scale=2, baseline={(current bounding box.center)}]
    \draw[black] (0,0) -- node[above=1mm] {$\textcolor{black}{Y_{12}}$} (0.7,0);

    \draw[black, fill=black] (0,0) circle (.05cm) node[below=1mm] {$\textcolor{black}{(X_1, p_1)}$};
    \draw[fill=black] (0.7,0) circle (.05cm) node[below=1mm] {$\textcolor{black}{(X_2, p_2)}$};
\end{tikzpicture}
    }} =  \vcenter{\hbox{
\begin{tikzpicture}[line width=1pt, scale=2, baseline={(current bounding box.center)}]
    \draw[fill=black] (0,0) circle (.05cm) node[above=1mm] {$\textcolor{black}{(X_1+X_2, p_1+p_2)}$};
\end{tikzpicture}
    }}\,.
\end{equation}
Since acting with the differential operator on any factorised  contribution vanishes, only the fully nested contribution produces a source term on the right-hand side. Hence, we can view this equation as an equation for the fully nested part only. We therefore can restrict to the $++$-diagram only. The single-vertex graph on the right-hand side is given by
\begin{equation}
    \vcenter{\hbox{
\begin{tikzpicture}[line width=1pt, scale=2, baseline={(current bounding box.center)}]
    \draw[fill=black] (0,0) circle (.05cm) node[below=1mm] {$\textcolor{white}{p_1+p_2}$} node[above=1mm] {$\textcolor{black}{(X_1+X_2, p_1+p_2)}$};
\end{tikzpicture}
    }} = i\frac{e^{-i\frac{\pi}{2}(p_1+p_2-d)}\Gamma[p_1+p_2-d]}{(X_1+X_2)^{p_1+p_2-d}} \,,
\end{equation}
where we have included the kinematic prefactor for the all-plus graph. Notice that this is a simple function of the kinematic variables $X_1$ and $X_2$ (purely rational for integer twists). On the other hand, we can use our previously derived result for the fully nested part of the two-site chain and apply the differential operator $\D_{12}$, which cancels the spectral function in the series representation. Including all kinematic and numerical prefactors for the graph, we find
\begin{equation}
    \begin{aligned}
    \D_{12}  \vcenter{\hbox{
\begin{tikzpicture}[line width=1pt, scale=2, baseline={(current bounding box.center)}]
    \draw[black] (0,0) -- node[above=1mm] {$\textcolor{black}{Y_{12}}$} (0.7,0);

    \draw[black, fill=black] (0,0) circle (.05cm) node[below=1mm] {$\textcolor{black}{(X_1, p_1)}$};
    \draw[fill=black] (0.7,0) circle (.05cm) node[below=1mm] {$\textcolor{black}{(X_2, p_2)}$};
\end{tikzpicture}
    }}  &= i\frac{e^{-i\frac{\pi}{2}(\tilde p_1+\tilde p_2)}}{X_1^{\tilde p_1} X_2^{\tilde p_2}}\sum\limits_{m=0}^\infty \frac{(-1)^m}{m!} \Gamma[\tilde p_1+\tilde p_2+m] \left(\frac{X_2}{X_1}\right)^{\tilde p_2+m} \\ &\times F^{(1)}_C\left[\left.\begin{matrix} \frac{\tilde{p}_{12}+m}{2},\,  \frac{\tilde{p}_{12}+1+m}{2}\\ 1+\tilde{p}_2+m \end{matrix}\right\vert \left(\frac{Y_{12}}{X_1}\right)^2\right] F^{(1)}_C\left[\left.\begin{matrix} \frac{-m}{2},\,  \frac{1-m}{2}\\ 1-\tilde{p}_2-m \end{matrix}\right\vert \left(\frac{Y_{12}}{X_2}\right)^2\right]\,,
    \end{aligned}
\end{equation}
where we used the usual short-hand notation $\tilde p_i = p_i -\frac{d}{2}$. By comparing both sides, and noting that $F_C^{(1)}={}_2F_1$, we find the following \textit{magical identity} between a rational function and an infinite sum over hypergeometric functions,
\begin{equation}
\label{eq: two-site chain magic identity}
    \boxed{
    \begin{aligned}
    \frac{X_1^{\tilde p_1} X_2^{\tilde p_2}}{(X_1+X_2)^{\tilde p_1+\tilde p_2}} &= \sum\limits_{m=0}^\infty \frac{(-1)^m}{m!} \frac{\Gamma[\tilde p_1+\tilde p_2+m]}{\Gamma[\tilde p_1+\tilde p_2]} \left(\frac{X_2}{X_1}\right)^{\tilde p_2+m} \\ &\hspace*{-0.5cm}\times {}_2F_1\left[\left.\begin{matrix} \frac{\tilde{p}_{12}+m}{2},\,  \frac{\tilde{p}_{12}+1+m}{2}\\ 1+\tilde{p}_2+m \end{matrix}\right\vert \left(\frac{Y_{12}}{X_1}\right)^2\right]
    {}_2F_1\left[\left.\begin{matrix} \frac{-m}{2},\,  \frac{1-m}{2}\\ 1-\tilde{p}_2-m \end{matrix}\right\vert \left(\frac{Y_{12}}{X_2}\right)^2\right]\,.
    \end{aligned}
    }
\end{equation}
This identity was first derived in~\cite{Grafe:2026qsm} where it was also noted that the equation holds even if one removes the two hypergeometric functions on the right-hand side:
\begin{equation}
\label{eq: 2site chain geometric}
    \boxed{
    \frac{X_1^{\tilde p_1} X_2^{\tilde p_2}}{(X_1+X_2)^{\tilde p_1+\tilde p_2}}  = \sum\limits_{m=0}^\infty \frac{(-1)^m}{m!} \frac{\Gamma[\tilde p_1+\tilde p_2+m]}{\Gamma[\tilde p_1+\tilde p_2]} \left(\frac{X_2}{X_1}\right)^{\tilde p_2+m} \,,
    }
\end{equation}
which is nothing but the corresponding geometric series in $X_2/X_1$. This is not surprising either as~\eqref{eq: 2site chain geometric} corresponds to the internal soft limit of the the two-site graph, $Y_{12}\to0$, for which the hypergeometric functions drop off completely, see Sec.~\ref{subsubsec: partial soft limits}. It can be checked numerically that the above hypergeometric identity \eqref{eq: two-site chain magic identity} is indeed true for $X_2 < X_1$ and that the appearing hypergeometric functions are not trivially one. Instead, there is some very intriguing resummation happening, which also eliminates the $Y_{12}$-dependence completely. One can check numerically that~\eqref{eq: two-site chain magic identity} holds for any $Y_{12}<X_1, X_2$.

\begin{figure}[t!]
    \centering
    \begin{subfigure}{.5\textwidth}
        \centering
        \includegraphics[width=1\linewidth]{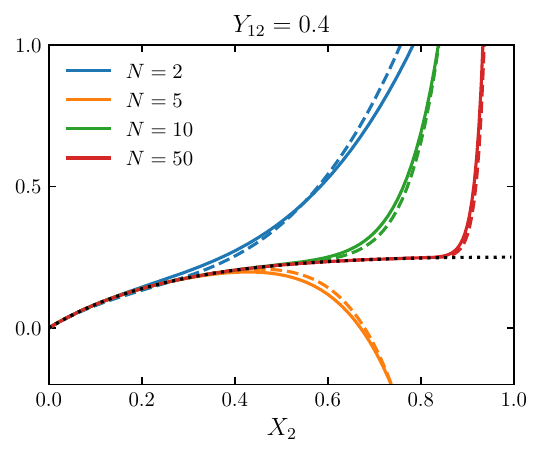}
    \end{subfigure}%
    \begin{subfigure}{.5\textwidth}
        \centering
        \includegraphics[width=1\linewidth]{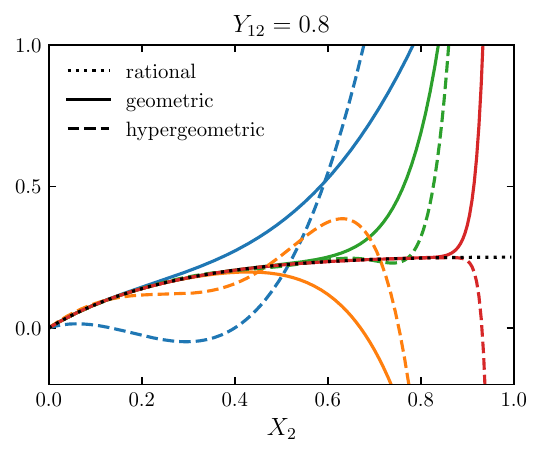}
    \end{subfigure}
   \caption{The hypergeometric magical identity~\eqref{eq: two-site chain magic identity} for the two-site chain as a function of the vertex energy $X_2$ with $X_1=1$, for fixed internal energy $Y_{12}=0.4$ ({\it left panel}) and $Y_{12}=0.8$ ({\it right panel}), varying the number of terms $N=2, 5, 10, 50$ kept in the series. The dotted black line corresponds to the rational function on the left-hand side of~\eqref{eq: two-site chain magic identity}, the solid lines represent the geometric series in $X_2/X_1$, i.e.~the right-hand side of~\eqref{eq: 2site chain geometric}, and the dashed lines show the hypergeometric series, i.e.~right-hand side of~\eqref{eq: two-site chain magic identity}, which are the only $Y_{12}$-dependent representations. We have set $\tilde{p}_1=\tilde{p}_2=1$ for concreteness, but stress that the derived magical identities hold for general twists.}
  \label{fig: 2site chain magical identity}
\end{figure}

\vskip 4pt
We illustrate this magical identity for the two-site chain in Fig.~\ref{fig: 2site chain magical identity} as a function of the external energy $X_2$ (for a fixed root vertex energy $X_1$), when varying the truncation order $N$ of the series representations. As expected, in the limit $Y_{12}\to0$, the right-hand side of~\eqref{eq: two-site chain magic identity} smoothly reduces to the simple geometric expansion of the contracted graph in $X_2/X_1$ given in~\eqref{eq: 2site chain geometric}. For a finite value of $Y_{12}\neq 0$, though, both the geometric and hypergeometric series representations are genuinely distinct at finite truncation order, and only match in the limit $N\to\infty$. This agreement therefore relies on a highly non-trivial resummation of the series, the mathematical origin of which remains to be understood.

\subsubsection{Identities from full contractions} 

It is of course straightforward to generalise the above observation to graphs of arbitrary topology. To be precise, we can contract any graph to a single vertex by applying a differential operator for every edge of the graph. This way, we can find very similar relations between purely rational functions and sums of hypergeometric functions. Let us outline the respective algorithm below:

\vskip 4pt
Consider any graph $\G$ with a set of vertices $\{i\}$ and edges $\{(i,j)\}$. For any edge $(i,j)$ act with a corresponding differential operator on the graph, either with $\D_{ij}$ or $\D_{ji}$ (the result will be the same). This will lead to a differential equation of the form
\begin{equation}
\label{eq: diff eq magical identities full contractions}
   \left[ \prod_{(i,j)} \D_{ij} \right] \G = \vcenter{\hbox{
\begin{tikzpicture}[line width=1pt, scale=2, baseline={(current bounding box.center)}]
    \draw[fill=black] (0,0) circle (.05cm)  node[above=2mm] {$\textcolor{black}{(\sum X_i,\sum p_i)}$};
\end{tikzpicture}
    }} \,,
\end{equation}
which means that the graph is collapsed to a single vertex. The right-hand side can be written down immediately as
\begin{equation}
    \vcenter{\hbox{
\begin{tikzpicture}[line width=1pt, scale=2, baseline={(current bounding box.center)}]
    \draw[fill=black] (0,0) circle (.05cm)  node[above=2mm] {$\textcolor{black}{(\sum X_i,\sum p_i)}$};
\end{tikzpicture}
    }} = i\frac{e^{-i\frac{\pi}{2}(\sum_i p_i -d)} \Gamma[\sum_i p_i -d]}{\left(\sum_i X_i\right)^{\sum_i p_i -d}} \,.
\end{equation}
Note that $\G$ contains factorised and (partially) nested terms. Whenever there is a factorised term, the action of the respective differential operator will eliminate this term. Therefore, only the fully nested contribution produces the source term on the right-hand side. We can therefore view this equation as an equation for the fully nested part, for which the series solution is constructed from the spectral gluing algorithm: it is found by solving a system of linear equations for the coefficients of the off-shell ansatz~\eqref{eq: off-shell ansatz}.
Applying the differential operators simply removes the propagators $\prod 1/(x_{ij}^2+\mu_{ij}^2)$ from this off-shell ansatz. This yields the left-hand side of~\eqref{eq: diff eq magical identities full contractions}:
\begin{align}
    \left[ \prod_{(i,j)} \D_{ij} \right] \G =& i^{V} \left[\prod_{i=1}^V \frac{1}{X_i^{p_i-d}}\right]\left[ \prod_{j=1}^I \frac{\pi/4}{(X_jX_j')^\frac{d}{2}}\right] \left(4\pi i\right)^I \left[\prod_{i=1}^V \C_+^{(n_i)}(p_i)\right] \nonumber\\
    &\times\sum\limits_{\bm m \geq 0} \frac{(-1)^{|\bm m|}}{\bm m!} \Gamma\left[ \tilde p_{1\cdots V} +|\bm m|\right] \left[\prod_{\substack{i=1 \\i\neq r}}^V \left(\frac{X_i}{X_r}\right)^{\tilde p_i+m_i}\right] \nonumber\\ 
    &\times\left[ \prod_{i=1}^VF^{(n_i)}_C\left[\left.\begin{matrix} \frac{\tilde{p}_i+\xi_i}{2},\,  \frac{\tilde{p}_i+1+\xi_i}{2}\\ 1+\epsilon_{i1} x_{i1}, \ldots, 1+\epsilon_{i n_i}x_{i n_i} \end{matrix}\right\vert u_{i1}^2, \ldots, u_{in_i}^2\right]\right] \,,
\end{align}
where the $x_{ij}$ are the solutions to the linear system defined by the respective path matrix. Comparing this to the right-hand side, simplifying the numerical prefactors with the help of the relations $V=I+1$ and $\sum_i n_i =2I$, and rearranging the kinematic factors  leads to the following generalised magical identity:
\begin{equation}
    \boxed{
    \begin{aligned}
        &\frac{X_1^{\tilde p_1} \cdots X_V^{\tilde p_V}}{(X_1+\dots+X_V)^{\tilde p_1+\dots+\tilde p_V}} = \sum\limits_{\bm m \geq 0} \frac{(-1)^{|\bm m|}}{\bm m!} \left(\tilde p_{1\cdots V}\right)_{|\bm m|} \left[\prod_{\substack{i=1\\i\neq r}}^V \left(\frac{X_{i}}{X_{r}}\right)^{\tilde p_i +m_i}\right]\\ 
        &\times\left[ \prod_{i=1}^VF^{(n_i)}_C\left[\left.\begin{matrix} \frac{\tilde{p}_i+\xi_i}{2},\,  \frac{\tilde{p}_i+1+\xi_i}{2}\\ 1+\epsilon_{i1} x_{i1}, \ldots, 1+\epsilon_{i n_i}x_{i n_i} \end{matrix}\right\vert \left(\frac{Y_{i1}}{X_i}\right)^2, \ldots, \left(\frac{Y_{in_i}}{X_i}\right)^2\right]\right] \,.
    \end{aligned}
    }
\end{equation}
What is so interesting and \textit{magical} about this identity is not only that it reduces multiple infinite sums over multivariate hypergeometric functions to a simple rational function, but that it also has the peculiar property that the above equation stays true if we remove the Lauricella functions entirely:
\begin{equation}
    \boxed{
    \frac{X_1^{\tilde p_1} \cdots X_V^{\tilde p_V}}{(X_1+\dots+X_V)^{\tilde p_1+\dots+\tilde p_V}} = \sum\limits_{\bm m \geq 0} \frac{(-1)^{|\bm m|}}{\bm m!} \left(\tilde p_{1\cdots V}\right)_{|\bm m|} \left[\prod_{\substack{i=1\\i\neq r}}^V \left(\frac{X_{i}}{X_{r}}\right)^{\tilde p_i +m_i}\right]\,.
    }
\end{equation}
This is just the multi-series expansion of $(1+\sum_{i\neq r} X_i/X_r)^{-\tilde p_{1\cdots V}}$. One can show this directly or by noting that our magical identity depends on the internal momenta $Y_{ij}$ only through the Lauricella functions and must of course stay true in the soft limit $Y_{ij}\to 0$, which sends the Lauricella functions to unity.

\subsubsection*{Explicit examples}

We can use the explicit series representations for the fully nested contribution to derive magical identities for any graph topology we have considered. Doing this is straightforward: leave out the factors $1/(x_{ij}^2+\mu_{ij}^2)$, replace the factor $\Gamma[\sum_i \tilde p_i +|\bm m|]$ by the Pochhammer symbol $(\sum_i \tilde p_i)_{|\bm m|}$, leave out all the numerical and kinematic prefactors and equate everything to the respective rational function in the kinematic variables. 

\vskip 4pt
Since we already derived the identity for the two-site chain, let us continue with the three-site chain. If we take the root vertex to be vertex $r=1$, then the magical identity reads
\begin{equation}
    \boxed{
    \begin{aligned}
        \frac{X_1^{\tilde p_1} X_2^{\tilde p_2} X_3^{\tilde p_3}}{(X_1+X_2+X_3)^{\tilde p_{123}}} &= \sum\limits_{m_1,m_2=0}^\infty \frac{(-1)^{m_{12}}}{m_1!m_2!} (\tilde p_{123})_{m_{12}} \left(\frac{X_2}{X_1}\right)^{\tilde p_{2}+m_{1}} \left(\frac{X_3}{X_1}\right)^{\tilde p_3 +m_2} \\
        &\hspace*{-2.5cm}\times F^{(1)}_C\left[\left.\begin{matrix} \frac{\tilde{p}_{123}+m_{12}}{2},\,  \frac{\tilde{p}_{123}+1+m_{12}}{2}\\ 1+\tilde{p}_{23}+m_{12} \end{matrix}\right\vert \left(\frac{Y_{12}}{X_1}\right)^2\right]
    F^{(1)}_C\left[\left.\begin{matrix} \frac{-m_2}{2},\,  \frac{1-m_2}{2}\\ 1-\tilde{p}_3-m_2 \end{matrix}\right\vert \left(\frac{Y_{23}}{X_3}\right)^2\right]
          \\
        &\hspace*{-2.5cm}\times F^{(2)}_C\left[\left.\begin{matrix} \frac{-m_1}{2},\,  \frac{1-m_1}{2}\\ 1-\tilde{p}_{23}-m_{12}, 1+\tilde{p}_3+m_2 \end{matrix}\right\vert \left(\frac{Y_{12}}{X_2}\right)^2, \left(\frac{Y_{23}}{X_2}\right)^2\right]\,.
    \end{aligned}
    }
\end{equation}
Here, we are expanding in the region $X_2,X_3< X_1$ (consistent with the choice of the root vertex). If, instead, we choose $r=2$ as the root, the magical identity becomes
\begin{equation}
    \boxed{
    \begin{aligned}
        \frac{X_1^{\tilde p_1} X_2^{\tilde p_2} X_3^{\tilde p_3}}{(X_1+X_2+X_3)^{\tilde p_{123}}} &= \sum\limits_{m_1,m_2=0}^\infty \frac{(-1)^{m_{12}}}{m_1!m_2!} (\tilde p_{123})_{m_{12}} \left(\frac{X_1}{X_2}\right)^{\tilde p_1+m_{1}} \left(\frac{X_3}{X_2}\right)^{\tilde p_3 +m_2} \\
        &\hspace*{-1cm}\times F^{(1)}_C\left[\left.\begin{matrix} \frac{-m_{1}}{2},\,  \frac{1-m_{1}}{2}\\ 1-\tilde{p}_{1}-m_{1} \end{matrix}\right\vert \left(\frac{Y_{12}}{X_1}\right)^2\right]
    F^{(1)}_C\left[\left.\begin{matrix} \frac{-m_2}{2},\,  \frac{1-m_2}{2}\\ 1-\tilde{p}_3-m_2 \end{matrix}\right\vert \left(\frac{Y_{23}}{X_3}\right)^2\right] \\
        &\hspace*{-1cm}\times F^{(2)}_C\left[\left.\begin{matrix} \frac{\tilde p_{123}+m_{12}}{2},\,  \frac{\tilde p_{123}+1+m_{12}}{2}\\ 1+\tilde{p}_{1}+m_{1}, 1+\tilde{p}_3+m_2 \end{matrix}\right\vert \left(\frac{Y_{12}}{X_2}\right)^2, \left(\frac{Y_{23}}{X_2}\right)^2\right]\,.
    \end{aligned}
    }
\end{equation}
This is expanding in the region $X_1,X_3<X_2$. Notice that these magical identities hold even in partial soft limits $Y_{12}\to0$ or $Y_{23}\to0$ for which {\it some} Lauricella functions are set to one, and the double-soft limit $Y_{12}, Y_{23}\to0$, for which we recover the multi-geometric series.

\vskip 4pt
As another example, let us take a look at the $N$-site star where we take the root to be the central vertex, $r=0$. Using the previously derived result in Sec.~\ref{subsec: Nsite star}, we find the following magical identity
\begin{equation}
    \boxed{
    \begin{aligned}
        \frac{X_0^{\tilde p_0}X_1^{\tilde p_1}\cdots X_N^{\tilde p_N}}{(X_0+\cdots+X_N)^{\tilde p_{01\cdots N}}} &=\sum_{\bm{m}\in \mathbb{N}^{N}_{\geq0}}^{\infty} \frac{(-1)^{m_{1\cdots N}}}{m_1!\cdots m_N!} \, \left(\tilde{p}_{0\cdots N}\right)_{m_{1\cdots N}}\\
        &\hspace*{-2.5cm}\times \prod_{j=1}^N \left\{ \left(\frac{X_j}{X_0}\right)^{\tilde{p}_j+m_j} F^{(1)}_C\left[\left.\begin{matrix} \frac{-m_j}{2},\,  \frac{1-m_j}{2}\\ 1-\tilde{p}_j-m_j \end{matrix}\right\vert \left(\frac{Y_{0j}}{X_j}\right)^2\right]\right\} \\
        &\hspace*{-2.5cm}\times F^{(N)}_C\left[\left.\begin{matrix} \frac{\tilde{p}_{0\cdots N}+m_{1\cdots N}}{2},\,  \frac{\tilde{p}_{0\cdots N}+1+m_{1\cdots N}}{2}\\ 1+\tilde{p}_1+m_1, \ldots, 1+\tilde{p}_N+m_N \end{matrix}\right\vert \left(\frac{Y_{01}}{X_0}\right)^2, \ldots, \left(\frac{Y_{0N}}{X_0}\right)^2\right]\,.
    \end{aligned}
    }
\end{equation}
Again, it is magical that such an identity holds even in partial soft limits, for which some of the internal energies vanish. 

\subsubsection{Identities from partial contractions} 

There is of course also the possibility to contract graphs only partially and to derive identities between Lauricella functions of different orders. This is done by applying differential operators to only some of the edges of a given graph. Let us discuss this procedure for the example of the three-site chain where we fix the root vertex to be the middle one ($r=2$). We apply the differential operator $\D_{23}$ to our solution~\eqref{eq: 3site chain mid root} which removes the factor $1/[(\tilde p_3+m_2)^2+\mu_{23}^2]$:
\begin{equation}
    \begin{aligned}
        \D_{23} \G(\begin{tikzpicture}[scale=0.8, baseline=-0.5ex]
            \coordinate (A) at (0,0);
            \coordinate (B) at (0.6,0);
            \coordinate (C) at (1.2,0);
            \draw[thick, postaction={decorate}, decoration={markings, mark=at position 0.6 with {\arrow[pyred]{<}}}] (A) -- (B);
            \draw[thick, postaction={decorate}, decoration={markings, mark=at position 0.6 with {\arrow[pyred]{>}}}] (B) -- (C);
            \fill[black] (A) circle (2pt);
            \fill[pyred] (B) circle (2pt);
            \fill[black] (C) circle (2pt);
        \end{tikzpicture}) &= \frac{ie^{-i\frac{\pi}{2}\tilde p_{123}}}{X_1^{\tilde p_1} X_2^{\tilde p_2} X_3^{\tilde p_3}} \sum\limits_{m_1,m_2=0}^\infty \frac{(-1)^{m_{12}}}{m_1!m_2!}\frac{\Gamma[\tilde p_{123}+m_{12}]}{(\tilde p_1+m_1)^2+\mu_{12}^2} \left(\frac{X_1}{X_2}\right)^{\tilde p_1+m_1}\left(\frac{X_3}{X_2}\right)^{\tilde p_3+m_2}  \\
        &\times F^{(1)}_C\left[\left.\begin{matrix} \frac{-m_{1}}{2},\,  \frac{1-m_{1}}{2}\\ 1-\tilde{p}_{1}-m_{1} \end{matrix}\right\vert \left(\frac{Y_{12}}{X_1}\right)^2\right]
    F^{(1)}_C\left[\left.\begin{matrix} \frac{-m_2}{2},\,  \frac{1-m_2}{2}\\ 1-\tilde{p}_3-m_2 \end{matrix}\right\vert \left(\frac{Y_{23}}{X_3}\right)^2\right]
          \\
        &\times F^{(2)}_C\left[\left.\begin{matrix} \frac{\tilde p_{123}+m_{12}}{2},\,  \frac{\tilde p_{123}+1+m_{12}}{2}\\ 1+\tilde{p}_{1}+m_{1}, 1+\tilde{p}_3+m_2 \end{matrix}\right\vert \left(\frac{Y_{12}}{X_2}\right)^2, \left(\frac{Y_{23}}{X_2}\right)^2\right]\,.
    \end{aligned}
\end{equation}
On the right-hand side of the differential equation, we now have the contracted graph instead, i.e.~a two-site chain:
\begin{equation}
    \begin{aligned}
        C_{23}[\G(\begin{tikzpicture}[scale=0.8, baseline=-0.5ex]
            \coordinate (A) at (0,0);
            \coordinate (B) at (0.6,0);
            \coordinate (C) at (1.2,0);
            \draw[thick, postaction={decorate}, decoration={markings, mark=at position 0.6 with {\arrow[pyred]{<}}}] (A) -- (B);
            \draw[thick, postaction={decorate}, decoration={markings, mark=at position 0.6 with {\arrow[pyred]{>}}}] (B) -- (C);
            \fill[black] (A) circle (2pt);
            \fill[pyred] (B) circle (2pt);
            \fill[black] (C) circle (2pt);
        \end{tikzpicture})] = \G(\begin{tikzpicture}[scale=0.8, baseline=-0.5ex]
            \coordinate (A) at (0,0);
            \coordinate (B) at (0.6,0);
            \draw[thick, postaction={decorate}, decoration={markings, mark=at position 0.6 with {\arrow[pyred]{<}}}] (A) -- (B);
            \fill[black] (A) circle (2pt);
            \fill[pyred] (B) circle (2pt);
        \end{tikzpicture}) &= \frac{i e^{-i\frac{\pi}{2}\tilde p_{123}}}{X_1^{\tilde p_1}(X_2+X_3)^{\tilde p_{23}}} \sum\limits_{m_1=0}^\infty \frac{(-1)^{m_1}}{m_1!} \frac{\Gamma[\tilde p_{123}+m_1]}{(\tilde p_1+m_1)^2+\mu_{12}^2} \\
        &\times\left(\frac{X_1}{X_2+X_3}\right)^{\tilde p_1+m_1} F^{(1)}_C\left[\left.\begin{matrix} \frac{-m_{1}}{2},\,  \frac{1-m_{1}}{2}\\ 1-\tilde{p}_{1}-m_{1} \end{matrix}\right\vert \left(\frac{Y_{12}}{X_1}\right)^2\right]\\
        &\times F^{(1)}_C\left[\left.\begin{matrix} \frac{\tilde p_{123}+m_1}{2},\,  \frac{\tilde p_{123}+1+m_1}{2}\\ 1+\tilde{p}_1+m_1 \end{matrix}\right\vert \left(\frac{Y_{12}}{X_2+X_3}\right)^2\right]\,.
    \end{aligned}
\end{equation}
Comparing the double sum on the one side and the single sum on the other side term by term for the summation index $m_1$, we find the following magical identity:
\begin{equation}
    \boxed{
    \begin{aligned}
       &\frac{X_2^{\tilde p_{12}+m_1}X_3^{\tilde p_3}}{(X_2+X_3)^{\tilde p_{123}+m_1}}  F^{(1)}_C\left[\left.\begin{matrix} \frac{\tilde p_{123}+m_1}{2},\,  \frac{\tilde p_{123}+1+m_1}{2}\\ 1+\tilde{p}_1+m_1 \end{matrix}\right\vert \left(\frac{Y_{12}}{X_2+X_3}\right)^2\right] \\
       &= \sum\limits_{m_2=0}^\infty \frac{(-1)^{m_2}}{m_2!} \left(\tilde p_{123}+m_1\right)_{m_2} \left(\frac{X_3}{X_2}\right)^{\tilde p_3+m_2} F^{(1)}_C\left[\left.\begin{matrix} \frac{-m_2}{2},\,  \frac{1-m_2}{2}\\ 1-\tilde{p}_3-m_2 \end{matrix}\right\vert \left(\frac{Y_{23}}{X_3}\right)^2\right]\\
        &\times F^{(2)}_C\left[\left.\begin{matrix} \frac{\tilde p_{123}+m_{12}}{2},\,  \frac{\tilde p_{123}+1+m_{12}}{2}\\ 1+\tilde{p}_{1}+m_{1}, 1+\tilde{p}_3+m_2 \end{matrix}\right\vert \left(\frac{Y_{12}}{X_2}\right)^2, \left(\frac{Y_{23}}{X_2}\right)^2\right] \,.
    \end{aligned}
    }
\end{equation}
Once again, this identity appears very mysterious (and therefore magical) to us, as: (i) it holds for any integer $m_1$ and any twists $\tilde{p}_{1, 2}$, (ii) it holds for partial soft limits $Y_{12}\to0$ or $Y_{23}\to0$, and (iii) it holds for the double soft limit $Y_{12}, Y_{23}\to0$. One can easily generalise this to graphs of more complicated topologies where a contraction of an edge with vertices of degree $n_1$ and $n_2$ relates a sum over $F_C^{(n_1)} \times F_C^{(n_2)}$ to a product of a rational function and a Lauricella function $F_C^{(n_1+n_2-2)}$ of a different order. Similarly, contracting multiple edges leads to magical identities involving multiple sums. 

\subsection{Large-mass expansion}
\label{subsec: large-mass expansion}

The combination of differential equations with these magical identities allows us to also write down expressions for large-mass expansions of cosmological correlators. For this, we can use the following expansion of the propagator:
\begin{equation}
    \frac{1}{x^2+\mu^2}=\frac{1}{\mu^2}\sum\limits_{n=0}^\infty \left(-\frac{x^2}{\mu^2}\right)^n\,.
\end{equation}
Now, we know that the functions appearing in the off-shell ansatz are eigenfunctions of the differential operators 
\begin{equation}
    \Delta_{ij} \equiv \vartheta_{ij}^2 -u_{ij}^2\left(\vartheta_{\{i\}}+\tilde p_i\right)\left(\vartheta_{\{i\}}+\tilde p_i+1\right)\,,
\end{equation}
so that $\Delta_{ij}\F^{(n_i)} (u_{ij})= x_{ij}^2 \F^{(n_i)}(u_{ij})$. Therefore, we can write
\begin{equation}
    \G = \frac{1}{\mu_{ij}^2}\sum\limits_{n=0}^\infty \left(-\frac{\Delta_{ij}}{\mu_{ij}^2}\right)^n C_{ij}[\G]\,.
\end{equation}
This is simply the inversion of the differential equation \eqref{eq: differential equations}. In particular, we can contract all the edges of a given graph to arrive at a single vertex. Then, the large-mass expansion is equivalent to taking derivatives of a rational function in the external kinematics, similar to the simple case discussed in \cite{Arkani-Hamed:2018kmz}:
\begin{equation}
    \G = 2\sin\left(\tfrac{\pi}{2}\left(p_{1\cdots V}-d\right)\right) \Gamma[p_{1\cdots V}-d] \prod_{(i,j)}\frac{1}{\mu_{ij}^2}\sum\limits_{n=0}^\infty \left(-\frac{\Delta_{ij}}{\mu_{ij}^2}\right)^n \frac{1}{X_{1\cdots V}^{p_{1\cdots V}-d}}\,.
\end{equation}
This, of course, requires us to first express the total energy pole of the contracted graph in terms of energy ratios $u_{ij}$, since the differential operators are defined with respect to internal line energies $Y_{ij}$ and not the external vertex energies $X_i$. To this end, we expand the contracted graph in such a way that we recover the normalisation of the original graph:
\begin{equation}
    \frac{1}{X_{1\cdots V}^{p_{1\cdots V}-d}} = \prod_{i=1}^V \frac{1}{X_i^{\tilde p_i}\left(1+\frac{X_{1\dots (i)\dots V}}{X_i}\right)^{\tilde p_i}}\,.
\end{equation}
Then, every ratio $X_j/X_i$ can be expressed as a telescoping product in terms of internal energies by moving along the path $\P_{j\to i}$ from vertex $j$ to vertex $i$ as
\begin{equation}
    \frac{X_j}{X_i}=\prod_{(k,\ell)\in\P_{j\to i}} \frac{u_{\ell k}}{u_{k \ell }}\,.
\end{equation} 
Notice that these are exactly the products that appear in the series expansions of the fully nested part if we express everything in terms of energy ratios.

\subsubsection*{Examples}

Let us discuss two examples. We start with the two-site chain, where the expansion was already mapped out in \cite{Arkani-Hamed:2018kmz}. We restrict to the case $p_1=p_2=2$ in order to compare to the known results. Then $\tilde p_1=\tilde p_2=\frac{1}{2}$ and we can rewrite
\begin{equation}
    \frac{1}{X_1+X_2} = \left[\frac{u_{12}u_{21}}{X_1 X_2}\right]^\frac{1}{2}\frac{1}{u_{12}+u_{21}}\,.
\end{equation}
Acting with the differential operator $\Delta_{12}$ on this expression leads to the following first two terms in the large-mass expansion:
\begin{align}
    \G \sim \frac{1}{\mu^2}\left[\frac{1}{X_1+X_2} - \frac{1}{\mu^2}\frac{X_1^2-6 X_1 X_2+X_2^2-8 Y^2}{4  (X_1+X_2)^3}+\dots\right]\,.
\end{align}
Although we had the choice of either using $\Delta_{12}$ or $\Delta_{21}$ for the expansion, the result is perfectly symmetric in the two external energies. 

\vskip 4pt
As a second example, we consider the three-site chain again and fix $r=1$ as the root vertex. For simplicity, we choose $p_1=p_3=2$ and $p_2=1$. Then, the total energy pole can be written as
\begin{equation}
    \frac{1}{(X_1+X_2+X_3)^2} = \frac{X_1^{-\frac12} X_2^{-1} X_3^{-\frac12}}{\left(1+\frac{u_{12}}{u_{21}}+\frac{u_{12}u_{23}}{u_{21}u_{32}}\right)^\frac12\left(1+\frac{u_{21}}{u_{12}}+\frac{u_{32}}{u_{23}}\right)\left(1+\frac{u_{32}}{u_{23}}+\frac{u_{32}u_{21}}{u_{23}u_{12}}\right)^\frac12}\,,
\end{equation}
and we find the expansion
\begin{align}
    \G &\sim \frac{1}{\mu_{12}^2\mu_{23}^2} \left(\frac{1}{(X_1+X_2+X_3)^2}\right. \nonumber \\
    &+\frac{1}{\mu_{12}^2}\frac{1}{4 X_1^2 (X_1+X_2+X_3)^4}\left[X_1^2 (9 X_1^2 - 14 X_1 (X_2 + X_3) + (X_2 + X_3)^2) \right.\nonumber \\
    & -\left.3 (21 X_1^2 + 18 X_1 (X_2 + X_3) + 5 (X_2 + X_3)^2) Y_{12}^2\right]+(1\leftrightarrow 3)+\dots\Big)\,.
\end{align}
It would be interesting to match the found rational functions of energies to explicit bulk (partial) contact interactions. A natural question, beyond the scope of this work, is whether we can reconstruct the full massive correlator by resumming such large-mass expansions. 

\newpage
\section{Conclusions \& Outlook}
\label{sec: conclusions}

The analytic structure of tree-level scattering amplitudes in flat space is strikingly simple. Once locality and unitarity are imposed, observables are rational functions of the kinematic invariants, with singularities dictated by particle propagation and factorisation. Adding masses enriches the spectrum of exchanged particles but leaves this underlying structure essentially unchanged. Cosmological observables, by contrast, are fundamentally more intricate, and develop a hypergeometric structure already at tree level, due to the distorted propagation of massive particles in an expanding universe. Understanding whether this apparent complexity conceals a simpler underlying mathematical structure has become one of the central questions in the study of cosmological correlators.

\vskip 4pt
In this paper, we have shed light on the mathematical structure of all-tree massive cosmological correlators. Despite their apparent complexity, we have shown that these observables can be constructed systematically from fundamental building blocks---vertex functions belonging to the family of Lauricella generalised hypergeometric functions---that are glued together through spectral integrals. We have designed a spectral gluing algorithm for maximally-nested analytic contributions that streamlines spectral integrations, and that follows purely from elementary combinatorial properties of the corresponding graphs. For every choice of root vertex, the algorithm produces a series representation in the corresponding region of kinematic space. These representations reveal a remarkably simple and universal organisation of massive correlators. The non-trivial bulk dynamics associated with spontaneous production and subsequent evolution of massive modes is captured by simple rational spectral propagators, whose poles encode the exchange of on-shell particles, closely paralleling the role of propagators in flat-space perturbation theory. The remaining kinematic dependence naturally separates into a geometric series in ratios of external energies, valid within the chosen convergence domain, together with generalised hypergeometric functions that resum the dependence on internal energies as one moves away from internal soft limits. What initially appears as a highly complicated collection of nested time integrals is therefore reorganised into a surprisingly rigid hierarchy of familiar mathematical structures.

\vskip 4pt
Consistent bulk time evolution is dictated by a set of Lauricella-type partial differential equations in the external energies obeyed by the boundary massive correlators. We have shown that the spectral gluing algorithm naturally constructs solutions to these equations by expanding correlators in eigenfunctions of the corresponding differential operators. Remarkably, and indeed by construction, acting with these graph annihilators on our series representations simultaneously removes the dynamical on-shell propagators, and collapses the corresponding internal lines of the graph. This collapsing mechanism gives rise to a miraculous new class of ``magical'' identities among generalised hypergeometric functions, relating them to correlators of simpler graph topologies. This reveals a striking hidden simplicity of massive correlators: once the bulk dynamics has been stripped away, the intricate hypergeometric structure collapses onto simple rational functions. Whether the full hypergeometric structure of massive cosmological correlators can ultimately be reconstructed from these fully collapsed rational graphs remains an intriguing open question.

\vskip 4pt
The structure uncovered in this work points towards a richer mathematical framework that remains to be fully understood. Uncovering this structure remains an exciting challenge. We end by highlighting several promising directions for future exploration, spanning both physical and mathematical aspects.

\subsubsection*{Physics}

We have worked within an idealised setup of tree-level correlators of conformally coupled external scalars with arbitrary internal massive exchange. The structures uncovered here suggest that the same techniques should generalise naturally to deformed or broader settings.

\begin{itemize}
    \item {\it Degenerate limit \& marginal vertices}: Many correlators of phenomenological interest involve boost-breaking linear mixings. In tree-level graphs, such mixing can be implemented by taking a folded limit at the corresponding vertex, which renders it marginal. This is the case, for instance, for the well-known triple-exchange bispectrum~\cite{Chen:2009zp}, for which no closed-form expression that converges in the entire physical kinematic region is currently known. While imposing the Bunch-Davies vacuum ensures that individual vertex functions remain finite and free of folded singularities (see Sec.~\ref{subsec: single-massive vertex function}), gluing them can nevertheless generate spurious divergences that must cancel in the full result. At present, it is unclear how this cancellation can be implemented systematically. More generally, it would be interesting to understand how the spectral gluing algorithm can be deformed or extended to accommodate marginal vertices.
    
    \item {\it Conformally coupled limit}: Most of the neat properties and mathematical structure are known for the idealised and somewhat special physical setup of conformally coupled scalars in power-law cosmology. In particular, these correlators admit twisted integral representations with rational integrands, providing an ideal framework for twisted cohomology. At tree level, they are known to be expressible in terms of polylogarithms (in the case of a de Sitter twist). Our generalised hypergeometric series representations should therefore reduce to these simpler polylogarithmic functions in the conformally coupled limit. Establishing this reduction explicitly, together with understanding the cancellation of spurious infrared divergences and developing systematic resummation techniques from the hypergeometric series to polylogarithms, would further clarify the found mathematical structure.

    \item {\it Breaking de Sitter symmetry}: By nature, inflation necessarily breaks the de Sitter symmetry: it must eventually end, and the evolution of the inflaton singles out a preferred time direction. In the simplest slow-roll scenarios, this mild breaking of de Sitter boosts gives rise to primordial non-Gaussianities with small amplitudes, close to the gravitational floor. More generally, however, the amplitude of primordial non-Gaussianities is intimately tied to the degree of symmetry breaking. It would therefore be interesting to extend the spectral gluing algorithm to scenarios with broken de Sitter boosts, such as models with a reduced sound speed and/or with exchanged particles coupled to a helical chemical potential (see e.g.~\cite{Qin:2025xct}).

    \item {\it Loops}: A central challenge is to understand how to systematically perform loop integrals in cosmology. In the spirit of the tree theorem, loop integrands can be generated by shifting and integrating over external energies of tree-level correlators. The hypergeometric series representations obtained from the spectral gluing algorithm therefore naturally provide the fundamental building blocks for constructing massive loop integrands. This suggests that an extended version of the spectral gluing algorithm exists for massive loop integrands. More broadly, many of the combinatorial structures underlying the present construction are intrinsically tree-level, and it remains unclear how they should be generalised to loop diagrams.

    \item {\it Hidden simplicity}: We have seen that the complicated hypergeometric structure of massive correlators away from soft limits is far from arbitrary. Instead, it is completely determined by simple combinatorial rules, which ultimately originate from consistent bulk time evolution. Conversely, stripping away the dynamical propagators (by acting with graph annihilators) reduces the series representations to remarkably simple rational functions. This echoes the role of the Parke-Taylor formula~\cite{Parke:1986gb}, which revealed an unexpected simplicity hidden beneath the complexity of Feynman-diagram computations, and ultimately led to the discovery of previously hidden symmetry~\cite{Drummond:2008vq}  and novel geometric structures~\cite{Arkani-Hamed:2013jha} governing scattering amplitudes. We hope that a similar story unfolds in cosmology, with the spectral gluing framework providing a first step towards uncovering the underlying symmetries and geometric principles that organise massive correlators.

    \item {\it Conformal correlators}: After Wick rotating to Euclidean time, vertex functions are nothing but conformal contact correlators, whose kinematic structure is largely fixed by conformal symmetry. While these correlators are well understood in position space, their momentum-space description has only recently begun to be explored. In particular, solving the conformal Ward identities reveals their underlying hypergeometric structure (see, e.g.,~\cite{Bzowski:2013sza, Coriano:2019sth}). Notably, $n$-point conformal correlators admit a representation as Euclidean Feynman integrals with massless propagators, where the loop momenta run along the edges of an $(n-1)$-simplex~\cite{Bzowski:2019kwd, Bzowski:2020kfw}. As these conformal correlators constitute the fundamental building blocks of massive de Sitter correlator, a deeper understanding of their mathematical structure is likely to provide new insights into the properties of cosmological observables.
\end{itemize}

\subsubsection*{Mathematics}

We expect that a more complete understanding of the uncovered rich mathematical structure would illuminate new aspects of the underlying physics.

\begin{itemize}
    \item {\it Analytic continuation}: The series representations of vertex functions in terms of type-$C$ Lauricella functions are expansions around soft kinematic limits and do not converge throughout the entire physical region. The situation is even more subtle for marginal vertices, where the Lauricella series converge everywhere except in the physical kinematic domain, making analytic continuation indispensable. Standard approaches based on evaluating the Mellin-Barnes integral representations by collecting residues generate series expansions around singular loci of the associated system of differential equations, but do not provide expansions valid at generic points in kinematic space. A particularly promising alternative is the method of brackets~\cite{gonzalez2008definiteintegralsmethodbrackets, gonzalez2010methodbrackets2examples}, which extends Ramanujan’s master theorem to several variables (see~\cite{Raman:2025tsg} for a recent application to massive correlators). We will discuss these analytic continuation techniques in future work.
    
    \item {\it GKZ systems}: The building blocks of massive correlators can be expressed in terms of multivariable hypergeometric functions of the Gel'fand, Kapranov \& Zelevinsky (GKZ) type~\cite{Gelfand:1990bua, Gelfand1989HypergeometricFA, Gelfand1991HYPERGEOMETRICFT, alma991007079196806161}. This formalism provides a powerful and systematic language in which to study their structure: it allows for the construction of weight-shifting operators, relates analytic properties of the functions to the geometry of the associated Newton polytope, and determines their differential equations directly from the $\A$-matrix, which encodes both their combinatorial and singularity structure. A comprehensive understanding of contact conformal correlators within the GKZ formalism remains an open and promising direction for future research (see~\cite{Caloro:2023cep} for a recent study, and~\cite{Grimm:2024tbg, Grimm:2025zhv} for applications to correlators of conformally coupled scalars in power-law cosmology).

    \item {\it Geometry}: The emergence of a universal hypergeometric pattern suggests that the complicated integral representations of cosmological perturbation theory may not be the most fundamental description of these observables. Rather, they hint at the existence of an intrinsic mathematical object in which massive correlators are defined directly, with the spectral gluing combinatorics emerging as natural structural principles rather than consequences of explicit time integrations. Summing over graphs or channels often reveals an underlying geometric structure of the associahedron type~\cite{Arkani-Hamed:2017mur}. We saw that different series representations can be constructed from distinct choices of rooted trees associated with the same graph. This naturally suggests the existence of a richer geometric organisation, in which different choices of roots correspond to different facets, each providing a valid series representation in a specific region of kinematic space. More broadly, it is tempting to speculate that such a geometric picture could also shed light on hidden zeros of massive correlators, extending observations made in the conformally coupled case~\cite{De:2025bmf, Li:2026gns}.

    \item {\it Massive cosmological symbols}: Feynman integrals can often be recast as iterated integrals~\cite{Chen1971} (see~\cite{hain2001iteratedintegralsalgebraiccycles, Brown:2013qva, Bogner:2012dn} for modern reviews), with polylogarithmic functions being the primary example. Symbol techniques provide a compact way to encode their algebraic and analytic structure. Instead of working with complicated transcendental functions, the symbol maps an integral to an ordered tensor product of its ``letters'', making functional identities, branch-cut and sequential discontinuity structure, and differential relations much more transparent. In cosmology, since the Lauricella building blocks are themselves defined through iterated constructions and carry a natural notion of transcendental weight, it is tempting to generalise these ideas to define ``massive cosmological symbols''. Such a framework would provide a systematic way to manipulate generalised hypergeometric series representations and, ultimately turn functional equations---such as the discovered magical identities---into linear algebra problems at the symbol level.
\end{itemize}

\paragraph{Acknowledgments.} We are grateful to Daniel Baumann, Hadrien Brochet, Chandramouli Chowdhury, Saiei Matsubara, Joe Marshall, Matthias Nowinski, Prashanth Raman, Ivo Sachs, Cristian Vergu and Zhong-Zhi Xianyu for insightful discussions. DW is funded by the Deutsche Forschungsgemeinschaft (DFG, German Research Foundation) under Germany’s Excellence Strategy – EXC-2094/2–390783311. The research of DW is partially funded by the European Union (ERC, \raisebox{-2pt}{\includegraphics[height=0.9\baselineskip]{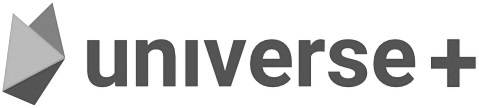}}, 101118787). {\small Views and opinions expressed are however those of the author(s) only and do not necessarily reflect those of the European Union or the European Research Council Executive Agency. Neither the European Union nor the granting authority can be held responsible for them.}

\newpage
\appendix
\section{Special Functions}
\label{sec: special functions}

As made manifest by the spectral gluing algorithm, all tree-level massive cosmological graphs can be represented as series of a plethora of multivariate generalised hypergeometric functions. In this appendix, we collect useful definitions, notations, and special functions used in the main text. 

\paragraph{$\Gamma$-function.} To avoid cluttered notations throughout the main text, we introduce the following shorthand notation for the product of $\Gamma$-functions:
\begin{equation}
    \Gamma[a_1, \ldots, a_n] \equiv \Gamma(a_1) \cdots \Gamma(a_n) \,.
\end{equation}
It proved useful to also define the following notation for the ratio of $\Gamma$-functions:
\begin{equation}
    \Gamma\left[\begin{matrix} a_1, \ldots, a_m \\ b_1, \ldots, b_n \end{matrix}\right] \equiv \frac{\Gamma[a_1, \ldots, a_m]}{\Gamma[b_1, \ldots, b_n]} \,.
\end{equation}
We have extensively used the Pochhammer symbol $(a)_n$ (sometimes called the rising factorial), defined as
\begin{equation}
    (a)_n \equiv \frac{\Gamma(a+n)}{\Gamma(n)} = a (a+1) (a+2) \cdots (a+n-1) \,,
\end{equation}
with $(a)_0=1$. Notice that when $a=-m$, with $m\in \mathbb{N}$, the Pochammer symbols becomes
\begin{equation}
    (-m)_n = \frac{(-1)^n m!}{(m-n)!}=(-m)(-m+1) \cdots (-m+n-1)\,,
\end{equation}
which vanishes for $n>m$, since one of the factors hits zero. This termination property is crucial as, e.g., a hypergeometric function with (at least one) negative-integer parameter truncates to a finite polynomial.

\subsection{Hypergeometric ${}_2F_1$ function}

Gau\ss's hypergeometric function is defined by
\begin{equation}
    {}_2F_1\left[\left.\begin{matrix} a,\,  b\\ c \end{matrix}\right\vert z \right] \equiv \sum_{n=0}^{\infty} \frac{(a)_n (b)_n}{(c)_n} \frac{z^n}{n!} \,,
\end{equation}
where $z \in \mathbb{C}$ is a variable, $a, b, c \in \mathbb{C}$ are parameters (with the condition $c\notin \mathbb{Z}_{\leq0}$). The series converges in the unit disc $\{z\in\mathbb{C} \, | \, |z|<1\}$. On the circle, the series converges absolutely when $\Re(c-a-b)>0$. We define the corresponding regularised function by the bar decoration:
\begin{equation}
    {}_2\bar{F}_1\left[\left.\begin{matrix} a,\,  b\\ c \end{matrix}\right\vert z \right] = \frac{1}{\Gamma(c)} \,{}_2F_1\left[\left.\begin{matrix} a,\,  b\\ c \end{matrix}\right\vert z \right] \,.
\end{equation}
The series terminates if either the parameter $a$ or $b$ is a non-positive integer, in which case the function reduces to a polynomial:
\begin{equation}
    {}_2F_1\left[\left.\begin{matrix} -m,\,  b\\ c \end{matrix}\right\vert z \right] = \sum_{n=0}^m (-1)^n \binom{m}{n} \frac{(b)_n}{(c)_n} z^n \,,
\end{equation}
with $m\in\mathbb{N}_{\geq0}$. A consequence of the above statement is:
\begin{equation}
    {}_2F_1\left[\left.\begin{matrix} \frac{-m}{2},\,  \frac{1-m}{2}\\ c \end{matrix}\right\vert z \right] = \sum_{n=0}^{\infty} \frac{(-m)_{2n}}{4^n (c)_n n!} z^n \,,
\end{equation}
where we have used the Legendre duplication formula for Pochhammer symbols
\begin{equation}
    \left(\frac{-m}{2}\right)_n \left(\frac{1-m}{2}\right)_n = \frac{(-m)_{2n}}{4^n} \,.
\end{equation}
For $m\geq0$ non-negative integer, $(-m)_{2n}$ vanishes as soon as $n\geq \lfloor m/2\rfloor+1$. The series therefore truncates to a polynomial of degree $\lfloor m/2\rfloor$. This truncation lowers the transcendental weight by unity and appears in the spectral gluing algorithm.

\subsubsection*{Differential equation}

It is well known that the ${}_2F_1$ function is annihilated by the following hypergeometric differential operator
\begin{equation}
    \D_z = z(1-z) \partial_z^2 + [c-(a+b+1)z]\partial_z - ab \,.
\end{equation}
The corresponding differential equation, $\D_z f=0$, has three singular loci: $0, 1$ and $\infty$, around which analytic continuations in the form of series representations can be found.

\subsubsection*{Analytic continuation}

Gau\ss's hypergeometric function satisfies the following linear transformations, which are used for analytic continuation:
\begin{equation}
\label{eq: 2F1 analytic continuation formula}
    \begin{aligned}
        \frac{\sin(\pi(b-a))}{\pi}{}_2\bar{F}_1\left[\left.\begin{matrix} a,\,  b\\ c \end{matrix}\right\vert z \right] &= \frac{(-z)^a}{\Gamma[b, c-a]} {}_2\bar{F}_1\left[\left.\begin{matrix} a,\,  a-c+1\\ a-b+1 \end{matrix}\right\vert \frac{1}{z} \right] - (a \leftrightarrow b) \,, \\
        \frac{\sin(\pi(c-a-b))}{\pi}{}_2\bar{F}_1\left[\left.\begin{matrix} a,\,  b\\ c \end{matrix}\right\vert z \right] &= \frac{1}{\Gamma[c-a, c-b]} {}_2\bar{F}_1\left[\left.\begin{matrix} a,\,  b\\ a+b-c+1 \end{matrix}\right\vert 1-z \right] \\
        &\,\,\,\,\,- \frac{(1-z)^{c-a-b}}{\Gamma[a, b]} {}_2\bar{F}_1\left[\left.\begin{matrix} c-a,\,  c-b\\ c-a-b+1 \end{matrix}\right\vert 1-z \right] \,.
    \end{aligned}
\end{equation}

\subsubsection*{Integral representations}

If $a, b \notin \mathbb{N}_{\leq0}$, then the regularised Gau\ss's hypergeometric function admits the following Mellin-Barnes integral representation:
\begin{equation}
    {}_2\bar{F}_1\left[\left.\begin{matrix} a,\,  b\\ c \end{matrix}\right\vert z \right] = \frac{1}{\Gamma[a, b]} \int_{-i\infty}^{+i\infty} \frac{\d s}{2i\pi} \, \Gamma\left[\begin{matrix} a+s, b+s, -s \\ c+s \end{matrix}\right](-z)^s \,.
\end{equation}
Other integral representations, especially of the Euler-type, can be found in e.g.~\cite{Olver:2010ouy}.

\subsection{Appell $F_4$ function}
\label{subsec: Appell F4}

The hypergeometric Appell $F_4$ has the following series representation around the origin $(z_1, z_2)=(0, 0)$:
\begin{equation}
\label{eq: Appell F4 series rep}
    F_4\left[\left.\begin{matrix} a,\,  b\\ c_1, \, c_2 \end{matrix}\right\vert z_1, z_2\right] \equiv \sum_{m_1, m_2=0}^\infty \frac{(a)_{m+n}(b)_{m+n}}{(c_1)_m (c_2)_n} \frac{z_1^m}{m!} \frac{z_2^n}{n!} \,,
\end{equation}
where the condition that $c_1, c_2 \notin \mathbb{N}_{\leq0}$ is assumed.
The Appell $F_4$ series is one of the four Appell series which are double-series generalisations of the Gau\ss's  hypergeometric series ${}_2F_1$, see~\cite{Srivastava1985MultipleGH} for more details. The convergence region of the $F_4$ series is $\sqrt{|z_1|}+\sqrt{|z_2|}<1$. For real positive values of $z_1$ and $z_2$, this is a quite restricted area in the $(z_1, z_2)$-plane. We define the regularised Appell $F_4$ function as
\begin{equation}
    \bar{F}_4\left[\left.\begin{matrix} a,\,  b\\ c_1, \, c_2 \end{matrix}\right\vert z_1, z_2\right] = \frac{1}{\Gamma[c_1, c_2]} F_4\left[\left.\begin{matrix} a,\,  b\\ c_1, \, c_2 \end{matrix}\right\vert z_1, z_2\right] \,,
\end{equation}
which is analytic in the entire complex planes of the parameters $a, b, c_1$ and $c_2$.

\subsubsection*{Differential equation}

The function $F_4$ is annihilated by the following differential operators~\cite{HExton_1995}
\begin{equation}
\label{eq: Appell F4 PDE system}
    \begin{aligned}
        \D_{z_1} &= z_1(1-z_1)\frac{\partial^2}{\partial z_1^2} - z_2^2\frac{\partial^2}{\partial z_2^2} - 2z_1z_2 \frac{\partial^2}{\partial z_1 \partial z_2} + (c_1-rz_1)\frac{\partial}{\partial z_1} - rz_2 \frac{\partial}{\partial z_2} - ab \,, \\
        \D_{z_2} &= z_2(1-z_2)\frac{\partial^2}{\partial z_2^2} - z_1^2\frac{\partial^2}{\partial z_1^2} - 2z_1z_2 \frac{\partial^2}{\partial z_1 \partial z_2} + (c_2-rz_2)\frac{\partial}{\partial z_2} - rz_1 \frac{\partial}{\partial z_1} - ab \,,
    \end{aligned}
\end{equation}
where $r\equiv a+b+1$. The corresponding (partial) differential equations are $\D_{z_1}f=0$ and $\D_{z_2}f=0$. This system can be written as a linear system of integrable total differential equations:
\begin{equation}
    \d \bm{f}(z_1, z_2) = \Omega(z_1, z_2) \bm{f}(z_1, z_2)\,,
\end{equation}
with $\bm{f} \equiv (f, \partial f/\partial z_1, \partial f/\partial z_2, \partial^2f/\partial z_1\partial z_2)^T$, and where $\Omega$ is a $4\times4$ matrix and each of its components $\Omega_{ij}$ ($1\leq i, j\leq4$) is a differential $1$-form with rational function coefficients which have poles only on $S=S_1\cup S_2\cup S_3$, where
\begin{equation}
    \begin{aligned}
        S_1 &= \{(z_1, z_2)\in \mathbb{C}^2|z_1=0\} \,, \\
        S_2 &= \{(z_1, z_2)\in \mathbb{C}^2|z_2=0\} \,, \\
        S_3 &= \{(z_1, z_2)\in \mathbb{C}^2|D(z_1, z_2)=0\} \,.
    \end{aligned}
\end{equation}
Here, $D(z_1, z_2)$ denotes the polynomial $1-2(z_1+z_2)+(z_1-z_2)^2$. As a consequence, any solution of the system~\eqref{eq: Appell F4 PDE system} that is holomorphic in an open set of $\mathbb{C}^2-S$ can be analytically continued along any curve in $\mathbb{C}^2-S$, and the family of all solutions form a four-dimensional vector space.

\subsubsection*{Analytic continuation}

The system~\eqref{eq: Appell F4 PDE system} has three singular loci: $(0, 0), (0, \infty)$ and $(\infty, 0)$, around which simple analytic continuations can be found. For instance, near $(0, \infty)$, we have~\cite{AppellFonctionsHE}
\begin{equation}
    \begin{aligned}
        F_4\left[\left.\begin{matrix} a,\,  b\\ c_1, \, c_2 \end{matrix}\right\vert z_1, z_2\right] &= \Gamma\left[\begin{matrix} b-a, c_2 \\ b, c_2-a \end{matrix}\right] (-z_2)^{-a} F_4\left[\left.\begin{matrix} a,\,  a-c_2+1\\ c_1, \, a-b+1 \end{matrix}\right\vert \frac{z_1}{z_2}, \frac{1}{z_2}\right] \\
        &+ \Gamma\left[\begin{matrix} a-b, c_2 \\ a, c_2-b \end{matrix}\right] (-z_2)^{-b} F_4\left[\left.\begin{matrix} b,\,  b-c_2+1\\ c_1, \, b-a+1 \end{matrix}\right\vert \frac{z_1}{z_2}, \frac{1}{z_2}\right] \,,
    \end{aligned}
\end{equation}
where the series on the right-hand side converge in the region $\sqrt{|z_1/z_2|}+\sqrt{|1/z_2|}<1$, corresponding to the region (\textbf{II}) of Fig.~\ref{fig: kinematic space for V2}, with $z_1=u_1^2$ and $z_2=u_2^2$. Naturally, the analytic continuation which converges in the region (\textbf{III}) is found by exchanging $z_1\leftrightarrow z_2$ as well as $c_1 \leftrightarrow c_2$. These analytic continuations can be obtained by collecting residues of the following Mellin-Barnes representation:
\begin{equation}
    F_4\left[\left.\begin{matrix} a,\,  b\\ c_1, \, c_2 \end{matrix}\right\vert z_1, z_2\right] = \Gamma\left[\begin{matrix} c_1, c_2 \\ a, b \end{matrix}\right] \int\limits_{-i\infty}^{+i\infty} \frac{\d s_1}{2i\pi} \frac{\d s_2}{2i\pi} \Gamma\left[\begin{matrix} a+s_{12}, b+s_{12}, -s_1, -s_2 \\ c_1+s_1, c_2+s_2 \end{matrix}\right] (-z_1)^{s_1} (-z_2)^{s_2} \,,
\end{equation}
where the integration contour is defined as usual and separates left from right poles. The following reduction formula also provides the neat analytic continuation used in the main text:
\begin{equation}
    F_4\left[\left.\begin{matrix} a,\,  b\\ c, \, a+b-c+1 \end{matrix}\right\vert z_1(1-z_2), z_2(1-z_1)\right] = {}_2F_1\left[\left.\begin{matrix} a,\,  b\\ c \end{matrix}\right\vert z_1\right] {}_2F_1\left[\left.\begin{matrix} a,\,  b\\ a+b-c+1 \end{matrix}\right\vert z_2\right] \,,
\end{equation}
although it holds for a special combination of parameters only. An extension of this reduction formula is given by~\cite{Burchnall1940EXPANSIONSOA}
\begin{equation}
    \begin{aligned}
        F_4\left[\left.\begin{matrix} a,\,  b\\ c_1, \, c_2 \end{matrix}\right\vert z_1(1-z_2), z_2(1-z_1)\right] &= \sum_{k=0}^{+\infty} \frac{(a)_k (b)_k (a+b-c_1-c_2+1)_k}{(c_1)_n (c_2)_n k!} z_1^k z_2^k \\
        &\times {}_2F_1\left[\left.\begin{matrix} a+k,\,  b+k\\ c_1+k \end{matrix}\right\vert z_1\right] {}_2F_1\left[\left.\begin{matrix} a+k,\,  b+k\\ c_2+k \end{matrix}\right\vert z_2\right] \,.
    \end{aligned}
\end{equation}
These formula are related to the following symbolic form~\cite{bateman_1953_cnd32-h9x80}
\begin{equation}
    F_4\left[\left.\begin{matrix} a,\,  b\\ c_1, \, c_2 \end{matrix}\right\vert z_1, z_2\right] = \nabla(a)\nabla(b) \,\, {}_2F_1\left[\left.\begin{matrix} a,\,  b\\ c_1 \end{matrix}\right\vert z_1\right] {}_2F_1\left[\left.\begin{matrix} a,\,  b\\ c_2 \end{matrix}\right\vert z_2\right] \,,
\end{equation}
where $\nabla$ is the differential operator defined by
\begin{equation}
    \nabla(h) \equiv \Gamma\left[\begin{matrix} h, \theta_1+\theta_2+h \\ \theta_1+h, \theta_2+h \end{matrix}\right] \,,
\end{equation}
with $\theta_i \equiv z_i \partial/\partial z_i$ ($i=1, 2$) being the Euler operator. Additional analytic continuations have been obtained in~\cite{Alkofer:2008dt, Ananthanarayan:2020xut}.

\subsection{Lauricella $F_C^{(n)}$ functions}
\label{app: Lauricella Function}

Lauricella functions are multi-variate generalisations of Gau\ss's hypergeometric function, originally introduced by Lauricella in~\cite{Lauricella1893SulleFI}. One distinguishes four different types, labelled as $A$, $B$, $C$ and $D$. For our purpose, we are only interested in the type-$C$ Lauricella function $F_C^{(n)}$ which contains ${}_2F_1 = F_C^{(1)}$ and $F_4 = F_C^{(2)}$ as special cases. Using multi-index notation $\bm{k}=(k_1,\dots,k_n)$, $\bm{z}=(z_1,\dots,z_n)$ as well as $\bm{k}!= k_1!\cdots k_n!$ and $|\bm{k}|=k_1+\cdots+k_n$, the Lauricella function $F_C^{(n)}$ of type-$C$ and order $n$ is defined via the following series representation
\begin{equation}
\label{eq: Lauricella Fc def}
    F_C^{(n)}\left[\left.\begin{matrix}
            a,b \\c_1, \dots,c_n
        \end{matrix}\right\vert z_1, \dots, z_n\right] \equiv \sum\limits_{\bm k \geq 0} \frac{(a)_{|\bm{k}|}(b)_{|\bm{k}|}}{(c_1)_{k_1} \cdots (c_n)_{k_n}} \frac{\bm{z}^{\bm{k}}}{\bm{k}!} \,,
\end{equation}
for $c_i \notin \mathbb{N}_{\leq0}$ ($i=1, \ldots, n$). Each $k_i$ runs over positive integers so that the series has $n$ layers. The series $F_C^{(n)}$ converges in the domain
\begin{equation}
    D_C = \{ \bm{z} \in \mathbb{C}^n \, | \, \sum_{1\leq i \leq n}\sqrt{|z_i|} <1 \}\,,
\end{equation}
which is nothing but the simplex around in the origin. The series representation is naturally invariant under the symmetric group $S_n$ by simultaneous permutation of the variables and the parameters in $\mathbb{C}^n$. In the main text, we are mostly working with the regularised Lauricella function
\begin{equation}
\label{eq: Lauricella Fc regularised}
    \bar{F}_C^{(n)}\left[\left.\begin{matrix}
            a,b \\c_1, \dots,c_n
        \end{matrix}\right\vert z_1, \dots, z_n\right] = \frac{1}{\Gamma[c_1, \ldots, c_n]}F_C^{(n)}\left[\left.\begin{matrix}
            a,b \\c_1, \dots,c_n
        \end{matrix}\right\vert z_1, \dots, z_n\right] \,,
\end{equation}
which is analytic in $\mathbb{C}^n$ in the space of parameters $a, b, c_1, \ldots, c_n$.

\subsubsection*{Differential equation}

We can derive the system of differential equations satisfied by $F_C^{(n)}$ directly from its series representation. For that purpose, note that the coefficients of the series fulfill
\begin{equation}
    \frac{A(k_1,\dots, k_i+1,\dots, k_n)}{A(k_1,\dots,k_i,\dots,k_n)} = \frac{(a+|\bm{k}|)(b+|\bm{k}|)}{(c_i+k_i)(1+k_i)}\,.
\end{equation}
Using the Euler operators $\theta_i=z_i \partial/\partial z_i$ and $\theta = \sum_{i=1}^n\theta_i$ which act on the expansion arguments as $\theta_i \bm{z}^{\bm{k}} = k_i \bm{z}^{\bm k}$ and $\theta \bm{z}^{\bm k} = |\bm k| \bm{z}^{\bm k}$, we immediately find the Euler form of the differential system,
\begin{equation}
    \left[\theta_i(\theta_i+c_i-1) -z_i(\theta +a)(\theta+b)\right] F_C^{(n)} \left[\left.\begin{matrix}
            a,b \\c_1, \dots,c_n
        \end{matrix}\right\vert z_1, \dots, z_n\right] =0\,,
\end{equation}
for $i=1,\dots,n$. When dealing with quadratic arguments $z_i=u_i^2$ as in the vertex functions in the main text, we can simply replace the Euler operators $\theta_i = \frac12 \vartheta_i$ where $\vartheta_i \equiv u_i\partial/\partial u_i$. Then, we find the annihilators
\begin{equation}
    \D_i = \vartheta_i(\vartheta_i +2c_i-2) -u_i^2 (\vartheta +2a)(\vartheta+2b)\,,
\end{equation}
with $i=1, \ldots, n$.

\vskip 4pt
For completeness, let us explicitly write down annihilators without using Euler operators. Let $\mathbb{C}[\bm{z}, \bm{\partial}]$ be the ring generated by the polynomial ring $\mathbb{C}[\bm{z}]$ in $n$ variables $z_1, \ldots, z_n$ and the partial differential operators $\partial_1 \equiv \partial/\partial z_1, \ldots, \partial_n \equiv \partial/\partial z_n$, with relations $\partial_i\partial_j=\partial_j\partial_i$ ($1\leq i, j\leq n$) and 
\begin{equation}
    \partial_i f(\bm{z}) = \frac{\partial f(\bm{z})}{\partial z_i} + f(\bm{z}) \partial_i\,, \quad f(\bm{z})\in \mathbb{C}[\bm{z}] \,.
\end{equation}
Then, the following elements of $\mathbb{C}[\bm{z}, \bm{\partial}]$ annihilate Lauricella's hypergeometric series $F_C^{(n)}$:
\begin{equation}
    \begin{aligned}
        \D_i &= z_i(1-z_i)\partial_i^2 - \sum\limits_{\substack{1\leq j\leq n \\ j\neq i}} z_iz_j \partial_i\partial_j - \sum\limits_{\substack{1\leq j, k\leq n \\ j\neq i}} z_j z_k \partial_j \partial_k \\
        &+ [c_i-(a+b+1)z_i]\partial_i - (a+b+1) \sum\limits_{\substack{1\leq j\leq n \\ j\neq i}} z_j \partial_j - ab \,,
    \end{aligned}
\end{equation}
for $i=1, \ldots, n$. The left ideal $E_C$ of $\mathbb{C}[\bm{z}, \bm{\partial}]$ generated by them is called Lauricella's system of hypergeometric differential equations (of type $C$). It is known that the rank of the corresponding system of partial differential equations is $2^n$, and the singular locus is
\begin{equation}
    S_C = \left(\prod_{i=1}^n z_i \prod_{\epsilon_1, \ldots, \epsilon_n=\pm 1} \left(1+\sum_{i=1}^n \epsilon_i \sqrt{z_i}\right)=0\right) \subset \mathbb{C}^n \,.
\end{equation}
For example, $S_C = (z_1z_2(z_1^2+z_2^2-2z_1z_2-2z_1-2z_2+1)=0)$ for $n=2$. More details are given in~\cite{goto2017fundamentalgroupcomplementsingular}. Although the series $F_C^{(n)}$ is defined under the condition $c_1, \ldots, c_n \notin \mathbb{N}_{\leq0}$, the system $E_C$ is defined for any parameter values $c_1, \ldots, c_n\in \mathbb{C}$. Notice that the corresponding solutions around $z_i=\infty$ can be accessed through that around $z'_i=0$ by the variable change $z_i'=1/z_i$ and the relation $z_i\partial/\partial z_i = -z_i'\partial/\partial z'_i$. The system $E_C$ can thus be regarded as defined on $(\mathbb{P}^1)^n$, where $\mathbb{P}^1$ denotes the complex projective line. Finally, let us mention that the system $E_C$ can be transformed into a Pfaffian system
\begin{equation}
    \d \bm{f}(\bm{z}) = \Omega(\bm{z}) \bm{f}(\bm{z}) \,,
\end{equation}
with the integrability condition $\d\Omega(\bm{z})=\Omega(\bm{z})\wedge \Omega(\bm{z})$ where the entries of the connection matrix $\Omega(\bm{z})$ are rational $1$-forms in $\bm{z}$, and
\begin{equation}
    \bm{f}(\bm{z}) = (F, (\partial_i F), \ldots, (\partial_{i_1} \cdots \partial_{i_r} F)_{i_1<i_2<\cdots<i_r}, \ldots, \partial_{i_1}\cdots \partial_{i_n}F)^T \,.
\end{equation}
More details can be found in~\cite{Matsumoto_2020, Goto:2022uo}.

\subsubsection*{Reflexion formula}

An important feature of the spectral gluing algorithm is the pole cancellation among diagonal terms when multiplying individual modes of vertex functions. This is made possible thanks to the following reflexion formula for Lauricella functions, which, to our knowledge, was not known before.

\begin{lemma}[Lauricella reflexion formula]\label{lemma-Fc}
    Let $m\in\mathbb{N}_{>0}$ be a positive integer and let $F_C^{(n)}$ be the Lauricella function of type C. Then the following limit holds:
    \begin{align}
        \lim\limits_{c\to 1-m} &\frac{1}{\Gamma(c)} \, F_C^{(n)}\left[\left.\begin{matrix}
            p-\tfrac{m}{2}, p+\tfrac12-\tfrac{m}{2} \\c, \alpha_2, \dots,\alpha_n
        \end{matrix}\right\vert z_1, \dots, z_n\right] \nonumber \\
        &= \frac{\mathrm{\Gamma}(2p+m)}{\mathrm{\Gamma}(1+m)\mathrm{\Gamma}(2p-m)}\left(\frac{z_1}{4}\right)^m\, F_C^{(n)}\left[\left.\begin{matrix}
            p+\tfrac{m}{2}, p+\tfrac12+\tfrac{m}{2} \\1+m, \alpha_2, \dots,\alpha_n
        \end{matrix}\right\vert z_1, \dots, z_n\right] \,.
    \end{align}
\end{lemma}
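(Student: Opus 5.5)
We will prove the identity directly at the level of the defining series~\eqref{eq: Lauricella Fc def}, working termwise. Write $a=p-\tfrac m2$, $b=p+\tfrac12-\tfrac m2$ and expand
\begin{equation*}
\frac{1}{\Gamma(c)}F_C^{(n)} = \sum_{\bm k\geq 0} \frac{(a)_{|\bm k|}(b)_{|\bm k|}}{\Gamma(c+k_1)\,(\alpha_2)_{k_2}\cdots(\alpha_n)_{k_n}} \frac{\bm z^{\bm k}}{\bm k!}\,,
\end{equation*}
using $\Gamma(c)(c)_{k_1}=\Gamma(c+k_1)$. Each term is now entire in $c$. As $c\to1-m$, the factor $1/\Gamma(c+k_1)$ vanishes for $k_1\le m-1$, so only terms with $k_1\geq m$ survive. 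In the surviving terms $1/\Gamma(1-m+k_1)$ is finite. Within the convergence domain $D_C$ the series converges locally uniformly in $c$, so the limit may be taken term by term.

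Shift $k_1=m+j$ with $j\geq0$ and set $\bm k'=(j,k_2,\dots,k_n)$, so that $|\bm k|=m+|\bm k'|$. The surviving sum becomes
\begin{equation*}
z_1^m\sum_{\bm k'\geq0}\frac{(a)_{m+|\bm k'|}(b)_{m+|\bm k'|}}{\Gamma(1+j)\,(m+j)!\,(\alpha_2)_{k_2}\cdots}\frac{\bm z^{\bm k'}}{j!\,k_2!\cdots}\,.
\end{equation*}
Next apply $(a)_{m+N}=(a)_m(a+m)_N$, and the same for $b$, with $N=|\bm k'|$. Also use $(m+j)!=m!\,(1+m)_j$ and $\Gamma(1+j)=j!$. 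One of these two $j!$ factors is the $\bm k'!$ of the new series, and the other combines as described. This reorganises the sum into
\begin{equation*}
\frac{(a)_m(b)_m}{m!}\,z_1^m\,F_C^{(n)}\left[\left.\begin{matrix}a+m,\,b+m\\1+m,\alpha_2,\dots,\alpha_n\end{matrix}\right\vert \bm z\right]\,,
\end{equation*}
with upper parameters $p+\tfrac m2$ and $p+\tfrac12+\tfrac m2$, exactly as claimed.

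Two checks remain. First, the bookkeeping of factorials: a mismatch there would spoil the claimed $1/\Gamma(1+m)$, so verify it on the one-variable case $n=1$ against the known ${}_2F_1$ limit formula $\lim_{c\to1-m}{}_2F_1/\Gamma(c)=(a)_m(b)_m z^m/m!\;{}_2F_1(a+m,b+m;1+m;z)$. Second, the prefactor: simplify $(a)_m(b)_m=(p-\tfrac m2)_m(p+\tfrac12-\tfrac m2)_m$ with the Legendre duplication formula. Pairing the factors gives $(p-\tfrac m2)_m(p-\tfrac m2+\tfrac12)_m=4^{-m}(2p-m)_{2m}=4^{-m}\Gamma(2p+m)/\Gamma(2p-m)$. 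This produces $(z_1/4)^m\,\Gamma(2p+m)/[\Gamma(1+m)\Gamma(2p-m)]$.

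The main subtlety, rather than an obstacle, is justifying the term-by-term limit and the truncation for $k_1<m$. Uniform convergence of the regularised series in $c$ on compacts inside $D_C$ handles this, and the identity then extends beyond $D_C$ by analytic continuation.
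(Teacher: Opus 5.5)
Your argument is the paper's proof step for step: you regularise via $\Gamma(c)(c)_{k_1}=\Gamma(c+k_1)$, drop the terms with $k_1<m$, shift $k_1=m+j$, split the Pochhammer symbols with $(a)_{m+N}=(a)_m(a+m)_N$, and finish with Legendre duplication. Your remarks on locally uniform convergence inside $D_C$ and on continuation outside it add rigour that the paper leaves implicit.

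One bookkeeping slip needs fixing, precisely at the step you left unchecked. Your intermediate display counts $k_1!$ twice: once as $(m+j)!$, and again as the $j!$ inside $\bm z^{\bm k'}/\bm k'!$. The correct surviving term has only $\Gamma(1+j)\,(m+j)!\,k_2!\cdots$ in the denominator. Then $\Gamma(1+j)=j!$ becomes the first factor of the new $\bm k'!$, and $(m+j)!=m!\,(1+m)_j$ supplies both the $1/m!$ and the lower parameter $1+m$. There is no second $j!$ to \emph{combine}; with the display as written you would be left with a stray $1/j!$ and would not obtain an $F_C^{(n)}$. Your final formula is nonetheless correct.
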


\begin{proof}
    By definition of the Lauricella function, we have
    \begin{align}
        \lim\limits_{c\to 1-m} \frac{1}{\mathrm{\Gamma}(c)} F_C^{(n)}\left[\left.\begin{matrix}
            p-\tfrac{m}{2}, p+\tfrac12-\tfrac{m}{2} \\c, \alpha_2, \dots,\alpha_n
        \end{matrix}\right\vert z_1, \dots, z_n\right] \nonumber \\
        = \sum\limits_{\bm{k}\geq 0} \frac{\bm{z}^{\bm{k}}}{\bm{k}!} \frac{(p-\tfrac{m}{2})_{|\bm{k}|}(p+\tfrac12-\tfrac{m}{2})_{|\bm{k}|}}{\mathrm{\Gamma}(1-m+k_1)(\alpha_2)_{k_2}\cdots(\alpha_n)_{k_n}} \,.
    \end{align}
    Here, $\bm{z}=(z_1,\dots,z_n)\in\mathbb{R}^n$, $\bm{k}=(k_1,\dots,k_n)\in\mathbb{N}_{\geq}^n$, $|\bm{k}|=k_1+\dots+k_n$ and $\bm{k}! = k_1!\cdots k_n!$. Since $m$ is a positive integer, we have $1/\Gamma(1-m-k_1) = 0$ for $k_1\in\{0,\dots,m-1\}$. Thus, the sum can actually be restricted to $k_1\geq m$ and we can perform a shift of the index $k_1\to k_1' = k_1-m$. First, we immediately recognise $\Gamma(1-m+k_1)=\Gamma(1+k_1')=k_1'!$. Also, $z_1^{k_1} = z_1^{m+k_1'}$ and $k_1! = (m+k_1')!=(1+m)_{k_1'} \Gamma(1+m)$. Lastly, we can use identities for the Pochhammer symbols to write
    \begin{equation}
        (p-\tfrac{m}{2})_{k_1+\cdots +k_n} = (p-\tfrac{m}{2})_m(p+\tfrac{m}{2})_{k_1'+k_2+\cdots +k_n} \,,
    \end{equation}
    and similarly for $(p+\tfrac12-\tfrac{m}{2})_{|\bm{k}|}$. Finally, we relabel the index $k_1'\to k_1$ which now runs from 0 to $\infty$ again and obtain 
    \begin{equation}
        \frac{z_1^m}{m!}(p-\tfrac{m}{2})_m(p+\tfrac12-\tfrac{m}{2})_m \sum\limits_{\bm{k}\geq 0} \frac{\bm{z}^{\bm{k}}}{\bm{k}!}\frac{(p+\tfrac{m}{2})_{|\bm{k}|}(p+\tfrac12+\tfrac{m}{2})_{|\bm{k}|}}{(1+m)_{k_1}(\alpha_2)_{k_2}\cdots(\alpha_n)_{k_n}}.
    \end{equation}
    Using the duplication formula for the $\Gamma$-function then provides the r.h.s. of the equation.
\end{proof}

\begin{corollary}
Let $m\in\mathbb{N}_>$ be a positive integer and $\bar F_C^{(n)}$ be the regularised Lauricella function of type-$C$ and order $n$. Then,
\begin{align}
        \bar{F}_C^{(n)}\left[\left.\begin{matrix}
            p-\tfrac{m}{2}, p+\tfrac12-\tfrac{m}{2} \\1-m, \alpha_2, \dots,\alpha_n
        \end{matrix}\right\vert \bm{z}\right] 
        = \Gamma\left[\begin{matrix}
            2p+m \\2p-m
        \end{matrix}\right]\left(\frac{z_1}{4}\right)^m\, \bar{F}_C^{(n)}\left[\left.\begin{matrix}
            p+\tfrac{m}{2}, p+\tfrac12+\tfrac{m}{2} \\1+m, \alpha_2, \dots,\alpha_n
        \end{matrix}\right\vert \bm{z}\right].
    \end{align}
\end{corollary}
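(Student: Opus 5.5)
The corollary is a direct consequence of Lemma~\ref{lemma-Fc} once we sort out the normalisation of the regularised function. Write $\bar F_C^{(n)} = F_C^{(n)}/\Gamma[c_1,\ldots,c_n]$. On the left-hand side the lower parameters are $(1-m,\alpha_2,\ldots,\alpha_n)$. We define it as the entire function in the parameters, which means the limit $c\to 1-m$ of
\[
\frac{1}{\Gamma(c)\Gamma[\alpha_2,\ldots,\alpha_n]}\,F_C^{(n)}\left[\left.\begin{matrix} p-\tfrac{m}{2},\, p+\tfrac12-\tfrac{m}{2}\\ c,\alpha_2,\ldots,\alpha_n\end{matrix}\right\vert \bm z\right].
\]
The factor $1/\Gamma[\alpha_2,\ldots,\alpha_n]$ is independent of $c$, so it factors out of the limit. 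Lemma~\ref{lemma-Fc} then evaluates the remaining limit.

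That gives
\[
\frac{1}{\Gamma[\alpha_2,\ldots,\alpha_n]}\,\frac{\Gamma(2p+m)}{\Gamma(1+m)\Gamma(2p-m)}\left(\frac{z_1}{4}\right)^m F_C^{(n)}\left[\left.\begin{matrix} p+\tfrac{m}{2},\,p+\tfrac12+\tfrac{m}{2}\\ 1+m,\alpha_2,\ldots,\alpha_n\end{matrix}\right\vert\bm z\right].
\]
The factor $1/\Gamma(1+m)$ combines with $1/\Gamma[\alpha_2,\ldots,\alpha_n]$ into exactly $1/\Gamma[1+m,\alpha_2,\ldots,\alpha_n]$. This is the normalisation of $\bar F_C^{(n)}$ with lower parameters $(1+m,\alpha_2,\ldots,\alpha_n)$. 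What is left is the prefactor $\Gamma\left[\begin{smallmatrix}2p+m\\2p-m\end{smallmatrix}\right](z_1/4)^m$, which is the claimed identity.

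The only point needing care is that the regularised function at $c=1-m$ really equals this limit. Termwise, each series coefficient contains $1/(\Gamma(c)(c)_{k_1}) = 1/\Gamma(c+k_1)$, which is entire in $c$. The regularised series therefore converges locally uniformly in $c$ on the simplex $D_C$, and its value at $c=1-m$ is the continuous limit.

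If $\alpha_j$ happens to be a non-positive integer, the same argument applies after regularising those entries too. Alternatively, one can first prove the identity for generic $\alpha_j$ and extend it by analytic continuation in $\alpha_j$, since both sides are entire in these parameters. I expect no real obstacle: the corollary is Lemma~\ref{lemma-Fc} rewritten in regularised form.
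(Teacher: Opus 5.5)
Your argument is correct and matches the paper's own (one-line) derivation: divide Lemma~\ref{lemma-Fc} by the $c$-independent factor $\Gamma[\alpha_2,\ldots,\alpha_n]$, and absorb the leftover $1/\Gamma(1+m)$ into the normalisation of $\bar F_C^{(n)}$ with lower parameter $1+m$. Your remark that the regularised coefficients contain $1/\Gamma(c+k_1)$, which is entire in $c$, usefully justifies a point the paper leaves implicit: the value of the regularised function at $c=1-m$ really equals the limit computed in the lemma.
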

\noindent This result follows immediately from the previous lemma and is the master formula to make the pole cancellation within the spectral gluing algorithm manifest.

\section{Details on Spectral Gluing Algorithm}
\label{sec: details on spectral gluing algorithm}

In this appendix, we provide additional details on the spectral gluing algorithm for the two- and three-site chains. In particular, we explicitly write all the (partially) factorised contributions, as well as the on-shell pieces. 

\subsection{Two-site chain}
\label{app: two-site chain}

This well-studied graph has been the subject of several works, and various closed-form solution are available in the literature, see e.g.~\cite{Arkani-Hamed:2018kmz, Qin:2022fbv}. The full dimensionless graph receives $2^2=4$ contributions
\begin{equation}
    \widehat{\G}(X_1, X_2, Y) = \sum_{\aa, \bb=\pm} (i\aa)(i\bb) \, \widehat{\I}_{\aa\bb}(X_1, X_2, Y) \,,
\end{equation}
where only half of the dimensionless master integrals $\widehat{\I}_{\aa\bb}$ need to be evaluated explicitly, since the remaining ones follow from complex conjugation: $\widehat{\I}_{--} = \widehat{\I}_{++}^*$ and $\widehat{\I}_{+-} = \widehat{\I}_{-+}^*$. These integrals depend on the two energy ratios
\begin{equation}
    u_1 \equiv \frac{Y}{X_1} \,, \quad u_2 \equiv \frac{Y}{X_2} \,,
\end{equation}
which we consider positive but not necessarily bounded by unity. This choice serves a dual purpose: it demonstrates that our solution naturally accommodates reduced sound speeds, and provides a first example of internal momentum ratios that are not subject to any generalised triangle inequality. 

\paragraph{Factorised contribution.} The (fully) factorised contribution is simply given by the product of vertex functions:
\begin{equation}
        \widehat{\I}_{-+}(u_1, u_2) = 
        \raisebox{-5pt}{
        \begin{tikzpicture}[line width=1. pt, scale=2]
            \draw[black] (0, 0) -- (1, 0);
            \draw[white, line width=2pt] 
  ($(0,0)!1/3!(1,0)$) -- ($(0,0)!2/3!(1,0)$);
            \node at ($(0,0)!1/2!(1,0)$) {$\times$};

            \draw[fill=white] (0, 0) circle (.05cm);
            \draw[fill=black] (1, 0) circle (.05cm);
        \end{tikzpicture} 
        } 
        = \V_{-, \mu}^{(1)}(u_1; p_1) \, \V_{+, \mu}^{(1)}(u_2; p_2) \,.
\end{equation}

\paragraph{Nested contribution.} The (fully) nested contribution is given by the following spectral integral
\begin{equation}
    \begin{aligned}
        \widehat{\I}_{++}(u_1, u_2) &= e^{-\frac{i\pi}{2}} \int\limits_{-\infty}^{+\infty} [\d\nu]\rho_\nu(\mu) \left(
        \raisebox{-5pt}{
        \begin{tikzpicture}[line width=1. pt, scale=2]
            \draw[black] (0, 0) -- (1, 0);
            \draw[white, line width=2pt] 
  ($(0,0)!1/3!(1,0)$) -- ($(0,0)!2/3!(1,0)$);
            \node at ($(0,0)!1/2!(1,0)$) {$\times$};

            \draw[fill=black] (0, 0) circle (.05cm) node[above right=0.1mm] {$\nu$};
            \draw[fill=black] (1, 0) circle (.05cm) node[above left=0.1mm] {$\nu$};
        \end{tikzpicture} 
        }\right) \\
        &= e^{-\frac{i\pi}{2}} \int\limits_{-\infty}^{+\infty} [\d\nu]\rho_\nu(\mu) \, \V_{+, \nu}^{(1)}(u_1; p_1) \, \V_{+, \nu}^{(1)}(u_2; p_2) \,,
    \end{aligned}
\end{equation}
after factorising the integrand and setting the internal leg off shell $\mu \to \nu$. Recall that the (tree-level) spectral density is defined as $\rho_\nu(\mu) \equiv \frac{1}{(\nu^2-\mu^2)_{i\epsilon}}$. Performing the spectral integration reduces to collecting on-shell and analytic poles of the integrand. Without loss of generality, we restrict to the kinematic region $u_1 < u_2$; the complementary region follows immediately by exchanging the two labels.

\paragraph{On-shell poles.} As explained in Sec.~\ref{subsec: on-shell poles}, collecting on-shell poles amounts to projecting vertex functions onto positive- and negative-frequency modes with appropriate Boltzmann weights. Explicitly, we obtain
\begin{equation}
    \begin{aligned}
        &\widehat{\I}_{++}^{(\delta_{12})}(u_1, u_2) = e^{-\frac{i\pi}{2}} \int\limits_{-\infty}^{+\infty} [\d\nu]\rho_\nu(\mu) \, \C_+^{(1)}(p_1)\C_+^{(1)}(p_2)\\
        &\times\left\{\F^{(1)}_{+i\nu}(u_1; \tilde{p}_1)\F^{(1)}_{+i\nu}(u_2; \tilde{p}_2) + \F^{(1)}_{+i\nu}(u_1; \tilde{p}_1)\F^{(1)}_{-i\nu}(u_2; \tilde{p}_2) + (\nu \leftrightarrow -\nu)\right\} \,,
    \end{aligned}
\end{equation}
where the notation $\widehat{\I}_{++}^{(\delta_{12})}$ means that we only consider the on-shell ($\delta$) poles in the spectral integral gluing the first two ($12$) Schwinger-Keldysch indices, and where we have defined the modified twist $\tilde{p}_{1,2} \equiv p_{1,2}-d/2$. This is a general feature of the spectral gluing algorithm. In the kinematic region $u_1<u_2$, the second term is dominated by the positive-frequency mode $\F^{(1)}_{+i\nu}$ and the spectral integration contour is therefore closed to the lower-half complex $\nu$-plane. Projecting on the on-shell poles yields
\begin{equation}
    \begin{aligned}
        &\widehat{\I}_{++}^{(\delta_{12})}(u_1, u_2) = -\C_+^{(1)}(p_1)\C_+^{(1)}(p_2) \\
        &\times \left\{ \textcolor{pyblue}{e^{+\pi\mu}} \left[\F^{(1)}_{+i\mu}(u_1; \tilde{p}_1)\F^{(1)}_{+i\mu}(u_2; \tilde{p}_2) + \F^{(1)}_{+i\mu}(u_1; \tilde{p}_1)\F^{(1)}_{-i\mu}(u_2; \tilde{p}_2)\right] \right.\\
        &\left.\,\,\,\,\,+\textcolor{pyblue}{e^{-\pi\mu}} \left[\F^{(1)}_{-i\mu}(u_1; \tilde{p}_1)\F^{(1)}_{-i\mu}(u_2; \tilde{p}_2) + \F^{(1)}_{-i\mu}(u_1; \tilde{p}_1)\F^{(1)}_{+i\mu}(u_2; \tilde{p}_2)\right]\right\} \\
        &= - \C_+^{(1)}(p_1) \left[\textcolor{pyblue}{e^{+\pi\mu}}\F^{(1)}_{+i\mu}(u_1; \tilde{p}_1) + \textcolor{pyblue}{e^{-\pi\mu}}\F^{(1)}_{-i\mu}(u_1; \tilde{p}_1)\right] \V_{+, \mu}^{(1)}(u_2; p_2) \,.
    \end{aligned}
\end{equation}
We observe that the right vertex has been fully reconstructed, and that the left vertex acquired Boltzmann weights, coloured in \textcolor{pyblue}{blue}, on the positive- and negative-frequency modes. 

\paragraph{Analytic poles.} Collecting the analytic poles is more delicate and we will treat this case in full detail to illustrate the general pole cancellation mechanism discussed in Sec.~\ref{subsec: gluing adjacent vertices}. This contribution reads
\begin{equation}
    \begin{aligned}
        \widehat{\I}_{++}^{(\P_{12})}(u_1, u_2) = e^{-\frac{i\pi}{2}} &\int\limits_{-\infty}^{+\infty} [\d\nu]\rho_\nu(\mu) \C_+^{(1)}(p_1) \C_+^{(1)}(p_2) \\
        &\times ( \F^{(1)}_+\F^{(1)}_+ + \F^{(1)}_+\F^{(1)}_- + \F^{(1)}_-\F^{(1)}_+ + \F^{(1)}_-\F^{(1)}_-) \,,
    \end{aligned}
\end{equation}
where the integrands, using the short notation $\F^{(1)}_\pm$ to mean individual-mode vertex functions, read
\begin{equation}
    \begin{aligned}
        \F^{(1)}_\pm \F^{(1)}_\pm &= \Gamma[\textcolor{pyblue}{\mp i\nu}, \textcolor{pyred}{\mp i\nu}, \tilde{p}_1\pm i\nu, \tilde{p}_2\pm i\nu, \textcolor{pyred}{1\pm i\nu}, 1\pm i\nu] \\
        &\times\left(\frac{u_1 u_2}{2}\right)^{\pm i\nu} \bar{F}^{(1)}_C\left[\left.\begin{matrix} \frac{\tilde{p}_1\pm i\nu}{2},\,  \frac{\tilde{p}_1+1\pm i\nu}{2}\\ 1\pm i\nu \end{matrix}\right\vert u_1^2\right] \bar{F}^{(1)}_C\left[\left.\begin{matrix} \frac{\tilde{p}_2\pm i\nu}{2},\,  \frac{\tilde{p}_2+1\pm i\nu}{2}\\ 1\pm i\nu \end{matrix}\right\vert u_2^2\right] \,, \\
        \F^{(1)}_\pm \F^{(1)}_\mp &= \Gamma[\textcolor{pyred}{\mp i\nu}, \textcolor{pyred}{\pm i\nu}, \tilde{p}_1\pm i\nu, \textcolor{pyblue}{\tilde{p}_2\mp i\nu}, 1\pm i\nu, \textcolor{pyblue}{1\mp i\nu}] \\
        &\times\left(\frac{u_1}{u_2}\right)^{\pm i\nu} \bar{F}^{(1)}_C\left[\left.\begin{matrix} \frac{\tilde{p}_1\pm i\nu}{2},\,  \frac{\tilde{p}_1+1\pm i\nu}{2}\\ 1\pm i\nu \end{matrix}\right\vert u_1^2\right] \bar{F}^{(1)}_C\left[\left.\begin{matrix} \frac{\tilde{p}_2\mp i\nu}{2},\,  \frac{\tilde{p}_2+1\mp i\nu}{2}\\ 1\mp i\nu \end{matrix}\right\vert u_2^2\right] \,.
    \end{aligned}   
\end{equation}
We use the notation $\widehat{\I}_{++}^{(\P_{12})}$ to mean that we only consider the analytic ($\P$) poles in the spectral integral gluing the first two ($12$) Schwinger-Keldysch indices. We henceforth focus exclusively on the contributions $\F^{(1)}_+ \F^{(1)}_+$ and $\F^{(1)}_+ \F^{(1)}_-$, as the remaining two are related to these by shadow symmetry and yield identical contributions. The regularised Lauricella $\bar{F}_C^{(1)} \equiv {}_2\bar{F}_1$ is analytic in the complex $\nu$-plane, so that all the analytic poles are given by the $\Gamma$-functions. In the regime where $u_1 u_2<4$, we close the contour in the lower-half plane. We have coloured in \textcolor{pyblue}{blue} terms that contain analytic poles. Other contributions in \textcolor{pyred}{red} cancel against the spectral density $\N_\nu = 1/\Gamma[\pm i \nu]$. Collecting the infinite tower of poles located at $\nu=-im$, with $m\in \mathbb{N}$, in the $\F^{(1)}_+ \F^{(1)}_+$ term yields the following series:
\begin{equation}
\label{eq: 2site chain series 1}
    \begin{aligned}
        2\pi &e^{-\frac{i\pi}{2}} \C_+^{(1)}(p_1) \C_+^{(1)}(p_2) \sum_{m=0}^\infty \frac{m^2}{m^2 + \mu^2} \frac{(-1)^m}{m!} \Gamma[\tilde{p}_1+m, \tilde{p}_2+m, m] \\
        &\times \left(\frac{u_1 u_2}{4}\right)^m \bar{F}^{(1)}_C\left[\left.\begin{matrix} \frac{\tilde{p}_1+m}{2},\,  \frac{\tilde{p}_1+1+m}{2}\\ 1+m \end{matrix}\right\vert u_1^2\right] \bar{F}^{(1)}_C\left[\left.\begin{matrix} \frac{\tilde{p}_2+m}{2},\,  \frac{\tilde{p}_2+1+m}{2}\\ 1+m \end{matrix}\right\vert u_2^2\right] \,.
    \end{aligned}
\end{equation}
The infinite towers of poles located at $\nu=-i(\tilde{p}_2+m)$ and $\nu=-im$, with $m\in \mathbb{N}$, in the $\F^{(1)}_+ \F^{(1)}_-$ term give the following series:
\begin{equation}
\label{eq: 2site chain series 2}
    \begin{aligned}
        -2\pi &e^{-\frac{i\pi}{2}} \C_+^{(1)}(p_1) \C_+^{(1)}(p_2) \sum_{m=0}^\infty \frac{m^2}{m^2 + \mu^2} \frac{(-1)^m}{m!}\Gamma[m, \tilde{p}_1+m, \tilde{p}_2-m]  \\
        &\times \left(\frac{u_1}{u_2}\right)^m \bar{F}^{(1)}_C\left[\left.\begin{matrix} \frac{\tilde{p}_1+m}{2},\,  \frac{\tilde{p}_1+1+m}{2}\\ 1+m \end{matrix}\right\vert u_1^2\right] \bar{F}^{(1)}_C\left[\left.\begin{matrix} \frac{\tilde{p}_2-m}{2},\,  \frac{\tilde{p}_2+1-m}{2}\\ 1-m \end{matrix}\right\vert u_2^2\right] \,,
    \end{aligned}
\end{equation}
and
\begin{equation}
    \begin{aligned}
        -2\pi &e^{-\frac{i\pi}{2}} \C_+^{(1)}(p_1) \C_+^{(1)}(p_2) \sum_{m=0}^\infty \frac{1}{(\tilde{p}_2+m)^2 + \mu^2} \frac{(-1)^m}{m!}\Gamma[\tilde{p}_1+\tilde{p}_2+m, 1+\tilde{p}_2+m, 1-\tilde{p}_2-m]  \\
        &\times \left(\frac{u_1}{u_2}\right)^{\tilde{p}_2+m} \bar{F}^{(1)}_C\left[\left.\begin{matrix} \frac{\tilde{p}_1+\tilde{p}_2+m}{2},\,  \frac{\tilde{p}_1+\tilde{p}_2+1+m}{2}\\ 1+\tilde{p}_2+m \end{matrix}\right\vert u_1^2\right] \bar{F}^{(1)}_C\left[\left.\begin{matrix} \frac{-m}{2},\,  \frac{1-m}{2}\\ 1-\tilde{p}_2-m \end{matrix}\right\vert u_2^2\right] \,.
    \end{aligned}
\end{equation}
Both series~\eqref{eq: 2site chain series 1} and~\eqref{eq: 2site chain series 2} cancel against each other due to the identity
\begin{equation}
    \bar{F}^{(1)}_C\left[\left.\begin{matrix} \frac{p-m}{2},\,  \frac{p+1-m}{2}\\ 1-m \end{matrix}\right\vert z\right] = \Gamma\left[\begin{matrix} p+m \\ p-m \end{matrix}\right] \left(\frac{z}{4}\right)^m \bar{F}^{(1)}_C\left[\left.\begin{matrix} \frac{p+m}{2},\,  \frac{p+1+m}{2}\\ 1+m \end{matrix}\right\vert z\right] \,,
\end{equation}
where we recall that $\bar{F}_C^{(1)}$ is nothing but the (regularised) ${}_2\bar{F}_1$ hypergeometric function. This identity is the direct application of the lemma~\ref{lemma-Fc} for $n=1$. In the end, only a single tower of poles, coming from the cross term $\F^{(1)}_+ \F^{(1)}_-$, contributes to the final solution. 

\vskip 4pt
One might still worry that in the regime where $4<u_1u_2$, the contribution $\F^{(1)}_+ \F^{(1)}_+$ naively requires closing the contour in the opposite direction, picking other analytic poles which do not cancel those from $\F^{(1)}_+ \F^{(1)}_-$. However, assuming $4<u_1u_2$ necessarily implies that one of the momentum ratios $u_1$ or $u_2$ (or both) is greater than unity. Without loss of generality, considering $u_1<4$ requires analytically continuing the $\bar{F}_C^{(1)}$ function with argument $u_2$ outside the unit disc using the identity:
\begin{equation}
    \begin{aligned}
        \tfrac{1}{\pi} \bar{F}^{(1)}_C\left[\left.\begin{matrix} \frac{\tilde{p}+i\nu}{2},\,  \frac{\tilde{p}+1+i\nu}{2}\\ 1+i\nu \end{matrix}\right\vert z\right] &= \frac{(-z)^{\frac{-\tilde{p}-i\nu}{2}}}{\Gamma[\tfrac{\tilde{p}+1+i\nu}{2}, \tfrac{2-\tilde{p}+i\nu}{2}]} \bar{F}^{(1)}_C\left[\left.\begin{matrix} \frac{\tilde{p}+i\nu}{2},\,  \frac{\tilde{p}-i\nu}{2}\\ 1/2 \end{matrix}\right\vert \frac{1}{z}\right] \\
        &- \frac{(-z)^{\frac{-\tilde{p}-1-i\nu}{2}}}{\Gamma[\tfrac{\tilde{p}+i\nu}{2}, \tfrac{1-\tilde{p}+i\nu}{2}]} \bar{F}^{(1)}_C\left[\left.\begin{matrix} \frac{\tilde{p}+1+i\nu}{2},\,  \frac{\tilde{p}+1-i\nu}{2}\\ 3/2 \end{matrix}\right\vert \frac{1}{z}\right] \,,
    \end{aligned}
\end{equation}
which maps $\bar{F}^{(1)}_C$ at infinity ($|z|\to\infty$) to $\bar{F}^{(1)}_C$ around the origin ($|z|\to 0$). The overall exponential behaviour at infinity in the complex $\nu$-plane is therefore dictated by $(\tfrac{u_1 u_2}{4})^{+i\nu} u_2^{-i\nu} = (\tfrac{u_1}{4})^{+i\nu}$, and we again close the contour in the lower half-plane. The symmetric case $u_2<4$ follows analogously. In the marginal situation where both $\bar{F}^{(1)}_C$ functions in $u_1$ and $u_2$ require analytic continuation outside the unit circle, the arc at infinity no longer decays exponentially but rather as a power law, as can be verified explicitly using Stirling's formula. We expect the resulting series solution to exhibit slow convergence in this region of kinematic space. Eventually, as explained in Sec.~\ref{subsec: gluing adjacent vertices} in the general case, the boundary $u_1 u_2=4$ in the kinematic region is spurious.

\paragraph{Full result.} The full closed-form solution for the two-site chain graph is found by summing all contributions. For $u_1<u_2$, the full result for the dimensionless graph reads
\begin{equation}
\label{eq: full 2-site chain result}
    \begin{aligned}
        \widehat{\G}(u_1, u_2) &= \sum_{\aa, \bb=\pm} (i\aa)(i\bb) \, \widehat{\I}_{\aa\bb}(u_1, u_2) \\
        &= \V_{-, \mu}^{(1)}(u_1; p_1) \, \V_{+, \mu}^{(1)}(u_2; p_2) + \text{\,c.c.} \\
        &+\C_+^{(1)}(p_1) \left[e^{+\pi\mu}\F^{(1)}_{+i\mu}(u_1; \tilde{p}_1) + e^{-\pi\mu}\F^{(1)}_{-i\mu}(u_1; \tilde{p}_1)\right] \V_{+, \mu}^{(1)}(u_2; p_2) + \text{\,c.c.} \\
        &+4\pi e^{-\frac{i\pi}{2}} \C_+^{(1)}(p_1) \C_+^{(1)}(p_2) \sum_{m=0}^\infty \frac{\Gamma[\tilde{p}_1+\tilde{p}_2+m]}{(\tilde{p}_2+m)^2 + \mu^2} \frac{(-1)^m}{m!}  \\
        &\times \left(\frac{u_1}{u_2}\right)^{\tilde{p}_2+m} F^{(1)}_C\left[\left.\begin{matrix} \frac{\tilde{p}_1+\tilde{p}_2+m}{2},\,  \frac{\tilde{p}_1+\tilde{p}_2+1+m}{2}\\ 1+\tilde{p}_2+m \end{matrix}\right\vert u_1^2\right] F^{(1)}_C\left[\left.\begin{matrix} \frac{-m}{2},\,  \frac{1-m}{2}\\ 1-\tilde{p}_2-m \end{matrix}\right\vert u_2^2\right] + \text{\,c.c.} \,,
    \end{aligned}
\end{equation}
where $\tilde{p}_{1, 2} \equiv p_{1, 2}-d/2$. The vertex functions $\V_{\pm, \mu}^{(1)}$ are defined in~\eqref{eq: V1 expression}, the individual frequency modes $\F_{\pm i\mu}^{(1)}$ are given in~\eqref{eq: F1 function}, and the prefactors $\C_{\pm}^{(1)}$ including a phase are given in~\eqref{eq: overall C1+}. Notice that we have added a factor $2$ in the series to account for the shadow symmetric term, and have removed some $\Gamma$-functions at the cost of replacing the (regularised) $\bar{F}_C^{(1)}$ function by $F_C^{(1)} \equiv {}_2F_1$. For $p_1=p_2=2$ and $d=3$, we recover the series solution derived in~\cite{Werth:2024mjg}. It is straightforward to check that the full solution satisfies the largest-time equation~\cite{Werth:2024mjg}, which provides a non-trivial consistency check.

\begin{figure}[h!]
    \centering
    \begin{subfigure}{.5\textwidth}
        \centering
        \includegraphics[width=1\linewidth]{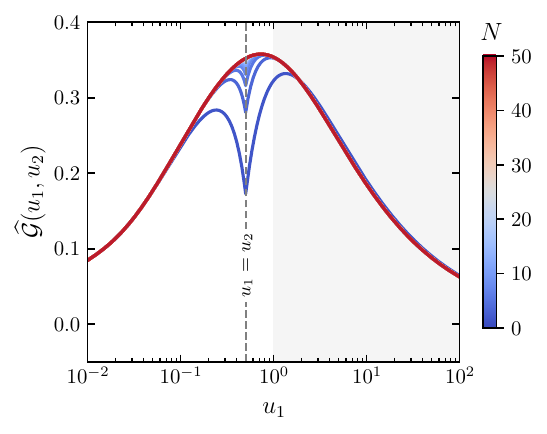}
    \end{subfigure}%
    \begin{subfigure}{.5\textwidth}
        \centering
        \includegraphics[width=1\linewidth]{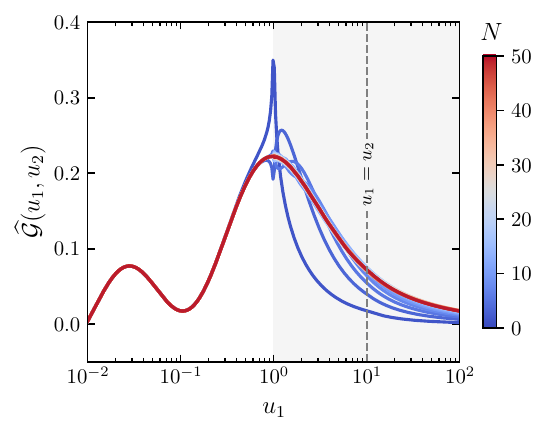}
    \end{subfigure}
   \caption{The full dimensionless two-site chain graph $\widehat{\G}(u_1, u_2)$ as function of the dimensionless kinematic variable $u_1$ with fixed $u_2=0.5$ ({\it left panel}) and $u_2=10$ ({\it right panel}), for $\mu=2$, $p_1=p_2=2$ and $d=3$. We include terms in the series from $m=1, \ldots, N$. The shaded gray region corresponds to kinematic configurations outside the unit disk in the complex $u_1$-plane.}
  \label{fig: 2-site chain}
\end{figure}

\vskip 4pt
Quite remarkably, the second hypergeometric function in $u_2$ in the series solution is actually just a polynomial. Indeed, since $m$ is a non-negative integer, one of the two upper parameters of the hypergeometric function always reduces to a non-positive integer; specifically $-m/2$ when $m$ is even, and $(1-m)/2$ when $m$ is odd. The Pochhammer symbol in the numerator therefore vanishes at finite order, terminating the series and reducing the ${}_2F_1$ to a polynomial of degree $\lfloor m/2 \rfloor$ in $u_2$, independently of the value of $\tilde{p}_2$. In the end, all individual contributions have {\it uniform transcendental weight}. In Fig.~\ref{fig: 2-site chain}, we show the dimensionless graph $\widehat{\G}(u_1, u_2)$ as function of the energy ratio $u_1$, fixing $u_2=0.5$ and $u_2=10$, from the soft limit $u_1\ll 1$ and extending outside the unit disk $u_1\gg 1$. We observe that the solution converges well and stabilises when the series contains $N \sim \O(50)$ terms. As expected, larger internal masses require keeping more terms in the series to achieve the same level of accuracy. Outside the unit disk, corresponding to the shaded region, the solution can be evaluated via the standard analytic continuation of the hypergeometric ${}_2F_1$ function. We notice that when $u_2<1$, the transition $u_1=u_2$ appears discontinuous, whereas when $1<u_2$, the solution is entirely smooth and only the transition $u_1=1$ exhibits a visible discontinuity when too few terms are retained in the series. The solution~\eqref{eq: full 2-site chain result} converges in the entire physical region and also beyond the unit disk: $0\leq u_1$ and $0\leq u_2$. 

\subsection{Three-site chain}
\label{app: three-site chain}

Here, we outline the application of the gluing algorithm for the analytic part of the diagram and provide full analytic solutions also for the on-shell contributions. In contrast to the main text, we will show how to obtain solutions from iterated integration, which is equivalent to solving the linear system for the off-shell ansatz.\footnote{We also refer to~\cite{Belrhali:2026act} for this computation.} Since there are three vertices, the full dimensionless graph contains $2^3=8$ contributions
\begin{equation}
    \widehat{\G}(X_1,X_2,X_3,Y_1,Y_2) = \sum\limits_{\aa,\bb,\cc=\pm} (i\aa)(i\bb)(i\cc)\widehat{\I}_{\aa\bb\cc}(X_1,X_2,X_3,Y_1,Y_2)\,,
\end{equation}
where again half of the dimensionless master integrals $\widehat{\I}_{\aa\bb\cc}$ are related via complex conjugation. The master integrals depend only on the kinematic ratios
\begin{equation}
    u_{12} \equiv \frac{Y_{12}}{X_1} \,, \quad u_{21} \equiv \frac{Y_{12}}{X_2} \,, \quad u_{23} \equiv \frac{Y_{23}}{X_2}\,, \quad u_{32} \equiv \frac{Y_{23}}{X_3}\,,
\end{equation}
which are again positive but not necessarily bound by unity. In fact, even without reduced sound speeds there is no kinematic bound on the variables $u_{21}$ and $u_{23}$ due to the arbitrariness of $X_2$. 

\vskip 4pt
\paragraph{Fully nested contribution.} Since there are two edges this time, there is not only a fully factorised and a fully nested, but also partially nested contribution to the graph. Since we are mostly interested in applications of our gluing algorithm, we start with the fully nested contribution $\widehat{\I}_{+++}$, which is given by the double spectral integral
\begin{equation}
    \begin{aligned}
        \widehat{\I}_{+++}(\{u_i\},\{v_i\}) = e^{-i\pi} \int\limits_{-\infty}^{+\infty} [\d\nu_1][\d\nu_2] \rho_{\nu_1}(\mu_1)\rho_{\nu_2}(\mu_2) \left(
        \raisebox{-5pt}{
        \begin{tikzpicture}[line width=1. pt, scale=2]
            \draw[black] (0, 0) -- (2, 0);
            \draw[white, line width=2pt] 
  ($(0,0)!1/3!(1,0)$) -- ($(0,0)!2/3!(1,0)$);
            \node at ($(0,0)!1/2!(1,0)$) {$\times$};
            \draw[white, line width=2pt] 
  ($(1,0)!1/3!(2,0)$) -- ($(1,0)!2/3!(2,0)$);
            \node at ($(1,0)!1/2!(2,0)$) {$\times$};

            \draw[fill=black] (0, 0) circle (.05cm) node[above right=0.1mm] {$\nu_1$};
            \draw[fill=black] (1, 0) circle (.05cm) node[above left=0.1mm] {$\nu_1$};
            \draw[fill=black] (1, 0) circle (.05cm) node[above right=0.1mm] {$\nu_2$};
            \draw[fill=black] (2, 0) circle (.05cm) node[above left=0.1mm] {$\nu_2$};
        \end{tikzpicture} 
        }\right) \\
        = e^{-i\pi} \int\limits_{-\infty}^{+\infty} \{[\d\nu_i]\rho_{\nu_i}(\mu_i)\} \, \V_{+, \nu_1}^{(1)}(u_{12}; p_1) \, \V_{+,\nu_1,\nu_2}^{(2)}(u_{21},u_{23};p_2)\,\V_{+, \nu_2}^{(1)}(u_{32}; p_3) \,.
    \end{aligned}
\end{equation}
We will discuss the fully analytic piece $\widehat{\I}_{+++}^{\P}$ first. Following the gluing algorithm, we start by separating into positive and negative frequency modes. Our spectral integral can be decomposed into
\begin{align}
    \widehat{\I}_{+++}(\{u_i\}) = &e^{-i\pi} \C_+^{(1)}\C_+^{(2)}\C_+^{(1)}\int\limits_{-\infty}^{+\infty} [\d\nu_1][\d\nu_2] \rho_{\nu_1}(\mu_1)\rho_{\nu_2}(\mu_2) \nonumber\\
    &\times\left(\F_+^{(1)}+ \F_-^{(1)}\right) \left(\F_{++}^{(2)}+\F_{+-}^{(2)}+\F_{-+}^{(2)}+ \F_{--}^{(2)}\right)\left(\F_+^{(1)}+ \F_-^{(1)}\right)\,.
\end{align}
First, we evaluate the spectral integral over $\nu_1$ where we can look at the mixed term $\F_+^{(1)}\F_{-\pm}^{(2)}$ so that the relevant vertex pole factors are given by $\Gamma[p_1-\tfrac{d}{2}+i\nu_1]$ and $\Gamma[p_2-i\nu_1\pm i\nu_2]$ and the kinematic dependence is dominated by the term $(u_{12}/u_{21})^{i\nu_1}$. If we pick up the poles from the second vertex factor (corresponding to the kinematic region $u_{12}<u_{21}$), the result will depend on $(u_{12}/u_{21})^{\pm i\nu_2}$. When we then consider the second integral and again look at a mixed term $\propto \F_{\pm,+}^{(2)}\F_-^{(1)}$, the integrand will depend on the kinematic ratio $(u_{12}u_{23}/u_{21}u_{32})^{i\nu_2}$. Therefore, the four kinematic regions have a slightly more subtle dependence on the kinematic variables and we cannot only compare the kinematic ratios assigned to the respective edge. 

\vskip 4pt
Here, we will focus on the kinematic region $u_{12}>u_{21}$ and $u_{32}>u_{23}$, and provide results for the remaining regions at the end of this section. In this region, we close the contour for the first integral in the upper half plane, picking up the poles from $\Gamma[p_1-\tfrac{d}{2}+i\nu_1]$. For the second integral we look at the contribution from $\F_{\cdots +}^{(2)}\F_{-}^{(1)}$ which depends on $(u_{23}/u_{32})^{i\nu_2}$ and has vertex pole factors $\Gamma[p_{12}-\tfrac{d}{2}+m+i\nu_2]$ and $\Gamma[p_3-\tfrac{d}{2}-i\nu_2]$. This time, we have to close in the lower half plane and pick up the poles from $\Gamma[p_3-\tfrac{d}{2}-i\nu_2]$. In total, we find:
\begin{align}
\label{eq: 3site chain 1}
    \widehat{\I}_{+++}^{\P} = -&16\pi^2\C_+^{(1)}\C_+^{(2)}\C_+^{(1)} \sum\limits_{m_1,m_2=0}^\infty \frac{(-1)^{m_{12}}}{m_1!m_2!}\frac{\Gamma[\tilde p_{123}+m_{12}]}{\left[(\tilde p_1+m_1)^2+\mu_{12}^2\right]\left[(\tilde p_{3}+m_{2})^2+\mu_{23}^2\right]}  \nonumber\\
    &\times \left(\frac{u_{21}}{u_{12}}\right)^{m_1+\tilde p_1}\left(\frac{u_{23}}{u_{32}}\right)^{m_{2}+\tilde p_{3}}  F^{(1)}_C\left[\left.\begin{matrix} \frac{-m_1}{2},\,  \frac{1-m_1}{2}\\ 1-\tilde{p}_1-m_1 \end{matrix}\right\vert u_{12}^2\right] \nonumber\\
    &\times F_C^{(2)}\left[\left.\begin{matrix} \frac{\tilde p_{123}+m_{12}}{2},\,  \frac{\tilde p_{123}+1+m_{12}}{2}\\ 1+\tilde{p}_1+m_1, 1+\tilde p_{3}+m_{2} \end{matrix}\right\vert u_{21}^2, u_{23}^2\right]
    F^{(1)}_C\left[\left.\begin{matrix} \frac{-m_2}{2},\,  \frac{1-m_2}{2}\\ 1-\tilde{p}_{3}-m_{2} \end{matrix}\right\vert u_{32}^2\right]\,.
\end{align}
In addition, there are also terms stemming from evaluating the spectral integrals on-shell. If both are evaluated on-shell, we find
\begin{align}
    \widehat{\I}_{+++}^{\delta_{12}\delta_{23}} = \C_+^{(1)}\left(e^{+\pi\mu_1}\F_{+i\mu_1}^{(1)} + e^{-\pi\mu_1}\F_{-i\mu_1}^{(1)}\right)\,\V_{+,\mu_1\mu_2}^{(2)}\,\C_+^{(1)}\left(e^{+\pi\mu_2}\F_{+i\mu_2}^{(1)} + e^{-\pi\mu_2}\F_{-i\mu_2}^{(1)}\right)\,,
\end{align}
where we suppressed the arguments of the coefficients and vertex functions. The remaining two terms are obtained by setting only one of the integrands on-shell and picking up the analytic poles for the second. They are given by
\begin{align}
    \widehat{\I}_{+++}^{\delta_{12}\P_{23}} = 4\pi i\, \C_+^{(1)}\C_+^{(2)} \C_+^{(1)} \left(e^{+\pi\mu_1}\F_{+i\mu_1}^{(1)} + e^{-\pi\mu_1}\F_{-i\mu_1}^{(1)}\right) \sum\limits_{\alpha=\pm i \mu_1} \sum\limits_{m=0}^\infty  \nonumber \\
    \times \frac{(-1)^m}{m!} \frac{\Gamma[\tilde p_{23}+m+\alpha]}{(\tilde p_3+m)^2+\mu_{23}^2}  \left(\frac{u_{23}}{u_{32}}\right)^{m+\tilde p_3} F^{(1)}_C\left[\left.\begin{matrix} \frac{-m}{2},\,  \frac{1-m}{2}\\ 1-\tilde{p}_3-m \end{matrix}\right\vert u_{32}^2\right]  \nonumber 
    \\
    \times F_C^{(2)}\left[\left.\begin{matrix} \frac{\tilde p_{23}+m+\alpha}{2},\,  \frac{\tilde p_{23}+1+m+\alpha}{2}\\ 1+\alpha, 1+\tilde p_3+m \end{matrix}\right\vert u_{21}^2, u_{23}^2\right]\,,
\end{align}
as well as 
\begin{align}
    \widehat{\I}_{+++}^{\P_{12}\delta_{23}} = 4\pi i\, \C_+^{(1)}\C_+^{(2)} \C_+^{(1)} \left(e^{+\pi\mu_2}\F_{+i\mu_2}^{(1)} + e^{-\pi\mu_2}\F_{-i\mu_2}^{(1)}\right) \sum\limits_{\alpha=\pm i \mu_2} \sum\limits_{m=0}^\infty  \nonumber \\
    \times \frac{(-1)^m}{m!} \frac{\Gamma[\tilde p_{12}+m+\alpha]}{(\tilde p_1+m)^2+\mu_{12}^2}  \left(\frac{u_{21}}{u_{12}}\right)^{m+\tilde p_1} F^{(1)}_C\left[\left.\begin{matrix} \frac{-m}{2},\,  \frac{1-m}{2}\\ 1-\tilde{p}_1-m \end{matrix}\right\vert u_{12}^2\right]  \nonumber 
    \\
    \times F_C^{(2)}\left[\left.\begin{matrix} \frac{\tilde p_{12}+m+\alpha}{2},\,  \frac{\tilde p_{12}+1+m+\alpha}{2}\\  1+\tilde p_1+m, 1+\alpha \end{matrix}\right\vert u_{21}^2, u_{23}^2\right]\,.
\end{align}

\paragraph{(Partially) factorised contributions.} The fully factorised diagram $\widehat{\I}_{+-+}$ is given by a simple product of vertex functions:
\begin{equation}
    \begin{aligned}
        \widehat{\I}_{+-+}(\{u_i\}) &= \raisebox{-5pt}{
        \begin{tikzpicture}[line width=1. pt, scale=2]
            \draw[black] (0, 0) -- (2, 0);
            \draw[white, line width=2pt] 
  ($(0,0)!1/3!(1,0)$) -- ($(0,0)!2/3!(1,0)$);
            \node at ($(0,0)!1/2!(1,0)$) {$\times$};
            \draw[white, line width=2pt] 
  ($(1,0)!1/3!(2,0)$) -- ($(1,0)!2/3!(2,0)$);
            \node at ($(1,0)!1/2!(2,0)$) {$\times$};

            \draw[fill=black] (0, 0) circle (.05cm) node[above right=0.1mm] {$\nu_1$};
            \draw[fill=white] (1, 0) circle (.05cm) node[above left=0.1mm] {$\nu_1$};
            \draw[fill=white] (1, 0) circle (.05cm) node[above right=0.1mm] {$\nu_2$};
            \draw[fill=black] (2, 0) circle (.05cm) node[above left=0.1mm] {$\nu_2$};
        \end{tikzpicture} 
        } \nonumber \\
        &= \V_{+,\mu_1}^{(1)}(u_{12};p_1)\V_{-,\mu_1\mu_2}^{(2)}(u_{21},u_{23};p_2)\V_{+,\mu_2}^{(1)}(u_{32};p_3)\,.
    \end{aligned}
\end{equation}
The partially factorised diagrams $\widehat{\I}_{++-}$ and $\widehat{\I}_{-++}$ each contain a time-ordered propagator and hence one layer of spectral integration. For example, the contribution $\widehat{\I}_{-++}$ can be represented by,
\begin{equation}
    \begin{aligned}
       \widehat{\I}_{-++}(\{u_i\}) &= e^{-i\pi/2} \int\limits_{-\infty}^{+\infty} [\d\nu]\rho_{\nu}(\mu_2) \left(
        \raisebox{-5pt}{
        \begin{tikzpicture}[line width=1. pt, scale=2]
            \draw[black] (0, 0) -- (2, 0);
            \draw[white, line width=2pt] 
  ($(0,0)!1/3!(1,0)$) -- ($(0,0)!2/3!(1,0)$);
            \node at ($(0,0)!1/2!(1,0)$) {$\times$};
            \draw[white, line width=2pt] 
  ($(1,0)!1/3!(2,0)$) -- ($(1,0)!2/3!(2,0)$);
            \node at ($(1,0)!1/2!(2,0)$) {$\times$};

            \draw[fill=white] (0, 0) circle (.05cm) node[above right=0.1mm] {$\mu_1$};
            \draw[fill=black] (1, 0) circle (.05cm) node[above left=0.1mm] {$\mu_1$};
            \draw[fill=black] (1, 0) circle (.05cm) node[above right=0.1mm] {$\nu$};
            \draw[fill=black] (2, 0) circle (.05cm) node[above left=0.1mm] {$\nu$};
        \end{tikzpicture} 
        }\right) \\
        &= \V_{+, \nu_1}^{(1)}(u_{12}; p_1) \, e^{-i\pi/2} \int\limits_{-\infty}^{+\infty} [\d\nu]\rho_{\nu}(\mu_2) \,  \V_{+,\mu_1,\nu}^{(2)}(u_{21},u_{23};p_2)\,\V_{+, \nu}^{(1)}(u_{32}; p_3) \,.
    \end{aligned}
\end{equation}
Again, we have to separate the on-shell poles from the analytic poles. Paying attention to the kinematic region $u_{i2} > u_{2i}$, we find the contributions
\begin{align}
    \widehat{\I}_{++-}^{\P} &= -4\pi i\,\C_+^{(1)}(p_1)\C_+^{(2)}(p_2)\V_{-,\mu_2}^{(1)}(u_{32};p_3)\sum\limits_{\alpha=\pm i\mu_1}\sum\limits_{m=0}^\infty  \frac{(-1)^m}{m!} \frac{\Gamma[\tilde p_{12}+m+\alpha]}{(\tilde p_1+m)^2+\mu_{12}^2}   \nonumber \\
    &\times \left(\frac{u_{21}}{u_{12}}\right)^{m+\tilde p_1} F^{(1)}_C\left[\left.\begin{matrix} \frac{-m}{2},\,  \frac{1-m}{2}\\ 1-\tilde{p}_1-m \end{matrix}\right\vert u_{12}^2\right]  F_C^{(2)}\left[\left.\begin{matrix} \frac{\tilde p_{12}+m+\alpha}{2},\,  \frac{\tilde p_{12}+1+m+\alpha}{2}\\ 1+\tilde{p}_1+m, 1+\alpha \end{matrix}\right\vert u_{21}^2, u_{23}^2\right]\,,
\end{align}
as well as
\begin{align}
    \widehat{\I}_{-++}^{\P} &= -4\pi i\,\V_{-,\mu_1}^{(1)}(u_{12};p_1)\C_+^{(2)}(p_2)\C_+^{(1)}(p_3)\sum\limits_{\alpha=\pm i\mu_2}\sum\limits_{m=0}^\infty  \frac{(-1)^m}{m!} \frac{\Gamma[\tilde p_{23}+m+\alpha]}{(\tilde p_3+m)^2+\mu_{23}^2}   \nonumber \\
    &\times \left(\frac{u_{23}}{u_{32}}\right)^{m+\tilde p_3} F^{(1)}_C\left[\left.\begin{matrix} \frac{-m}{2},\,  \frac{1-m}{2}\\ 1-\tilde{p}_3-m \end{matrix}\right\vert u_{32}^2\right]  F_C^{(2)}\left[\left.\begin{matrix} \frac{\tilde p_{23}+m+\alpha}{2},\,  \frac{\tilde p_{23}+1+m+\alpha}{2}\\ 1+\alpha, 1+\tilde p_3+m \end{matrix}\right\vert u_{21}^2, u_{23}^2\right]\,,
\end{align}
for the analytic poles. The on-shell poles contribute 
\begin{align}
    \widehat{\I}_{++-}^{\delta_{12}} =& -\C_+^{(1)}(p_1)\left(e^{+\pi\mu_1}\F_{+i\mu_1}^{(1)} + e^{-\pi\mu_1}\F_{-i\mu_1}^{(1)}\right)\,\V_{+,\mu_1\mu_2}^{(2)}(u_{21},u_{23};p_2)\,\V_{-,\mu_2}^{(1)}(u_{32};p_3) \,,\nonumber\\
    \widehat{\I}_{-++}^{\delta_{23}} =& -\V_{-,\mu_1}^{(1)}(u_{12};p_1)\,\V_{+,\mu_1\mu_2}^{(2)}(u_{21},u_{23};p_2)\,\C_+^{(1)}(p_3)\left(e^{+\pi\mu_2}\F_{+i\mu_2}^{(1)} + e^{-\pi\mu_2}\F_{-i\mu_2}^{(1)}\right)\,.
\end{align}
The remaining four integrals $\widehat{\I}_{---}$, $\widehat{\I}_{--+}$, $\widehat{\I}_{+--}$ and $\widehat{\I}_{-+-}$ are simply the complex conjugates of all the above terms. Collecting all these different terms finishes our discussion of the double exchange diagram. 

\vskip 4pt
\paragraph{Other kinematic regions.}
As mentioned before, switching to different kinematic regions corresponds to closing the integration contour in different half planes which leads to different poles that are picked up in the analytic pieces (including the analytic parts of partially factorised contributions). The structure of the on-shell terms will also change slightly, since the projection on positive and negative frequency modes is altered in different regions. As an example, we provide the two other series solutions that can be obtained from the spectral integral by closing in different directions. Interestingly there are only two more of them and not three, like one would expect from the $2^2=4$ combinations of closing the contour in the upper and lower half plane. In addition to the $u_{i2} > u_{2i} $ solution from above, we find
\begin{align}
\label{eq: 3site chain 2}
    \widehat{\I}_{+++}^{\P} = -&16\pi^2\C_+^{(1)}\C_+^{(2)}\C_+^{(1)} \sum\limits_{m_1,m_2=0}^\infty \frac{(-1)^{m_{12}}}{m_1!m_2!} \frac{\Gamma[\tilde p_{123}+m_{12}]}{\left[(\tilde p_{23}+m_{12})^2+\mu_{12}^2\right]\left[(\tilde p_{3}+m_{2})^2+\mu_{23}^2\right]}  \nonumber\\
    &\times \left(\frac{u_{12}}{u_{21}}\right)^{m_{12}+\tilde p_{23}}\left(\frac{u_{23}}{u_{32}}\right)^{m_{2}+\tilde p_{3}}  F^{(1)}_C\left[\left.\begin{matrix} \frac{\tilde p_{123}+m_{12}}{2},\,  \frac{\tilde p_{123}+1+m_{12}}{2}\\ 1+\tilde{p}_{23}+m_{12} \end{matrix}\right\vert u_{12}^2\right] \nonumber\\
    &\times F_C^{(2)}\left[\left.\begin{matrix} \frac{-m_1}{2},\,  \frac{1-m_1}{2}\\ 1-\tilde{p}_{23}-m_{12},1+\tilde p_3+m_2 \end{matrix}\right\vert u_{21}^2, u_{23}^2\right]
    F^{(1)}_C\left[\left.\begin{matrix} \frac{-m_2}{2},\,  \frac{1-m_{2}}{2}\\ 1-\tilde{p}_{3}-m_{2} \end{matrix}\right\vert u_{32}^2\right]\,,
\end{align}
in the kinematic region where $u_{12}<u_{21}$ and $u_{23} < u_{32}$ (meaning we close both integrals in the lower half plane). For the two remaining regions with $u_{12}>u_{21}$ and $u_{32}<u_{23}$ as well as $u_{12}< u_{21}$ and $u_{12}u_{23}>u_{21}u_{32}$ we close one integral in the upper and one in the lower half plane, which both results in
\begin{align}
\label{eq: 3site chain 3}
    \widehat{\I}_{+++}^{\P} &= -16\pi^2\C_+^{(1)}\C_+^{(2)}\C_+^{(1)} \sum\limits_{m_1,m_2=0}^\infty \frac{(-1)^{m_{12}}}{m_1!m_2!} \frac{\Gamma[\tilde p_{123}+m_{12}]}{\left[(\tilde p_1+m_1)^2+\mu_{12}^2\right]\left[(\tilde p_{12}+m_{12})^2+\mu_{23}^2\right]}  \nonumber\\
    &\times \left(\frac{u_{21}}{u_{12}}\right)^{m_1+\tilde p_1}\left(\frac{ u_{32}}{u_{23}}\right)^{m_{12}+\tilde p_{12}}  F^{(1)}_C\left[\left.\begin{matrix} \frac{-m_1}{2},\,  \frac{1-m_1}{2}\\ 1-\tilde{p}_1-m_1 \end{matrix}\right\vert u_{12}^2\right] \nonumber\\
    &\times F_C^{(2)}\left[\left.\begin{matrix} \frac{-m_2}{2},\,  \frac{1-m_2}{2}\\ 1+\tilde{p}_1+m_1, 1-\tilde p_{12}-m_{12} \end{matrix}\right\vert u_{21}^2, u_{23}^2\right]
    F^{(1)}_C\left[\left.\begin{matrix} \frac{\tilde{p}_{123}+m_{12}}{2},\,  \frac{\tilde{p}_{123}+1+m_{12}}{2}\\ 1+\tilde{p}_{12}+m_{12} \end{matrix}\right\vert u_{32}^2\right]\,.
\end{align}
When describing where to close the contour, we always consider a mixed term $\F_+\F_-$ where the left-most vertex contributes the positive-frequency mode. For the second mixed term, the direction of contour-closing would be reversed. 

\section{Derivation of Differential Equations}
\label{sec: diff eq derivation}

In this appendix, we provide the details of the derivation of the system of differential equations that describes a given tree-level correlator. We closely follow the original derivation in~\cite{Liu:2024str} and adjust accordingly to our definitions and notations.\footnote{In particular, our definition of the dimensionless seed integral $\widehat{\I}$ differs from the dimensionless graph as defined in~\cite{Liu:2024str} by a factor $\prod_{(i,j)} (u_{ij}u_{ji})^{d/2}$.}

\vskip 4pt
As a starting point, we consider any internal line of a given tree-level diagram $\widehat{\I}_{\aa_1\cdots \aa_N}(\{X_i\},\{Y_j\})$:
\begin{center}
\begin{tikzpicture}
[line width=1. pt, scale=2]
    \draw[fill=black] (0, 0) circle (.05cm) node[below right=1mm] {$\textcolor{pyblue}{u_{ij}}$};
    \draw[fill=black] (0, 0) circle (.05cm) node[above right] {$\textcolor{pyblue}{i}$};
    
    \draw[fill=black] (1, 0) circle (.05cm) node[below left=0.1mm] {$\textcolor{pyblue}{u_{ji}}$};
    \draw[fill=black] (1, 0) circle (.05cm) node[above left] {$\textcolor{pyblue}{j}$};
    
    \draw[black] (0, 0) -- (1, 0);

    \draw (0,0) -- ++(120:0.6) node[above] {$(u_{i2}, \mu_{i2})$};
    \draw (0,0) -- ++(150:0.6) node[above left] {$(u_{i3}, \mu_{i3})$};
    \draw (0,0) -- ++(180:0.6) node[left] {$(u_{i4}, \mu_{i4})$};
    \node at (-0.3,-0.15) {$\vdots$};
    \draw (0,0) -- ++(-120:0.6) node[below] {$(u_{iV}, \mu_{iV})$};

    \draw (1,0) -- ++(60:0.6) node[above] {$(u_{j2}, \mu_{j2})$};
    \draw (1,0) -- ++(30:0.6) node[above right] {$(u_{j3}, \mu_{j3})$};
    \draw (1,0) -- ++(0:0.6) node[right] {$(u_{j4}, \mu_{j4})$};
    \node at (1.3,-0.15) {$\vdots$};
    \draw (1,0) -- ++(-60:0.6) node[below] {$(u_{j{V'}}, \mu_{j{V'}})$};

\end{tikzpicture}
\end{center}
We label the two vertices attached to this internal line by indices $i$ and $j$. Furthermore, we label the edges attached to $i$ by  $i2,\dots,iV$ where $V=\text{deg} \,i$ is the degree of the vertex $i$ (and similarly for the edges attached to $j$). Let $X_i, X_j, X_{n}$ with $n=2,\dots,V$ be the respective vertex energies, $Y_{ij}$ the energy corresponding to the chosen internal line and $Y_{in}$ the energies along the other edges connected to $i$. Then, the diagram will contain the following time integral corresponding to the vertex $i$
\begin{equation}\label{eq: diff eq time integral}
    \widehat{\I}_{\aa_1\dots \aa_N} \supset \int_{-\infty}^0 \frac{\mathrm{d}z_i}{(-z_i)^{d+1}}(-z_i)^{p_i} e^{i\aa_iz_i} \widehat{D}_{\aa_i\aa_j}^{(\mu_{ij})}(u_{ij}z_i, u_{ji}z_j) \prod\limits_{n=2}^V \widehat{D}_{\aa_i\aa_{n}}^{(\mu_{in})}(u_{in}z_i,u_{ni}z_{n})\,,
\end{equation}
where the energy ratios are defined as $u_{ij}=Y_{ij}/X_i$, $u_{ji}=Y_{ij}/X_j$, $u_{in}=Y_{in}/X_i$ and $u_{ni}=Y_{in}/X_{n}$. To construct a system of differential equations, the idea is to insert a Klein-Gordon operator that acts on $\widehat{D}_{\aa_i\aa_j}^{(\mu_{ij})}(u_{ij}z_i, u_{ji}z_j)$ and to pull the differential operator out of the integral using integration by parts identities. The Klein-Gordon equation for the propagators in $\text{dS}_{d+1}$ is given by
\begin{equation}
    \left[\tau_i^2\partial_{\tau_i}^2-(d-1)\tau_i\partial_{\tau_i} + Y_{ij}^2\tau_i^2+m_{ij}^2\right]D_{\aa_i\aa_j}^{(\mu_{ij})}(Y_{ij};\tau_i,\tau_j) = -i\aa_i (\tau_i\tau_j)^{\frac{d+1}{2}}\delta(\tau_i-\tau_j)\delta_{\aa_i\aa_j}\,.
\end{equation}
As a first step, we will rewrite $\tau_i= z_i/X_i$ and $\tau_j=z_j/X_j$ and introduce another dimensionless propagator $\tilde D_{\aa_i\aa_j}^{(\mu_{ij})} (u_{ij} z_i,u_{ji}z_j) = Y_{ij}^d D_{\aa_i\aa_j}^{(\mu_{ij})} (Y_{ij};\tau_i,\tau_j)$ which has the advantage that it only depends on the products $u_{ij} z_i$ and $u_{ji} z_j$. Note that this propagator is related to the dimensionless propagator used throughout this work via
\begin{equation}\label{eq: relation dimensionless propagators}
    \tilde D_{\aa_i\aa_j}^{(\mu_{ij})} (u_{ij}z_i,u_{ji}z_j) = \frac{\pi}{4}(u_{ij}u_{ji})^{d/2} \widehat D_{\aa_i\aa_j}^{(\mu_{ij})} (u_{ij}z_i,u_{ji}z_j)\,.
\end{equation}
Rewriting the Klein-Gordon equation in the coordinate $z_i$ first and then using the fact that $(z_i\partial_{z_i})^n \tilde D_{\aa_i\aa_j}^{(\mu)} (u_{ij}z_i,u_{ji}z_j) = (u_{ij}\partial_{u_{ij}})^n \tilde D_{\aa_i\aa_j}^{(\mu_{ij})} (u_{ij}z_i,u_{ji}z_j)$ leads to
\begin{equation}
    \left[\vartheta_{ij}^2 -d\vartheta_{ij} +u_{ij}^2z_i^2 + m_{ij}^2\right] \tilde D_{\aa_i\aa_j}^{(\mu_{ij})} (u_{ij}z_i,u_{ji}z_j) = -i\aa_i (u_{ij}z_iu_{ji}z_j)^{\frac{d+1}{2}} \delta(u_{ij}z_i-u_{ji}z_j)\delta_{\aa_i\aa_j}\,,
\end{equation}
where we introduced the Euler operator $\vartheta_{ij} = u_{ij} \partial_{u_{ij}}$. As a last step, we plug in the relation \eqref{eq: relation dimensionless propagators} and commute the factor $u_{ij}^{d/2}$ with the differential operator. We end up with the following Klein-Gordon equation for the dimensionless propagator in this work:
\begin{equation}\label{eq: Klein Gordon dimensionless}
    \left[\vartheta_{ij}^2 + \mu_{ij}^2 + u_{ij}^2 z_i^2\right] \widehat{D}_{\aa_i\aa_j}^{(\mu_{ij})}(u_{ij}z_j,u_{ji}z_j) = -\tfrac{4i}{\pi} \aa_i (u_{ij}u_{ji})^\frac{1}{2} (z_iz_j)^\frac{d+1}{2}\delta(u_{ij}z_i-u_{ji}z_j)\delta_{\aa_i\aa_j}\,.
\end{equation}
We will now insert the Klein-Gordon operator into \eqref{eq: diff eq time integral} so that it is acting only on $\widehat{D}_{\aa_i\aa_j}^{(\mu_{ij})}(u_{ij}z_i, u_{ji}z_j)$. For the left-hand side of \eqref{eq: Klein Gordon dimensionless}, we will apply integration by parts to pull out the differential operators. In particular, we make use of
\begin{align}
    0 &= \int_{-\infty}^0 \mathrm{d}z_i\,\partial_{z_i}\left[(-z_i)^{p_i-d} e^{i\aa_iz_i} \widehat{D}_{\aa_i\aa_j}^{(\mu_{ij})}(u_{ij}z_i, u_{ji}z_j) \prod\limits_{n=2}^V \widehat{D}_{\aa_i\aa_{n}}^{(\mu_{in})}(u_{in}z_i,u_{ni}z_{n})\right] \nonumber \\
    &= \int_{-\infty}^0 \mathrm{d}z_i\,(-z_i)^{p_i-d-1} e^{i\aa_iz_i} \left[(p_i-d)+i\aa_iz_i+z_i\partial_{z_i}\right]\widehat{D}_{\aa_i\aa_j}^{(\mu_{ij})} \prod\limits_{n=2}^V \widehat{D}_{\aa_i\aa_{n}}^{(\mu_{in})} \nonumber \\
    &= \int_{-\infty}^0 \mathrm{d}z_i\,(-z_i)^{p_i-d-1} e^{i\aa_iz_i} \left[(p_i+\tfrac{V-2}{2}d)+i\aa_iz_i+\vartheta_{ij} +\sum\limits_{n=2}^V\vartheta_{{in}}\right]\widehat{D}_{\aa_i\aa_j}^{(\mu_{ij})} \prod\limits_{n=2}^V \widehat{D}_{\aa_i\aa_{n}}^{(\mu_{in})} \,,
\end{align}
where we used the relation $z_i\partial_{z_i} \widehat{D}_{\aa_i\aa_j}^{(\mu_{ij})} = (\vartheta_{ij} +\tfrac{d}{2})\widehat{D}_{\aa_i\aa_j}^{(\mu_{ij})}$ which immediately follows from the respective relation for $\tilde D_{\aa_i\aa_j}^{(\mu_{ij})}$. This result allows us to find a relation between the above time integral and one with a different twist $p_i$ which eventually should take care of the term involving $u_{ij}^2z_i^2$ in the Klein-Gordon operator. Schematically, we obtain the relation
\begin{equation}
    \int_{-\infty}^0 \mathrm{d}z_i \, z_i \bm A = -\frac{1}{i\aa_i} \left(\vartheta_{\{i\}} +p_i+\tfrac{V-2}{2}d\right)\int_{-\infty}^0\mathrm{d}z_i\,\bm A\,,
\end{equation}
where the integrand $\bm A$ is the one in \eqref{eq: diff eq time integral} and $\vartheta_{\{i\}} = \vartheta_{ij} +\sum_{n=2}^V \vartheta_{in}$. Thus, the differential operator that is pulled out of the integral is given by
\begin{equation}
    \D_{ij} = \vartheta_{ij}^2 + \mu_{ij}^2 - u_{ij}^2(\vartheta_{\{i\}}+p_i+\tfrac{V-2}{2}d)(\vartheta_{\{i\}}+p_i+\tfrac{V-2}{2}d+1)\,.
\end{equation}
It is of course no coincidence that this differential operator coincides exactly with the annihilator of the vertex functions in \eqref{eq: vertex function annihilator}, which means that factorised diagrams correspond to homogeneous solutions of the system of differential equations. 

\vskip 4pt
Our next goal will be to derive the right-hand side of the differential equation which stems from the source term in the Klein-Gordon equation. At the end, we will act with the operator $\D_{ij}$ on the full graph $\G$, which means that we have to keep track of all the prefactors of the seed integral as well. Remember that the full correlator is given by
\begin{equation}
    \G = \sum\limits_{\aa_1,\dots,\aa_V=\pm} i\aa_1\cdots i\aa_V \prod\limits_{i=1}^V\frac{1}{X_i^{p_i-d}} \prod\limits_{j=1}^I\frac{\pi/4}{(X_jX_j')^\frac{d}{2}} \,\widehat{\I}_{\aa_1\cdots\aa_V}\,,
\end{equation}
with the dimensionless master integral $\widehat{\I}_{\aa_1\cdots\aa_V}$ defined in~\eqref{eq: def dimensionless master integrals}. Acting with $\D_{ij}$ on the graph $\G$ is by construction equivalent to placing the Klein-Gordon operator $\bm D_{ij}=\vartheta_{ij}^2 +\mu_{ij}^2+u_{ij}^2z_i^2$ in front of the dimensionless propagators $\widehat{D}_{\aa_i\aa_j}^{(\mu_{ij})}(u_{ij}z_i,u_{ji}z_j)$. Using \eqref{eq: Klein Gordon dimensionless}, we can replace the combination of Klein-Gordon operator and propagator by the source term on the right-hand side of the equation. The $\delta$-function then trivialises the $z_i$ integral. Then, using the schematic notation for the integrand in \eqref{eq: diff eq time integral} as before, we find
\begin{align}
    \int_{-\infty}^0 \mathrm{d}z_i\, \bm D_{ij} \bm A = -i\aa_i\frac{4}{\pi} \left(\frac{X_i}{X_j}\right)^{p_i+\frac{V_i-2}{2}d} (-z_j)^{p_i} e^{i\aa_i\frac{X_i}{X_j}z_j} \delta_{\aa_i\aa_j} \prod\limits_{n=2}^{V_i} \widehat{D}_{\aa_j\aa_{n}}^{(\mu_{in})} \left(\tfrac{Y_{in}}{X_j}z_j, \tfrac{Y_{in}}{X_{n}}z_{n}\right)\,.
\end{align}
One has to be a bit careful because the notation for the arguments is slightly misleading, since the dimensionless propagator $\widehat{D}$ is not a function of two products (see the definition in \eqref{eq: def dimensionless prop}) and so evaluating the $\delta$-function needs to be done carefully. 

\vskip 4pt
Acting with the differential operator on the graph collapses one time integral ($z_i$ in this case) which means that we are removing a vertex from the graph. In fact, since in the above result all the propagators that were once connected to vertex $i$ have now acquired a dependence on the time $z_j$ instead, the interpretation of this procedure is that we are removing the edge between vertex $i$ and $j$ and merge the two vertices. Also, we see that the prefactor $-i\aa_i \frac{4}{\pi}$ already cancels the factor $i\aa_i$ in the definition of the correlator and removes the factor $\pi/4$ from the edge that collapsed. Also,  $\delta_{\aa_i\aa_j}$ removes one layer of summation over Schwinger-Keldysh indices which is also consistent with a graph of reduced vertex number. To better understand the properties of this new graph, let us take a look at the full $z_j$-integral after integrating out $z_i$. Ignoring the factor $-i\aa_i \frac{4}{\pi}\delta_{\aa_i\aa_j}$ (which we have already understood), this integral is given by
\begin{align}
    \left(\frac{X_i}{X_j}\right)^{p_i+\frac{V_i-2}{2}d}\int_{-\infty}^0 \frac{\mathrm{d}z_j}{(-z_j)^{d+1}} (-z_j)^{p_i+p_j} e^{i\aa_j\left(1+\frac{X_i}{X_j}\right)z_j} &\prod\limits_{n=2}^{V_i} \widehat{D}_{\aa_j\aa_{n}}^{(\mu_{in})} \left(\tfrac{Y_{in}}{X_j}z_j, \tfrac{Y_{in}}{X_{n}}z_{n}\right)\nonumber \\
    \times&\prod\limits_{m=2}^{V_j} \widehat{D}_{\aa_j\aa_{m}}^{(\mu_{jm})} \left(\tfrac{Y_{jm}}{X_j}z_j, \tfrac{Y_{jm}}{X_{m}}z_{m}\right)\,.
\end{align}
In order to obtain the usual normalisation of the time variable in the exponential function, we make the substitution $z_j = z_j /(1+\tfrac{X_i}{X_j})$ which leads to (again, we need to be a bit careful when applying this substitution to $\widehat{D}$ and take the correct time-dependence into account)
\begin{align}
    \left(\frac{X_i}{X_j}\right)^{p_i+\frac{V_i-2}{2}d} \int_{-\infty}^0 \frac{\mathrm{d}z_j'}{(-z_j')^{d+1}} (-z_j')^{p_i+p_j} e^{i\aa_jz_j'} &\prod\limits_{n=2}^{V_i} \widehat{D}_{\aa_j\aa_{n}}^{(\mu_{in})} \left(\tfrac{Y_{in}}{X_i+X_j}z_j', \tfrac{Y_{in}}{X_{n}}z_{n}\right)\nonumber \\
    \times \left(\frac{X_j}{X_i+X_j}\right)^{p_i+p_j-d +(V_i-1+V_j-1)\frac{d}{2}}&\prod\limits_{m=2}^{V_j} \widehat{D}_{\aa_j\aa_{m}}^{(\mu_{jm})} \left(\tfrac{Y_{jm}}{X_i+X_j}z_j', \tfrac{Y_{jm}}{X_{m}}z_{m}\right)\,.
\end{align}
The interpretation of the integral part is clear: This is simply a time integral over a vertex $j$ with new twist parameter $p_i+p_j$ and new vertex energy $X_i+X_j$ which is connected to all the vertices to which $i$ and $j$ had been connected via the inherited propagators. The final task is to make sense of the kinematic ratios in front. For this end, we turn our attention to the kinematic factors in the definition of $\G$ and combine them with the additional factors. After a bit of algebra, we find
\begin{align}
    \left(\frac{X_i}{X_j}\right)^{p_i+\frac{V_i-2}{2}d} \left(\frac{X_j}{X_i+X_j}\right)^{p_i+p_j-d +(V_i-1+V_j-1)\frac{d}{2}}\prod\limits_{l=1}^V\frac{1}{X_l^{p_l-d}} \prod\limits_{r=1}^I\frac{1}{(X_rX_r')^\frac{d}{2}} \nonumber \\
    =\frac{1}{(X_i+X_j)^{p_i+p_j-d}} \prod_{\substack{l=1 \\ l\neq i,j}}^V \frac{1}{X_l^{p_l-d}} \prod_{\substack{r=1 \\ r\neq ij, in, jm}}^I \frac{1}{(X_rX_r')^\frac{d}{2}} \prod_{n=2}^{V_i} \frac{1}{((X_i+X_j)X_{n})^\frac{d}{2}} \nonumber \\
    \times \prod_{m=2}^{V_j} \frac{1}{((X_i+X_j)X_{m})^\frac{d}{2}}\,.
\end{align}
This is precisely the kinematic prefactor of the new contracted graph $C_{ij}[\G]$ which is given by removing the edge between $i$ and $j$, merging the two vertices and defining the twist and vertex energy to be the sum of the individual ones. Therefore, the full differential equation is simply given by
\begin{equation}
    \D_{ij} \G = C_{ij}[\G]\,.
\end{equation}
This completes the derivation of the system of (partial) differential equations for cosmological correlators.

\addcontentsline{toc}{section}{References}
\bibliographystyle{JHEP}
\bibliography{references.bib}

\end{document}